\documentclass[a4paper, 11pt]{article}
\usepackage{jheppub}

\usepackage[utf8]{inputenc}
\usepackage{graphicx}
\usepackage{amsmath}
\usepackage{amssymb}
\usepackage{mathtools}
\usepackage{systeme}
\usepackage{amsfonts}
\usepackage{mathrsfs}

\usepackage{amsthm}
\usepackage{wasysym}
\usepackage[english]{babel}
\usepackage{enumitem}
\usepackage{yfonts}
\usepackage{tikz-cd}
\usepackage{comment}
\usepackage{hyperref}
\usepackage{bm}

\usepackage[toc]{appendix}

\newtheoremstyle{theoremstyle}
  {8pt} 
  {8pt} 
  {\itshape} 
  {} 
  {\bfseries} 
  {.} 
  {.5em} 
  {} 

\newtheoremstyle{definitionstyle}
  {8pt} 
  {8pt} 
  {} 
  {} 
  {\bfseries} 
  {.} 
  {.5em} 
  {} 

\newtheoremstyle{remarkstyle}
  {8pt} 
  {8pt} 
  {} 
  {} 
  {\itshape} 
  {.} 
  {.5em} 
  {} 

\swapnumbers{}

\theoremstyle{theoremstyle}
\newtheorem{theorem}{Theorem}[section]

\newcommand{\thistheoremname}{}
\newtheorem{genericthm}[theorem]{\thistheoremname}

\theoremstyle{theoremstyle}
\newtheorem{lemma}[theorem]{Lemma}
\newtheorem{proposition}[theorem]{Proposition}
\newtheorem{corollary}[theorem]{Corollary}
\newtheorem{conjecture}[theorem]{Conjecture}

\theoremstyle{definitionstyle}
\newtheorem{definition}[theorem]{Definition}

\theoremstyle{remarkstyle}

\newcommand{\B}{\mathcal{B}}
\newcommand{\R}{\mathcal{R}}
\newcommand{\Q}{\mathcal{Q}}
\newcommand{\N}{\mathbb{N}}
\newcommand{\Z}{\mathbb{Z}}
\newcommand{\C}{\mathbb{C}}
\newcommand{\T}{\mathcal{T}}
\renewcommand{\O}{\mathcal{O}}

\renewcommand{\P}{\mathcal{P}}
\newcommand{\F}{\mathcal{F}}
\newcommand{\D}{\mathcal{D}}
\newcommand{\A}{\mathcal{A}}
\newcommand{\U}{\mathcal{U}}
\newcommand{\J}{\mathcal{J}}
\renewcommand{\L}{\mathscr{L}}
\newcommand{\id}{\mathrm{id}}
\renewcommand{\vec}[1]{\mathbf{#1}}
\newcommand{\Y}{\mathcal{Y}}

\let\temp\epsilon
\let\epsilon\varepsilon
\let\varepsilon\temp

\let\temp\phi
\let\phi\varphi
\let\varphi\temp

\arxivnumber{}

\title{\boldmath The quantum group structure of long-range integrable deformations}
\author{Koen Schouten}
\author{and Marius de Leeuw}
\affiliation{School of Mathematics \& Hamilton Mathematics Institute, Trinity College Dublin, Dublin, Ireland}

\emailAdd{kschoute@tcd.ie}
\emailAdd{mdeleeuw@maths.tcd.ie}
\abstract{Quantum integrable spin chains are known to possess a large family of long-range deformations generated by the local, boost and bilocal operators. Although these deformations are well-understood on the level of the pairwise commuting charges, the underlying quantum group structures had not yet been recognised. In this paper, we provide a quantum group-theoretical description for the family of long-range deformations of arbitrary homogeneous Yang-Baxter integrable spin chains up to first order in the deformation parameter. In particular, we show that the deformations are obtained via a twist of the algebraic structure of the underlying quantum group. This twisting results in a generally non-associative algebra that has a non-trivial Drinfeld associator. The Drinfeld associator is then shown to encode the information about the long-range interaction terms for the integrable spin chain. Moreover, the deformed quantum group is shown to contain a large perturbatively associative substructure, thus ensuring the perturbative integrability of the long-range model. The deformed quantum group provides explicit expressions for the Lax operators and $R$-matrices of the long-range deformed models, which manifestly satisfy the RLL relation and the Yang-Baxter equation up to first order in the deformation parameter. In order to derive the results, we introduce algebra elements that we call the algebraic charge densities. As a side result, we provide a conjecture for the explicit expressions of the undeformed charge densities in terms of these algebra elements.
}

\begin{document}
\hspace{0cm}
\maketitle

\newpage
\section{Introduction}
Quantum integrable models, characterised by the existence of a ``sufficiently large'' set of independent mutually-commuting conserved charges, have been of both mathematical and physical interest for nearly a century. The knowledge of such a set of pairwise commuting charges allows one to decompose the Hilbert space into smaller sectors, thus reducing the problem of the spectral decomposition into simpler components. This was first utilised by Bethe in 1931 to find the exact solutions of the Heisenberg spin chain, using what is now called the \textit{Bethe Ansatz}  \cite{Bethe1931zur}. One of the main ideas behind the Bethe Ansatz, and therefore quantum integrability, is that the elastic many-body scattering factorises into a product of two-body scatterings \cite{Caux2011}. In this way, all the physical excitations can be described in terms of the two-body scattering states. For this factorisation to be consistent, the scattering matrix is required to satisfy the \textit{Yang-Baxter equation} \cite{Yang1967SomeExactResults, Baxter1972PartitionFunction, TakhtadshanFaddeev1997}. Quantum integrable models could therefore be (partially) classified by the solutions, called the \textit{$R$-matrices}, of the Yang-Baxter equation \cite{Kulish1982SolutionstoYangBaxter, Sogo1982Classification, Khachatryan2013YangBaxterEightVertex, Vieira2018YangBaxterDifferential, Hietarinta1993Solving2DYBE, Dye2003UnitaryYBE4, Pourkia2018CYBE, deLeeuw2019Classifyingintegrable, deLeeuw2020YangBaxterandBoost, CorcorandeLeeuw2024, Garkun2024Newspectral, deleeuw20264x4solutions}. Moreover, the Yang-Baxter equation naturally gave rise to the development by the Leningrad-St. Petersburg school of an algebraic framework to describe quantum integrability, known as the \textit{quantum inverse scattering method} \cite{STF1979QuantumInverseScattering, TakhtadshanFaddeev1997, faddeev1984integrable,  Sklyanin1982QuantumversionInverseScattering, Faddeev1982IntegrableModels, sklyanin1982some, kulish1983quantum, Sklyanin1992quantuminversescatteringmethod, Korepin1993quantum} (see \cite{Faddeev1995QISMHistory} for an overview). In particular, using the $R$-matrix and a corresponding \textit{Lax operator}, one can construct a generating function for the mutually commuting charges, and additionally derive their eigenstates and spectra using the \textit{Algebraic Bethe Ansatz} \cite{Faddeev1995AlgebraicAspects, faddeev1996algebraicbetheansatzwork}.\\

The realisations of the Yang-Baxter equation led to new algebraic objects that could be viewed as quantisations of the classical Lie groups and Lie algebras, known as \textit{quantum groups} \cite{sklyanin1982some, kulish1983quantum}. Drinfeld and Jimbo independently formalised the description of these quantum groups in the language of Hopf algebras (see e.g. \cite{CPAGtQG, Majid1995FoundationsQG, KasselQuantumGroups, QGAR}), which can be considered as one-parameter deformations of universal enveloping algebras of complex semisimple Lie groups \cite{Drinfled1985, Drinfled1986, Jimbo1986qanalogue}. The \textit{universal $R$-matrix} for a (quasi-triangular) quantum group is then defined through the braiding properties of the coproduct structure, and satisfies the algebraic Yang-Baxter equation if the coproduct is \textit{coassociative}. In this way, quantum integrable models can be obtained through specific representations of the quantum groups. For example, the Heisenberg spin chain corresponds to a representation of the \textit{Yangian algebra} \cite{Drinfled1985, Drinfled1986} (see \cite{Loebbert2016LecturesonYangianSymmetry, MolevYangiansClassicalLie, Bernard1993} for introductions). Instead of quantisations of the universal enveloping algebras, Faddeev, Reshetikhin and Takhtajan, following the spirit of non-commutative geometry  \cite{ConnesNoncommutativeGeometry1990}, later considered the quantisation of the function algebra over a Lie group, known as the \textit{FRT-bialgebra} \cite{FaddeevReshetikhinTakhtajan1989quantAlg, FaddeevReshetikhinTakhtajan1990}. These FRT-bialgebras can be viewed as the \textit{dual objects} to the quantised universal enveloping algebra, and therefore describe the same underlying quantum group structure \cite{QGAR}. The commuting charges, as well as the \textit{creation operators} for the eigenstates, are then realised as representations of elements of the FRT-bialgebra \cite{Drinfled1988}. In particular, the FRT-bialgebras describe the full quantum group structure behind the quantum inverse scattering method.\\

Integrable structures have also been discovered to be present in the AdS/CFT correspondence \cite{Maldacena1997}. In particular, in the planar limit \cite{tHooft1974Planardiagram} of the $\mathcal{N}=4$ SYM model, it was shown that the dilatation operator acting on single trace primary operators can be identified with an integrable nearest-neighbour spin chain Hamiltonian at one-loop order \cite{Minahan2002BetheAnsatz}. Integrability techniques could therefore be used to compute the scaling dimensions of primary local operators at one-loop order \cite{Beisert2003integrableSuper, Beisert2005LongRangeBetheAnsatz}. At higher-loop corrections, the dilatation operator was shown to be given by an integrable spin chain Hamiltonian involving higher-range interactions at each perturbation order of the `t Hooft coupling \cite{Beisert2003Thedilatation, Beisert2004NovelLongRange, Beisert2004su23} (see \cite{Rej2011ReviewLongrange} for a review). This finding later led to a more general class of \textit{long-range deformations} of integrable spin chains, in which the range of interactions grows linearly with the perturbation order \cite{Beisert2006Longrange, Bargheer2008BoostingNearestNeighbour, Bargheer2009}. In particular, by deforming the integrable charges using a \textit{deformation equation}, one obtains a new set of perturbatively integrable long-range charges. A family of such long-range deformations was found to be generated by the \textit{local}, \textit{boost}, and \textit{bilocal} operators \cite{Bargheer2008BoostingNearestNeighbour, Bargheer2009}. Other known long-range integrable spin chains, such as the Haldane-Shastry \cite{Haldane1988Exact, Shastry1988HeisenbergLongRange} and the Inozemtsev spin chain \cite{inozemtsev1995heisenberg, Inozemtsev2003}, also fall under this family of long-range models \cite{Serban2004Planar}. Since there is a large tower of mutually commuting charges, it is also expected that one could derive the eigenvalues and eigenstates for the charges using the Bethe Ansatz. In the thermodynamic limit, i.e. on the doubly-infinite spin chain, this is indeed known to be possible using the \textit{Thermodynamic Bethe Ansatz} \cite{Gromov2010ExactSpectrum, Arutyunov2009ThermodynamicBetheAnsatz, Bombardelli2009Thermo, Bajnok2012ReviewTBA}. However, on finite-size spin chains, it has remained unknown how to perform a Bethe Ansatz for which \textit{wrapping corrections} are included, i.e. when the interaction range exceeds the length of the spin chain. Similarly, up until now, it was not known whether these long-range deformed models have a corresponding underlying quantum group structure. \\

In this paper, we provide, for the first time, a quantum group-theoretical description for the long-range deformations of arbitrary homogeneous Yang-Baxter integrable spin chains up to first order in the deformation parameter. Since the long-range integrable models correspond to deformations of the commuting charges, it is also expected that these are obtained through deformations of the underlying quantum group. Some progress in understanding this quantum group structure was already made in \cite{Gombor2021, Gombor2022WrappingCorrections, deLeeuwRetore2023}, where some general properties for the long-range Lax operators were described. In this paper, we will improve on these results by explicitly describing the corresponding deformations of the FRT-bialgebra up to first order in the deformation parameter, thus providing a concrete foundation for the understanding of the quantum group structure. In particular, the long-range deformed quantum groups are shown to be obtained by means of \textit{twisting}, which deforms the multiplication such that the quantum group becomes \textit{non-associative} and acquires a non-trivial \textit{Drinfeld associator} \cite{Drinfeld1989QuasiHopf}. This breaking of the associativity is necessary for the long-range interactions to appear. Namely, for associative quantum groups, the $R$-matrix factorises into range-two terms, such that the associated integrable model only contains nearest-neighbour interactions. The non-associativity and the appearance of the non-trivial Drinfeld associator in the deformed quantum algebra, however, also allow for higher-range corrections. In particular, the Drinfeld associator exactly encodes the information about the long-range deformation. Moreover, in order to ensure the (perturbative) integrability of the long-range deformed model, the deformed quantum group retains a large perturbatively associative substructure up to first order in the deformation parameter. The long-range deformed quantum groups then provide explicit expressions for the Lax operators and $R$-matrices for the deformed models, which manifestly satisfy the RLL-relations and the Yang-Baxter equation due to this (perturbatively) associative substructure. This then also shows that the integrable long-range spin chains are Yang-Baxter integrable up to first order in the deformation parameter.\\

In order to describe the deformations, we introduce specific algebra elements, which we call the \textit{algebraic charge densities} and which are obtained through derivations of the universal $R$-matrix. Similar to the $R$-matrix, these algebraic charge densities have particularly nice algebraic properties, and, as a side result, are conjectured to give explicit expressions for the charge densities of the undeformed integrable spin chains. We remark that these results are completely general and may be applied to any homogeneous Yang-Baxter integrable spin chain. Moreover, the algebraic charge densities are shown to provide new techniques for deriving the \textit{generalised Sutherland equations}, and also for proving the Yang-Baxter integrability of the long-range deformations.

\subsection*{General overview}
Let us provide a general overview of the framework that is used to describe the long-range deformations in this paper. Throughout the paper, we will consider arbitrary $R$-matrices $\R(u,v) \in \mathrm{End}(V\otimes V)$ that satisfy the Yang-Baxter equation
\begin{equation}
    \R_{12}(u_1,u_2)\R_{13}(u_1,u_3)\R_{23}(u_2,u_3) = \R_{23}(u_2,u_3)\R_{13}(u_1,u_3)\R_{12}(u_1,u_2)
\end{equation}
and are regular, i.e. $\R(u,u) = \P$ with $\P$ the permutation operator. Recall here that in specific cases these $R$-matrices arise from representations of (pseudo)-quasitriangular quantised universal enveloping algebras, like the Yangian or quantum affine algebra. One could use this interpretation to define ``coproducts'' of the $R$-matrices, even if there is no clear underlying quantum universal enveloping algebra. For example, we have
\begin{equation}
    \begin{split}
        \R_{(12)3}(u_1,u_2; u_3) &\coloneq (\Delta\otimes \id)(\R) = \R_{13}(u_1, u_3)\R_{23}(u_2, u_3)\\
        \R_{1(23)}(u_1; u_2, u_3) &\coloneq (\id\otimes \Delta)(\R) = \R_{13}(u_1,u_3)\R_{12}(u_1,u_2).
    \end{split}
\end{equation}
More generally, one can consider $N-1$ consecutive coproducts on the first leg of the $R$-matrix, and $M-N-1$ coproducts on the second leg, to get
\begin{equation}
    \R_{(1,..., N)(N+1,..., M)}(u_1,..., u_{M})\coloneq \overset{N}{\underset{i=1}{\overrightarrow{\prod}}}\overset{M}{\underset{\ j=N+1}{\overleftarrow{\prod}}}\R_{ij}(u_i,u_j),
\end{equation}
for $0<N<M$. Formally, as we will describe in Section~\ref{sec:coquasitriangularityandLfunctionals}, these ``coproducts'' correspond to specific evaluations of the \textit{universal $r$-form} of the associated FRT-bialgebra. One can now also consider the associated Lax operator $\L_{a1}(u) \coloneq \R_{a1}(u,0)$ and monodromy matrix $\T_a(u) \coloneq \L_{aL}(u)\cdots \L_{a1}(u)$, which correspond to specific representations of the associated FRT-bialgebra. The transfer-matrix $t(u) \coloneq \mathrm{tr}_{a}\T_a(u)$ is then well-known to generate a tower of mutually-commuting local charges given by
\begin{equation}\label{eq:overviewlocalcharges}
    \mathbb{Q}_k \coloneq \frac{d^{k-2}}{du^{k-2}}\left(t(u)^{-1}\frac{d}{du}t(u)\right) = \sum_{n=1}^{L} \mathfrak{q}^{(k)}_{n,...,n+k-1},
\end{equation}
for $k\geq 2$ and where $\mathfrak{q}^{(k)} \in \mathrm{End}(V^{\otimes k})$ are the corresponding charge densities.\\

The central objects throughout this paper are the \textit{algebraic charge densities}, which correspond to derivatives of the $R$-matrices. In particular, let us define  a derivative of the $R$-matrix on the spectral parameters corresponding to the first leg:
\begin{equation}
    \begin{split}
        \D_1\R_{(1,...,N)(N+1,...,M)}(u_1,...,u_M) &\coloneq \sum_{i=1}^{N}\frac{d}{du_i}\R_{(1,...,N)(N+1,...,M)}(u_1,...,u_M).
    \end{split}
\end{equation}
For $k\geq 2$, the algebraic charge densities are then defined to be given by
\begin{equation}
    \Q^{(k)}_{(1,...,N)(N+1,...,M)} \coloneq \D_1^{k-2}\left(\R^{-1}_{(1,...,N)(N+1,...,M)}\D_1\R_{(1,...,N)(N+1,...,M)}\right),
\end{equation}
where we have left out the explicit dependence on the spectral parameter in the notation of the above equation. Remark here the similarity with \eqref{eq:overviewlocalcharges}. In Section~\ref{sec:algebraicchargedensities}, we provide a more formal definition in terms of bilinear maps on the FRT-bialgebra. As examples, we have
\begin{equation}
    \begin{alignedat}{2}
        \Q^{(2)}_{12}(u,v) &= \R^{-1}_{12}(u,v)\frac{d}{du}\R_{12}(u,v), \quad &\Q^{(2)}_{1(23)}(u; v,w)  &= \R^{-1}_{1(23)}(u; v,w)\frac{d}{du}\R_{1(23)}(u;v,w)\\
        \Q^{(3)}_{12}(u,v) &= \frac{d}{du}\Q^{(2)}_{12}(u,v), \quad &\Q^{(3)}_{1(23)}(u;v,w) &= \frac{d}{du}\Q^{(2)}_{1(23)}(u;v,w).
    \end{alignedat}
\end{equation}
One may now recognise that the second algebraic charge density evaluated at zero spectral parameters is exactly equal to the charge density for the second charge, i.e. \linebreak $\mathfrak{q}^{(2)}_{12} =  \Q^{(2)}_{12}(0,0)$. Similarly, as we show in Section~\ref{sec:algebraicchargedensities}, the third charge density is given by a difference of algebraic charge densities, that is $\mathfrak{q}^{(2)}_{123} = \Q^{(3)}_{1(23)}(0,0,0) - \Q^{(3)}_{23}(0,0)$. In particular, this leads us to conjecture that all the charge densities are given by
\begin{equation}
    \mathfrak{q}^{(k)}_{1,...,k} = \Q^{(k)}_{1(2,...,k)}(0,...,0) - \Q^{(k)}_{2(3,...,k)}(0,...,0).
\end{equation}
This then provides a closed formula for all the charge densities of arbitrary Yang-Baxter integrable spin chains in terms of the $R$-matrix and derivations thereof. The algebraic charge densities also provide a generalised formula for the Sutherland equation. In particular, let $\L_{a(1,...,k)} = \L_{ak}\cdots \L_{a1}$. As we show in Section~\ref{sec:higherorderSutherlandequation}, the Sutherland equation for the $k$-th charge density is given by
\begin{equation}\label{eq:overviewSutherland}
    [\mathfrak{q}^{(k)}_{1,...,k}, \L_{a(1,...,k)}] = \L_{a(2,...,k)}{}^r\Q^{(k)}_{1,a,(2,...,k-1)}\L_{a1} - \L_{a(3,...,k)}{}^r\Q^{(k)}_{2,a,(3,...,k)}\L_{a(12)},
\end{equation}
where we introduce
\begin{equation}
    {}^r\Q^{(k)}_{1,2,(3,...,k)}(u) \coloneq \Q^{(3)}_{1(2,...,k)}(0;u,0,...,0) - \R^{-1}_{12}(0,u)\Q^{(3)}_{1(3,...,k)}(0,...,0)\R_{12}(0,u)
\end{equation} 
as the \textit{reduced algebraic charge densities} as defined in Section~\ref{sec:algebraicchargedensities}. Remark that ${}^r\Q^{(k)}_{1,2,(3,...,k)}(0) = \mathfrak{q}^{(k)}_{1,...,k}$ since $\R_{12}(0,0) = \P_{12}$.\\

These generalised Sutherland equations now enable us to provide explicit expressions for the Lax operators and $R$-matrices corresponding to the long-range deformations up to first order in the deformation parameter. In particular, consider the monodromy matrix on the doubly-infinite spin chain $\L_{a,(...)}\coloneq \overleftarrow{\prod}_{m\in \Z} \L_{a,m}$ and on the half-infinite spin chain $\L_{a,(n,...)}\coloneq \overleftarrow{\prod}_{m\in \Z_{\geq n}} \L_{a,m}$ and $\L_{a,(...,n)}\coloneq \overleftarrow{\prod}_{m\in \Z_{\leq n}} \L_{a,m}$. As discussed in Section~\ref{sec:LaxforBQ3derivation}, it follows from \eqref{eq:overviewSutherland} that the commutator between the monodromy matrix on the doubly-infinite spin chain and the boost operator $\B[\mathbb{Q}_k] \coloneq \sum_{n\in \Z}n \cdot \mathfrak{q}^{(k)}_{n,...,n+k-1}$ is given by
\begin{equation}\label{eq:overviewcommutatorboost}
    [\B[\mathbb{Q}_k], \L_{a,(...)}] = \sum_{n\in \Z}n\cdot[\mathfrak{q}^{(k)}_{n,...,n+k-1}, \L_{a,(...)}] = \sum_{n\in \Z}\L_{a(n+1,...)}{}^r\Q^{(k)}_{n,a,(n+1,...,n+k-2)}\L_{a,(...,n)},
\end{equation}
for $k\geq 3$. Similarly, the Sutherland equation can be used to compute the commutator between the bilocal operators and the monodromy matrix, see Section~\ref{sec:LaxforBQ3derivation}. It is known that these boost and bilocal operators generate long-range deformations of the spin chain \cite{Bargheer2008BoostingNearestNeighbour, Bargheer2009}. It can be seen in \eqref{eq:overviewcommutatorboost} that the commutator with the boost operator is given by a sum over insertions of the reduced algebraic charge density at position $n$. As discussed in Section~\ref{sec:LaxforBQ3derivation}, such a deformed monodromy matrix on the doubly-infinite spin chain may also be obtained through the Lax operator
\begin{equation}
    \L^{\lambda}_{(a, b_1,...,b_{k-2})n} (u)\coloneq \left(1 + \lambda\cdot  {}^r\Q^{(k)}_{n,a,(b_1,...,b_{k-2})}(u)\right)\L_{(a, b_1,...,b_{k-2})n}(u) + \O(\lambda^2),
\end{equation}
where $\L_{(a, b_1,...,b_{k-2})n}(u) = \L_{an}(u)\P_{b_1,n}\cdots \P_{b_{k-2},n}$. Similarly, we also derive expressions for the Lax operators corresponding to the bilocal deformation in Section~\ref{sec:LaxforBQ3derivation}. We therefore obtain general expressions for the Lax operators of the long-range deformations of arbitrary integrable spin chains. Moreover, as we discuss in Section~\ref{sec:deformationFRTalgebra}, these deformed Lax operators correspond to representations of \textit{twisted} FRT-bialgebras. The explicit descriptions of these twisted algebras then also provide expressions for the corresponding $R$-matrices. For example, as derived in Section~\ref{sec:deformationFRTalgebra}, the $R$-matrix for the $\B[\mathbb{Q}_3]$ deformation is given by
\begin{equation}
    \begin{split}
        \R^\lambda_{(12)(34)}(u,0;v,0) &\coloneq \R_{(12)(34)}(u,0;v,0)  + \O(\lambda^2)\\
        &\quad + \lambda\left({}^r\Q^{(3)}_{4,1,2}(u) \R_{(12)(34)}(u,0;v,0) - \R_{(12)(34)}(u,0;v,0){}^r\Q^{(3)}_{2,3,4}(v)\right),
    \end{split}
\end{equation}
where $\R_{(12)(34)}(u,0;v,0) \coloneq \R_{14}(u,0)\R_{13}(u,v)\R_{24}(0,0)\R_{23}(0,v)$. It is shown in Section~\ref{sec:deformationFRTalgebra} that these Lax operators and $R$-matrices correspond to representations of an associative subalgebra of the twisted quantum algebra, which ensures that the RLL relation and the Yang-Baxter equation hold. We also provide $R$-matrices for the more general boost deformations and the bilocal deformations. One should remark here that, for the long-range deformations, the Lax operators and the $R$-matrices are generally not equal to one another. The origin of this difference may be traced back to the description of the underlying FRT-bialgebra, which is given by a twisted double-crossed product bialgebra as explained in Section~\ref{sec:deformationFRTalgebra}.

\subsection*{Outline}
The outline of this paper is as follows:
\begin{itemize}[leftmargin=*]
    \item \textbf{In Section~\ref{sec:FRTalgebrasandintegrablespinchains}}, we provide a short review of FRT-bialgebras and discuss how they describe the associated spin chains. In particular, we begin our discussion by giving the definition of an $R$-matrix and the corresponding FRT-bialgebra in Section~\ref{sec:definitionFRTalgebra}. Then, in Section~\ref{sec:dualdescription}, we discuss how these FRT-bialgebras can be interpreted as the dual of the more usual quantised universal enveloping algebras, like the Yangian. In Section~\ref{sec:coquasitriangularityandLfunctionals}, we review the concept of coquasitriangularity of these FRT-bialgebras, and show how the Lax operators arise as representations induced from the $\ell$-functionals of the FRT-bialgebras. In Section~\ref{sec:commutingcharges}, we discuss the tower of commuting charges that are generated in the FRT-bialgebra. Then, in Section~\ref{sec:algebraicchargedensities}, we will introduce the algebraic charge densities, which are shown to be related to the charge densities, and we derive some of their properties. In particular, we use the algebraic charge densities to give a conjecture for a general formula for the charge densities of the mutually commuting charges. Lastly, in Section~\ref{sec:higherorderSutherlandequation}, we use the algebraic charge densities to derive the higher-order variants of the Sutherland equation.

    \item \textbf{In Section~\ref{sec:longrangedeformations}}, we will discuss the long-range deformations of integrable spin chains. In particular, we will begin in Section~\ref{sec:classificationLongrangeDeformations} by recalling the three families of long-range deformations as described in \cite{Bargheer2008BoostingNearestNeighbour, Bargheer2009}. Namely, as we will review, the long-range deformations are generated by the local operators, boost operators and bilocal operators. In Section~\ref{sec:LaxforBQ3derivation}, we then use the higher-order Sutherland equations to derive the Lax operators for the three families of long-range deformations up to first order in the deformation parameter. Afterwards, in Section~\ref{sec:deformationFRTalgebra}, we discuss the deformations of the FRT-bialgebra that correspond to the long-range deformations. In particular, we will give explicit twisting elements up to first order in the deformation parameter and show that the twisted Lax operators correspond to the long-range deformation, and that there exist $R$-matrices such that the RLL-relation and the Yang-Baxter equation are satisfied. In Section~\ref{sec:someexamples}, we then explicitly work out the examples for the boost $\B[\mathbb{Q}_3]$ and bilocal $[\mathbb{Q}_2|\mathbb{Q}_3]$ long-range deformations of the XXX Heisenberg spin chain. Lastly, in Section~\ref{sec:longrangedeformedYangian}, we discuss the corresponding long-range deformation of the Yangian algebra.

    \item \textbf{In Appendix~\ref{sec:proofsfordeformation}}, we will provide the proofs of the various statements that are made in the main text. Namely, the proofs are somewhat lengthy, and in order to maintain the readability of the main text, we therefore provide the proofs in the appendix.
    
    \item \textbf{In Appendix~\ref{sec:Thetamorphismrelation}}, we will discuss the relation between the boost $\B[\mathbb{Q}_3]$ deformation of the XXX Heisenberg spin chain and the $\Theta$-morphism that was described in \cite{gromov2014theta}. In particular, in that paper, the $\Theta$-morphism was used to explicitly derive the transfer matrix and the $N$-magnon creation operators for the long-range deformed XXX spin chain. We show that, in our framework, we obtain the same results. We then use this relation to identify the $N$-magnon creation operators within our deformed FRT-bialgebra, thus bringing us a small step closer to understanding the long-range Algebraic Bethe Ansatz.
\end{itemize}

\noindent \textbf{Conventions.} In this paper, the tensor product $\otimes$ will denote the ordinary tensor product over $\C$-vector spaces or $\C$-algebras. For an algebra $\A$ and an element $x\in \A$, we use the leg-numbering notation for its inclusion in an $n$-fold tensor product. In particular, we will write $x_i \coloneq 1^{\otimes(i-1)}\otimes x\otimes 1^{\otimes(n-i)} \in \A^{\otimes n}$ for $1\leq i\leq n$. Similarly, for $x\in \A\otimes \A$, the notation $x_{ij}$ denotes the element as considered in $\A^{\otimes n}$ in the $i$-th and $j$-th factor. This notation generalises to elements $x_{i_1,...,i_m}$ in the higher tensor product $\A^{\otimes m}\subseteq \A^{\otimes n}$ for $m \leq n$. We will denote by $\C[\![u]\!]$ the formal power series over $\C$, and by $\C(\!(u)\!)$ the Laurent series over $\C$. Similarly, $\A[\![u]\!]\coloneq \A\otimes \C[\![u]\!]$ and $\A(\!(u)\!) \coloneq \A\otimes \C(\!(u)\!)$ denote the formal power series and the Laurent series with coefficients in the algebra $\A$, respectively. We remark that for most of this paper, algebras will be considered to be unital and associative. However, in Section~\ref{sec:deformationFRTalgebra}, we will also encounter non-associative algebras. It will be explicitly stated when such non-associative algebras are considered. 

\section{Review of FRT-bialgebras}\label{sec:FRTalgebrasandintegrablespinchains}
We will begin our discussion with a review of FRT-bialgebras and the associated integrable spin chains, which is mostly meant to introduce the basic notation and concepts necessary for the description of the long-range deformed algebras that will be discussed in Section~\ref{sec:longrangedeformations}. The FRT-bialgebras, introduced by Faddeev, Reshetikhin and Takhtajan, can be naturally viewed as deformations/quantisations of the algebra of functions on an (affine) Lie group/algebra \cite{FaddeevReshetikhinTakhtajan1989quantAlg, FaddeevReshetikhinTakhtajan1990}, and are defined through the solutions to the Yang-Baxter equation. These FRT-bialgebras can be considered as the dual description of the usual Drinfeld-Jimbo approach to quantum groups, which are quantisations of the corresponding (affine) Lie algebras \cite{Drinfled1985, Drinfled1986, Jimbo1986qanalogue}. Moreover, the FRT-bialgebras describe the underlying algebraic structure behind the quantum inverse scattering method \cite{STF1979QuantumInverseScattering, TakhtadshanFaddeev1997, faddeev1984integrable,  Sklyanin1982QuantumversionInverseScattering, Faddeev1982IntegrableModels, sklyanin1982some, kulish1983quantum, Sklyanin1992quantuminversescatteringmethod, Korepin1993quantum, Faddeev1995QISMHistory}. In Section~\ref{sec:definitionFRTalgebra}, we will review the definition of the $R$-matrix and the associated FRT-bialgebra. Then, in Section~\ref{sec:dualdescription}, we will discuss how one can interpret these FRT-bialgebras as the algebra of functions over the corresponding Drinfeld-Jimbo type quantum groups. In Section~\ref{sec:coquasitriangularityandLfunctionals}, we then review the concept of coquasitriangularity and $\ell$-functionals. In Section~\ref{sec:commutingcharges}, we review the tower of commuting charges that can be obtained from the representations of the FRT-bialgebra, thus constructing integrable spin chains. Then, in Section~\ref{sec:algebraicchargedensities}, we introduce new algebra elements, the algebraic charge densities, which are conjectured to be related to the charge densities. Lastly, in Section~\ref{sec:higherorderSutherlandequation}, we use the algebraic charge densities to derive the higher-order Sutherland equations, corresponding to the higher charge densities. 

\subsection{The FRT-bialgebra}\label{sec:definitionFRTalgebra}
In defining the FRT-bialgebras, we will closely follow the discussions as given in, for example, \cite{EtingofSchiffmann1998, Nazarov2019doubleYangian} (see also \cite{Majid1995FoundationsQG}). Before discussing the FRT-bialgebra, we will first introduce some notation. Throughout the paper, we let $V$ be a finite-dimensional (complex) vector space. Moreover, we consider an element $\R(u,v)\in \mathrm{End}(V\otimes V)[\![u,v]\!]$, called the \textit{$R$-matrix}, that is analytic (and non-constant) in $u$ and $v$ and satisfies the \textit{Yang-Baxter equation} \cite{Yang1967SomeExactResults, Baxter1972PartitionFunction}
\begin{equation}
    \label{eq:matrixYBE}
    \R_{12}(u_1,u_2)\R_{13}(u_1,u_3)\R_{23}(u_2,u_3) = \R_{23}(u_2,u_3)\R_{13}(u_1,u_3)\R_{12}(u_1,u_2),
\end{equation}
for all $u_1,u_2,u_3\in \C$. Moreover, we will assume that the $R$-matrix is \textit{regular}, i.e. $\R(u,u) = \P$, where $\P \colon v\otimes w \mapsto w\otimes v$ is the permutation operator on $V\otimes V$, and that it is \textit{unitary}, i.e. it satisfies the braiding unitarity equation $\R_{12}(u,v)\R_{21}(v,u) = I_{V\otimes V}$.\footnote{Remark here that the regularity of the $R$-matrix, together with the Yang-Baxter equation, already imply braiding unitarity. Indeed, for $u_3 = u_1$, the Yang-Baxter equation \eqref{eq:matrixYBE} implies that $\R_{12}(u_1,u_2)\R_{21}(u_2, u_1) = \R_{23}(u_2, u_1)\R_{32}(u_1, u_2)$, which can only hold if $\R_{12}(u_1,u_2)\R_{21}(u_2, u_1) = \rho(u,v)I_{V\otimes V}$ for some function $\rho(u,v)$. We then always assume the $R$-matrix to be normalised such that $\rho(u,v) = 1$. Also, note that braiding unitarity generally only holds up to a discrete set of points in $\C$, as there might be specific points at which the $R$-matrix becomes singular.} The variables $u$ and $v$ will be referred to as the \textit{spectral parameters}. Various classifications of solutions to the Yang-Baxter equation can be found in \cite{Kulish1982SolutionstoYangBaxter, Sogo1982Classification, Khachatryan2013YangBaxterEightVertex, Vieira2018YangBaxterDifferential, Hietarinta1993Solving2DYBE, Dye2003UnitaryYBE4, Pourkia2018CYBE, deLeeuw2019Classifyingintegrable, deLeeuw2020YangBaxterandBoost, CorcorandeLeeuw2024, Garkun2024Newspectral, deleeuw20264x4solutions}.\\

We will fix an orthonormal basis $\{e_i\}_{i=1}^{N}$ for $V$, where $N  = \dim V$, and additionally consider the corresponding standard basis $\{E_{ij}\}_{i,j=1}^{N}$ for $\mathrm{End}(V)$. Next, we let  $\C\langle t^{(-k)}_{ij}\rangle$ be the free algebra generated by the elements $t^{(-k)}_{ij}$ for $k\in \Z_{\geq 1}$ and $1 \leq i,j \leq N$. We define the corresponding (formal) generating function for the algebra elements to be given by \cite{Nazarov2019doubleYangian}
\begin{equation}\label{eq:matrixelementsexpansion}
    t_{ij}(u) \coloneq \delta_{ij} +t_{ij}^{(-1)} + \sum_{k=1}^\infty t^{(-k-1)}_{ij} u^k \in \C\langle t^{(-k)}_{ij}\rangle[\![u]\!].
\end{equation} 
In order to lighten up the notation in the rest of this paper, we introduce the $N\times N$ matrix $\T(u) \coloneq (t_{ij}(u))_{i,j=1}^{N}$ with matrix coefficients $t_{ij}(u)$. Alternatively, one may use a similar notation as given in e.g. \cite{MolevNazarovOlshansky1996Yangians, MolevYangiansClassicalLie} and write 
\begin{equation}\label{eq:matrixcorepresentationT}
    \T(u) = \sum_{i,j = 1}^{N}E_{ij}\otimes t_{ij}(u) \in \mathrm{End}(V)\otimes\C\langle t_{ij}^{(-k)}\rangle[\![u]\!].
\end{equation}
This matrix may be interpreted as defining the \textit{corepresentation} $\phi_u \colon V\to V\otimes \C\langle t^{(-k)}_{ij}\rangle[\![u]\!]$ given by $\phi_u(e_j) = \sum_{i=1}^{N} e_i\otimes t_{ij}(u)$. The matrix $\T(u)$ is then also called the \textit{matrix corepresentation} of $\C\langle t^{(-k)}_{ij}\rangle$. Of course, this is only a corepresentation if $\C\langle t^{(-k)}_{ij}\rangle$ is equipped with the appropriate coalgebra structure \cite{CPAGtQG}, which we provide in Definition~\ref{def:FRTalgebradef}. Suppose now that we have some tensor product $V^{\otimes m}$ for $m\geq 1$, then we define the matrix of algebra elements at position $n$ in this tensor product to be given by
\begin{equation}\label{eq:matrixatpositionn}
    \T_n(u) \coloneq \sum_{i,j = 1}^{N}\left[I_V^{\otimes(n-1)}\otimes E_{ij}\otimes I_V^{\otimes (m-n)}\right]\otimes t_{ij}(u) \in \mathrm{End}(V^{\otimes m})\otimes \C\langle t_{ij}^{(-k)}\rangle[\![u]\!]
\end{equation}
for $1\leq n \leq m$. Alternatively, this may be interpreted as applying the corepresentation $\phi_u$ in the $n$-th coordinate of the tensor product $V^{\otimes m}$. Similarly, we define the tensor product of matrices to be given by
\begin{equation}\label{eq:tensorproductofmatrices}
    \T(u)\otimes \T(v) \coloneq \sum_{i,j,k,\ell = 1}^{n} E_{ij}E_{k\ell}\otimes \left[t_{ij}(u)\otimes t_{k\ell}(v)\right] \in \mathrm{End}(V)\otimes \C\langle t_{ij}^{(-k)}\rangle[\![u]\!]\otimes \C\langle t_{ij}^{(-k)}\rangle[\![v]\!].
\end{equation}
This tensor product corresponds to the successive application of the corepresentation, i.e. $(\phi_u\otimes \id)\circ \phi_v$. Remark that $E_{ij}E_{k\ell} = \delta_{jk}E_{i\ell}$, after which one recognises the ordinary coproduct rule for $\C\langle t^{(-k)}_{ij}\rangle$ \cite{QGAR} (see also Definition~\ref{def:FRTalgebradef}). Note here that, in principle, the definitions in \eqref{eq:matrixatpositionn} and \eqref{eq:tensorproductofmatrices} may be combined to consider tensor products of a more general form like $\T_{n}(u)\otimes \T_{m}(v)$, which is then defined naturally by the successive application of the corepresentation at different coordinates.\\

Based on \cite{FaddeevReshetikhinTakhtajan1989quantAlg, FaddeevReshetikhinTakhtajan1990} (see also e.g. \cite{EtingofSchiffmann1998, Nazarov2019doubleYangian}), we have the following definition:
\begin{definition}\label{def:FRTalgebradef}
    Let $\R(u,v)\in \mathrm{End}(V\otimes V)[\![u,v]\!]$ be a regular solution to the Yang-Baxter equation \eqref{eq:matrixYBE}. The associated \textit{FRT-bialgebra}, denoted by $\A(\R)$, is the unital and associative algebra generated by the elements $t_{ij}^{(-k)}$ for $k\in \Z_{\geq 1}$ and $1\leq i,j\leq N$, subject to the relations
    \begin{equation}
        \label{eq:RTTrelations}
        \R_{12}(u,v)\T_{1}(u)\T_2(v) = \T_2(v)\T_1(u)\R_{12}(u,v),
    \end{equation}
    and with a bialgebra structure given by\footnote{Recall that the coproduct is a homomorphism $\Delta \colon \A(\R) \to \A(\R)\otimes \A(\R)$. Here, $\Delta(\T(u))$ is defined as taking the coproduct at each matrix element of the matrix corepresentation $\T(u)$. Similarly, for the counit $\epsilon \colon \A(\R)\to \C$, we define $\epsilon(\T(u))$ as taking the counit at each matrix element.}
    \begin{equation}
        \Delta(\T(u)) = \T(u)\otimes \T(u)\quad\text{and} \quad \epsilon(\T(u)) = I_V.
    \end{equation}
\end{definition}
\noindent We refer to \cite{QGAR, CPAGtQG, KasselQuantumGroups, Majid1995FoundationsQG} for some standard literature on bialgebras, Hopf algebras and quantum groups. From now on forward, we will denote by $\T(u)$ the corresponding matrix in $\mathrm{End}(V)\otimes \A(\R)[\![u]\!]$, i.e. view $\T(u)$ as a matrix corepresentation of $\A(\R)$. Remark here that $\A(\R)$ is a bialgebra, but \textit{not} a Hopf algebra. In particular, to define an antipode on $\A(\R)$, the inverse of the matrix $\T(u)$ needs to be well-defined, which is only true in a completion of $\A(\R)$. If we let $\A^\circ(\R)$ be the formal completion of $\A(\R)$ relative to the descending filtration defined by $\deg t^{(-k)}_{ij} = k$, then $\T(u)$ is invertible in $\text{End}(V)\otimes \A^\circ(\R)$, such that $\A^\circ(\R)$ is a Hopf algebra with antipode $S(\T(u)) = \T(u)^{-1}$  \cite{Nazarov2019doubleYangian}.\\

Lastly, we remark that our discussed FRT-bialgebras are not necessarily the same as the usual RTT-realisations of some known quantum groups. For example, for an $R$-matrix corresponding to the Yangian (see e.g. \cite{Drinfled1985, MolevYangiansClassicalLie, MolevNazarovOlshansky1996Yangians}), the associated FRT-bialgebra $\A(\R)$ corresponds to the \textit{dual Yangian} \cite{Nazarov2019doubleYangian}, which can be seen from the expansion in $u$ of $t_{ij}(u)$ as defined in \eqref{eq:matrixelementsexpansion}, which is around $u=0$ instead of around $u=\infty$. Therefore, one should not confuse the generators $t^{(-k)}_{ij}$ for the FRT-bialgebra $\A(\R)$ with the generators for the Yangian. Moreover, for FRT-bialgebras related to deformations of specific classical Lie algebras, one often also imposes additional constraints on the algebra. For example, one might impose the quantum determinant to be equal to one, or impose additional orthogonality or symplectic conditions \cite{MolevYangiansClassicalLie}. We will not impose such additional constraints to maintain generality. In the case of quantum affine algebras $\U_q(\widehat{\mathfrak{g}})$, it is usual to have a (trigonometric) $R$-matrix that depends on (multiplicative) spectral parameters $z_1, z_2$ such that $\R_q(z_1,z_2) =\R_q(z_1/z_2)$ (see e.g. \cite{DingFrenkel1992}). However, under this parametrisation, the $R$-matrix is not analytic at $z_1,z_2 = 0$, such that the above construction does not work. Therefore, one instead needs to consider the reparametrisation $z = q^{u}$, such that the $R$-matrix $\R(u_1,u_2) \coloneq \R_q(q^{u_1 - u_2})$ is analytic around $u_1,u_2 = 0$. Note that in that case, the corresponding FRT-bialgebra $\A(\R)$ in Definition~\ref{def:FRTalgebradef} differs from the usual constructions as given in, for example, \cite{FaddeevReshetikhinTakhtajan1989quantAlg, FaddeevReshetikhinTakhtajan1990, Drinfled1988, DingFrenkel1992}, as the algebra generators depend on the choice of parametrisation. In order to maintain generality in our discussion, we will always assume that $\R(u_1,u_2)$ is analytic around $u_1,u_2 = 0$, and therefore consider the appropriate corresponding parametrisations of the $R$-matrix and the ``Yangian-like'' FRT-bialgebras. However, all the techniques developed in this paper could, for example, also be directly applied to the ``true'' trigonometric FRT-bialgebra corresponding to the quantum affine algebra.

\subsection{The dual description}
\label{sec:dualdescription}
It is well-known that the above-described FRT-algebraic construction for (affine) quantum groups corresponds to the dual description of the Drinfeld-Jimbo construction \cite{Drinfled1988}, i.e. the quantisation of the universal enveloping algebras. In particular, there exists a dual pairing of bialgebras between them. Here, we will give a brief exposition of this duality, using the specific realisation of the Yangian as a simple example. Of course, the following generalises to more general quantum affine algebras.\\

Recall the definition of the Yangian: Let $\mathfrak{g}$ be a finite-dimensional complex simple Lie algebra and $\hbar > 0$ some deformation parameter. The Yangian $\mathcal{Y}_\hbar (\mathfrak{g})$ is the algebra generated by elements $x, \J(x)$ for $x\in \mathfrak{g}$, with $[x, \J(y)] = \J([x,y])$ for $x,y\in \mathfrak{g}$, together with the \textit{Serre relations} (see e.g. \cite{CPAGtQG, Drinfled1986, Drinfled1988}), which we will not denote explicitly. The Yangian may be viewed as a deformation of the universal enveloping algebra of the loop algebra $\U(\mathfrak{g}[u])$. In particular, the Yangian $\mathcal{Y}_\hbar (\mathfrak{g})$ has a Hopf algebra structure given by 
\begin{equation}\label{eq:coproductYangian}
    \begin{alignedat}{2}
        \Delta(x) &= x\otimes 1 + 1\otimes x, \quad  &\Delta(\J(x)) &= \J(x)\otimes 1 + 1\otimes \J(x) + \frac{\hbar}{2}[x\otimes 1, t]
     \end{alignedat}
 \end{equation}
 and $\epsilon(x) = \epsilon(\J(x)) = 0$, $S(x) = -x$, $S(\J(x)) = -\J(x) + \frac{1}{4}cx$ for $x\in \mathfrak{g}$. Here, $c$ is the eigenvalue of the Casimir element $t$ in the adjoint representation, which we regard as an element of $\U(\mathfrak{g})^{\otimes 2}$ \cite{CPAGtQG}. Moreover, there exists an automorphism $\B_u \colon \mathcal{Y}_{\hbar}(\mathfrak{g})\to \mathcal{Y}_{\hbar}(\mathfrak{g})$ given by $\B_u(x) = x$ and $\B_u(\J(x)) = \J(x)+ux$ for $x\in \mathfrak{g}$, which corresponds to a reparametrisation of the spectral parameter $u$.\\
 
It was shown in \cite{Drinfled1985} that the Yangian is \textit{pseudo-triangular}: there exists an element $\R(u) = 1+\sum_{k=1}^{\infty}\R_k\left(\frac{\hbar}{u}\right)^k$, called the \textit{universal $R$-matrix}, with $\R_k \in \mathcal{Y}_\hbar(\mathfrak{g})\otimes \mathcal{Y}_\hbar(\mathfrak{g})$ such that
 \begin{equation}
    \label{eq:quasitriangularityconditions}
    \begin{gathered}
        (\Delta \otimes \mathrm{id})(\mathcal{R}(u)) = \mathcal{R}_{13}(u)\mathcal{R}_{23}(u), \quad (\mathrm{id} \otimes \Delta)(\mathcal{R}(u)) = \mathcal{R}_{13}(u)\mathcal{R}_{12}(u),\\
        \mathcal{R}(u) (\mathcal{B}_u\otimes \mathrm{id})(\Delta(x)) = (\mathcal{B}_u\otimes \mathrm{id})(\Delta^{\mathrm{cop}}(x))\mathcal{R}(u),
    \end{gathered}
\end{equation}
for all $x\in \mathcal{Y}_\hbar(\mathfrak{g})$, where $\Delta^{\mathrm{cop}} = \sigma\circ \Delta$ with $\sigma$ the flip operator $\sigma\colon x\otimes y \mapsto y\otimes x$. Moreover, the $R$-matrix is\textit{ unitary}, i.e. $\R_{12}(u)\R_{21}(-u) = 1\otimes 1$ and it satisfies $(\B_u\otimes \B_{v})\R(w) = \R(w+u-v)$. In the following, we will often use the (abuse of) notation $\R \coloneq \R(0)$.\footnote{Although $\R(0)$ is not well-defined as an algebraic element in $\mathcal{Y}_\hbar(\mathfrak{g})\otimes \mathcal{Y}_\hbar(\mathfrak{g})$, it does have a well-defined expression in specific representations of the Yangian.} It follows immediately from \eqref{eq:quasitriangularityconditions} (see e.g. \cite{Drinfled1985, Drinfled1988}) that the $R$-matrix satisfies the (now algebraic) Yang-Baxter equation
\begin{equation}
    \R_{12}(u_1-u_2)\R_{13}(u_1-u_3)\R_{23}(u_2-u_3) = \R_{23}(u_2-u_3)\R_{13}(u_1-u_3)\R_{12}(u_1-u_2),
\end{equation}
for $u_1,u_2,u_3\in \C$. Note here that this $R$-matrix is of difference form, i.e. satisfies $\R(u,v) = \R(u-v)$.\\

Any representation of the Yangian algebra will now give rise to a matrix solution of the Yang-Baxter equation \eqref{eq:matrixYBE}. Therefore, let $(\rho,V)$ be a (finite-dimensional) representation of $\mathcal{Y}_\hbar(\mathfrak{g})$ such that $(\rho\otimes \rho)(\R(0)) = \P$ and define $\rho_u \coloneq \rho\circ \B_u$, which we call an \textit{evaluation representation} at $u$.\footnote{In practice, this representation $\rho$ should correspond to a fundamental representation of $\mathfrak{g}$ in order for the dual pairing to be possibly non-degenerate. For example, for $\mathfrak{g} = \mathfrak{sl}_2$, $\rho$ should correspond to a spin-1/2 representation. It can be noted here that if we instead choose $\rho$ to correspond to, for example, a spin-1 representation, the associated FRT-bialgebra is a subalgebra of the bialgebra for the spin-1/2 representation. Namely, any matrix element for the spin-1 representation can be written as a sum of products of matrix elements for the spin-1/2 representation.} The $R$-matrix $\R_\rho(u,v) \coloneq (\rho\otimes \rho)(\R(u-v))$ is now a regular solution to the Yang-Baxter equation \eqref{eq:matrixYBE} and therefore defines an associated FRT-bialgebra $\A(\R_\rho)$. Importantly, there is a \textit{dual pairing} as bialgebras between $\A(\R_\rho)$ and $\mathcal{Y}_\hbar(\mathfrak{g})$, given by the bilinear map \cite{Drinfled1988}
\begin{equation}
    \label{eq:dualpairingYangian}
    \langle \cdot,\cdot\rangle \colon  \A(\R_\rho)\times \mathcal{Y}_\hbar(\mathfrak{g}) \to \C, \quad \langle t_{ij}(u),x\rangle \coloneq \langle e_i|\rho_u(x)|e_j\rangle,
\end{equation}
which satisfies
\begin{equation}
    \label{eq:dualpairingprops}
    \begin{alignedat}{2}
        \langle a\otimes b, \Delta_\mathcal{Y}(x) \rangle &= \langle ab, x \rangle, \quad &\langle 1_\A, x\rangle &= \epsilon_\mathcal{Y}(x),\\
        \langle \Delta_\A(a), x\otimes y \rangle &= \langle a, xy\rangle \quad &\langle a, 1_\mathcal{Y}\rangle& = \epsilon_\A(a),
    \end{alignedat}
\end{equation}
for $a,b\in \A(\R_\rho)$ and $x,y\in \mathcal{Y}_\hbar(\mathfrak{g})$ and where $\langle a\otimes b,x\otimes y\rangle = \langle a,x\rangle\langle b,y\rangle$ \cite{QGAR}. In particular, this shows that the elements $t_{ij}(u)$ in the FRT-bialgebra $\A(\R_\rho)$ correspond to matrix elements for the representation $\rho$ of $\mathcal{Y}_\hbar(\mathfrak{g})$ \cite{Drinfled1988}. Through the dual pairing \eqref{eq:dualpairingYangian}, one can then view elements in $\A(\R_\rho)$ as functions $\mathcal{Y}_\hbar(\mathfrak{g}) \to \C$, and elements in $\mathcal{Y}_\hbar(\mathfrak{g})$ as functions $\A(\R_\rho)\to \C$, i.e. $\A(\R_\rho)\subseteq \mathcal{Y}_\hbar(\mathfrak{g})^*$ and $\mathcal{Y}_\hbar(\mathfrak{g})\subseteq \A(\R_\rho)^*$, where $\A(\R_\rho)^*$ and $\mathcal{Y}_{\hbar}(\mathfrak{g})^*$ are the bialgebra duals \cite{QGAR} (see also Section~\ref{sec:coquasitriangularityandLfunctionals}). From \eqref{eq:dualpairingprops}, it can then be remarked that the product on $\A(\R_\rho)$ corresponds to the coproduct on $\mathcal{Y}_\hbar(\mathfrak{g})$ and vice versa. This can also explicitly be seen by comparing \eqref{eq:RTTrelations} and \eqref{eq:quasitriangularityconditions}, where the $R$-matrix defines a braiding relation on the product of $\A(\R_{\rho})$, while it gives a braiding relation on the \textit{co}product of $\mathcal{Y}_{\hbar}(\mathfrak{g})$. Similarly, the unit of one bialgebra corresponds to the counit of the other. Lastly, if we extend the dual pairing to the completion $\A^\circ(\R)$, it can be shown \cite{QGAR} that the dual pairing \eqref{eq:dualpairingYangian} implies that $\langle S_\A(a), x\rangle = \langle a, S_\mathcal{Y}(x)\rangle$, such that the antipodes of the two Hopf algebras also correspond to each other. Remark that this duality is equivalent to the usual construction of the dual Yangian as in, for example, \cite{Nazarov2019doubleYangian}.\\

Of course, the above construction is more generally true for affine quantum algebras and extends beyond Yangians. Although for a general $R$-matrix and FRT-bialgebra $\A(\R)$, it is not necessarily known what the corresponding Drinfeld-Jimbo type quantum algebra is, one can still interpret the FRT-bialgebra as being dual to one, and think of the elements $t_{ij}(u)$ as corresponding to matrix elements of a given (evaluation) representation. In the rest of this paper, we will solely work in terms of the FRT-bialgebras, while frequently coming back to this interpretation. Note here that in general, the dual pairing \eqref{eq:dualpairingYangian} is not non-degenerate. In order to obtain a non-degenerate pairing, one would need to quotient out by some ideal, which corresponds to imposing additional constraints on $\A(\R)$. For example, one may require the quantum determinant to be equal to one, or impose additional orthogonality or symplectic conditions \cite{Drinfled1988}. Again, we will not consider such quotients to maintain generality.

\subsection{Coquasitriangularity and $\ell$-functionals}
\label{sec:coquasitriangularityandLfunctionals}
Since the Drinfeld-Jimbo construction of the quantum algebra admits a (pseudo) universal $R$-matrix, the duality as described in the previous section also implies the existence of a dual notion of the universal $R$-matrix for the FRT-bialgebra. In particular, the dual bracket \eqref{eq:dualpairingYangian} and the universal $R$-matrix $\R$ of $\mathcal{Y}_{\hbar}(\mathfrak{g})$ induce a bilinear map $\mathbf{r} \colon \A(\R_\rho)\times\A(\R_\rho) \to \C$, given by $\mathbf{r}(a,b) \coloneq \langle a\otimes b, \mathcal{R}\rangle$ for $a,b\in \A(\R_\rho)$. The function $\mathbf{r}$ is then the dual object to the universal $R$-matrix, which we call the \textit{universal $r$-form}. The properties that such a universal $r$-form satisfies are similar to the properties of the universal $R$-matrix, but where the roles of the comultiplication and multiplication are interchanged by the dual pairing through \eqref{eq:dualpairingprops}.\\

To state the dual notion of quasitriangularity for general bialgebras, for which we follow \cite{QGAR}, we introduce some notation. For a general bialgebra $\A$, we denote the coproduct for an element $a\in \A$ by the Sweedler notation $\Delta(a) = \sum a_{(1)}\otimes a_{(2)}$. Similarly, for a higher coproduct, one writes $\Delta^{(n)}(a) = \sum a_{(1)}\otimes a_{(2)}\otimes \cdots \otimes a_{(n+1)}$ since the coproduct is coassociative. For linear maps $f,g \colon \A\to \C$, we then define the convolution product $\ast$ by $(f\ast g)(a) \coloneq \sum f(a_{(1)})g(a_{(2)})$ for $a\in \A$. Similarly, one can define a ``coproduct'' on these functions by $\Delta(f)\colon \A\otimes \A\to \C, (a,b)\mapsto f(ab)$ for $a,b\in \A$. Of course, this convolution product and coproduct on the space of functions just correspond to the ordinary product and coproduct in the dual algebra in the context of the dual pairing \eqref{eq:dualpairingprops}. Lastly, we define $\epsilon(f) = f(1)$, after which the space of functions $\A\to \C$ admits a bialgebra-like\footnote{\label{footnote:coproductdualalgebra}Note here that, currently, the ``coproduct'' on the space of functions is a map $\Delta \colon \A^* \to (\A\otimes \A)^*$. In order for $\A^*$ to be a proper bialgebra, one requires the image of $\Delta$ to lie in $\A^*\otimes \A^* \subset (\A\otimes \A)^*$, which is not true in general for infinite-dimensional bialgebras. When we refer to the coproduct of $\A^*$, we will thus refer to this weaker notion of $\Delta$.} structure, which we denote by $\A^*$.\\

We now have the following definition \cite{QGAR, KasselQuantumGroups}:
\begin{definition}
    A bialgebra $\A$ is called \textit{coquasitriangular}\footnote{In literature, coquasitriangular bialgebras may also be called \textit{dual quasitriangular bialgebras} \cite{Majid1993BraidedGroups}, \textit{braided bialgebras} \cite{Larson1991Twodualclasses, Doi1993braidedbialgebras}, or \textit{cobraided bialgebras} \cite{KasselQuantumGroups}.} if there is a bilinear form $\mathbf{r} \colon \A\times \A\to \C$, called the \textit{universal $r$-form}, such that:
    \begin{itemize}[leftmargin=*]
        \item $\mathbf{r}$ is invertible under the convolution product, i.e. there exists another bilinear form $\mathbf{r}^{-1} \colon \A\times\A\to \C$ such that $(\mathbf{r}\ast \mathbf r^{-1})(a,b) = \epsilon(a)\epsilon(b) = (\mathbf{r}^{-1}\ast \mathbf r)(a,b)$ for all $a,b\in \A$.
        \item The universal $r$-form determines a braiding structure on the algebra, i.e.
        \begin{equation}
            \label{eq:braidingrform}
            \sum \mathbf{r}(a_{(1)}, b_{(1)})a_{(2)}b_{(2)} = \sum b_{(1)}a_{(1)}\mathbf{r}(a_{(2)}, b_{(2)}).
        \end{equation}
        \item The universal $r$-form factorises under multiplication, i.e.
        \begin{equation}
            \label{eq:factorisationrform}
            \begin{split}
                \mathbf{r}(a\cdot b, c) &= (\mathbf{r}_{13}\ast \mathbf{r}_{23})(a,b,c) = \sum \mathbf{r}(a, c_{(1)})\mathbf{r}(b, c_{(2)})\\
                \mathbf{r}(a, b\cdot c) &= (\mathbf{r}_{13}\ast \mathbf{r}_{12})(a,b,c)=\sum \mathbf{r}(a_{(1)}, c)\mathbf{r}(a_{(2)}, b),
            \end{split}
        \end{equation}
        for $a,b,c\in \A$, where we introduced the notation $\mathbf{r}_{12}(a,b,c) \coloneq \mathbf{r}(a,b)$, and analogously for $\mathbf{r}_{13}$ and $\mathbf{r}_{23}$.
    \end{itemize}
    Moreover, the bialgebra $\A$ is called \textit{cotriangular} if $\mathbf{r}^{-1} = \mathbf{r}_{21}$, where $\mathbf{r}_{21}(a,b) \coloneq \mathbf{r}(b,a)$.
\end{definition}
\noindent Any $r$-form can be shown to satisfy $\mathbf{r}(1,a) = \mathbf{r}(a,1) = \epsilon(a)$ \cite{QGAR}. Also, one can recognise these properties to be the dual analogue of \eqref{eq:quasitriangularityconditions}. It can be noted that the factorisation property \eqref{eq:factorisationrform} of the $r$-form corresponds with the associativity of the algebra. In particular, since $(ab)c = a(bc)$, the pictorially depicted braiding $(ab)c \to c(ab)$ using the $r$-form in \eqref{eq:braidingrform} can be factorised into the braiding of $bc \to cb$ and $ac\to ca$. We emphasise here the necessity of associativity. In Section~\ref{sec:deformationFRTalgebra}, we will (weakly) break the associativity of the algebra, such that the factorisation property will not hold anymore in general. This breaking of the associative, and therefore non-factorisability, will be the main mechanism that gives rise to long-range deformations. Lastly, by combining $\eqref{eq:braidingrform}$ and $\eqref{eq:factorisationrform}$, it follows that the $r$-form satisfies the dual Yang-Baxter equation
\begin{equation}\label{eq:dualYangBaxtereq}
    \sum \mathbf{r}(a_{(1)}, b_{(1)})\mathbf{r}(a_{(2)}, c_{(1)})\mathbf{r}(b_{(2)}, c_{(2)}) = \sum \mathbf{r}(b_{(1)}, c_{(1)})\mathbf{r}(a_{(1)}, c_{(2)})\mathbf{r}(a_{(2)}, b_{(2)}),
\end{equation}
for $a,b,c\in \A$.\\

Of course, any FRT-bialgebra $\A(\R)$ is cotriangular by construction, where the universal $r$-form is simply given by $\mathbf{r}\left(\T_1(u), \T_{2}(v)\right) \coloneq \R_{12}(u,v)$. Here, the cotriangularity (as opposed to coquasitriangularity) is a consequence of the braiding unitarity of the $R$-matrix. The evaluation of the $r$-form for products of elements in $\A(\R)$ can then be obtained through the factorisation property \eqref{eq:factorisationrform} of the $r$-form, for which the Yang-Baxter equation \eqref{eq:matrixYBE} provides a consistency condition to the associativity of the underlying algebra. Here, a product of elements of $\A(\R)$ in $\mathbf{r}$ will correspond to the coproduct in the universal $R$-matrix in the dual picture, and we will use the notation\footnote{Here, $\overrightarrow{\prod}_{i=1}^{N}a_i \coloneq a_{1}\cdots a_N$, while $\overleftarrow{\prod}_{i=1}^{N}a_i \coloneq a_{N}\cdots a_1$.}
\begin{equation}\label{eq:Rmatrixfactorisation}
    \begin{split}
        \R_{(1,..., N)(N+1,..., M)}(u_1,..., u_{M})&\coloneq \mathbf{r}\left(\T_1(u_1)\cdots \T_{N}(u_N), \T_{N+1}(u_{N+1})\cdots \T_{M}(u_{M})\right)\\
        &= \overset{N}{\underset{i=1}{\overrightarrow{\prod}}}\overset{M}{\underset{\ j=N+1}{\overleftarrow{\prod}}}\R_{ij}(u_i,u_j),
    \end{split}
\end{equation}
for $0<N< M$. On the right, one can recognise the same formula as the coproduct of a universal $\R$-matrix of a quasitriangular bialgebra, where the brackets in the subscript of $\R_{(1,\cdots, N)(N+1, ..., M)}$ denote the $(N-1)$-th and $(M-N-1)$-th coproduct on the first and second coordinate of $\R$, respectively. In particular, in the dual picture for the universal $R$-matrix, we have $\R_{(1,...,N)(N+1,...,M)} \coloneq (\Delta^{N-1}\otimes \Delta^{M-N-1})(\R)$.\\

Importantly, the universal $r$-form naturally induces a homomorphism from a bialgebra $\A$ to its dual. Indeed, consider the following definition:
\begin{definition}
    Let $\A$ be a cotriangular bialgebra and $a\in \A$. The \textit{$\ell$-functional} is the map $\bm{\ell}(a) \colon \A \to \C$ given by $\bm{\ell}(a) \coloneq \mathbf{r}(a,\cdot)$.\footnote{Note that if $\A$ is coquasitriangular, there also exists another $\ell$-functional $\bm{\ell}^- \coloneq \mathbf{r}^{-1}(\cdot, a)$, which coincides with $\bm{\ell}$ in the cotriangular case. Also, we use a different convention from, for example \cite{QGAR}, where the element $a$ is inserted in the second coordinate of $\mathbf{r}$. Instead, we use a convention which better aligns with most physics literature.}
\end{definition}
\noindent Now let $\U \subseteq \A^*$ be the bialgebra generated by these functionals $\bm{\ell}(a)$ for $a\in \A$. In that case, the map $\bm{\ell} \colon a\mapsto \bm{\ell}(a)$ is a bialgebra homomorphisms from $\A^{\mathrm{cop}}$ to $\U$, where $\A^{\mathrm{cop}}$ is the coopposite bialgebra of $\A$, i.e. the bialgebra with coproduct $\Delta^\mathrm{cop} \coloneq \sigma\circ \Delta$, where $\sigma$ the flip operator. Indeed, it follows from \eqref{eq:factorisationrform} that
\begin{equation}
    \begin{split}
        \langle c, \bm{\ell}(ab)\rangle &= \sum \bm{\ell}(a)(c_{(1)})\bm{\ell}(b)(c_{(2)}) = \langle \Delta(c), \bm{\ell}(a)\otimes \bm{\ell}(b)\rangle = \langle c, \bm{\ell}(a) \bm{\ell}(b)\rangle\\
        \langle bc, \bm{\ell}(a)\rangle &= \sum \bm{\ell}(a_{(1)})(c)\bm{\ell}(a_{(2)})(b) = \langle b\otimes c, \bm{\ell}(\Delta^{\mathrm{cop}}(a))\rangle = \langle b\otimes c, \Delta(\bm{\ell}(a))\rangle,
    \end{split}
\end{equation}
for $a,b,c\in \A$, where we used the notation $\langle b, \bm{\ell}(a)\rangle \coloneq \bm{\ell}(a)(b) = \mathbf{r}(a,b)$.\\

In the context of the FRT-bialgebra $\A(\R_\rho)$, as described in Section~\ref{sec:dualdescription}, under the dual bracket \eqref{eq:dualpairingYangian}, the algebra $\U$ can be identified with a subalgebra of a certain completion of $\mathcal{Y}_\hbar(\mathfrak{g})$, which we denote by $\hat{\mathcal{Y}}_\hbar(\mathfrak{g})$. Therefore, the function $\bm{\ell}(a) \coloneq (\langle a, \cdot\rangle\otimes \id)(\R)$ gives a bialgebra homomorphism
\begin{equation}
    \bm{\ell} \colon \A(\R_\rho)^\mathrm{cop}\to \hat{\mathcal{Y}}_\hbar(\mathfrak{g}).
\end{equation}
In particular, any representation of $\mathcal{Y}_\hbar(\mathfrak{g})$ induces a representation of $\A(\R_\rho)^{\mathrm{cop}}$ through the $\ell$-functional. Of course, this could now, in principle, be a representation different from $\rho$. However, for our purposes, we will choose the representation to be the same. In that case, the elements $t_{ij}(u) \in \A(\R_\rho)$ correspond to matrix elements of $\rho_u$, such that evaluation of $\bm{\ell}(a)$ at these matrix elements $t_{ij}(u)$ corresponds to the matrix elements of the matrix $\rho_u(\bm{\ell}(a))$ in $\mathrm{End}(V)$. For a general FRT-bialgebra $\A(\R)$, one defines the evaluation representation $(\rho_u,V)$ of $\A(\R)^\mathrm{cop}$ to be given by $\rho_u(a) \coloneq \bm{\ell}(a)(\T(u)) \in \mathrm{End}(V)$ for $a\in \A(\R)$.\\

By restricting to the elements in the matrix corepresentation $\T(u)$, one can now define an operator $\L(u,v) \in \mathrm{End}(V)\otimes \mathrm{End}(V)$, called the \textit{Lax operator}, given by
\begin{equation}
    \mathscr{L}_{a1}(u,v) \coloneq \rho_v(\T_a(u)) = \bm{\ell}(\T_{a}(u))(\T_1(v)) = \mathbf{r}(\T_a(u), \T_1(v)) = \R_{a1}(u,v),
\end{equation}
which just corresponds to the evaluation of the representation $\rho_v$ at the matrix elements of $\T_a(u)$, where $a$ denotes a coordinate on the \textit{auxiliary space} of the matrix corepresentation, while $1$ is a coordinate on the actual representation space $V$. Note here that the Lax operator therefore corresponds to a representation of $\A(\R)^{\mathrm{cop}}$ that is induced from the $\ell$-functional. Of course, this Lax operator is now simply equal to the $R$-matrix. However, we will see in Section~\ref{sec:longrangedeformations} that this is no longer true for the long-range deformations. For this paper, we will solely be interested in evaluation-zero representations, i.e. we will set $v=0$, and write $\L(u)\coloneq \L(u,0)$. Since $\mathscr{L}$ corresponds to a representation of $\A(\R)^{\mathrm{cop}}$, it satisfies the \textit{RLL-relation}
\begin{equation}
    \R_{ab}(u,v)\mathscr{L}_{a1}(u)\mathscr{L}_{b1}(v) = \mathscr{L}_{b1}(v)\mathscr{L}_{a1}(u)\R_{ab}(u,v),
\end{equation}
which ensures the integrability of the associated spin chain \cite{Faddeev1982IntegrableModels} (see also Section~\ref{sec:commutingcharges}).

\subsection{The Bethe subalgebra and charge densities}\label{sec:commutingcharges}
The RTT relation \eqref{eq:RTTrelations} for the FRT-bialgebra $\A(\R)$ gives rise to a large abelian subalgebra, which corresponds to an (infinite) tower of commuting charges in the associated physical spin chain for a given representation. In particular, the trace over $\mathrm{End}(V)$ in the matrix corepresentation $\T(u)$ \eqref{eq:matrixcorepresentationT} defines the elements $t^{(-k)}\in \A(\R)$ given by
\begin{equation}
    t(u) \coloneq \text{tr}_{\mathrm{End}(V)} \T(u) = 1+\sum_{k=0}^{\infty} t^{(-k-1)} u^k \in \A(\R)[\![u]\!].
\end{equation}
By taking the trace over the RTT-equation \eqref{eq:RTTrelations}, it then immediately follows that all these elements commute, i.e. $[t(u), t(v)] = 0$, or $[t^{(-k)}, t^{(-\ell)}] = 0$ for $k,\ell \geq 1$ \cite{STF1979QuantumInverseScattering, Sklyanin1982QuantumversionInverseScattering, TakhtadshanFaddeev1997}. The elements $t^{(-k)}$ therefore generate a (large) abelian subalgebra of $\A(\R)$, called the \textit{Bethe subalgebra} \cite{Nazarov1996BetheSubalgebras}. To make contact with the \textit{local charges} that are generated by the Bethe subalgebra on specific spin chains, one often defines the elements
\begin{equation}\label{eq:definitioncharges}
    q_k \coloneq \frac{d^{k-2}}{du^{k-2}}\left(t(u)^{-1}\frac{d}{du}t(u)\right)\bigg|_{u=0} \in \A(\R) \quad \text{for } k\geq 2,
\end{equation}
such that the Bethe subalgebra is now generated by $t^{(-1)}$ and $q_k$ for $k\geq 2$, and the elements $q_k$ are called the \textit{charges}.\footnote{Often, one also considers the first charge $q_1 = \log(1+t^{(-1)})$, where the logarithm should be interpreted as the Taylor expansion of $\log(1+x)$ around $x = 0$. Note that these charges are only well-defined in the completion $\A^\circ(\R) $ of the FRT-bialgebra as given under Definition~\ref{def:FRTalgebradef}.} In the context of spin chains, one of these charges in the corresponding representation is then defined to be the \textit{Hamiltonian} $H$ of a system. In most cases, this Hamiltonian is specifically identified with the second charge $q_2$. Since the FRT-bialgebra $\A(\R_\rho)$ corresponds to the dual of a Drinfeld-Jimbo algebra, this also implies that the Hamiltonian is given by an element of the dual of the quantised universal enveloping algebra. For example, for the Yangian, one has $H \in \mathcal{Y}_\hbar (\mathfrak{g})^*$ \cite{Drinfled1988}. Importantly, the existence of a large tower of pair-wise commuting charges helps with finding the spectral decomposition of the Hamiltonian, which lies at the root of the quantum inverse scattering method \cite{STF1979QuantumInverseScattering, TakhtadshanFaddeev1997, faddeev1984integrable,  Sklyanin1982QuantumversionInverseScattering, Faddeev1982IntegrableModels, sklyanin1982some, kulish1983quantum, Sklyanin1992quantuminversescatteringmethod, Korepin1993quantum, Faddeev1995QISMHistory}.\\

Let us now consider these charges in the representation given by an $L$-fold tensor product of evaluation $0$ representations, i.e. $\rho_0^{\otimes L}$, on $V^{\otimes L}$. Firstly, note that the evaluation of $\T(u)$ under this representation, which we denote by $\T^{\rho}(u)$, is given by
\begin{equation}\label{eq:monodromymatrix}
    \T_a^{\rho}(u) \coloneq\rho_0^{\otimes L}(\T_a(u)) = \mathbf{r}(\T_a(u), \T_1(0)\cdots \T_L(0)) = \L_{a,L}(u)\L_{a,L-1}(u)\cdots \L_{a,1}(u),
\end{equation}
where $1,...,L$ denote the coordinates on the \textit{physical space} $V^{\otimes L}$, while $a$ denotes the coordinate on the auxiliary space $V$ of the matrix corepresentation $\T(u)$. The matrix $\T_a^{\rho}(u) \in \mathrm{End}(V)\otimes \mathrm{End}(V^{\otimes L})$ is the \textit{monodromy matrix}. Similarly, one can consider the representation of the Bethe subalgebra, which is obtained by tracing the monodromy matrix over the auxiliary space. That is, one has
\begin{equation}\label{eq:transfermatrix}
    t^\rho(u) \coloneq \rho_0^{\otimes L}(t(u)) = \text{tr}_a \T^\rho_a(u),
\end{equation}
which is the \textit{transfer matrix}. Since also in this representation one has $[t^\rho(u), t^\rho(v)] = 0$, the charges $q_k$ under this representation correspond to a set of commuting charges on $V^{\otimes L}$, and therefore give rise to a quantum integrable system. Moreover, since the $R$-matrix is regular, these charges are \textit{local}. In particular, since $\L_{ak}(0) = \P_{ak}$, it can be noted that $t^\rho(0) = \P_{1,L}\cdots \P_{1,2}$ is equal to the \textit{shift operator} $\U \colon e_{1}\otimes e_2\otimes \cdots \otimes e_L \mapsto e_L\otimes e_1\otimes \cdots \otimes e_{L-1}$ on $V^{\otimes L}$, such that, for example, the second charge is given by  (see e.g. \cite{deLeeuw2020YangBaxterandBoost})
\begin{equation}
    \label{eq:localHamiltonian}
    \mathbb{Q}_2 \coloneq \rho_0^{\otimes L}(q_2) = \U^{-1}\sum_{k=1}^{L}\text{tr}_a\left(\L_{a,L}(0)\cdots \dot \L_{a,k}(0)\cdots \L_{a,1}(0)\right) =\sum_{k=1}^{L}\mathfrak{q}^{(2)}_{k,k+1},
\end{equation}
where $\mathfrak{q}^{(2)} \coloneq \L^{-1}(0)\dot {\L}(0)\in \mathrm{End}(V^{\otimes 2})$ denotes the \textit{charge density}, with $\dot{\L}(u) \coloneq \frac{d}{du}\L(u)$. Moreover, $\mathfrak{q}^{(2)}_{i,j}$ denotes the action of the charge density on the $i$-th and $j$-th components of the tensor product $V^{\otimes L}$. The second charge $\mathbb{Q}_2$ in this representation is therefore simply a sum of the local densities $\mathfrak{q}^{(2)}$ of range $2$ over the full spin chain $V^{\otimes L}$. \\

In fact, it is more generally true that the charges $\mathbb{Q}_k \coloneq \rho_0^{\otimes L}(q_k)$ can be written as a sum over local densities $\mathfrak{q}^{(k)}\in \mathrm{End}(V^{\otimes k})$ of range $k$  \cite{deLeeuw2020YangBaxterandBoost},\footnote{Note here that the $k$-th charge will be denoted with a subscript of $k$, i.e. $q_k$, while the $k$-th charge density is denoted with a superscript of $k$ in \texttt{mathfrak}, i.e. $\mathfrak{q}^{(k)}$.} where it should be noted that such local charge densities are not unique. Indeed, one may add terms to the charge density that vanish when summed over the whole spin chain. We then say that two charge densities $\mathfrak{q}^{(k)}, \tilde{\mathfrak{q}}^{(k)} \in \mathrm{End}(V^{\otimes k})$ are equivalent, denoted by $\mathfrak{q}^{(k)}\sim \tilde{\mathfrak{q}}^{(k)}$, if they generate the same (global) charges on any spin chain of length $L$, i.e. if $\sum_{n=1}^{L} \mathfrak{q}^{(k)}_{n,...,n+k-1} = \sum_{n=1}^{L} \tilde{\mathfrak{q}}^{(k)}_{n,...,n+k-1}$ for all $L\geq k$. In particular, we have the following definition:
\begin{definition}\label{def:equivalencelocalchargedensities}
    Let $k\geq 2$. Two charge densities $\mathfrak{q}^{(k)}, \tilde{\mathfrak{q}}^{(k)} \in \mathrm{End}(V^{\otimes k})$ are said to be equivalent, denoted by $\mathfrak{q}^{(k)} \sim \tilde{\mathfrak{q}}^{(k)}$, if there exists a matrix $\mathfrak{m}\in \text{End}(V^{\otimes (k-1)})$ such that $\tilde{\mathfrak{q}}^{(k)} = \mathfrak{q}^{(k)} + I_V\otimes \mathfrak{m} - \mathfrak{m}\otimes I_V$.
\end{definition}
\noindent Note here that, by definition, we can only have equivalence between charge densities $\mathfrak{q}^{(k)}$ and $\tilde{\mathfrak{q}}^{(\ell)}$ if $k = \ell$.\\

We remark here that the origin of the locality of the charges is two-fold. Firstly, it is a consequence of the fact that the $R$-matrix is regular. Secondly, the charges of $\mathbb{Q}_k$ are of range $k$ precisely because of the factorisation property \eqref{eq:factorisationrform} of the $r$-form, which allows the monodromy matrix \eqref{eq:monodromymatrix} to be given in terms of the ``range-two'' Lax operators. In particular, all the higher charge densities $\mathfrak{q}^{(k)}$ are also sums and commutants of range-two terms. For example, the third charge density is given by (see e.g. \cite{deLeeuw2020YangBaxterandBoost})
\begin{equation}\label{eq:thirdchargedensity}
    \mathfrak{q}^{(3)}_{123} = [\mathfrak{q}^{(2)}_{23}, \mathfrak{q}^{(2)}_{12}]+\frac{d}{du}(\L^{-1}_{12}(u)\dot{\L}_{12}(u))\big|_{u=0}.
\end{equation}
The fact that the origin of the local charges comes from regularity and the factorisation property of the $r$-form now already gives us a hint towards what the correct description for the long-range deformations should be. In particular, one expects that long-range deformations will appear if the Lax operator cannot fully factorise into a product of range-two terms, which, in light of the discussion in Section~\ref{sec:coquasitriangularityandLfunctionals}, happens if the associativity of the FRT-bialgebra is explicitly broken. Indeed, as we will see in the Section~\ref{sec:longrangedeformations}, the breaking of associativity is exactly what gives rise to the long-range deformations.

\subsection{Algebraic charge densities}\label{sec:algebraicchargedensities}
Before we can describe the long-range deformations of the FRT-bialgebras, we will introduce some algebra elements that we will call the \textit{algebraic charge densities}, and which are given in terms of derivations of the $R$-matrix. In fact, these objects can already be used to give us more information about the FRT-bialgebra and the corresponding integrable model as well. In particular, as we will see, these algebraic charge densities can be thought of as algebraic generalisations of the ordinary charge densities. In particular, we will give a conjecture for the explicit expressions of the charge densities using the algebraic charge densities. Moreover, these objects will be used to derive the higher-order Sutherland equations in Section~\ref{sec:higherorderSutherlandequation} and in the description of the bialgebra deformations in Section~\ref{sec:longrangedeformations}.\\

Firstly, let us consider the differential operator $\D \colon \A(\R)^*\to \A(\R)^*$ given by
\begin{equation}
    \label{eq:definitionderivative}
    (\D f)\left(\T_1(u_1)\cdots \T_{N}(u_N)\right) \coloneq \sum_{n=1}^{N}\frac{d}{dv_n}f\left(\T_1(v_1)\cdots \T_{N}(v_N)\right)\bigg|_{(v_1,...,v_N)=(u_1,..., u_N)}
\end{equation}
for $f\in \A(\R)^*$. In the context of the dual pairing \eqref{eq:dualpairingYangian} as described in Section~\ref{sec:dualdescription}, this derivation corresponds to an operator on the quantised universal enveloping algebra $\mathcal{Y}_\hbar(\mathfrak{g})$. In particular, one may define the corresponding operator $\D \colon \mathcal{Y}_\hbar(\mathfrak{g})\to \mathcal{Y}_\hbar(\mathfrak{g})$ given by $\D(x) = \frac{d}{du}\B_u(x) \big|_{u=0}$ for $x\in  \mathcal{Y}_\hbar(\mathfrak{g})$, where $\B_u$ is the boost automorphism. It can be easily verified that $\D$ is indeed a \textit{derivation}, i.e. $\D(xy) = \D(x)y + x\D(y)$. Moreover, it is a \textit{coderivation}, as under the coproduct of $\mathcal{Y}_\hbar(\mathfrak{g})$, one has $\Delta(\D(x)) = (\D\otimes \id + \id \otimes \D)\Delta(x)$, which, after repeated coproducts, gives the same formula as \eqref{eq:definitionderivative} in the dual picture. We will write $\D_i$ for the derivative on the $i$-th coordinate. Note here that, since the universal $R$-matrix for the Yangian is of difference-form, it satisfies $(\D_1 + \D_2)(\R) = 0$. For general $R$-matrices that are not of difference form, this obviously does not have to hold.\footnote{\label{footnote:derivation}Remark that $\D$ is only a well-defined map $\A(\R)^*$ to $\A(\R)^*$ as bialgebras if the $R$-matrix is of difference form such that it commutes with the derivation. Otherwise, $\D$ should be regarded as a derivation on $\A(\R)^*$ as an algebra. Note that it is well-defined as a derivation over a bialgebra when restricted to the subspace of functions over $\A_0\subseteq \A(\R)$ generated by elements $t_{ij}(0)$ (see Section~\ref{sec:deformationFRTalgebra}).}\\

We can now use the derivation on $\A(\R)^*$ to define the following bilinear maps:
\begin{definition}
    For $k\geq 2$, the \textit{algebraic charge densities} are the bilinear maps \linebreak $\Q^{(k)}, \tilde \Q^{(k)}\colon  \A(\R)\times \A(\R)\to \C$ given by 
    \begin{equation*}
        \Q^{(k)} \coloneq \D^{k-2}_1\left(\mathbf{r}^{-1}\ast\D_1(\mathbf{r})\right) \quad  \text{and} \quad \tilde\Q^{(k)} \coloneq -\D^{k-2}_2\left(\mathbf{r}^{-1}\ast\D_2(\mathbf{r})\right),
    \end{equation*}
    where $\D_1 \coloneq \D\otimes \id$ and $\D_2 \coloneq \id \otimes \D$.
\end{definition}
\noindent Note that $\tilde \Q^{(k)}$ is defined with an additional minus sign, which will be argued for later. In the dual picture as described in Section~\ref{sec:dualdescription}, the above bilinear maps correspond to the elements $(-1)^{m-1}\D^{k-2}_i\left(\R^{-1}\D_m(\R)\right)$ for $m = 1,2$, where $\R$ is the universal $R$-matrix as given above \eqref{eq:quasitriangularityconditions}. Since $\mathbf{r}(1,a) = \epsilon(a) = \mathbf{r}(a,1)$, it can also be remarked that $\Q^{(k)}(a,1)= 0 = \Q^{(k)}(1,a)$ and $\tilde \Q^{(k)}(a,1)= 0 = \tilde \Q^{(k)}(1,a)$ for all $a\in \A(\R)$. By evaluating the algebraic charge densities at products of the matrix corepresentations $\T_i(u_i)$, they define new matrices given by 
\begin{equation}
    \label{eq:Qkalgebraicchargedensity}
    \begin{split}
        \Q^{(k)}_{(1,...,N)(N+1,...,M)}(u_1,..., u_M) &\coloneq \Q^{(k)}\left(\T_1\cdots \T_N, \T_{N+1}\cdots \T_M\right) \in \mathrm{End}(V^{\otimes M}),\\
        \tilde \Q^{(k)}_{(1,...,N)(N+1,...,M)}(u_1,..., u_M) &\coloneq \tilde \Q^{(k)}\left(\T_1\cdots \T_N, \T_{N+1}\cdots \T_M\right)\in \mathrm{End}(V^{\otimes M})
    \end{split}
\end{equation}
for $0<N<M$, where we used the shorthand notation $\T_i \coloneq \T_i(u_i)$ on the right-hand side. In the dual picture, these matrix functions correspond to specific evaluation representations of coproducts of the elements $\Q^{(k)} \coloneq \D_1^{k-2}(\R^{-1}\D_1(\R))$ and $\tilde \Q^{(k)} \coloneq -\D^{k-2}_2(\R^{-1}\D_2(\R))$. From now on, we will be less explicit about the correspondence between these matrix functions and elements in a representation of the corresponding dual algebra. Moreover, we will often suppress the explicit dependence on the spectral parameters to simplify notation.\\

In order to get some intuition behind the algebraic charge densities, let us first restrict to $k=2$, i.e. the elements $\Q^{(2)}$ and $\tilde \Q^{(2)}$. Using braiding unitarity $\R_{12}\R_{21} = 1$, it can be noted that
\begin{equation}
    \tilde \Q_{12}^{(2)}  = -\R_{12}^{-1}\D_2(\R_{12}) = -\R_{21}\D_2(\R_{21}^{-1}) = \R_{12}^{-1}\Q^{(2)}_{21}\R_{12},
\end{equation}
which shows that $\Q^{(2)}$ and $\tilde \Q^{(2)}$ are in fact related by conjugation. Moreover, since $\Q^{(2)}_{12} = \R^{-1}_{12}\D_1(\R_{12})$, it immediately follows that $\Q^{(2)}_{12}(\vec 0) = \mathfrak{q}^{(2)}_{12} = \tilde \Q^{(2)}_{12}(\vec 0)$, where $\vec 0 = (0,0)$ and the charge density $\mathfrak{q}^{(2)}_{12}$ was defined under \eqref{eq:localHamiltonian}. This gives the reason why we refer to $\Q^{(2)}$ and $\tilde \Q^{(2)}$ as algebraic charge densities; they restrict to the actual charge densities that appear in the tower of commuting charges, but are now defined for arbitrary spectral parameters $u_i$ and arbitrary coproducts. Note that, in the case the $R$-matrix $\R$ is of difference form, such that $(\D_1+\D_2)(\R) = 0$, we have $\tilde \Q = \Q$, thus motivating the additional minus sign in the definition of $\tilde \Q$.\\

Let us now consider the coproducts of the algebraic charge densities $\Q^{(2)}$ and $\tilde \Q^{(2)}$. Firstly, the factorisation property of the $r$-form \eqref{eq:factorisationrform}, or dually, the factorisation property of the universal $R$-matrix \eqref{eq:quasitriangularityconditions}, allows us to write the coproducts of $\Q^{(2)}$ and $\tilde \Q^{(2)}$ in terms of the ``two-legged'' algebraic charge density. In particular, we have
\begin{equation}
    \label{eq:coproductCharge2}
    \begin{alignedat}{2}
        \Q^{(2)}_{1(23)} &= \R^{-1}_{1(23)}\D_1(\R_{1(23)}) = \R^{-1}_{12}\R^{-1}_{13}\D_1(\R_{13}\R_{12}) &=&\ \R^{-1}_{12}\Q^{(2)}_{13}\R_{12} + \Q^{(2)}_{12},\\
        \Q^{(2)}_{(12)3} &= \R^{-1}_{(12)3}(\D_1 + \D_2)(\R_{(12)3}) = \R^{-1}_{23}\R^{-1}_{13}(\D_1+\D_2)(\R_{13}\R_{23})\ &=&\ \R^{-1}_{23}\Q^{(2)}_{13}\R_{23} + \Q^{(2)}_{23},
    \end{alignedat}
\end{equation}
and similarly
\begin{equation}
    \tilde \Q^{(2)}_{1(23)} = \R^{-1}_{12}\tilde \Q^{(2)}_{13}\R_{12} + \tilde \Q^{(2)}_{12}\quad \text{and} \quad \tilde \Q^{(2)}_{(12)3} = \R^{-1}_{23}\tilde \Q^{(2)}_{13}\R_{23} + \tilde \Q^{(2)}_{23},
\end{equation}
where we recall the notation for the brackets in the subscript for the coproduct as introduced under \eqref{eq:Rmatrixfactorisation}. Note here that the regularity of the $R$-matrix implies that the coproducts of the algebraic charge density are given by $\Q^{(2)}_{1(23)}(\vec 0) = \tilde \Q^{(2)}_{(12)3}(\vec 0) = \mathfrak{q}^{(2)}_{12} + \mathfrak{q}^{(2)}_{23}$, such that the coproducts evaluated at trivial spectral parameters gives a sum over the charge densities.\\

We can now generalise the formula for the coproduct of the algebraic charge densities to arbitrary values of $k\geq 2$. In particular, as shown in Appendix~\ref{sec:proofsforalgebraicchargedensities}, one has the following result:
\begin{proposition}\label{prop:coproductChargeDensities}
    For $k\geq 2$, the coproduct relations for the algebraic charge densities $\Q^{(k)}$ and $\tilde \Q^{(k)}$ are given by
    \begin{equation*}
        \begin{split}
            \Q^{(k)}_{1(23)} &= \Q^{(k)}_{12} + \R^{-1}_{12}\Q^{(k)}_{13}\R_{12}\\
            &\quad + \sum_{n=2}^{k-1}\ \sum_{\ell_1+\cdots+\ell_n = k-1}c^{k-1}_{\ell_1,...,\ell_n}\left[\left[\cdots \left[\left[\R^{-1}_{12}\Q^{(\overline{\ell_n})}_{13}\R_{12}, \Q^{(\overline{\ell_1})}_{12}\right],\Q^{(\overline{\ell_2})}_{12}\right],\cdots \right], \Q^{(\overline{\ell_{n-1}})}_{12}\right],
        \end{split}
    \end{equation*}
    and
    \begin{equation*}
        \begin{split}
            \tilde \Q^{(k)}_{(12)3} &= \tilde \Q^{(k)}_{23} + \R^{-1}_{23}\tilde \Q^{(k)}_{13}\R_{23}\\
            &\quad + \sum_{n=2}^{k-1}\ \sum_{\ell_1+\cdots+\ell_n = k-1}c^{k-1}_{\ell_1,...,\ell_n}\left[\tilde \Q^{(\overline{\ell_{n-1}})}_{23}, \left[\tilde \Q^{(\overline{\ell_{n-2}})}_{23},\left[\cdots , \left[\tilde \Q_{23}^{(\overline{\ell_1})}, \R^{-1}_{23}\tilde \Q_{13}^{(\overline{\ell_n})}\R_{23}\right]\cdots\right]\right]\right],
        \end{split}
    \end{equation*}
    where $\overline{\ell_i} = \ell_i+1$ and $\ell_i \geq 1$ and $c^{k-1}_{\ell_1,...,\ell_n} \coloneq \prod_{m=2}^n\dbinom{\sum_{i=1}^{m}\ell_i  - 1}{\sum_{i=1}^{m-1}\ell_i}$.
\end{proposition}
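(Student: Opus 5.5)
The proof will be by induction on $k$, using the recursion $\Q^{(k+1)} = \D_1\Q^{(k)}$ that follows directly from the definition $\Q^{(k)} = \D_1^{k-2}(\mathbf{r}^{-1}\ast\D_1(\mathbf{r}))$. The base case $k=2$ is \eqref{eq:coproductCharge2}; there the double sum is empty. For $\Q^{(k)}_{1(23)}$ the operator $\D_1$ acts only on the spectral parameter $u_1$ of the first leg. We therefore have $\D_1\Q^{(\ell)}_{12} = \Q^{(\ell+1)}_{12}$ and $\D_1\Q^{(\ell)}_{13} = \Q^{(\ell+1)}_{13}$, and $\D_1$ satisfies the Leibniz rule on products and commutators of matrices.

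The key identity is the derivative of a conjugated term. Using $\D_1(\R_{12}^{-1}) = -\R_{12}^{-1}\D_1(\R_{12})\R_{12}^{-1}$ and $\Q^{(2)}_{12} = \R^{-1}_{12}\D_1\R_{12}$, we get
\begin{equation*}
\D_1\left(\R^{-1}_{12}X\R_{12}\right) = \R^{-1}_{12}\D_1(X)\R_{12} + \left[\R^{-1}_{12}X\R_{12},\,\Q^{(2)}_{12}\right]
\end{equation*}
for any $X$. We apply $\D_1$ to each nested commutator $[[\cdots[\R^{-1}_{12}\Q^{(\overline{\ell_n})}_{13}\R_{12},\Q^{(\overline{\ell_1})}_{12}],\cdots],\Q^{(\overline{\ell_{n-1}})}_{12}]$ with $\ell_1+\cdots+\ell_n = k-1$. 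This produces three kinds of terms: $\ell_n$ raised by one; some $\ell_i$ ($i<n$) raised by one; or a new innermost commutator with $\Q^{(2)}_{12}$. The last kind corresponds to the shifted index sequence $(1,\ell_1,\dots,\ell_{n-1};\ell_n)$. The terms $\Q^{(k)}_{12}\mapsto\Q^{(k+1)}_{12}$ and $\R^{-1}_{12}\Q^{(k)}_{13}\R_{12}$ reproduce themselves. The latter also yields the new $n=2$ term $[\R^{-1}_{12}\Q^{(k)}_{13}\R_{12},\Q^{(2)}_{12}]$ with $\ell_1=1$, $\ell_2=k-1$. Hence the statement reduces to a recursion for the coefficients:
\begin{equation*}
c^{k}_{\ell_1,\dots,\ell_n} = \sum_{i:\,\ell_i\geq 2} c^{k-1}_{\ell_1,\dots,\ell_i-1,\dots,\ell_n} + \delta_{\ell_1,1}\,c^{k-1}_{\ell_2,\dots,\ell_n},
\end{equation*}
with the convention $c^{m}_{m}=1$ for the single-block case $n=1$.

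The main step, and the one I expect to be the real obstacle, is checking that $c^{k-1}_{\ell_1,\dots,\ell_n} = \prod_{m=2}^n\binom{S_m-1}{S_{m-1}}$, with $S_m=\sum_{i\le m}\ell_i$, solves this recursion. I would avoid Pascal-rule bookkeeping and argue combinatorially. The product $\prod_m\binom{S_m-1}{S_{m-1}}$ counts ordered set partitions of $\{1,\dots,k-1\}$ into blocks of sizes $\ell_1,\dots,\ell_n$ in which the maxima of the blocks increase with the block index. Indeed, at stage $m$ the element $S_m$ must lie in block $m$, and the remaining $\ell_m-1$ elements of block $m$ are chosen among the $S_m-1$ available positions relative to the first $m-1$ blocks. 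The recursion then follows by removing the largest label $k-1$ in such a partition, except that the roles must be matched carefully: the newly appended $\Q^{(2)}$ sits innermost, so one should order by minimal elements or reverse the labelling. Removing the distinguished label either shrinks some block of size $\geq 2$ or deletes a singleton block that is forced to be the first one. I would fix this orientation by checking small cases $k=3,4$ explicitly, where $c^{2}_{1,1}=1$ and $c^{3}_{1,1,1}=1$, $c^{3}_{1,2}=1$, $c^{3}_{2,1}=2$.

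The formula for $\tilde\Q^{(k)}_{(12)3}$ follows by the mirrored argument. Here $\tilde\Q^{(k+1)} = \D_2\tilde\Q^{(k)}$ with $\D_2$ acting on leg $3$ only, and the conjugation identity becomes $\D_2(\R^{-1}_{23}X\R_{23}) = \R^{-1}_{23}\D_2(X)\R_{23} + [\tilde\Q^{(2)}_{23},\R^{-1}_{23}X\R_{23}]$, because of the sign in the definition of $\tilde\Q$. This explains the reversed commutator ordering, while the coefficient recursion is identical.
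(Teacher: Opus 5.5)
Your derivation of the coefficient recursion matches the paper's. You use the same conjugation identity and get the same three kinds of terms. Your recursion with the $\delta_{\ell_1,1}$ term is exactly the paper's recurrence $c^k_{\ell_1,\dots,\ell_n}=\sum_i c^{k-1}_{\dots,\ell_i-1,\dots}$, under its conventions $c^{k}_{0,\ell_2,\dots,\ell_n}=c^{k}_{\ell_2,\dots,\ell_n}$ and $c=0$ when a later index vanishes.

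You diverge in verifying the closed form. The paper factors $c^k_{\ell_1,\dots,\ell_n}=\binom{k-1}{k-\ell_n}c^{k-\ell_n}_{\ell_1,\dots,\ell_{n-1}}$, applies Pascal's rule and inducts on $n$; you argue combinatorially. Your model is right: the product counts ordered set partitions of $\{1,\dots,S_n\}$ into blocks of sizes $\ell_1,\dots,\ell_n$ with increasing maxima. (Strictly, it is the largest label among $B_1\cup\cdots\cup B_m$, not the label $S_m$ itself, that lies in $B_m$.) But the step you call the obstacle is left open, and your plan for closing it would go wrong in two ways.

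\textbf{Deleting the largest label does not give the recursion.} For $(\ell_1,\ell_2)=(1,2)$ it sends $\{2\}\,|\,\{1,3\}$ to $\{2\}\,|\,\{1\}$, which violates increasing maxima. When $\ell_n=1$ it removes the last block, not the first.

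\textbf{Your calibration values are swapped.} Differentiating $\Q^{(3)}_{1(23)}$ gives $2[\R^{-1}_{12}\Q^{(3)}_{13}\R_{12},\Q^{(2)}_{12}]+[\R^{-1}_{12}\Q^{(2)}_{13}\R_{12},\Q^{(3)}_{12}]$. Hence $c^3_{1,2}=\binom{2}{1}=2$ and $c^3_{2,1}=\binom{2}{2}=1$, which is also what your own recursion gives, not $1$ and $2$. Your numbers coincide with the counts for increasing \emph{minima}, i.e.\ the reversed product $\prod_{m=1}^{n-1}\binom{\ell_m+\cdots+\ell_n-1}{\ell_m-1}$. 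Orienting the bijection against them would have you prove the wrong identity.

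The repair is to delete the smallest label $1$ in the increasing-maxima model. If $1$ lies in a block $B_i$ with $\ell_i\geq 2$, deleting it changes no block maximum; after shifting labels down you get a valid partition of type $(\dots,\ell_i-1,\dots)$. If $\{1\}$ is a singleton block, its maximum is the smallest possible, so it must be $B_1$. Deleting it leaves type $(\ell_2,\dots,\ell_n)$, which is your $\delta_{\ell_1,1}$ term. Both maps are inverted by shifting labels up and re-inserting $1$, so the recursion holds exactly. The boundary cases ($\ell_1=1$, and $\ell_i=1$ for $i\geq 2$) then take care of themselves, rather than being handled through conventions as in the paper's Pascal induction. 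With this fix, your mirrored argument for $\tilde\Q^{(k)}_{(12)3}$ goes through unchanged.
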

\noindent We remark that (very) similar coproduct relations may be derived in an analogous way as in Appendix~\ref{sec:proofsforalgebraicchargedensities} for $\Q^{(k)}_{(12)3}$ and $\tilde \Q^{(k)}_{1(23)}$.\footnote{Remark that these coproducts are then only well-defined as ``true'' coproducts. i.e. are intertwined by the $R$-matrix, if the $R$-matrix is of difference form (see footnote~\ref{footnote:derivation}). The coproducts as given in Proposition~\ref{prop:coproductChargeDensities} are well-defined even if the $R$-matrix is not of difference form, since here the derivation and the coproduct in $\Q^{(k)}$ and $\tilde{\Q}^{(k)}$ act on the two different legs.} However, we will not need the explicit expressions for this paper. \\

Let us now consider the explicit case for $k=3$, in which case it follows from Proposition~\ref{prop:coproductChargeDensities} that the coproducts are given by
\begin{equation}
    \label{eq:coproductsQ3}
    \begin{split}
        \Q^{(3)}_{1(23)} &= \R_{12}^{-1}\Q^{(3)}_{13}\R_{12} + \Q^{(3)}_{12} + [\R^{-1}_{12}\Q^{(2)}_{13}\R_{12}, \Q^{(2)}_{12}],\\
        \tilde \Q^{(3)}_{(12)3} &= \R^{-1}_{23}\tilde \Q^{(3)}_{13}\R_{23} + \tilde \Q^{(3)}_{23} + [\tilde \Q^{(2)}_{23}, \R^{-1}_{23}\tilde \Q^{(2)}_{13}\R_{23}].
    \end{split}
\end{equation}
If we now set all the spectral parameters to $0$, in which case the $R$-matrix $\R$ reduces to the permutation operator, it can be directly seen from \eqref{eq:thirdchargedensity} that
\begin{equation}\label{eq:charge3representatives}
    \left(\Q^{(3)}_{1(23)}(\vec 0) - \Q^{(3)}_{23}(\vec 0)\right) \sim \mathfrak{q}^{(3)}_{123} \sim \left(\tilde \Q^{(3)}_{(12)3}(\vec 0) - \tilde \Q^{(3)}_{12}(\vec 0)\right),
\end{equation}
where $\sim$ is the equivalence relation for charge densities as given in Definition~\ref{def:equivalencelocalchargedensities}. As already suggested in the naming, the algebraic charge densities for $k=3$ are therefore directly related to the charge densities for the commuting charges as described in Section~\ref{sec:commutingcharges}. Remark here that \eqref{eq:charge3representatives} gives two different representatives for the third charge density.\\

It is now expected that a similar formula as \eqref{eq:charge3representatives} holds true for all higher-order charge densities as well. In particular, we have the following conjecture:
\begin{conjecture}\label{conj:chargedensities}
    For $k\geq 2$, the charge densities $\mathfrak{q}^{(k)}_{1,...,k}$ are given by
    \begin{equation*}
        \left(\Q^{(k)}_{1(2,...,k)}(\vec 0) - \Q^{(k)}_{2(3,...,k)}(\vec 0)\right) \sim \mathfrak{q}^{(k)}_{1,...,k} \sim \left(\tilde \Q^{(k)}_{(1,...,k-1)k}(\vec 0) - \tilde \Q^{(k)}_{(1,...,k-2)k-1}(\vec 0)\right),
    \end{equation*}
with $\sim$ the equivalence relation as given in Definition~\ref{def:equivalencelocalchargedensities}.
\end{conjecture}
\noindent As of this moment, we do not have a proof for this conjecture. Proving the result would require one to explicitly calculate each charge density from the transfer matrix \eqref{eq:transfermatrix} or using the boost formalism \cite{TetelmanLorentzGroupforTwoDimensional, Links2001LadderOperator} (see also e.g. \cite{Loebbert2016LecturesonYangianSymmetry, deLeeuw2019Classifyingintegrable, deLeeuw2020YangBaxterandBoost}) and then compare the result to the equation of the coproducts of charge densities using Proposition~\ref{prop:coproductChargeDensities}. Calculating each of these charge densities explicitly seems to be a rather tedious endeavour, which is why we have not attempted to prove the conjecture at this point. We do, however, have strong evidence for the conjecture to be true. For example, we have explicitly verified the equality in \texttt{Mathematica} for specific $R$-matrices/models. More importantly, it is shown in Corollary~\ref {cor:commutingcharges} that the charges that are defined through the above conjecture all commute with the transfer matrix. Moreover, the long-range deformations that we will describe in Section~\ref{sec:longrangedeformations} are expected to exactly correspond to the family of long-range deformations generated by the boost and bilocal operators (see Section~\ref{sec:classificationLongrangeDeformations} or \cite{Bargheer2008BoostingNearestNeighbour, Bargheer2009}). This equivalence between the two deformation equations on the level of charge densities is only true if Conjecture~\ref{conj:chargedensities} holds. We remark that none of the results in the following sections depend on this conjecture. Of course, assuming the conjecture to be true, Proposition~\ref{prop:coproductChargeDensities} then allows one to give a recurrence relation for all the charge densities in terms of lower charge densities:
\begin{corollary}
    The charge densities $\mathfrak{q}^{(k)}$ for $k\geq 2$ satisfy the recurrence relation
    \begin{equation*}
        \mathfrak{q}^{(k)}_{1,...,k}  = \Q^{(k)}_{12} + \sum_{n=2}^{k-1}\ \sum_{\ell_1+\cdots+\ell_n = k-1}c^{k-1}_{\ell_1,...,\ell_n}\left[\left[\cdots \left[\left[\mathfrak{q}^{(\overline{\ell_n})}_{2,3,...,\overline{\ell_n}+1}, \Q^{(\overline{\ell_1})}_{12}\right],\Q^{(\overline{\ell_2})}_{12}\right],\cdots \right], \Q^{(\overline{\ell_{n-1}})}_{12}\right],
    \end{equation*}
    where $\overline{\ell_i} = \ell_{i}+1$, $\ell_i \geq 1$, $\Q^{(\overline{\ell})}_{12} \coloneq \Q^{(\overline{\ell})}_{12}(\vec 0)$ and $c^{k-1}_{\ell_{1},...,\ell_{n}}$ as given in Proposition~\ref{prop:coproductChargeDensities}.
\end{corollary}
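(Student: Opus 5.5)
The recurrence follows by evaluating Proposition~\ref{prop:coproductChargeDensities} at vanishing spectral parameters and feeding the result into Conjecture~\ref{conj:chargedensities}. I work with the first representative, $\mathfrak{q}^{(k)}_{1,\dots,k}\sim \Q^{(k)}_{1(2,\dots,k)}(\vec 0)-\Q^{(k)}_{2(3,\dots,k)}(\vec 0)$, and apply the coproduct formula for $\Q^{(k)}_{1(23)}$ with the second leg replaced by the block $(2,\dots,k)$. Concretely, I take the pairing $\Q^{(k)}(\T_1,\T_2\cdot(\T_3\cdots\T_k))$. By the factorisation property \eqref{eq:factorisationrform}, $\mathbf{r}(\T_1,\T_2\cdots\T_k)=\R_{1(3\cdots k)}\R_{12}$. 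The derivation $\D_1$ acts only on the first leg, so the proof of the proposition in the appendix goes through with leg $3$ replaced by the multi-leg $(3,\dots,k)$. This gives
\[
\Q^{(k)}_{1(2,\dots,k)}=\Q^{(k)}_{12}+\R^{-1}_{12}\Q^{(k)}_{1(3,\dots,k)}\R_{12}+\sum_{n,\ell}c^{k-1}_{\ell_1,\dots,\ell_n}\bigl[\cdots\bigl[\R^{-1}_{12}\Q^{(\overline{\ell_n})}_{1(3,\dots,k)}\R_{12},\Q^{(\overline{\ell_1})}_{12}\bigr],\cdots,\Q^{(\overline{\ell_{n-1}})}_{12}\bigr].
\]

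Next I set all spectral parameters to zero and use regularity, $\R_{12}(0,0)=\P_{12}$. Conjugation by $\P_{12}$ relabels $1\leftrightarrow 2$, so $\R^{-1}_{12}\Q^{(m)}_{1(3,\dots,k)}\R_{12}$ becomes $\Q^{(m)}_{2(3,\dots,k)}(\vec 0)$ for every $m$. With $m=k$ this term is exactly the subtracted term in the conjecture, so it cancels. For the nested commutators I need $\Q^{(m)}_{2(3,\dots,k)}(\vec 0)$ with $m=\overline{\ell_n}<k$. By the conjecture applied recursively, it is a telescoping sum $\sum_j \mathfrak{q}^{(m)}_{j,\dots,j+m-1}$ (up to equivalence) over windows inside $\{2,\dots,k\}$, plus boundary terms. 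These are the truncated-chain analogue of the statement that coproducts at $\vec 0$ give sums of densities, as for $k=2$ under \eqref{eq:coproductCharge2}.

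The key locality step follows. $\Q^{(\overline{\ell})}_{12}(\vec 0)$ acts only on sites $1,2$. So among the windows, only those containing site $2$, i.e. $\mathfrak{q}^{(\overline{\ell_n})}_{2,\dots,\overline{\ell_n}+1}$, survive the innermost commutator; all others commute with $\Q_{12}$ and drop out. Since $\overline{\ell_n}\le k-1$, this window fits inside $\{2,\dots,k\}$, which yields exactly the stated recurrence.

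The main obstacle is bookkeeping under the equivalence $\sim$. The conjecture only determines densities up to terms $I\otimes\mathfrak m-\mathfrak m\otimes I$, and the boundary pieces of $\Q^{(m)}_{2(3,\dots,k)}(\vec 0)$ near site $k$ are not of that form. I would handle this by arguing at the level of the exact identity $\Q^{(m)}_{2(3,\dots,k)}(\vec 0)-\Q^{(m)}_{3(4,\dots,k)}(\vec 0)$. That is, I would take the recurrence as defining the specific representative given by the conjecture's difference, so that only the window starting at site $2$ enters. The remaining terms, supported on sites $\ge 3$, commute with $\Q_{12}$ and vanish inside the brackets. The equality in the corollary is then understood for this canonical representative.
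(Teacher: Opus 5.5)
Your outline is the right one. Split the second leg as $\T_2\cdot(\T_3\cdots\T_k)$ in Proposition~\ref{prop:coproductChargeDensities} and set all spectral parameters to zero, so that $\R_{12}\to\P_{12}$ turns $\R^{-1}_{12}\Q^{(m)}_{1(3,\dots,k)}\R_{12}$ into $\Q^{(m)}_{2(3,\dots,k)}(\vec 0)$. Then cancel the $m=k$ term against the subtracted term of \eqref{eq:chargedensitiesfromalgebraic}, and drop everything supported on sites $\ge 3$ inside the innermost commutator. You are also right that the equality can only hold for the specific representative \eqref{eq:chargedensitiesfromalgebraic}, not merely up to $\sim$.

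The gap is the step that converts $\Q^{(m)}_{2(3,\dots,k)}(\vec 0)$, with $m=\overline{\ell_n}\le k-1$, into $\mathfrak{q}^{(m)}_{2,\dots,m+1}$ plus terms on sites $\ge 3$.
\begin{itemize}
\item Conjecture~\ref{conj:chargedensities} cannot supply this. It only concerns $\Q^{(m)}$ whose second leg has exactly $m-1$ sites, and says nothing about $\Q^{(m)}_{2(3,\dots,k)}$ on the longer block. Your ``telescoping sum plus boundary terms'' is therefore unsupported.
\item Declaring $\Q^{(m)}_{2(3,\dots,k)}(\vec 0)-\Q^{(m)}_{3(4,\dots,k)}(\vec 0)$ to be ``the representative'' does not close it either. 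A priori this operator acts on all of sites $2,\dots,k$ and may depend on $k$. The statement, however, needs the range-$m$ object $\mathfrak{q}^{(m)}_{2,\dots,m+1}=\Q^{(m)}_{2(3,\dots,m+1)}(\vec 0)-\Q^{(m)}_{3(4,\dots,m+1)}(\vec 0)$.
\end{itemize}
The missing ingredient is tail-independence:
\[
\Q^{(m)}_{2(3,\dots,k)}(\vec 0)-\Q^{(m)}_{3(4,\dots,k)}(\vec 0)=\Q^{(m)}_{2(3,\dots,m+1)}(\vec 0)-\Q^{(m)}_{3(4,\dots,m+1)}(\vec 0)\qquad (k\ge m+1).
\]
This is exactly Lemma~\ref{lemma:simplifycharges}, taken with the first leg at site $2$, $\vec a$ the single site $3$ specialised to spectral parameter $0$ (so the conjugating $R$-matrix becomes $\P_{23}$), and $\vec b=(m+2,\dots,k)$.

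Alternatively, you can prove it directly by strong induction on $m$. Apply the same zero-parameter coproduct formula with the split $\T_3\cdot(\T_4\cdots\T_k)$. The lower-order pieces $\Q^{(p)}_{4(5,\dots,k)}(\vec 0)$ with $p<m$ commute with $\Q^{(\cdot)}_{23}(\vec 0)$ and so drop out.

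With this identity, $\Q^{(m)}_{2(3,\dots,k)}(\vec 0)=\mathfrak{q}^{(m)}_{2,\dots,m+1}+\Q^{(m)}_{3(4,\dots,k)}(\vec 0)$ holds exactly. The second piece commutes with every $\Q^{(\overline{\ell_i})}_{12}(\vec 0)$, and your argument closes. For this representative the recurrence is then unconditional; the conjecture is only needed to identify it with the physical charge densities.
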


The appearance of the difference $\Q^{(k)}_{1(2,...,k)}(\vec 0) - \Q^{(k)}_{2(3,...,k)}(\vec 0)$ in the expression for the charge density in Conjecture~\ref{conj:chargedensities} is not accidental. In fact, it is related to an important property of the algebraic charge densities that will be used many times in the various proofs in this paper. In order to describe this property, we will first need to introduce some notation. Namely, in the following, we will often encounter situations where we have a mixture of trivial and non-trivial insertions of spectral parameters. In order to explicitly make this distinction and simplify notation, we will denote a coordinate with a non-trivial spectral parameter by an underline. That is, $\T_{\underline 1} \coloneq \T_{\underline{1}}(u_1)$, while $\T_1 \coloneq \T_{1}(0)$. Moreover, for a vector $\vec a \coloneq (a_1,..., a_{|\vec a|})$ we write $\T_{\underline{\vec a}}(u_{\vec a}) \coloneq \T_{\underline{a}_1}(u_{1})\T_{\underline{a}_2}(u_2)\cdots \T_{\underline{a}_{|\vec a|}}(u_{|\vec a|})$. We then consider the specific elements
\begin{equation}
    \Q^{(k)}_{1(\underline{\vec a},2,...,N,\underline{\vec b})} \coloneq \Q^{(k)}(\T_1, \T_{\underline{\vec a}}\T_2\cdots \T_N \T_{\underline{\vec b}}), \quad \tilde \Q^{(k)}_{(\underline{\vec a},1,...,N-1,\underline{\vec b})N} \coloneq \tilde \Q^{(k)}(\T_{\underline{\vec a}}\T_1\cdots \T_{N-1}\T_{\underline{\vec b}}, \T_N),
\end{equation}
for $N\geq 2$ and where we used the simplified notation $\T_{\underline{\vec a}} \coloneq \T_{\underline{\vec a}}(u_{\underline{\vec a}})$, $\T_{\underline{\vec b}} \coloneq \T_{\underline{\vec b}}(u_{\underline{\vec b}})$ and $\T_i \coloneq \T_i(0)$. A similar notation is used for $\tilde \Q^{(k)}$. As shown in Appendix~\ref{sec:proofsfordeformation}, we have the following result:

\begin{lemma}\label{lemma:simplifycharges}
    For $k\geq 2$, we have
    \begin{equation*}
        \begin{split}
            \Q^{(k)}_{1(\underline{\vec a},2,...,k-1,\underline{\vec b})} - \R^{-1}_{1\underline{\vec a}}\Q^{(k)}_{1(2,...,k-1, \underline{\vec b})}\R_{1\underline{\vec a}} &= \Q^{(k)}_{1(\underline{\vec a},2,...,k-1)} - \R^{-1}_{1\underline{\vec a}}\Q^{(k)}_{1(2,...,k-1)}\R_{1\underline{\vec a}},\\
            \tilde \Q^{(k)}_{(\underline{\vec a},1,...,k-1,\underline{\vec b})k} - \R^{-1}_{\underline{\vec b} k}\tilde \Q^{(k)}_{(\underline{\vec a},1,...,k-1)k}\R_{\underline{\vec b}k} &= \tilde \Q^{(k)}_{(1,...,k-1,\underline{\vec b})k} - \R^{-1}_{\underline{\vec b} k}\tilde \Q^{(k)}_{(1,...,k-1)k}\R_{\underline{\vec b}k}
        \end{split}
    \end{equation*}
    where $\vec a = (a_1,..., a_{|\vec a|})$ and $\vec b = (b_1,..., b_{|\vec b|})$. Moreover, the underline in the above equation denotes a non-trivial insertion of a spectral parameter, while the other coordinates have a trivial spectral parameter $0$.
\end{lemma}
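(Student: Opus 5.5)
The plan is to reduce both identities to the factorisation property \eqref{eq:factorisationrform} and then use a locality argument at zero spectral parameters. I only treat the first identity; the second follows in the same way with the roles of the legs exchanged and $\D_2$ in place of $\D_1$.

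Let $u$ be the spectral parameter of leg $1$; only it is differentiated, since $\D_1$ acts on the first leg. By \eqref{eq:Rmatrixfactorisation} we have
\begin{equation*}
\R_{1(\underline{\vec a},2,\dots,k-1,\underline{\vec b})}=B\,M\,A,\qquad A\coloneq\R_{1\underline{\vec a}},\quad M\coloneq\R_{1(2,\dots,k-1)},\quad B\coloneq\R_{1\underline{\vec b}} .
\end{equation*}
Write $g(X)\coloneq X^{-1}\tfrac{d}{du}X$, so that $\Q^{(k)}=\tfrac{d^{k-2}}{du^{k-2}}g(\cdot)$ evaluated at $u=0$. The logarithmic derivative obeys
\begin{equation*}
g(BMA)=A^{-1}M^{-1}g(B)MA+A^{-1}g(M)A+g(A).
\end{equation*}
It also gives $g(MA)=A^{-1}g(M)A+g(A)$ and $g(BM)=M^{-1}g(B)M+g(M)$. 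Substituting these, the terms $g(A)$ and $A^{-1}g(M)A$ cancel. The difference of the two sides of the first identity then becomes
\begin{equation*}
\frac{d^{k-2}}{du^{k-2}}\bigl(A^{-1}YA\bigr)-A^{-1}\Bigl(\frac{d^{k-2}}{du^{k-2}}Y\Bigr)A\Big|_{u=0},\qquad Y\coloneq M^{-1}g(B)M .
\end{equation*}
So the lemma is equivalent to showing that this expression vanishes.

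Next I expand with the Leibniz rule. Differentiating $A^{-1}YA$ and writing the derivatives of $A$ via $g(A)$ and its derivatives, the difference becomes a sum of nested commutators. Each term has the form $A^{-1}\bigl[\cdots[\,Y^{(j)},\,g(A)^{(i_1)}],\dots\bigr]A$ with $j\le k-3$, because at least one derivative has fallen on $A$. The factors $g(A)^{(i)}$ act only on leg $1$ and the legs $\underline{\vec a}$. It therefore suffices to show that $Y^{(j)}(0)$ acts trivially on leg $1$ (and on $\underline{\vec a}$) for all $0\le j\le k-3$. Then every commutator vanishes.

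This locality statement is the main step. At $u=0$, regularity gives $M(0)=\P_{1,k-1}\cdots\P_{12}$. Hence $Y(0)=M(0)^{-1}g(B)M(0)$ is $g(B)$ with leg $1$ transported through the chain, so it acts on $\underline{\vec b}$ and site $k-1$ but not on site $1$. Each derivative of $M=\R_{1,k-1}\cdots\R_{12}$ replaces one permutation $\P_{1j}$ by $\dot\R_{1j}(0)=\P_{1j}\mathfrak{q}^{(2)}_{1j}$. I would track this by induction on $j$: a $j$-th derivative of $Y$ at $0$ is a sum of terms supported on $\underline{\vec b}\cup\{k-1-j,\dots,k-1\}$ (relabelled through the permutations), and so reaches site $1$ only after $k-2$ derivatives. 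The part needing the most care is the bookkeeping of the permutations, so as to confirm that the support really grows by at most one site per derivative. To organise it I would write $M(u)^{-1}g(B)M(u)$ as iterated conjugations by $\R_{1j}(u)=\P_{1j}(1+u\,\mathfrak{q}^{(2)}_{1j}+\dots)$ and check the claim order by order in $u$. Granting this, all commutator terms vanish and the first identity follows.
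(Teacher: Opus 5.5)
Your proof is correct, and it takes a genuinely different route from the paper.

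\textbf{The paper's route.} It argues by induction on $k$. It expands $\Q^{(k)}_{1(\underline{\vec a},2,\dots,k-1,\underline{\vec b})}$ using the nested-commutator coproduct formula of Proposition~\ref{prop:coproductChargeDensities}, splitting off $\underline{\vec a}$. It then applies the induction hypothesis to the lower densities $\Q^{(\overline{\ell_m})}_{1(2,\dots,k-1,\underline{\vec b})}$, which pushes all $\underline{\vec b}$-dependence into $\Q^{(\overline{\ell_m})}_{2(3,\dots,k-1,\underline{\vec b})}-\Q^{(\overline{\ell_m})}_{2(3,\dots,k-1)}$. Finally it observes that this combination commutes with $\R_{1\underline{\vec a}}$ and with the $\Q^{(\overline{\ell_i})}_{1\underline{\vec a}}$.

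\textbf{Your route.} You never need Proposition~\ref{prop:coproductChargeDensities}. The logarithmic-derivative identity reduces everything to $\partial_u^{k-2}(A^{-1}YA)-A^{-1}Y^{(k-2)}A$, plus a support count for the Taylor coefficients of $Y=M^{-1}g(B)M$. Both proofs are locality arguments, but yours has several advantages:
\begin{itemize}
\item it is self-contained and avoids the coefficients $c^{k-1}_{\ell_1,\dots,\ell_n}$;
\item it shows why exactly $k-2$ trivial sites are needed, since the support moves one site per order in $u$;
\item it automatically covers any number $\geq k-2$ of trivial middle sites, which is the form in which the paper's inductive step actually applies its hypothesis to $\Q^{(\overline{\ell_m})}$ with $\overline{\ell_m}<k$.
\end{itemize}
In exchange, the paper's route stays within the coproduct identities for $\Q^{(k)}$ that it reuses elsewhere.

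\textbf{Closing the step you granted.} Write $\R_{1j}(u,0)=\P_{1j}S_{1j}(u)$ with $S_{1j}(u)=1+u\,\mathfrak{q}^{(2)}_{1j}+\O(u^2)$, and move the permutations to the left. This gives $M(u)=M(0)\,S_{k-2,k-1}(u)\cdots S_{23}(u)\,S_{12}(u)$. Hence $Y=S_{12}^{-1}\cdots S_{k-2,k-1}^{-1}\,G\,S_{k-2,k-1}\cdots S_{12}$, where $G=M(0)^{-1}g(B)M(0)$ is $g(B)$ with its leg $1$ relabelled to $k-1$.
\begin{itemize}
\item Before conjugating by $S_{j,j+1}$, the expression does not touch leg $j$.
\item The conjugation only modifies the part touching leg $j+1$, by an amount of order $u$ times it.
\item Hence leg $j$ first appears at order $u^{k-1-j}$, and leg $1$ only at order $u^{k-2}$.
\end{itemize}
This is more accurate than saying that each derivative swaps one $\P_{1j}$ for $\dot{\R}_{1j}(0)$. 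That description ignores higher derivatives of a single factor, although this is harmless for the count.

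\textbf{A small correction on the cross terms.} They have the form $[A^{-1}Y^{(j)}A,\,g(A)^{(i)}]$, not $A^{-1}[Y^{(j)},g(A)^{(i)}]A$. It is simpler still to skip the commutators: once $Y^{(q)}(0)$ commutes with $A(0)$ and its derivatives, each Leibniz term with $q<k-2$ collapses to $\binom{k-2}{q}\,Y^{(q)}(0)\,\partial_u^{k-2-q}(A^{-1}A)|_{u=0}=0$.
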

\noindent In particular, this shows that such differences of algebraic charge densities have, in some sense, a ``finite length'' $k$. In order to simplify notation even more, we introduce:
\begin{definition}\label{def:reducedalgebraicchargedensities}
    For $k\geq 2$, the \textit{reduced algebraic charge densities} are the trilinear maps ${}^r\Q(k), {}^r\tilde{\Q}^{(k)} \colon \A(\R)\times \A(\R)\times \A(\R)\to \C$ given by
    \begin{equation*}
        \begin{split}
            {}^r\Q^{(k)}(a,b,c) &\coloneq \Q^{(k)}(a,b\cdot c) - (\mathbf{r}^{-1}_{12}\ast\Q^{(k)}_{13}\ast \mathbf{r}_{12})(a,b,c)\\
            {}^r\tilde{\Q}^{(k)}(a,b,c) &\coloneq \tilde{\Q}^{(k)}(a\cdot b,c) - (\mathbf{r}^{-1}_{23}\ast\Q^{(k)}_{13}\ast \mathbf{r}_{23})(a,b,c).
        \end{split}
    \end{equation*}
\end{definition}
\noindent Note here that ${}^r\Q^{(k)}(1,b,c) = 0 ={}^r\Q^{(k)}(a,1,c)$ and similarly ${}^r\tilde{\Q}^{(k)}(a,b,1) = 0 ={}^r\tilde{Q}^{(k)}(a,1,c) $ for all $a,b,c\in \A(\R)$. Moreover, we have ${}^r\Q^{(k)}(a,b,1) = \Q^{(k)}(a,b)$ and ${}^r\tilde{\Q}^{(k)}(1,b,c) = \tilde{\Q}^{(k)}(b,c)$. Importantly, Lemma~\ref{lemma:simplifycharges} now immediately implies that
\begin{equation}\label{eq:reducedAlgebraicchargedensityproperty}
    \begin{split}
        {}^r\Q^{(k)}_{1,\underline{\vec a},(2,...,k-1,\underline{\vec b})} = {}^r\Q^{(k)}_{1,\underline{\vec a},(2,...,k-1)} \quad \text{and} \quad {}^r\tilde{\Q}^{(k)}_{(\underline{\vec a},1,...,k-1),\underline{\vec b}, k} = {}^r\tilde{\Q}^{(k)}_{(1,...,k-1),\underline{\vec b}, k},
    \end{split}
\end{equation}
such that the reduced algebraic charge densities now have, in some sense, a ``finite length''. These reduced algebraic charge densities will play an important role in the derivation of the generalised Sutherland equations in Section~\ref{sec:higherorderSutherlandequation} and in the proofs for the twists of the quantum algebras in Section~\ref{sec:longrangedeformations}.

\subsection{Generalised Sutherland equations}\label{sec:higherorderSutherlandequation}
The higher-order algebraic charge densities now allow us to derive higher-order variants of the Sutherland equation \cite{Sutherland1970Two-Dim}, which will play an important role in the integrable long-range deformations. In particular, as we will see in Section~\ref{sec:LaxforBQ3derivation}, they can be used to derive the Lax operators for the long-range deformations up to first order in the deformation parameter. Moreover, as we will see in Section~\ref{sec:deformationFRTalgebra}, the equations appear in the proofs for the associativity of the corresponding deformed algebras.\\

Let us first recall the Sutherland equation for the second charge. By taking the derivation of the Yang-Baxter equation, one can derive
\begin{equation}
    \label{eq:algebraicSutherlandEquation}
    \begin{split}
        &\R_{23}^{-1}\D_2\left(\R_{12}\R_{13}\R_{23} - \R_{23}\R_{13}\R_{12}\right)=0\\
        \Rightarrow\quad & [\Q^{(2)}_{23}, \R_{13}\R_{12}] = \R_{13}\Q^{(2)}_{21}\R_{12} - \R^{-1}_{23}\Q^{(2)}_{21}\R_{23}\R_{13}\R_{12},
    \end{split}
\end{equation}
By considering the trivial spectral parameters $0$ in coordinates $2$ and $3$, such that $\R_{23}$ becomes the permutation operator, this equation reduces to the known \cite{Sutherland1970Two-Dim} (see also e.g. \cite{Loebbert2016LecturesonYangianSymmetry, deLeeuw2019Classifyingintegrable, deLeeuw2020YangBaxterandBoost}) Sutherland equation 
\begin{equation}
    [\mathfrak{q}^{(2)}_{12}, \L_{\underline{a}(12)}] = \L'_{\underline{a}2}\L_{\underline{a}1} - \L_{\underline{a}2}\L'_{\underline{a}1},
\end{equation}
where $\L'_{\underline{a}1}\coloneq \frac{d}{dv}\L_{\underline{a}1}(u,v)\big|_{v=0}$.\\

We can now use the algebraic charge densities as described in the previous section to derive the higher-order Sutherland equations. Let $\vec a = (a_1,..., a_{|\underline{\vec a}|})$ such that the $R$-matrix in the coordinates $\vec a$ is given by $\R_{\underline{\vec a}1} = \R_{\underline{a}_11}\cdots \R_{\underline{a}_{|\vec a|}1}$, where we use the underline to emphasise a non-trivial insertion of a spectral parameter. For $k\geq 2$, we then have
\begin{equation}
    \begin{split}
        [\Q^{(k)}_{12}, \R_{\underline{\vec a}(12)}] &= \Q^{(k)}_{12}\R_{{\underline{\vec a}}(12)} - \R_{\underline{\vec a}(12)}\Q^{(k)}_{12} - \Q^{(k)}_{1(2\underline{\vec a})}\R_{\underline{\vec a}(12)} + \R_{\underline{\vec a}2}\Q^{(k)}_{1(\underline{\vec a}2)}\R_{\underline{\vec a}1}\\
        &= \R_{\underline{\vec a}2}\left[\Q^{(k)}_{1(\underline{\vec a}2)} - \R_{1\underline{\vec a}}^{-1}\Q^{(k)}_{12}\R_{1\underline{\vec a}}\right]\R_{\underline{\vec a}1} - \left[\Q^{(k)}_{1(2\underline{\vec a})} - \Q^{(k)}_{12}\right]\R_{\underline{\vec a}(12)},
    \end{split}
\end{equation}
where in the first equality we used that $\R_{\underline{\vec a}2}\Q^{(k)}_{1(\underline{\vec a}2)} = \Q^{(k)}_{1(2{\underline{\vec a}})}\R_{{\underline{\vec a}}2}$. Now let $k\geq 3$. By taking a $(k-2)$-th coproduct on the second coordinate, one readily obtains
\begin{equation}
    \label{eq:Sutherlandderivation1}
    \begin{split}
        [\Q^{(k)}_{1(2,...,k)}, \R_{\underline{\vec a}(1,...,k)}] 
        &= \R_{{\underline{\vec a}}(2,..,k)}\left[\Q^{(k)}_{1(\underline{\vec a},2,...,k)}- \R^{-1}_{1\underline{\vec a}}\Q^{(k)}_{1(2,...,k)}\R_{1\underline{\vec a}}\right]\R_{\underline{\vec a}1}\\
        &\quad - \left[\Q^{(k)}_{1(2,...,k,\underline{\vec a})} - \Q^{(k)}_{1(2,...,k)}\right]\R_{\underline{\vec a}(1,...,k)},
    \end{split}
\end{equation}
and similarly
\begin{equation}
    \label{eq:Sutherlandderivation2}
    \begin{split}
        [\Q^{(k)}_{2(3,...,k)}, \R_{\underline{\vec a}(1,...,k)}] &=\R_{\underline{\vec a}(3,...,k)}\left[\Q^{(k)}_{2(\underline{\vec a},3,...,k)} - \R_{2\underline{\vec a}}^{-1}\Q^{(k)}_{2(3,...,k)}\R_{2\underline{\vec a}}\right]\R_{\underline{\vec a}(12)}\\
        &\quad - \left[\Q^{(k)}_{2(3,...,k,\underline{\vec a})} - \Q^{(k)}_{2(3,...,k)}\right]\R_{\underline{\vec a}(1,...,k)}.
    \end{split}
\end{equation}
Analogous expressions can be obtained for a commutation relation between the $R$-matrix $\R$ and algebraic charge density $\tilde \Q^{(k)}$. Let us now use Conjecture~\ref{conj:chargedensities} as a definition for the charge densities. That is, we define two representatives for the $k$-th charge density by
\begin{equation}\label{eq:chargedensitiesfromalgebraic}
    \mathfrak{q}^{(k)}_{1,...,k} \coloneq \Q^{(k)}_{1(2,...,k)}(\vec 0) - \Q^{(k)}_{2(3,...,k)}(\vec 0) \quad \text{and} \quad \tilde{\mathfrak{q}}^{(k)}_{1,...,k} \coloneq \tilde{\Q}^{(k)}_{(1,...,k-1)k}(\vec 0) - \tilde{\Q}^{(k)}_{(1,...,k-2)k-1}(\vec 0).
\end{equation}
By taking the difference between \eqref{eq:Sutherlandderivation1} and \eqref{eq:Sutherlandderivation2}, considering trivial spectral parameters on the coordinates $1,...,k$, and by applying Lemma~\ref{lemma:simplifycharges}, we have the following immediate result (for $\tilde \Q^{(k)}$ mutatis mutandis):
\begin{proposition}\label{prop:higherorderSutherlandEquations}
    Let $k\geq 3$, then the \textit{generalised Sutherland equations} are given by
    \begin{equation*}
        \begin{split}
            [\mathfrak{q}^{(k)}_{1,...,k}, \L_{\underline{\vec a}(1,...,k)}] &= \L_{\underline{\vec a}(2,...,k)}{}^r\Q^{(k)}_{1,\underline{\vec a},(2,...,k-1)}\L_{\underline{\vec a}1} - \L_{\underline{\vec a}(3,...,k)}{}^r\Q^{(k)}_{2,\underline{\vec a},(3,...,k)}\L_{\underline{\vec a}(12)},\\
            [\tilde{\mathfrak{q}}^{(k)}_{1,...,k}, \L_{\underline{\vec a}(1,...,k)}] &= \L_{\underline{\vec a},(k-1,k)}{}^r\tilde{\Q}^{(k)}_{(1,...,k-2),\underline{\vec a}, k-1}\L_{\underline{\vec a}(1,...,k-2)} - \L_{\underline{\vec a}k}{}^r\tilde{\Q}^{(k)}_{(2,...,k-1), \underline{\vec a}, k}\L_{{\underline{\vec a}}(1,...,k-1)},
        \end{split}
    \end{equation*}
    where ${}^r\Q^{(k)}$ and ${}^r\tilde{\Q}^{(k)}$ are the reduced algebraic charge densities as given in Definition~\ref{def:reducedalgebraicchargedensities}.
\end{proposition}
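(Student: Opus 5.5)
The plan is to subtract \eqref{eq:Sutherlandderivation2} from \eqref{eq:Sutherlandderivation1}, specialise to trivial spectral parameters on the physical coordinates $1,\dots,k$, and then use Lemma~\ref{lemma:simplifycharges} to show that the bracketed differences reduce to reduced algebraic charge densities. The statement for $\tilde{\mathfrak{q}}^{(k)}$ is handled in the same way.

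\textbf{Left-hand side.} With all spectral parameters on $1,\dots,k$ set to zero, $\R_{\underline{\vec a}(1,\dots,k)}$ becomes the monodromy $\L_{\underline{\vec a}(1,\dots,k)}$. By definition \eqref{eq:chargedensitiesfromalgebraic}, the difference $\Q^{(k)}_{1(2,\dots,k)}-\Q^{(k)}_{2(3,\dots,k)}$ evaluated at $\vec 0$ equals $\mathfrak{q}^{(k)}_{1,\dots,k}$. The left-hand side is therefore exactly $[\mathfrak{q}^{(k)}_{1,\dots,k},\L_{\underline{\vec a}(1,\dots,k)}]$.

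\textbf{Right-hand side, conjugated terms.} Consider the first bracket in \eqref{eq:Sutherlandderivation1}, namely $\Q^{(k)}_{1(\underline{\vec a},2,\dots,k)}-\R^{-1}_{1\underline{\vec a}}\Q^{(k)}_{1(2,\dots,k)}\R_{1\underline{\vec a}}$. Rename leg $k$ as an extra trailing zero-parameter leg; this is allowed because Lemma~\ref{lemma:simplifycharges} holds for any additional legs $\vec b$, including legs with trivial parameter. The lemma then lets us drop coordinate $k$, and by Definition~\ref{def:reducedalgebraicchargedensities} (see also \eqref{eq:reducedAlgebraicchargedensityproperty}) the bracket equals ${}^r\Q^{(k)}_{1,\underline{\vec a},(2,\dots,k-1)}$.

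The corresponding bracket in \eqref{eq:Sutherlandderivation2} has $k-2$ legs $3,\dots,k$ in the second slot. This is already the minimal length, so it is directly ${}^r\Q^{(k)}_{2,\underline{\vec a},(3,\dots,k)}$. This gives the two terms in the claimed formula.

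\textbf{Right-hand side, residual terms.} It remains to show that the terms $-[\Q^{(k)}_{1(2,\dots,k,\underline{\vec a})}-\Q^{(k)}_{1(2,\dots,k)}]\R_{\underline{\vec a}(1,\dots,k)}$ and $+[\Q^{(k)}_{2(3,\dots,k,\underline{\vec a})}-\Q^{(k)}_{2(3,\dots,k)}]\R_{\underline{\vec a}(1,\dots,k)}$ cancel after subtraction. Apply Lemma~\ref{lemma:simplifycharges} with empty $\vec a$ (so $\R_{1\underline{\vec a}}=1$) and $\vec b=\underline{\vec a}$. This says that $\Q^{(k)}_{1(2,\dots,k-1,\underline{\vec a})}-\Q^{(k)}_{1(2,\dots,k-1)}$ is independent of appended legs.

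That form does not match directly, because here there are $k-1$ legs before $\underline{\vec a}$. The natural approach is to use the reduced density ${}^r\Q^{(k)}(a,b,c)$ with $b=\T_2\cdots\T_k$ and $c=\T_{\underline{\vec a}}$. At trivial parameters $\mathbf r(\T_1,\T_2\cdots\T_k)$ is a product of permutations, so the difference becomes $\P\,\Q^{(k)}_{1(\underline{\vec a})}$ conjugated to position $k+1$ minus the analogous shifted expression. This conjugated contribution is the same for $1$ and $2$ after permuting through. More concretely, the regularity $\R_{12}(0,0)=\P_{12}$ gives
\[
\Q^{(k)}_{1(2,\dots,k,\underline{\vec a})}-\Q^{(k)}_{1(2,\dots,k)}=\P_{1(2\cdots k)}^{-1}\Q^{(k)}_{1\underline{\vec a}}\P_{1(2\cdots k)}+{}^r\Q^{(k)}_{1,(2\cdots k),\underline{\vec a}}.
\]
The reduced piece vanishes by \eqref{eq:reducedAlgebraicchargedensityproperty}, since it has too many zero legs and the lemma with $\vec a$ empty kills it. Both residuals therefore equal $\Q^{(k)}_{k+1,\underline{\vec a}}$ placed after the shift, i.e.\ $\Q^{(k)}$ acting on the last zero leg and $\underline{\vec a}$, and they cancel.

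\textbf{Main obstacle.} The hardest step is the bookkeeping in this last cancellation: verifying that the reduced density with $b$ a product of $k-1$ zero legs vanishes by Lemma~\ref{lemma:simplifycharges}, and that the two permutation-conjugated terms coincide exactly.
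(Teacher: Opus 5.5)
Your overall route is the paper's: subtract \eqref{eq:Sutherlandderivation2} from \eqref{eq:Sutherlandderivation1}, set the parameters on $1,\dots,k$ to zero, and invoke Lemma~\ref{lemma:simplifycharges}. Your treatment of the left-hand side and of the two conjugated brackets is correct, including dropping leg $k$ by placing it, at zero parameter, in the $\underline{\vec b}$-slot.

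The gap is in the step you flagged, the cancellation of the residual terms, which rests on three false claims.
\begin{itemize}
\item Lemma~\ref{lemma:simplifycharges} with empty $\vec a$ degenerates to $0=0$ and says nothing.
\item Your displayed identity contradicts Definition~\ref{def:reducedalgebraicchargedensities}, which gives ${}^r\Q^{(k)}_{1,(2\cdots k),\underline{\vec a}}=\Q^{(k)}_{1(2,\dots,k,\underline{\vec a})}-\P^{-1}_{1(2\cdots k)}\Q^{(k)}_{1\underline{\vec a}}\P_{1(2\cdots k)}$. Your display would therefore force $\Q^{(k)}_{1(2,\dots,k)}=0$.
\item This reduced density does not vanish. \eqref{eq:reducedAlgebraicchargedensityproperty} only removes legs appended after $k-2$ zero legs in the \emph{third} slot, whereas here the zero legs sit in the middle slot.
\end{itemize}
Concretely, for $k=3$, two applications of \eqref{eq:coproductsQ3} give
\[
\Q^{(3)}_{1(2,3,\underline{\vec a})}-\Q^{(3)}_{1(2,3)}=\Q^{(3)}_{3\underline{\vec a}}+[\Q^{(2)}_{3\underline{\vec a}},\mathfrak{q}^{(2)}_{23}].
\]
The commutator is generally nonzero; for XXX with one auxiliary leg it is proportional to $[\P_{3\underline{a}},\P_{23}]$. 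So the residuals are not both equal to $\Q^{(k)}$ on the last zero leg and $\underline{\vec a}$. They do coincide, but not for the reason you give.

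What is missing is a second application of Lemma~\ref{lemma:simplifycharges}, now with a zero-parameter leg in the $\underline{\vec a}$-slot. Take the lemma's $\underline{\vec a}$ to be leg $2$ at spectral parameter $0$, its $k-2$ zero legs to be $3,\dots,k$, and its $\underline{\vec b}$ to be your auxiliary legs $\underline{\vec a}$. Since $\R_{12}(0,0)=\P_{12}$ conjugates $\Q^{(k)}_{1(3,\dots)}$ into $\Q^{(k)}_{2(3,\dots)}$, the lemma reads
\[
\Q^{(k)}_{1(2,\dots,k,\underline{\vec a})}-\Q^{(k)}_{2(3,\dots,k,\underline{\vec a})}=\Q^{(k)}_{1(2,\dots,k)}-\Q^{(k)}_{2(3,\dots,k)}=\mathfrak{q}^{(k)}_{1,\dots,k},
\]
i.e.\ the charge density is insensitive to appended legs. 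This is precisely the equality of the two residual brackets in \eqref{eq:Sutherlandderivation1} and \eqref{eq:Sutherlandderivation2}, so they cancel and the first Sutherland equation follows; the $\tilde{\mathfrak{q}}^{(k)}$ case is analogous. Specialising an underlined leg to zero parameter is the same move you already made for leg $k$.
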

\noindent Remark here that the above proposition generalises the $k=2$ Sutherland equation as given in \eqref{eq:algebraicSutherlandEquation}. Moreover, it is immediately clear that these generalised Sutherland equations depend on the chosen representative of the charge density $\mathfrak{q}^{(k)}_{1,...,k}$. For the purposes of this paper, we will only need the above two representatives for the charge densities.\\

Importantly, these Sutherland equations can now be used to show that the charges defined through Conjecture~\ref{conj:chargedensities} all commute with the transfer matrix. Indeed, using the fact that one obtains a (vanishing) telescoping sum when commuting the charges with the transfer matrix (see also Appendix~\ref{sec:proofsforalgebraicchargedensities}), we have the following immediate Corollary:
\begin{corollary}\label{cor:commutingcharges}
    For $k\geq 3$, let $\mathfrak{q}^{(k)}_{1,...,k} \coloneq \Q^{(k)}_{1(2,...,k)}(\vec 0) - \Q^{(k)}_{2(3,...,k)}(\vec 0)$ and  consider the corresponding charge operator $\mathbb{Q}_k \coloneq \sum_{n\in \Z}\mathfrak{q}^{(k)}_{n,...,n+k-1}$, then\footnote{\label{footnote:infiniteproduct}Remark here that $\overleftarrow{\prod}_{m\in \Z} \L_{\underline{a},m}(u)$ denotes the monodromy matrix \eqref{eq:monodromymatrix} on the doubly-infinite spin chain. Of course, one might wonder whether such an infinite product is well-defined. However, for the purposes of illustrating the result, we will assume that one could, in principle, make sense of this infinite product in some way, for which the result is not expected to depend on its specifics.}
    \begin{equation*}
        \left[\mathbb{Q}_k, \mathrm{tr}_a\left(\overleftarrow{\prod}_{m\in \Z} \L_{\underline{a},m}(u)\right)\right] = 0.
    \end{equation*}
\end{corollary}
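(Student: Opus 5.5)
We use the generalised Sutherland equation of Proposition~\ref{prop:higherorderSutherlandEquations} with a single auxiliary coordinate $\underline{\vec a} = \underline a$. The idea is to split the doubly-infinite monodromy matrix around each density, replace each commutator by the corresponding local Sutherland right-hand side, and observe that the resulting sum over $n$ telescopes. The trace and cyclicity then kill the remainder.

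\textbf{Step 1: localise each commutator.} Fix $n\in\Z$ and write
\begin{equation*}
\L_{\underline a,(\dots)} = \L_{\underline a,(n+k,\dots)}\,\L_{\underline a(n,\dots,n+k-1)}\,\L_{\underline a,(\dots,n-1)}.
\end{equation*}
The density $\mathfrak{q}^{(k)}_{n,\dots,n+k-1}$ acts only on sites $n,\dots,n+k-1$, so it commutes with the two outer factors. Hence
\begin{equation*}
[\mathfrak{q}^{(k)}_{n,\dots,n+k-1},\L_{\underline a,(\dots)}] = \L_{\underline a,(n+k,\dots)}\,[\mathfrak{q}^{(k)}_{n,\dots,n+k-1},\L_{\underline a(n,\dots,n+k-1)}]\,\L_{\underline a,(\dots,n-1)}.
\end{equation*}
Substituting the generalised Sutherland equation (shifted to sites $n,\dots,n+k-1$) gives
\begin{equation*}
\L_{\underline a,(n+1,\dots)}\,{}^r\Q^{(k)}_{n,\underline a,(n+1,\dots,n+k-2)}\,\L_{\underline a,(\dots,n)} \;-\; \L_{\underline a,(n+2,\dots)}\,{}^r\Q^{(k)}_{n+1,\underline a,(n+2,\dots,n+k-1)}\,\L_{\underline a,(\dots,n+1)}.
\end{equation*}
Here the outer Lax factors have been recombined with the ones produced by the Sutherland equation.

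\textbf{Step 2: match the two terms.} The second term should be the first term with $n\to n+1$. Their labels differ: the first term carries $(n+1,\dots,n+k-2)$, which has $k-2$ sites, while the shifted second term would carry $(n+2,\dots,n+k)$, which has $k-1$ sites. This mismatch is removed by the finite-length property \eqref{eq:reducedAlgebraicchargedensityproperty}, with $\vec a$ empty and $\underline{\vec b}$ set to the spurious trailing site. Strictly, that site carries spectral parameter $0$ rather than a nontrivial one, but Lemma~\ref{lemma:simplifycharges} holds for arbitrary values, including $0$. So ${}^r\Q^{(k)}_{n+1,\underline a,(n+2,\dots,n+k-1)} = {}^r\Q^{(k)}_{n+1,\underline a,(n+2,\dots,n+k-2)}$, and the same holds for any longer tail. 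Writing
\begin{equation*}
X_n := \L_{\underline a,(n+1,\dots)}\,{}^r\Q^{(k)}_{n,\underline a,(n+1,\dots,n+k-2)}\,\L_{\underline a,(\dots,n)},
\end{equation*}
we therefore get $[\mathfrak{q}^{(k)}_{n,\dots,n+k-1},\L_{\underline a,(\dots)}] = X_n - X_{n+1}$.

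\textbf{Step 3: sum and trace.} Summing over $n\in\Z$ gives a telescoping sum $\sum_n (X_n - X_{n+1})$. On the doubly-infinite chain this is only formally zero, since the boundary terms are at $\pm\infty$. This is the main obstacle, and it is exactly where footnote~\ref{footnote:infiniteproduct}'s assumption is needed. I would handle it through translation invariance: the shift operator $\U$ conjugates $X_n$ to $X_{n+1}$ inside the auxiliary trace. Concretely, one can regularise on a periodic chain of length $L$ and then let $L\to\infty$. For $L\ge k$, locality lets Steps 1 and 2 go through verbatim, with the product now cyclic. The trace over $a$ and its cyclicity identify the wrap-around term $\mathrm{tr}_a X_{L+1}$ with $\mathrm{tr}_a X_1$. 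The cyclic sum then cancels exactly, so $[\mathbb{Q}_k,\mathrm{tr}_a\L_{\underline a,(\dots)}]=0$.
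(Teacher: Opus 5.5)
Your strategy is the paper's own: localise each commutator, apply the generalised Sutherland equation of Proposition~\ref{prop:higherorderSutherlandEquations}, and telescope. Step~1 is correct and already completes the local computation. Step~2, however, is wrong and must be removed.

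Both terms on the right-hand side of the Sutherland equation carry a tail of exactly $k-2$ sites, namely $(2,\dots,k-1)$ and $(3,\dots,k)$. After shifting to sites $n,\dots,n+k-1$, your second term is therefore $\L_{\underline a,(n+2,\dots)}\,{}^r\Q^{(k)}_{n+1,\underline a,(n+2,\dots,n+k-1)}\,\L_{\underline a,(\dots,n+1)}$. This is literally $X_{n+1}$ with your definition of $X_n$. The tail $(n+2,\dots,n+k)$ you quote is a miscount, so there is no mismatch to repair.

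The identity you use for the ``repair'', ${}^r\Q^{(k)}_{n+1,\underline a,(n+2,\dots,n+k-1)}={}^r\Q^{(k)}_{n+1,\underline a,(n+2,\dots,n+k-2)}$, is also false.
\begin{itemize}
\item Lemma~\ref{lemma:simplifycharges} and \eqref{eq:reducedAlgebraicchargedensityproperty} only let you discard tail sites beyond the first $k-2$, whereas you discard the $(k-2)$-th one.
\item In \eqref{eq:reducedAlgebraicchargedensityproperty} the slot $\underline{\vec a}$ is the auxiliary site itself, so it cannot be taken empty.
\item Already for $k=3$, \eqref{eq:coproductsQ3} gives ${}^r\Q^{(3)}_{1,\underline a,2}=\Q^{(3)}_{1\underline a}+[\R^{-1}_{1\underline a}\mathfrak{q}^{(2)}_{12}\R_{1\underline a},\Q^{(2)}_{1\underline a}]$, while the empty-tail density is just $\Q^{(3)}_{1\underline a}$. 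At $u=0$ the former is the three-site density $\mathfrak{q}^{(3)}_{1,a,2}$, and the latter vanishes for the XXX $R$-matrix of Section~\ref{sec:someexamples}.
\item Taken literally, your rewritten second term would no longer be $X_{n+1}$, so Step~2 even contradicts the conclusion it is meant to support.
\end{itemize}
Delete it. Steps~1 and~3 alone give the paper's telescoping argument.

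Step~3 is more careful than the paper, which simply declares the sum over $\Z$ to vanish under the formal assumption about the infinite product. Your periodic regularisation works, but it needs one more sentence of justification. $\mathrm{tr}_a$ is a partial trace and is \emph{not} cyclic in general. You may move a factor $\L_{\underline a,s}$ from one end of the product to the other only because its physical site $s$ is disjoint from the support of everything else in the product. In particular, $s$ must avoid the sites $m,\dots,m+k-2$ of ${}^r\Q^{(k)}_{m,\underline a,(m+1,\dots,m+k-2)}$.

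This is exactly where $L\ge k$ enters. It lets you rotate the starting point of the product so that Step~1 also applies to the $k-1$ densities that wrap around the chain. It also guarantees that $\mathrm{tr}_a X_{n+1}$ is the same whether the product starts at site $n$ or at site $n+1$. With this made explicit, your argument proves the finite-$L$ statement that the paper only mentions as a possible generalisation.
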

\noindent In particular, this shows that the charges $\mathbb{Q}_k$ that are \textit{defined} through the charge densities as given by Conjecture~\ref{conj:chargedensities} all commute with the transfer matrix on the doubly-infinite spin chain. One could, in principle, generalise the above result to show that the charges commute with the transfer matrix on a finite spin chain. The fact that all these charges commute with the transfer matrix is, of course, great evidence that Conjecture~\ref{conj:chargedensities} holds true.

\section{Long-range deformed FRT-bialgebras}\label{sec:longrangedeformations}
We now have all the necessary machinery to discuss the deformations of the FRT-bialgebras that correspond to the long-range integrable deformations of integrable spin chains. Let us first recall the notion of the long-range integrable models. For some deformation parameter $\lambda$, the deformed local charges are given by $\mathbb{Q}_k(\lambda) = \sum_{n=0}^{\infty} \mathbb{Q}_k^{(n)}\lambda^n$ such that $[\mathbb{Q}_k(\lambda), \mathbb{Q}_{\ell}(\lambda)] = 0$, and where $\mathbb{Q}_k^{(0)}$ is the undeformed charge as discussed in Section~\ref{sec:commutingcharges}. Moreover, for the deformation to be long-range, we impose that the higher correction terms have charge densities of range greater than $k$. In particular, the charge density for charge $\mathbb{Q}_k(\lambda)$ can be formally expanded as\footnote{The corresponding operator $\mathbb{Q}_k(\lambda)$ is then often also required to be \textit{quasi-local}, see e.g. \cite{Ilievski2015Quasilocal, Ilievski2016Quasilocal}.}
\begin{equation}\label{eq:chargedensitiesdeformations}
    \mathfrak{q}^{(k)}_{1,...,k} + \sum_{n=1}^{\infty}\mathfrak{q}^{(k),(n)}_{1,...,k_n}\lambda^n,
\end{equation}
where $k_{n+1} > k_n$, $k_0 = k$ and $\mathfrak{q}^{(k),(n)}_{1,...,k_n} \in \mathrm{End}(V^{\otimes k_n})$. That is, the range of the charge density increases (generally linearly) with the order in the deformation parameter $\lambda$. A family of such integrable long-range deformations of spin chains have been described in \cite{Beisert2006Longrange, Bargheer2008BoostingNearestNeighbour, Bargheer2009}. Although these deformations are well-understood on the level of the charges, the corresponding quantum groups had not yet been constructed. Some progress was made in \cite{Gombor2021, Gombor2022WrappingCorrections} and \cite{deLeeuwRetore2023}, where the Lax operator for some long-range deformation was discussed. We will now improve upon these results by explicitly constructing the full FRT-type bialgebras that correspond to these long-range deformations at first order in $\lambda$. In particular, in Section~\ref{sec:classificationLongrangeDeformations}, we will briefly recall the families of long-range deformations that were found in \cite{Bargheer2008BoostingNearestNeighbour, Bargheer2009}. Then, in Section~\ref{sec:LaxforBQ3derivation}, we use the higher-order Sutherland equations to derive the Lax operators for the long-range deformations up to first order in $\lambda$. In Section~\ref{sec:deformationFRTalgebra}, we discuss the corresponding deformations of the FRT-bialgebra that give rise to all the long-range deformations for the family of long-range deformations up to first order in $\lambda$. In particular, we give the explicit twisting elements for these deformations and show that the corresponding Drinfeld associators encode the information for the long-range interactions. In Section~\ref{sec:someexamples}, we then apply this to the XXX Heisenberg spin chain to provide the explicit expressions for the Lax operators and $R$-matrices of the boost $\B[\mathbb{Q}_3]$ and the bilocal $[\mathbb{Q}_2|\mathbb{Q}_3]$ long-range deformation. Lastly, in Section~\ref{sec:longrangedeformedYangian}, we also discuss the long-range deformation for the Yangian algebra.

\subsection{The deformation equation}\label{sec:classificationLongrangeDeformations}
Let us begin by reviewing the families of integrable long-range deformations that were described in \cite{Beisert2006Longrange, Bargheer2008BoostingNearestNeighbour, Bargheer2009}. Here, we will consider the charges on the doubly-infinite spin chain, such that $\mathbb{Q}_k = \sum_{n\in \Z}\mathfrak{q}^{(k)}_{n,...,n+k-1}$. For a deformation parameter $\lambda$, we then consider the deformed charges $\mathbb{Q}_k(\lambda) \coloneq \sum_{n=0}^{\infty}\mathbb{Q}_k^{(n)}\lambda^n$, where $\mathbb{Q}^{(0)}_k = \mathbb{Q}_k$, and the charges satisfy the deformation equation
\begin{equation}\label{eq:deformationequation}
    \frac{d}{d\lambda}\mathbb{Q}_k(\lambda) = [\mathcal{X}(\lambda), \mathbb{Q}_k(\lambda)],
\end{equation}
for some yet to be determined operator $\mathcal{X}(\lambda)$ on the doubly-infinite spin chain. Since $[\mathbb{Q}_k, \mathbb{Q}_\ell] = 0$, it can be easily found that any new tower of charges $\mathbb{Q}_k(\lambda)$ that satisfy \eqref{eq:deformationequation} are also pairwise commuting, i.e. $[\mathbb{Q}_k(\lambda), \mathbb{Q}_\ell(\lambda)] = 0$. However, we are not interested in just any deformation of the charges, but we require the deformed charges to act \textit{locally} and \textit{homogeneously} on the spin chain as well. That is, the charge $\mathbb{Q}_k$ can be written as a sum over the local charge densities of the form given in \eqref{eq:chargedensitiesdeformations}. Three families of choices for $\mathcal{X}(\lambda)$ have been identified in \cite{Bargheer2008BoostingNearestNeighbour, Bargheer2009}:
\begin{itemize}[leftmargin=*]
    \item \textbf{Local operators:} Firstly, let $k \geq 1$ and consider a matrix $\mathfrak{m}_{1,...,k} \in \mathrm{End}(V^{\otimes k})$, which defines a local operator $\O_{\mathfrak{m}} \coloneq \sum_{n\in \Z}\mathfrak{m}_{n,...,n+k-1}$ on the doubly-infinite spin chain. If we then set $\mathcal{X}(\lambda) = \O_\mathfrak{m}$, this will give rise to local and homogeneous long-range deformations of the integrable charges $\mathbb{Q}_k(\lambda)$. In fact, since $\mathcal{X}(\lambda)$ does now not depend on $\lambda$, the corresponding deformation equation \eqref{eq:deformationequation} can now be integrated exactly to give \linebreak $\mathbb{Q}_k(\lambda) = \exp(\lambda \O_\mathfrak{m})\mathbb{Q}_k(0)\exp(-\lambda \O_\mathfrak{m})$. In particular, the deformed charge $\mathfrak{\Q}_k(\lambda)$ is related to the undeformed charge via conjugation. Such a deformation on the level of the quantum group can be simply achieved by a change in representation. However, in Section~\ref{sec:deformationFRTalgebra}, we will still consider a deformation of the FRT-bialgebra that leads to the local-operator long-range deformation, as it is an easy example to consider to gain some additional intuition behind the other long-range quantum group deformations.
    
    \item \textbf{Boost operators:} For a long-range deformed charge $\mathbb{Q}_k(\lambda)$, we write its corresponding charge density as $\mathfrak{q}^{(k)}_n(\lambda) \coloneq \sum_{m\in \Z}\mathfrak{q}^{(k),(m)}_{n,...,n+k_m - 1}\lambda^m$, where $k_{m+1} > k_m$ and $k_0 = k$ in line with \eqref{eq:chargedensitiesdeformations}, such that $\mathbb{Q}_k(\lambda) = \sum_{n\in \Z}\mathfrak{q}^{(k)}_n(\lambda)$. The \textit{boost operator} corresponding to $\mathbb{Q}_k(\lambda)$ is then given by
    \begin{equation}\label{eq:boostoperatordef}
        \B[\mathbb{Q}_k(\lambda)] \coloneq \sum_{n\in \Z}n\cdot \mathfrak{q}^{(k)}_n(\lambda).
    \end{equation}
    Setting $\mathcal{X}(\lambda) = \B[\mathbb{Q}_{k}(\lambda)]$ for any $k\geq 3$ will then give rise to a non-trivial long-range integrable local and homogeneous deformation of the charges. We remark here that $\mathcal{X}(\lambda)$ depends on the charge $\mathbb{Q}_{k}(\lambda)$, which itself depends on $\mathcal{X}(\lambda)$ through the deformation equation \eqref{eq:deformationequation}. In particular, one should now think of $\eqref{eq:deformationequation}$ as defining a recurrence relation for $\mathbb{Q}_{k}(\lambda)$, which can be solved iteratively at each order of $\lambda$. One might remark here that the charge densities $\mathfrak{q}^{(k)}_n(\lambda)$ are not actually unique, and only defined up to the equivalence as given in Definition~\ref{def:equivalencelocalchargedensities}. However, it can be easily verified that, for two different representatives $\mathfrak{q}^{(k)}$ and $\tilde{\mathfrak{q}}^{(k)}$ for the charge density, the corresponding boost operators \eqref{eq:boostoperatordef} differ by a local operator.
    
    \item \textbf{Bilocal operators:} Lastly, for $2\leq k < \ell$, we consider the \textit{bilocal operator}\footnote {We remark here that our definition for the bilocal operator is different from the one used in \cite{Bargheer2008BoostingNearestNeighbour, Bargheer2009}. However, it is easy to verify that the definitions differ from each other by a local operator. We use our definition, as it is better suited for the deformation of the FRT-algebra that we will describe in Section~\ref{sec:deformationFRTalgebra}.}
    \begin{equation}\label{eq:bilocaloperatordef}
        [\mathbb{Q}_k(\lambda)|\mathbb{Q}_\ell(\lambda)] \coloneq \sum_{n\leq m}\mathfrak{q}^{(k)}_{n+\ell-k}(\lambda)\mathfrak{q}^{(\ell)}_m(\lambda).
    \end{equation}
    Setting $\mathcal{X}(\lambda) = [\mathbb{Q}_l(\lambda)|\mathbb{Q}_\ell(\lambda)]$ then gives rise to non-trivial long-range integrable local and homogeneous charges. Again, $\mathcal{X}(\lambda)$ depends on the charges $\mathbb{Q}_k(\lambda), \mathbb{Q}_{\ell}(\lambda)$, which themselves in turn depend on $\mathcal{X}(\lambda)$ through the deformation equation \eqref{eq:deformationequation}. Therefore, also for the bilocal operator, the deformation equation \eqref{eq:deformationequation} defines a recurrence relation for the charges. Moreover, it can be noted that by choosing a different representative for the local charge densities, the corresponding bilocal operators will differ by a local operator. Lastly, by substituting $\mathfrak{q}^{(k)}_n \to 1$ in \eqref{eq:bilocaloperatordef}, it can be seen that one obtains the boost operator \eqref{eq:boostoperatordef}. Therefore, one can view the long-range deformation corresponding to the boost operator as a special case of the deformation corresponding to the bilocal operator. We will see in Section~\ref{sec:deformationFRTalgebra} that this remains true for the deformation of the quantum group.
\end{itemize}
Additionally, \cite{Bargheer2009} introduces a change of basis of the charges, i.e. $\mathbb{Q}_k(\lambda) \to \Lambda_k^\ell(\lambda)\mathbb{Q}_\ell(\lambda)$, as a long-range deformation. However, such a change of basis transformation does not realise a truly inequivalent tower of commuting charges, and we will therefore not consider it as a true deformation. In fact, such a deformation is also not realisable on the level of the FRT-bialgebra, as its generated charges should always increase in range. The closest that one can get to a change of basis is a reparametrisation of the spectral parameter, i.e. $u \to f_{\lambda}(u)$ for some $\lambda$ dependent function. The spin chain might also have some additional Lie algebra symmetry, which in principle provides additional homogeneous local charges. The deformation corresponding to the boost operator for these Lie algebra charges will, in general, give rise to a deformation of nearest-neighbour type, i.e. non-long-range. However, the bilocal operators involving such charges do give rise to additional long-range deformations \cite{BFLL2013IntegrableDeformations}.\\

The above families of long-range deformations can now be combined to obtain a moduli space of long-range deformations. In particular, one can define
\begin{equation}
    \mathcal{X}(\lambda) \coloneq \sum_{n=0}^{\infty}\lambda^n\left[\mathfrak{m}_n + \alpha^k_n\B[\mathbb{Q}_k(\lambda)] + \beta^{k|\ell}_n[\mathbb{Q}_k(\lambda) | \mathbb{Q}_\ell(\lambda)]\right]
\end{equation}
for a general long-range deformation, where $\mathfrak{m}_n$ is a (possibly zero) local operator, and the indices $k$ and $\ell$ are summed over. Again, one should remember here that $\mathcal{X}(\lambda)$ is dependent on the charges $\mathbb{Q}_k(\lambda)$, which themselves are dependent on $\mathcal{X}(\lambda)$. For the purposes of this paper, we will only be interested in the first-order correction to the charges. In particular, we will only consider $\mathcal{X}(0)$ in the deformation equation \eqref{eq:deformationequation}, which is generated by some local operator $\mathfrak{m}$, the boost operators $\B[\mathbb{Q}_k(0)]$ and the bilocal operators $[\mathbb{Q}_k(0) | \mathbb{Q}_\ell(0)]$.

\subsection{Lax operators for the long-range deformations}\label{sec:LaxforBQ3derivation}
To obtain some intuition behind the FRT-bialgebras that will describe the long-range deformations, we will first derive the corresponding Lax operators at first order in the deformation parameter $\lambda$. Let us first introduce some notation. We consider the monodromy matrix on the infinite spin chain, which we denote by the infinite (co)product $\L_{\underline{a},(...)}(u)\coloneq \overleftarrow{\prod}_{m\in \Z} \L_{\underline{a},m}(u)$.\footnote{See footnote~\ref{footnote:infiniteproduct}.} Similarly, we define the Lax operators on the half-infinite chains by $\L_{\underline{a},(n,...)}(u)\coloneq \overleftarrow{\prod}_{m\in \Z_{\geq n}} \L_{\underline{a},m}(u)$ and $\L_{\underline{a},(...,n)}(u)\coloneq \overleftarrow{\prod}_{m\in \Z_{\leq n}} \L_{\underline{a},m}(u)$. If one now considers the commutant between the boost operator $\B[\mathbb{Q}_2]$ for the second charge and the monodromy matrix, one obtains
\begin{equation}
    \label{eq:boostQ2deformation}
    \begin{split}
        [\B[\mathbb{Q}_2], \L_{\underline{a},(...)}] &= \sum_{n\in \Z}n\cdot [\mathfrak{q}^{(2)}_{n,n+1}, \L_{\underline{a},(...)}]\\
        &= \sum_{n\in \Z}n\cdot\L_{\underline{a}, (n+2,...)}\left[\L'_{\underline{a},n+1}\L_{\underline{a},n}-\L_{\underline{a},n+1}\L'_{\underline{a},n}\right]\L_{\underline{a},(...,n-1)} = -\L'_{\underline{a},(...)},
    \end{split}
\end{equation}
where $\L'_{\underline{a}n}\coloneq \frac{d}{dv}\L_{\underline{a}n}(u,v)\big|_{v=0}$, and in the last step we did the substitution $n\to n-1$ in the first term, which then cancels with the second term to obtain a sum over derivatives in the $n$-th coordinate, and hence a total derivative of the monodromy matrix. Of course, this is a well-known result (see e.g \cite{Loebbert2016LecturesonYangianSymmetry}), giving rise to the boost formalism for generating the higher-order charges from just the second charge \cite{TetelmanLorentzGroupforTwoDimensional, Links2001LadderOperator}.\\

Using the generalised Sutherland equations as given in Proposition~\ref{prop:higherorderSutherlandEquations}, one can now easily generalise this result to compute the commutants between general boost or bilocal operators and the monodromy matrix on the doubly-infinite spin chain. In particular, for $k\geq 3$, one readily obtains
\begin{equation}\label{eq:boostmonodromycommutator}
    \begin{split}
        [\B[\mathbb{Q}_k], \L_{\underline{a},(...)}] = \sum_{n\in \Z}n\cdot[\mathfrak{q}^{(k)}_{n,...,n+k-1}, \L_{\underline{a},(...)}] = \sum_{n\in \Z}\L_{\underline{a}(n+1,...)}{}^r\Q^{(k)}_{n,\underline{a},(n+1,...,n+k-2)}\L_{\underline{a},(...,n)}.
    \end{split}
\end{equation}
Similarly, for $2\leq k < \ell$, we have
\begin{equation}\label{eq:bilocalmonodromycommutator}
    \begin{split}
        &[[\mathbb{Q}_k | \mathbb{Q}_\ell], \L_{\underline{a}, (...)}] = \sum_{n\leq m}[\tilde{\mathfrak{q}}^{(k)}_{n+\ell-k,...,n+\ell-1}\cdot \mathfrak{q}^{(\ell)}_{m,...,m+\ell-1}, \L_{\underline{a},(...)}]\\
        =& \sum_{n\in \Z}\big(\tilde{\mathfrak{q}}^{(k)}_{n+\ell-k,...,n+\ell-1}\cdot\big[\sum_{m \geq n}\mathfrak{q}^{(\ell)}_{m,...,m+\ell-1}, \L_{\underline{a},(...)}\big]\\
        &\qquad \qquad \qquad \qquad\ \  + \big[\sum_{m\leq n}\tilde{\mathfrak{q}}^{(k)}_{m+\ell-k,...,m+\ell-1}, \L_{\underline{a},(...)}\big]\cdot \mathfrak{q}^{(\ell)}_{n,...,n+\ell-1}\big)\\
        =& \sum_{n\in \Z}\big(\L_{\underline{a},(n+\ell,...)}\tilde{\mathfrak{q}}^{(k)}_{n+\ell-k,...,n+\ell-1}\L_{\underline{a},(n+1,...,n+\ell-1)}{}^r\Q^{(\ell)}_{n, \underline{a},(n+1,...,n+\ell-2)}\L_{\underline{a},(...,n)}\\
        &\qquad - \L_{\underline{a},(n+\ell-1,...)}{}^r\tilde{\Q}^{(k)}_{(n+\ell-k+1,...,n+\ell-2), \underline{a}, n+\ell-1}\L_{\underline{a},(n,...,n+\ell-2)}\mathfrak{q}^{(\ell)}_{n,...,n+\ell-1}\L_{a,(...,n-1)}\big),
    \end{split}
\end{equation}
where we recall that ${}^r\Q^{(k)}$ and ${}^r\tilde{\Q}^{(k)}$ are the reduced algebraic charge densities as given in Definition~\ref{def:reducedalgebraicchargedensities}. Remark that we use the two different representatives $\tilde{q}^{(k)}$ and $\mathfrak{q}^{(k)}$ for the charge densities. This choice of representatives will lead to a more natural form of the FRT-bialgebra deformation as described in Section~\ref{sec:deformationFRTalgebra}.\\

Using the above formula, one can readily write down the corresponding Lax operators that generate the deformed monodromy matrices. Before writing down the corresponding Lax operators, let us first introduce some notation. In particular, as noted in \cite{Gombor2021, Gombor2022WrappingCorrections, deLeeuwRetore2023}, and as we will see in this section, to describe long-range charges in a Lax-operator formalism, one needs to increase the auxiliary space of the Lax operator. Therefore, for $m\geq 1$ we define 
\begin{equation}\label{eq:undeformedlongrangeLaxoperator}
    \L_{(\underline{a},b_1,...,b_m)n}(u) \coloneq \R_{(\underline{a},b_1,...,b_m)n}(u,0,...,0) = \L_{\underline{a}n}(u)\P_{b_1,n}\cdots\P_{b_m,n},
\end{equation}
where $\P$ is the permutation operator, and where we used the factorisation of the $R$-matrix as given in \eqref{eq:Rmatrixfactorisation}. Recall here the notation as introduced in Section~\ref{sec:algebraicchargedensities} that an underline in the coordinate corresponds to a non-trivial insertion of a spectral parameter, while no underline signifies a trivial spectral parameter $0$. In light of the above notation, we will also write $\P_{(b_1,...,b_m)n} \coloneq \P_{b_1,n}\cdots \P_{b_m,n}$ and $\P_{(b_1,...,b_m)(...)} \coloneq  \overleftarrow{\prod}_{n\in \Z} \P_{(b_1,...,b_m)n}$, which is just a shift operator on the auxiliary space and the doubly-infinite spin chain. The Lax operators corresponding to the long-range deformations are now given as follows:
\begin{itemize}[leftmargin=*]
    \item \textbf{Local operator:} let $k\geq 2$ and $\mathfrak{m} \in \mathrm{End}(V^{\otimes k})$, then the Lax operator corresponding to the local operator deformation up to first order in $\lambda$ is given by
    \begin{equation}\label{eq:localoperatorLaxoperator}
        \begin{split}
            \L^\lambda_{(\underline{a},b_1,...,b_{k-1})n} &\coloneq \L_{(\underline{a}, b_1,...,b_{k-1})n}+\lambda\Big(\R^{-1}_{\underline{a}(b_1,...,b_{k-1})}\mathfrak{m}_{n,b_1,...,b_{k-1}}R_{\underline{a}(b_1,...,b_{k-1})}\\
            &\qquad\qquad\qquad\qquad\quad - \R^{-1}_{n\underline{a}}\mathfrak{m}_{n,b_1,...,b_{k-1}}\R_{n\underline{a}}\Big)\L_{(\underline{a}, b_1,...,b_{k-1})n} + \O(\lambda^2).
        \end{split}
    \end{equation}
    Indeed, if we define the corresponding monodromy matrix on the doubly-infinite spin chain $\L^\lambda_{(\underline{a},b_1,...,b_{k-1})(...)}\coloneq \overleftarrow{\prod}_{n\in \Z} \L^\lambda_{(\underline{a},b_1,...,b_{k-1})n}$, then it can be readily calculated by bringing all the permutation operators to right side, that the first-order correction is given by $\L^{(1)}_{(\underline{a},b_1,...,b_{k-1})(...)} = [\O_\mathfrak{m}, \L_{\underline{a},(...)}]\times \P_{(b_1,...,b_{k-1})(...)}$. In particular, this shows that the first-order correction to the monodromy matrix is given by the commutator of the monodromy matrix with a local operator, times a shift operator. When calculating the charges, this shift operator will be insignificant, such that indeed all the charges will correspond to the local operator long-range deformation as described in Section~\ref{sec:classificationLongrangeDeformations}. 

    \item \textbf{Boost operator:} let $k\geq 3$, then the Lax operator corresponding to the boost operator deformation up to first order in $\lambda$ is given by
    \begin{equation}\label{eq:boostoperatorLaxoperator}
        \L^{\lambda}_{(\underline{a}, b_1,...,b_{k-2})n}\coloneq \L_{(\underline{a}, b_1,...,b_{k-2})n} + \lambda\left( {}^r\Q^{(k)}_{n,\underline{a},(b_1,...,b_{k-2})}\L_{(\underline{a}, b_1,...,b_{k-2})n}\right) + \O(\lambda^2).
    \end{equation}
    Again, one can consider the monodromy matrix $\L^\lambda_{(\underline{a},b_1,...,b_{k-2})(...)}\coloneq \overleftarrow{\prod}_{n\in \Z} \L^\lambda_{(\underline{a},b_1,...,b_{k-2})n}$, for which the first-order correction, by bringing all the permutation operators to the right, can be calculated to be given by $\L^{(1)}_{(\underline{a},b_1,...,b_{k-2})(...)} = [\B[\mathbb{Q}_k], \L_{\underline{a},(...)}]\times \P_{(b_1,...,b_{k-2})(...)}$, where the commutator between the boost operator and the monodromy matrix was given in \eqref{eq:boostmonodromycommutator}. Therefore, this Lax operator will indeed generate all the charges corresponding to the boost operator-generated long-range deformation.

    \item \textbf{Bilocal operator:} Let $2\leq k < \ell$, the Lax operator corresponding to the bilocal operator deformation up to first order in $\lambda$ is given by
    \begin{equation}\label{eq:bilocaloperatorLaxoperator}
        \begin{split}
             \L^{\lambda}_{(\underline{a}, b_1,...,b_{\ell-1})n} &\coloneq \L_{(\underline{a}, b_1,...,b_{\ell-1})n} + \O(\lambda^2)\\
             &+ \lambda\Big(\R^{-1}_{\underline{a}(b_1,...,b_{\ell-1})}\tilde{\mathfrak{q}}^{(k)}_{b_{\ell-k},...,b_{\ell-1}}\R_{\underline{a}(b_1,...,b_{\ell-1})}{}^r\Q^{(\ell)}_{n,\underline{a},(b_1,...,b_{\ell-1})}\L_{(\underline{a}, b_1,...,b_{\ell-1})n}\\
             &\qquad - \L_{(\underline{a}, b_1,...,b_{\ell-1})n}\R^{-1}_{\underline{a}(b_1,...,b_{\ell-1})}{}^r\tilde{\Q}^{(k)}_{(b_1,...,b_{\ell-1}),\underline{a}, n}\R_{\underline{a}(b_1,...,b_{\ell-1})}\mathfrak{q}^{(\ell)}_{b_1,...,b_{\ell-1},n}\Big).
        \end{split}
    \end{equation}
    Again, when we consider the monodromy matrix $\L^\lambda_{(\underline{a},b_1,...,b_{\ell-1})(...)}\coloneq \overleftarrow{\prod}_{n\in \Z} \L^\lambda_{(\underline{a},b_1,...,b_{\ell-1})n}$, it can be derived that $\L^{(1)}_{(\underline{a},b_1,...,b_{\ell-1})(...)} = [[\mathbb{Q}_k|\mathbb{Q}_\ell], \L_{\underline{a},(...)}]\times \P_{(b_1,...,b_{\ell-1})(...)}$, where the commutator between the bilocal operator and the monodromy matrix was given in \eqref{eq:bilocalmonodromycommutator}. Therefore, this Lax operator indeed generates the long-range deformed charges corresponding to the bilocal operator.
\end{itemize}

As discussed in the previous section, the undeformed model might possess some additional Lie algebra symmetry, which also provides local homogeneous charges. Bilocal operators of the form $[ \mathbb{Q}_k|J]$, where $J$ is a Lie algebra generator, then also give rise to long-range deformations \cite{BFLL2013IntegrableDeformations}. The Lax operators for these long-range deformations can, in principle, be derived analogously, using a Sutherland-type equation for the charge $J$ and the commutator $[\Delta(J)_{12}, \L_{\underline{\vec a}(12)}]$. We remark that these equations are model-dependent and will therefore not be discussed in this paper. However, we emphasise that all the results can easily be generalised to include these types of Lie algebra charges and corresponding long-range deformations for specific $R$-matrices.\\

It should be noted that the above Lax operators can now also be used to construct the charges on spin chains of finite length $L$, in which case all the corresponding charge densities will still have the same long-range corrections. Namely, the long-range charge densities on the finite-length chains can now appear due to the increased auxiliary space. By computing the monodromy matrix $\L^\lambda_{(\underline{a},b_1,...,b_m)L}\cdots \L^\lambda_{(\underline{a},b_1,...,b_m)1}$ and tracing over the increased auxiliary space for the transfer matrix, the \textit{wrapping corrections}, when the interaction range exceeds the length of the spin chain, of the long-range charge densities will occur \cite{Gombor2021, Gombor2022WrappingCorrections, deLeeuwRetore2023}. Lastly, we note that the deformed Lax operators are now, in general, not regular. For example, the Lax operator for the boost operator $\B[\mathbb{Q}_k]$ deformation evaluated at $u =0$ is given by \begin{equation}
    \L^\lambda_{(a,b_1,...,b_{k-2})n}(0) = \left(1+\lambda \mathfrak{q}^{(k)}_{n,a,b_1,...,b_{k-2}}\right)\P_{(a,b_1,...,b_{k-2})n} + \O(\lambda^2),
\end{equation}
which is therefore not exactly equal to a shift operator. Of course, this is expected as the commutator between the boost operator and the shift operator is non-zero, such that the first charge $\mathbb{Q}_1 = \log(t(0))$ should also get a correction. Therefore, these Lax operators do not fall into the class of long-range Lax operators that were described in \cite{Gombor2021, Gombor2022WrappingCorrections}. Although the Lax operator is not regular, it should be noted that locality of the charges is still ensured, as the Lax operator is regular at zeroth order in $\lambda$. Moreover, as we will see, the associated $R$-matrix is regular.

\subsection{Deformation of the FRT-bialgebra}
\label{sec:deformationFRTalgebra}
As described in Section~\ref{sec:FRTalgebrasandintegrablespinchains}, the Lax operators for Yang-Baxter integrable spin chains appear as representations of an associated FRT-bialgebra. Therefore, we would like to find the FRT-type bialgebras that give rise to the long-range deformed Lax operators that we described in the previous section. In particular, one expects the Lax operator to arise as a representation of a deformation of the FRT-bialgebra $\A(\R)$. Here, the form of the Lax operators as given in \eqref{eq:localoperatorLaxoperator}, \eqref{eq:boostoperatorLaxoperator}, and \eqref{eq:bilocaloperatorLaxoperator} already gives us some clues towards what such a deformation should look like. Firstly, the fact that the auxiliary space needs to be increased with coordinates with trivial spectral parameters also suggests that the algebra elements in $\A(\R)$ need to be multiplied by some ``trivial elements''. We will show that this increasing of the auxiliary space corresponds on the level of the FRT-bialgebra to taking a so called ``double-crossed product'' with another bialgebra. Moreover, one can see in \eqref{eq:bilocaloperatorLaxoperator} that the Lax operator looks very similar to one that is obtained from a (Drinfeld) twisted bialgebra \cite{Drinfeld1983Constantquasiclassical}. Indeed, we will see that the long-range deformations correspond to a twist of the double-crossed product algebra. However, as we discussed at the end of Section~\ref{sec:commutingcharges}, and as we will see in this Section, this algebra will not be (fully) associative in order to describe the long-range deformation.\\

Let us begin by discussing the double-crossed product construction, which naturally corresponds to increasing the size of the auxiliary space. Let $\A(\R)$ be an FRT-bialgebra and let $\A_0\subset \A(\R)$ be the free algebra generated by $t_{ij}^{(-1)}$, i.e. the subalgebra of $\A(\R)$ consisting of sums of products of elements $t_{ij}(0)$ that correspond to the spectral parameter set to $u=0$. Since $\A_0$ is a subalgebra of $\A(\R)$, it also naturally inherits a bialgebra structure and a Hopf algebra structure under the completion to $\A_0^\circ$. Moreover, the $r$-form is also well-defined over $\A_0$, and naturally induces a right-action of $\A(\R)$ on $\A_0$ and a left action of $\A_0$ on $\A(\R)$ given by
\begin{equation}\label{eq:matchedaction}
    a\triangleleft b \coloneq \sum \mathbf{r}^{-1}(a_{(1)}, b_{(1)})a_{(2)}\mathbf{r}(a_{(3)}, b_{(2)})\quad \text{and} \quad a\triangleright b \coloneq \sum \mathbf{r}^{-1}(a_{(1)}, b_{(1)})b_{(2)}\mathbf{r}(a_{(2)}, b_{(3)}),
\end{equation}
for $a\in \A_0$ and $b\in \A(\R)$. We now have the following definition \cite{Majid1990Physicsforalgebraists, Majid1990BicrossproductExamples} (see also e.g. \cite{Majid1995FoundationsQG, KasselQuantumGroups, QGAR}):

\begin{definition}\label{def:doublecrossedproduct}
    The \textit{double-crossed product bialgebra}\footnote{This particular double-crossed product bialgebra is nothing different from the \textit{quantum double} $\D(\A(\R), \A_0;\mathbf{r})$. Remark also that the double-crossed product bialgebra $\A(\R)\bowtie_{\mathbf{r}}\A_0$ can be obtained as a twist from the tensor algebra $\A(\R)\otimes \A_0$ with $\gamma(a\otimes b,c\otimes d) = \epsilon(a)\mathbf{r}^{-1}(b,c)\epsilon(d)$, see e.g. \cite{Majid1995FoundationsQG, KasselQuantumGroups, QGAR}.} $\A(\R)\bowtie_{\mathbf{r}}\A_0$ is the vector space $\A(\R)\otimes \A_0$ that becomes a bialgebra with coproduct and counit induced from the coalgebra structure of $\A(\R)\otimes \A_0$, and multiplication given by\footnote{Of course, one should ask whether this definition of the multiplication indeed leads to a well-defined bialgebra. This follows from the fact that $\A(\R)$ and $\A_0$, together with the actions given in \eqref{eq:matchedaction}, form a so-called \textit{matched pair of bialgebras}. See e.g. \cite{Majid1995FoundationsQG, KasselQuantumGroups, QGAR} for details.}
    \begin{equation}
        \label{eq:bicrossedproduct}
        (a\otimes b)(c\otimes d) = \sum a(b_{(1)}\triangleright c_{(1)})\otimes (b_{(2)}\triangleleft c_{(2)})d,
    \end{equation}
    for $a,c\in \A(\R)$ and $b,d\in \A_0$.
\end{definition}
\noindent Under the completion to the Hopf algebras $\A^\circ(\R)$ and $\A_0^\circ$, the double-crossed product algebra $\A^\circ(\R)\bowtie_{\mathbf{r}}\A_0^\circ$ is also a Hopf algebra with antipode given by $S(a\otimes b) = (1\otimes S(a))(S(b)\otimes 1)$.\\

Note here that the maps $a\mapsto a\otimes 1$ and $b\mapsto 1\otimes b$ give the (natural) inclusions of the bialgebras $\A(\R)\subseteq \A(\R)\bowtie_{\mathbf{r}}\A_0$ and $\A_0 \subseteq \A(\R)\bowtie_{\mathbf{r}}\A_0$, respectively. Moreover, it can be directly calculated that for the matrix corepresentations $\T(u)$ and $\T(0)$ in $\A(\R)$ and $\A_0$, the product \eqref{eq:bicrossedproduct} is given by
\begin{equation}
    \label{eq:specificbicrossedproduct}
    \left(\T_{\underline 1}(u)\otimes \T_{2}(0)\right)\left(\T_{\underline 3}(v)\otimes \T_{4}(0)\right) = \R^{-1}_{2\underline{3}}(0,v)\left[\T_{\underline 1}(u)\T_{\underline 3}(v)\otimes \T_{2}(0)\T_{4}(0)\right]\R_{2\underline{3}}(0,v),
\end{equation}
where we denote the matrix indices for elements in $\A(\R)$ with non-trivial spectral parameter with an underline $\underline i$. Note here that the appearance of the $R$-matrix is expected from the exchange in the order of the matrices $\T_{\underline 3}(v)$ and $\T_{2}(0)$, i.e. the double-crossed product \eqref{eq:specificbicrossedproduct} looks exactly the same as just the product of elements in $\A(\R)$. Moreover, the algebra $\A(\R)\bowtie_{\mathbf{r}}\A_0$ is also cotriangular, with the universal $r$-form given by \cite{Majid1995FoundationsQG, KasselQuantumGroups, QGAR}
\begin{equation}\label{eq:rformdoubled}
    \mathbf{r}(a\otimes b, c\otimes d) = \sum \mathbf{r}(a_{(1)}, d_{(1)})\mathbf{r}(a_{(2)}, c_{(1)})\mathbf{r}(b_{(1)}, d_{(2)})\mathbf{r}(b_{(2)}, c_{(2)}),
\end{equation}
for $a,c\in \A(\R)$ and $b,d\in \A_0$. Note that the right-hand side is given by the same expression as the factorisation of $\mathbf{r}(ab, cd)$ for the $r$-form of $\A(\R)$, signifying again that the double-crossed product exactly resembles the ordinary product on $\A(\R)$.\\

Of course, one might wonder why we need to consider such a double-crossed product at all if it resembles the same structure as $\A(\R)$. However, we will see that this algebra deforms more naturally into the quantum group that corresponds to the long-range deformation. For example, it can be noted here that the operator
\begin{equation}
    \label{eq:undeformedincreasedauxiliaryLaxoperator}
    \L_{(\underline{a},b_1,...,b_m)1}(u) \coloneq \mathbf{r}\left(\T_{\underline{a}}(u)\otimes \T_{b_1}(0)\cdots \T_{b_{m}}(0), 1\otimes \T_1(0)\right) = \R_{\underline{a}1}(u)\P_{b_1,1}\cdots \P_{b_m,1},
\end{equation}
for $m\geq 1$ is exactly the undeformed Lax operator on the increased auxiliary space as given in \eqref{eq:undeformedlongrangeLaxoperator}. In particular, in the light of Section~\ref{sec:coquasitriangularityandLfunctionals}, this Lax operator corresponds to the representation of the matrix corepresentation $\T_{\underline a}(u)\otimes \T_{b_1}(0)\cdots \T_{b_m}(0)$ of $\left(\A(\R)\bowtie_{\mathbf{r}}\A_0\right)^{\mathrm{cop}}$.\\

The correct long-range deformed FRT-type bialgebra now corresponds to deformations of the double-crossed product bialgebra $\A(\R)\bowtie_{\mathbf{r}}\A_0$. In particular, one introduces a twist to the multiplication. Let us first consider such a twisting for general bialgebras \cite{Drinfeld1989QuasiHopf} (see also e.g. \cite{Majid1995FoundationsQG, KasselQuantumGroups, QGAR}):

\begin{definition}
    Let $\A$ be a bialgebra and $\gamma\colon \A\times \A\to \C$ a convolution-invertible bilinear map that is unital, i.e. $\gamma(a,1) = \epsilon(a) = \gamma(1,a)$. The \textit{twisted} (possibly non-associative) bialgebra $\A(\gamma)$ is the vector space $\A$ with the coalgebra structure induced from $\A$ and with twisted multiplication given by
    \begin{equation}
        a\cdot_\gamma b \coloneq \sum \gamma(a_{(1)}, b_{(1)})a_{(2)}\cdot b_{(2)}\gamma^{-1}(a_{(3)}, b_{(3)}),
    \end{equation}
    for $a,b\in \A(\gamma)$.
\end{definition}
\noindent As specified in the definition, the twisted bialgebra $\A(\gamma)$ is, in general, not associative. However, it is ``almost-associative'' \cite{Majid1995FoundationsQG} in the sense that\footnote{Such an ``almost-associative'' bialgebra for which the Drinfeld associator is non-trivial might also be called a \textit{quasibialgebra} \cite{Majid1995FoundationsQG}.}
\begin{equation}
    a\cdot_\gamma \left(b \cdot_{\gamma} c\right) = \sum \varphi\left(a_{(1)}, b_{(1)}, c_{(1)}\right)\left[\left(a_{(2)}\cdot_{\gamma}b_{(2)}\right)\cdot_{\gamma}c_{(2)}\right]\varphi^{-1}\left(a_{(3)}, b_{(3)}, c_{(3)}\right),
\end{equation}
where $\varphi\colon \A\times \A\times \A\to \C$ is the \textit{Drinfeld associator} given by \cite{Drinfeld1989QuasiHopf}
\begin{equation}
    \label{eq:definitionDrinfeldAssociator}
    \varphi(a,b,c) \coloneq \sum \gamma\left(b_{(1)}, c_{(1)}\right)\gamma\left(a_{(1)}, b_{(2)}c_{(2)}\right)\gamma^{-1}\left(a_{(2)}b_{(3)}, c_{(3)}\right)\gamma^{-1}\left(a_{(3)}, b_{(4)}\right).
\end{equation}
The bialgebra $\A(\gamma)$ is therefore associative if and only if the Drinfeld associator is trivial, i.e. if it is given by $\varphi(a,b,c) = \epsilon(a)\epsilon(b)\epsilon(c)$, in which case $\gamma$ is said to satisfy the \textit{$2$-cocycle condition} \cite{Majid1993Crossproduct}. If $\gamma$ is a such a 2-cocyle and $\A$ is a Hopf algebra, then $\A(\gamma)$ is also a Hopf algebra with antipode $S^\gamma(a) \coloneq \sum f(a_{(1)})S(a_{(2)})f^{-1}(a_{(3)})$ where $f(a) \coloneq \sum \gamma(a_{(1)}, S(a_{(2)}))$. In the dual picture, as described in Section~\ref{sec:dualdescription}, the twist corresponds to a modification of the coproduct of the corresponding dual quantised universal enveloping algebra, where the additional possibly non-trivial Drinfeld associator turns the Hopf-algebra into a quasi-Hopf algebra \cite{Drinfeld1989QuasiHopf}. We will discuss this in more detail in Section~\ref{sec:longrangedeformedYangian}.\\

If $\A$ is coquasitriangular, then $\A(\gamma)$ is also coquasitriangular (in a possibly non-associative sense). In particular, let $\mathbf{r}$ be the $r$-form for $\A$, then the $r$-form $\mathbf{r}^{\gamma}$ for $\A(\gamma)$ is given by \cite{Drinfeld1989QuasiHopf} (see also e.g. \cite{Majid1995FoundationsQG, KasselQuantumGroups, QGAR})
\begin{equation}
    \label{eq:deformedrform}
    \mathbf{r}^\gamma(a,b) \coloneq (\gamma_{21}\ast\mathbf{r}\ast \gamma^{-1})(a,b) = \sum\gamma(b_{(1)}, a_{(1)})\mathbf{r}\left(a_{(2)}, b_{(2)}\right)\gamma^{-1}\left(a_{(3)}, b_{(3)}\right),
\end{equation}
such that the braiding property $\sum \mathbf{r}^\gamma(a_{(1)}, b_{(1)})a_{(2)}\cdot_{\gamma}b_{(2)} = \sum b_{(1)}\cdot_{\gamma}a_{(1)}\mathbf{r}^\gamma(a_{(2)}, b_{(2)})$ holds for elements in $\A(\gamma)$ as well. Remark that if $\A$ is cotriangular, i.e. $\mathbf{r}^{-1} = \mathbf{r}_{21}$, then $\A(\gamma)$ is also cotriangular. As noted in Section~\ref{sec:coquasitriangularityandLfunctionals}, the factorisation property \eqref{eq:factorisationrform} is tightly linked to the associativity of the algebra. Since, after twisting, the algebra is no longer necessarily associative, one also does not expect the factorisation property to hold. Indeed, it can be verified that \cite{Drinfeld1989QuasiHopf}
\begin{equation}
    \label{eq:nonassociativefactorisation}
    \begin{split}
        \mathbf{r}^\gamma(a\cdot_{\gamma}b,c) &= \left(\varphi_{312}\ast \mathbf{r}^\gamma_{13}\ast \varphi^{-1}_{132}\ast \mathbf{r}^{\gamma}_{23}\ast \varphi_{123}\right)(a,b,c)\\
        \mathbf{r}^\gamma(a, b\cdot_{\gamma}c) &= \left(\varphi^{-1}_{231}\ast \mathbf{r}^\gamma_{13}\ast\varphi_{213}\ast\mathbf{r}^\gamma_{12}\ast \varphi^{-1}_{123}\right)(a,b,c).
    \end{split}
\end{equation}
Remark here in particular that, if the Drinfeld associator is trivial, and therefore the algebra is associative, one does obtain the ordinary factorisation rule \eqref{eq:factorisationrform}. It can be noted that the Drinfeld associator, and therefore a possible long-range term, appears in the product rule \eqref{eq:nonassociativefactorisation} for the $r$-form. In particular, as we will show shortly, this additional Drinfeld associator is exactly what gives rise to the long-range deformations of the Lax operator as given in Section~\ref{sec:LaxforBQ3derivation}. Note that the non-associative factorisation rule also modifies the Yang-Baxter equation. In particular, it follows that the twisted $r$-form satisfies the non-associative Yang-Baxter equation \cite{Drinfeld1989QuasiHopf}
\begin{equation}
    \label{eq:nonassociativeYBE}
    \mathbf{r}^\gamma_{12}\ast \varphi_{312}\ast \mathbf{r}^\gamma_{13}\ast \varphi^{-1}_{132}\ast \mathbf{r}^{\gamma}_{23}\ast \varphi_{123} = \varphi_{321}\ast \mathbf{r}^\gamma_{23}\ast \varphi^{-1}_{231}\ast \mathbf{r}^{\gamma}_{13}\ast \varphi_{213}\ast \mathbf{r}^{\gamma}_{12}.
\end{equation}
Therefore, for a general twisted bialgebra $\A(\gamma)$, the Yang-Baxter equation \eqref{eq:dualYangBaxtereq} does not hold. Of course, this then also means that the cotriangular bialgebra does not generate a tower of commuting charges, and therefore does not give rise to an integrable model. However, as we will see, the twisted bialgebras that we will be considering for the long-range deformations will still have some (perturbatively) associative subalgebra, such that the Yang-Baxter equation also still holds (perturbatively).\\

In order to describe these associative subalgebras, we will introduce some additional notation. In particular, we will introduce the following grading on $\A_0$: we define $\A^{(0)}_0 \coloneq \C$ and for $n\geq 1$ we define
\begin{equation}
    \A_0^{(n)} \coloneq \mathrm{span}\left\{t_{i_1,j_1}(0)\cdot \cdots t_{i_n, j_n}(0) : 1\leq i_1,..., i_n \leq \dim V \text{ and }  1\leq j_1,..., j_n \leq \dim V\right\}.
\end{equation}
That is, $\A_0^{(n)}\subseteq \A_0$ consists of monomials in $t_{ij}(0)$ of degree $n$. As a vector space, the algebra can be decomposed as $\A_0 = \bigoplus_{n=0}^{\infty}\A_0^{(n)}$. In addition, we consider the subsets of $\A_0$ with elements that have degree less or degree more than some $n\geq 0$, given respectively by
\begin{equation}
    \A_0^{(\leq n)} \coloneq \bigoplus_{k=0}^{n}\A_0^{(n)}\quad \text{and}\quad \A_0^{(\geq n)} \coloneq \bigoplus_{k=n}^{\infty}\A_0^{(n)}.
\end{equation} 
Similarly, we let $\A^{(\geq 1)}(\R)$ be the non-unital subset of $\A(\R)$ generated by the elements $t_{ij}(0)$ and $t^{(-k)}_{ij}$ for $k \geq 2$ and $1\leq i,j\leq N$. Note that this subset is equal to the non-unital \textit{reduced algebra} (see e.g. \cite{LodayVallette2012}), such that $\A(\R) = \C \oplus \A^{(\geq 1)}(\R)$ as a vector space. The grading can now be used to split the double-crossed product algebra $\A(\R)\bowtie_{\mathbf{r}}\A_0$ into two parts. In particular, for $m\geq 1$, one can write
\begin{equation}
    \label{eq:algebradecomposition}
    \A(\R)\bowtie_{\mathbf{r}}\A_0 = \left[\A_0 \oplus \left(\A^{(\geq 1)}(\R)\bowtie_{\mathbf{r}}\A_0^{(\geq m)}\right)\right]\oplus \left[\A^{(\geq 1)}(\R)\bowtie_{\mathbf{r}}\A_0^{(\leq m-1)}\right]
\end{equation}
as a vector space, where we identify $\A_0\cong  \C\otimes \A_0\subseteq \A(\R)\bowtie_{\mathbf{r}}\A_0$ on the right side. Note here that the first term in this direct sum is a unital subalgebra of $\A(\R)\bowtie_{\mathbf{r}}\A_0$. As we will see, subalgebras of this form for appropriate values of $m$ will correspond to the unital subalgebras that remain associative after twisting for the long-range deformations.

\subsubsection{Twist for the local deformation}\label{sec:twistlocaldeformation}
We can now consider the twist of the double-crossed product algebra $\A(\R)\bowtie_{\mathbf{r}} \A_0$ that corresponds to the local operator long-range deformation. In particular, let $k\geq 2$ and $\mathfrak{m} \in \mathrm{End}(V^{\otimes k})$. We consider a twisting function
\begin{equation}
    \gamma_\mathfrak{m}(\lambda) \coloneq \epsilon\otimes \epsilon + \lambda \gamma^{(1)}_{\mathfrak{m}} + \O(\lambda^2)\in \mathrm{Hom}\big((\A(\R)\bowtie_{\mathbf{r}} \A_0)\otimes (\A(\R)\bowtie_{\mathbf{r}} \A_0), \C \big)[\![\lambda]\!],
\end{equation}
where we expand $\gamma_{\mathfrak{m}}$ around $\lambda = 0$ up to first order in $\lambda$ and the zeroth order term is just the trivial twist, i.e. $\gamma^{(0)}_{\mathfrak{m}}(a,b) = \epsilon(a)\epsilon(b)$ for $a,b\in \A(\R)\bowtie_{\mathbf{r}} \A_0$. We then define the first order correction $\gamma^{(1)}_{\mathfrak{m}}$ to be given by
\begin{equation}\label{eq:twistlocaldeformation}
    \begin{split}
        &\gamma^{(1)}_{\mathfrak{m}}(\T_{\underline{\vec a}}\otimes \T_1\cdots \T_N, \T_{\underline{\vec b}}\otimes \T_{N+1}\cdots \T_M)\\
        \coloneq& \sum_{n=1}^{N}\R^{-1}_{\underline{\vec b}(N+1,...,M)}\mathfrak{m}_{n,...,n+k-1}\R_{\underline{\vec b}(N+1,...,M)} - \R^{-1}_{(1,...,N)\underline{\vec b}}\mathfrak{m}_{n,...,n+k-1}\R_{(1,...,N)\underline{\vec b}},
    \end{split}
\end{equation}
for $0 \leq N \leq M$, and where we used the notation $\T_{\underline i} \coloneq \T_{\underline i}(u_i)$ and $\T_{i}\coloneq \T_{i}(0)$. For this particular definition, we set a term $\mathfrak{m}_{n,...,n+k-1}$ in the sum to $0$ if it ``overflows'' the element $\T_{N+1}\cdots \T_M$, i.e. $\mathfrak{m}_{n,...,n+k-1} = 0$ for $n+k-1 > M$. Of course, we should check whether $\gamma^{(1)}_{\mathfrak{m}}$ is well-defined, i.e. is invariant under the RTT-relation \eqref{eq:RTTrelations}. Indeed, it can be readily verified that
\begin{equation}
    \label{eq:consistencygammamR}
    \begin{gathered}
        \R_{\underline{i}\underline{j}}\gamma_{\mathfrak{m}}^{(1)}\left(\T_{\underline 1}\cdots \T_{\underline i}\T_{\underline j}\cdots \T_{\underline N}\otimes b, c\otimes d\right) = \gamma_{\mathfrak{m}}^{(1)}\left(\T_{\underline 1}\cdots \T_{\underline j}\T_{\underline i}\cdots \T_{\underline N}\otimes b, c\otimes d\right)\R_{\underline{i}\underline{j}}\\
        \R_{\underline{i}\underline{j}}\gamma_{\mathfrak{m}}^{(1)}\left(a\otimes b, \T_{\underline 1}\cdots \T_{\underline i}\T_{\underline j}\cdots \T_{\underline N}\otimes d\right) = \gamma_{\mathfrak{m}}^{(1)}\left(a\otimes b, \T_{\underline 1}\cdots \T_{\underline j}\T_{\underline i}\cdots \T_{\underline N}\otimes d\right)\R_{\underline{i}\underline{j}}
    \end{gathered}
\end{equation}
and
\begin{equation}
    \label{eq:consistencygammamP}
    \begin{gathered}
        \P_{ij}\gamma_{\mathfrak{m}}^{(1)}\left(a\otimes \T_{1}\cdots \T_{i}\T_{j}\cdots \T_{N}, c\otimes d\right) = \gamma_{\mathfrak{m}}^{(1)}\left(a\otimes \T_{1}\cdots \T_{j}\T_{i}\cdots \T_{N}, c\otimes d\right)\P_{ij}\\
        \P_{ij}\gamma_{\mathfrak{m}}^{(1)}\left(a\otimes b, c\otimes \T_{1}\cdots \T_{i}\T_{j}\cdots \T_{N}\right) = \gamma_{\mathfrak{m}}^{(1)}\left(a\otimes b, c\otimes \T_{1}\cdots \T_{j}\T_{i}\cdots \T_{N}\right)\P_{ij},
    \end{gathered}
\end{equation}
for $a,c\in \A(\R)$ and $b,d\in \A_0$, and we suppressed the explicit dependence on the spectral parameters in \eqref{eq:consistencygammamR} in our notation, i.e. $\T_{\underline{i}} \coloneq\T_{\underline{i}}(u_i)$ and $\T_i\coloneq\T_i(0)$. Note here that the relations \eqref{eq:consistencygammamP} for the twist \eqref{eq:BQktwistelement} only hold precisely because the second term in the double-crossed product algebra corresponds to $\A_0$, i.e. its elements have a trivial spectral parameter, such that the braiding operator is just given by the permutation operator. In particular, a twist of the same form as \eqref{eq:BQktwistelement} could not have been (consistently) defined on the quantum double $\A(\R)\bowtie_{\mathbf{r}}\A(\R)$. Also note that $\gamma^{(1)}_\mathfrak{m}(a\otimes 1, c\otimes 1) = 0$ and $\gamma^{(1)}_\mathfrak{m}(1\otimes b, 1\otimes d) = 0$, such that $\A(\R)$ and $\A_0$ as subalgebras $\A(\R), \A_0 \subseteq \A(\R)\bowtie_{\mathbf{r}} \A_0$ remain undeformed.\\

Importantly, the twisted bialgebra $(\A(\R)\bowtie_{\mathbf{r}}\A_0)(\gamma_{\mathfrak{m}})$ is non-associative at first order in $\lambda$. In particular, let us also expand the Drinfeld associator in $\lambda$, i.e. write $\varphi_{\mathfrak{m}}(\lambda) = \varphi^{(0)}_{\mathfrak{m}} + \lambda \varphi^{(1)}_{\mathfrak{m}} + \O(\lambda^2)$ with $\varphi^{(0)}_{\mathfrak{m}}(a,b,c) = \epsilon(a)\epsilon(b)\epsilon(c)$. It can then be directly calculated that the non-zero terms in the first-order correction to the Drinfeld associator for the elements $1\otimes \T_{1}(0)$, $\T_{\underline{\vec a}}(u_\vec a)\otimes 1$ and $1\otimes \T_{2}(0)\cdots \T_{k}(0)$ are given by
\begin{equation}\label{eq:nonvanihsingDrinfeldlocaloperator}
    \begin{split}
        \varphi_{\mathfrak{m}}^{(1)}(1\otimes \T_{1}, \T_{\underline{\vec a}}\otimes 1, 1\otimes \T_{2}\cdots \T_{k}) = \gamma^{(1)}_{\mathfrak{m}}(1\otimes \T_1, \T_{\underline{\vec a}}\otimes \T_2\cdots \T_k)\neq 0,
    \end{split}
\end{equation}
where we used the notation $\T_{\underline i} \coloneq \T_{\underline i}(u_i)$ and $\T_{i}\coloneq \T_{i}(0)$. It therefore follows from \eqref{eq:nonassociativeYBE} that the Yang-Baxter equation does not hold in general. One might be worried whether this will then give rise to an integrable model. Luckily, the twisted bialgebra $(\A(\R)\bowtie_{\mathbf{r}}\A_0)(\gamma_{\mathfrak{m}})$ does still have a (large) associative substructure. In particular, as shown in Appendix~\ref{sec:proofsforlocaldeformation}, we have the following result:
\begin{proposition}\label{prop:vanishingDrinfeldassociatorlocaldef}
    Let $a,b,c\in \A_0 \oplus \left(\A^{(\geq 1)}(\R)\bowtie_{\mathbf{r}}\A_0^{(\geq k-1)}\right)$, then $\varphi_{\mathfrak{m}}^{(1)}(a, b, c) = 0$.
\end{proposition}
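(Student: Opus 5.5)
The plan is to work entirely at first order in $\lambda$ and use the fact that the zeroth-order twist is trivial. Expanding the definition \eqref{eq:definitionDrinfeldAssociator} of $\varphi_{\mathfrak{m}}$ to first order, every convolution with $\gamma^{(0)}_{\mathfrak{m}}=\epsilon\otimes\epsilon$ collapses through the counit. This leaves the Hochschild-type coboundary
\begin{equation*}
    \varphi^{(1)}_{\mathfrak{m}}(a,b,c) = \epsilon(a)\gamma^{(1)}_{\mathfrak{m}}(b,c) + \gamma^{(1)}_{\mathfrak{m}}(a,bc) - \gamma^{(1)}_{\mathfrak{m}}(ab,c) - \gamma^{(1)}_{\mathfrak{m}}(a,b)\epsilon(c),
\end{equation*}
where the products are the untwisted double-crossed products \eqref{eq:bicrossedproduct}. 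The statement therefore reduces to showing that $\gamma^{(1)}_{\mathfrak{m}}$ is a $2$-cocycle when restricted to the subalgebra $\A_0 \oplus (\A^{(\geq 1)}(\R)\bowtie_{\mathbf{r}}\A_0^{(\geq k-1)})$.

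\textbf{Reduction to matrix corepresentations.} By linearity, and because \eqref{eq:consistencygammamR}--\eqref{eq:consistencygammamP} show $\gamma^{(1)}_{\mathfrak{m}}$ is compatible with the RTT relations, it suffices to take $a,b,c$ to be matrix corepresentations of the form $\T_{\underline{\vec a}}\otimes \T_1\cdots\T_N$. For each element, either $\vec a$ is empty (the $\A_0$ part), or $\vec a$ is non-empty and $N\geq k-1$. The products $ab$ and $bc$ are then computed with \eqref{eq:specificbicrossedproduct}. They become $\R^{-1}(\T\cdots\otimes\T\cdots)\R$, with the $R$-matrices braiding the trivial-parameter block of the left factor past the non-trivial block of the right factor. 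By the factorisation \eqref{eq:Rmatrixfactorisation}, these conjugating $R$-matrices are exactly of the type already appearing inside \eqref{eq:twistlocaldeformation}. Regularity ($\R=\P$ at zero spectral parameters) lets the trivial-parameter factors be moved freely.

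\textbf{Telescoping.} I would write $\gamma^{(1)}_{\mathfrak{m}}(x,y)$ as a sum over windows $\mathfrak{m}_{n,\dots,n+k-1}$ starting in the $\A_0$-block of $x$. Each window enters as a difference of two conjugations: one by $\R_{\underline{\vec b}(\cdots)}$, the "$\mathfrak{m}$ moved past the spectral block of $y$" term, and one by $\R_{(\cdots)\underline{\vec b}}$. In the four-term combination above, I expect the windows to cancel in pairs as a telescoping sum. The two conjugated forms of $\mathfrak{m}$ attached to the junction $a|b$ and to the junction $b|c$ should match, using $\R_{(12)3}=\R_{13}\R_{23}$ and the fact that conjugation by the trivial-parameter $R$-matrices is just relabelling. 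The natural organisation is to set $\psi(x)$ equal to the sum of windows lying entirely inside the $\A_0$-block of $x$. Then $\gamma^{(1)}_{\mathfrak{m}}$ equals $\delta\psi$ plus \emph{boundary} terms coming from windows that straddle a junction or overflow the end. The coboundary $\delta\psi$ contributes nothing to $\varphi^{(1)}$, so only the boundary terms remain.

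\textbf{Main obstacle: the boundary terms.} The hardest step will be the straddling windows and the overflow convention $\mathfrak{m}_{n,\dots,n+k-1}=0$ for $n+k-1>M$. This truncation is precisely what breaks the cocycle identity in general; compare \eqref{eq:nonvanihsingDrinfeldlocaloperator}, where $b=\T_{\underline{\vec a}}\otimes 1$ has too few trivial factors. The role of the hypothesis is that whenever an element carries a non-trivial spectral block, it also carries at least $k-1$ trivial factors. A window of length $k$ that starts in the $\A_0$-block of $a$ and reaches into $b$ or $c$ therefore either fits inside the trivial block of the next factor, or it is truncated identically in both $\gamma^{(1)}_{\mathfrak{m}}(a,bc)$ and $\gamma^{(1)}_{\mathfrak{m}}(ab,c)$. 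I would verify this by a case split on which of $a,b,c$ lie in $\A_0$. If all three do, everything vanishes, since $\gamma^{(1)}_{\mathfrak{m}}(1\otimes b,1\otimes d)=0$. Each remaining case reduces to checking that the truncated windows coincide on both sides. I expect the case $b\in\A_0$ with $a,c$ of mixed type to be the most delicate, because there the window from $a$ can pass through $b$ into $c$.
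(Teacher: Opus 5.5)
Your route is correct, but it is organised differently from the paper's. The paper never subtracts a coboundary. Instead it proves three product rules for $\gamma^{(1)}_{\mathfrak m}$ (Lemma~\ref{lem:localdeftwistproperties}), for splitting the $\A_0$-part of the first argument, the $\A(\R)$-part of the second, and the $\A_0$-part of the second. The last of these is pure truncation, $\gamma^{(1)}_{\mathfrak m}(a\otimes b,c\otimes d\cdot d')=\gamma^{(1)}_{\mathfrak m}(a\otimes b,c\otimes d)\epsilon(d')$, and it is where the $k-1$ trivial factors enter. Inserting these rules into the four-term expression, together with cotriangularity and $\gamma^{(1)}_{\mathfrak m}(a\cdot c\otimes d,e\otimes f)=\epsilon(a)\gamma^{(1)}_{\mathfrak m}(c\otimes d,e\otimes f)$, everything cancels.

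Your decomposition does hold. Take $\psi(\T_{\underline{\vec a}}\otimes\T_1\cdots\T_N)=\sum_{n=1}^{N-k+1}\mathfrak m_{n,\dots,n+k-1}$; this is a legitimate functional because it ignores the spectral legs. Then $\gamma^{(1)}_{\mathfrak m}=-\delta\psi+S$ (note the sign), where $S(\T_{\underline{\vec a}}\otimes\T_1\cdots\T_N,\T_{\underline{\vec b}}\otimes\T_{N+1}\cdots\T_M)=\sum_{n=\max(1,N-k+2)}^{N}\R^{-1}_{\underline{\vec b}(N+1,\dots,M)}\mathfrak m_{n,\dots,n+k-1}\R_{\underline{\vec b}(N+1,\dots,M)}$, with the overflow convention at $M$. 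The check $\delta S=0$ is then short:
\begin{itemize}
\item If the middle factor has a spectral block, the hypothesis gives $S(a,bc)=S(a,b)$ and $S(ab,c)=S(b,c)$ directly.
\item If the middle factor lies in $\A_0$ (with blocks $1,\dots,N$; $N+1,\dots,M$; $M+1,\dots,K$), both $S(a,bc)-S(a,b)$ and $S(ab,c)-S(b,c)$ equal the sum over $\max(1,M-k+2)\le n\le\min(N,K-k+1)$ of $\R^{-1}_{\underline{\vec c}(M+1,\dots,K)}\mathfrak m_{n,\dots,n+k-1}\R_{\underline{\vec c}(M+1,\dots,K)}$.
\end{itemize}
One correction to your sketch: the mixed conjugations are not matched by the relabelling $\R=\P$. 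Besides the factorisation you cite, what matches them is braiding unitarity, e.g.\ $\R_{\underline{\vec c}(N+1,\dots,K)}\R_{(N+1,\dots,M)\underline{\vec c}}=\R_{\underline{\vec c}(M+1,\dots,K)}$. This is the counterpart of the paper's use of cotriangularity.

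Your route buys insight. At first order $\gamma_{\mathfrak m}$ is cohomologous to a twist supported only on junction-straddling windows, as expected for a deformation that is a similarity transformation. The nonzero associator \eqref{eq:nonvanihsingDrinfeldlocaloperator} is explained as the truncation of $S$ at a middle factor that has a spectral block but too few trivial legs. It also becomes transparent that the hypothesis is needed only on the middle argument.

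The paper's route buys uniformity. The same product-rule template is reused for the boost twist (Lemma~\ref{lemma:twistkproperties}) and, after splitting into three pieces, for the bilocal twist. There, no analogue of your local functional $\psi$ is evident.
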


We can now use this result to construct the Lax operator and $R$-matrix and show that they satisfy the RLL relations, such that the integrability of the corresponding model is ensured. Firstly, we consider the vector $\vec b \coloneq (b_1,...,b_{k-1})$ and let $\T_{\vec b}(\vec 0) = \T_{b_1}(0)\cdots \T_{b_{k-1}}(0)$. It can then be directly calculated from \eqref{eq:twistlocaldeformation} that
\begin{equation}\label{eq:evaluationstwistlocaloperator}
    \begin{split}
        \gamma^{(1)}_{\mathfrak{m}}(\T_{\underline{a}}\otimes \T_{\vec b}, 1\otimes \T_1) &= 0,\\
        \gamma^{(1)}_{\mathfrak{m}}(1\otimes \T_1, \T_{\underline{a}}\otimes \T_{\vec b}) &= \R^{-1}_{\underline{a}\vec b}\mathfrak{m}_{1,b_1,...,b_{k-1}}\R_{\underline{a}\vec b} - \R^{-1}_{1\underline{a}}\mathfrak{m}_{1,b_1,...,b_{k-1}}\R_{1\underline{a}}.
    \end{split}
\end{equation}
In light of \eqref{eq:undeformedincreasedauxiliaryLaxoperator}, we define the deformed Lax operator on the increased auxiliary space to be given by
\begin{equation}\label{eq:Laxoperatorlocaldef}
    \L^{\gamma_{\mathfrak{m}}}_{(\underline{a},b_1,...,b_{k-1})1}(u) \coloneq \mathbf{r}^{\gamma_{\mathfrak{m}}}(\T_{\underline{a}}(u)\otimes \T_{b_1}(0)\cdots \T_{b_{k-1}}(0), 1\otimes \T_1(0)).
\end{equation}
We can now use \eqref{eq:evaluationstwistlocaloperator} to evaluate this Lax operator. In particular, using the definition of $\mathbf{r}^{\gamma}$ as given in \eqref{eq:deformedrform}, we get
\begin{equation}
    \begin{split}
        \L_{(\underline{a}\vec b)1}^{\gamma_\mathfrak{m}}=  \left(1+\lambda\Big(\R^{-1}_{\underline{a}\vec b}\mathfrak{m}_{1,b_1,...,b_{k-1}}\R_{\underline{a}\vec b}- \R^{-1}_{1\underline{a}}\mathfrak{m}_{1,b_1,...,b_{k-1}}\R_{1\underline{a}}\Big)\right)\L_{(\underline{a}\vec b)1} + \O(\lambda^2),
    \end{split}
\end{equation}
Note here that this Lax operator is exactly equal to the Lax operator obtained from the local operator long-range deformation as given in \eqref{eq:localoperatorLaxoperator}! Alternatively, we could have used the factorisation property \eqref{eq:nonassociativefactorisation} and the fact that $\T_{\underline{a}}\otimes \T_{\vec b} = (\T_{\underline{a}}\otimes 1)(1\otimes \T_{\vec b})$ to write
\begin{equation}\label{eq:localLaxoperatorwithDrinfeld}
    \begin{split}
        \L_{(\underline{a}\vec b)1}^{\gamma_\mathfrak{m}}  = \varphi_\mathfrak{m}(1\otimes \T_1, \T_{\underline{a}}\otimes 1, 1\otimes \T_{\vec b})\cdot \mathbf{r}^{\gamma_\mathfrak{m}}(\T_{\underline{\vec a}}\otimes 1, 1\otimes \T_1)\cdot \mathbf{r}^{\gamma_\mathfrak{m}}(1\otimes \T_{\vec b}, 1\otimes \T_1),
    \end{split}
\end{equation}
where in the second equality we used that $\mathbf{r}^{\gamma_{\mathfrak{m}}}(1\otimes \T_{\vec b}, 1\otimes \T_1) = \P_{\vec b,1}$ (note that $\A_0\subseteq \A(\R)\bowtie_\mathbf{r}\A_0$ remains undeformed under the twisting), such that the second and third Drinfeld associators in \eqref{eq:nonassociativefactorisation} cancel each other. In particular, it can be seen that the long-range term in the Lax operator is exactly given by the non-vanishing Drinfeld associator as calculated in \eqref{eq:nonvanihsingDrinfeldlocaloperator}. This shows that the long-range nature of this deformation is exactly due to the breaking of the associativity of the double-crossed product bialgebra $(\A(\R)\bowtie_\mathbf{r}\A_0)(\gamma_{\mathfrak{m}})$.\\

It can be noted now that the Lax operator \eqref{eq:Laxoperatorlocaldef} corresponds to a representation of the matrix corepresentation $\T_{\underline{a}}(u)\otimes \T_{\vec b}(0)$ of $\left(\A^{(\geq 1)}(\R)\bowtie_{\mathbf{r}}\A_0^{(\geq k-1)}\right)(\gamma_{\mathfrak{m}})$. Since it follows from Proposition~\ref{prop:vanishingDrinfeldassociatorlocaldef} that this subalgebra is associative up to first order in $\lambda$, this also means that there is an $R$-matrix that satisfies the Yang-Baxter equation, and such that the RLL-relation is satisfied. In particular, let also $\vec d = (d_1,..., d_{k-1})$, then the $R$-matrix is given by
\begin{equation}
    \R^{\gamma_{\mathfrak{m}}}_{(\underline{a}\vec b)(\underline{c}\vec d)}(u,v) \coloneq \mathbf{r}^{\gamma_{\mathfrak{m}}}(\T_{\underline{a}}(u)\otimes \T_{\vec b}(\vec 0), \T_{\underline{c}}(v)\otimes \T_{\vec d}(\vec 0)).
\end{equation}
It now follows immediately from \eqref{eq:nonassociativeYBE} and Proposition~\ref{prop:vanishingDrinfeldassociatorlocaldef} that
\begin{equation}\label{eq:localoperatorRLLrelation}
    \R^{\gamma_{\mathfrak{m}}}_{(\underline{a}\vec b)(\underline{c}\vec d)}(u,v)\L^{\gamma_{\mathfrak{m}}}_{(\underline{a}\vec b)1}(u)\L^{\gamma_{\mathfrak{m}}}_{(\underline{c}\vec d)1}(v) = \L^{\gamma_{\mathfrak{m}}}_{(\underline{c}\vec d)1}(v)\L^{\gamma_{\mathfrak{m}}}_{(\underline{a}\vec b)1}(u)\R^{\gamma_{\mathfrak{m}}}_{(\underline{a}\vec b)(\underline{c}\vec d)}(u,v) + \O(\lambda^2),
\end{equation}
and
\begin{equation}\label{eq:localoperatorYBE}
    \begin{split}
        &\R^{\gamma_\mathfrak{m}}_{(\underline{a}\vec b)(\underline{c}\vec d)}(u,v)\R^{\gamma_\mathfrak{m}}_{(\underline{a}\vec b)(\underline{e}\vec f)}(u,w)\R^{\gamma_\mathfrak{m}}_{(\underline{c}\vec d)(\underline{e}\vec f)}(v,w)\\
        = &\R^{\gamma_\mathfrak{m}}_{(\underline{c}\vec d)(\underline{e}\vec f)}(v,w)\R^{\gamma_\mathfrak{m}}_{(\underline{a}\vec b)(\underline{e}\vec f)}(u,w) \R^{\gamma_\mathfrak{m}}_{(\underline{a}\vec b)(\underline{c}\vec d)}(u,v)+ \O(\lambda^2),
    \end{split}
\end{equation}
where $\vec{f} = (f_1,...,f_{k-1})$ and $u,v,w\in \C$. The above equations now ensure the integrability of the corresponding long-range deformed spin chain up to first order in the deformation parameter.\\

Lastly, in order for the Lax operator \eqref{eq:Laxoperatorlocaldef} to be a proper representation of the algebra $(\A(\R)\bowtie_{\mathbf{r}}\A_0)(\gamma_\mathfrak{m})^{\text{cop}}$, it also better be true that a product of elements in the second coordinate factorises. In particular, it can be noted here that $\mathbf{r}^{\gamma_\mathfrak{m}}(\T_{\underline a}\otimes \T_{\vec b}, 1\otimes(\cdot))$ defines a linear map $\A_0\to \C$, which is just the $\ell$-functional for the twisted double-crossed product bialgebra. In light of Section~\ref{sec:coquasitriangularityandLfunctionals}, this function corresponds to an element in $\A_0^{*}$ if it satisfies the correct product factorisation property. Here, it follows immediately from Proposition~\ref{prop:vanishingDrinfeldassociatorlocaldef} and \eqref{eq:nonassociativefactorisation} that 
\begin{equation}\label{eq:factorisationlocaloperatorLax}
    \mathbf{r}^{\gamma_\mathfrak{m}}(\T_{\underline a}\otimes \T_{\vec b}, 1\otimes \T_{1}\T_2) = \mathbf{r}^{\gamma_\mathfrak{m}}(\T_{\underline a}\otimes \T_{\vec b}, 1\otimes \T_2)\mathbf{r}^{\gamma_\mathfrak{m}}(\T_{\underline a}\otimes \T_{\vec b}, 1\otimes \T_1) + \O(\lambda^2),
\end{equation}
such that indeed also the Lax operator factorises in this way at first order in $\lambda$. In fact, since all the Drinfeld associators corresponding to higher products in $\A_0$ vanish, the representation of $\T_{\underline{a}}(u)\otimes \T_{\vec b}(0)$ on a spin chain of length $L$ factorises into a product of Lax operators, similar to \eqref{eq:monodromymatrix}. Therefore, the monodromy matrix for the long-range models is also just given by a product of Lax operators.\\

We remark here that it is now not necessarily true that the $R$-matrix factorises into a product of Lax operators, since
\begin{equation}
    \label{eq:nonfactorisationRmatrix}
    \mathbf{r}^{\gamma_\mathfrak{m}}\left(\T_{\underline{a}}(u)\otimes \T_{\vec b}(\vec 0), \T_{\underline{c}}(\vec 0)\otimes \T_{\vec d}(\vec 0)\right)\neq \mathbf{r}^{\gamma_{\mathfrak{m}}}\left(\T_{\underline{a}}(u)\otimes \T_{\vec b}(\vec 0), 1\otimes\T_{c}(\vec 0)\T_{\vec d}(\vec 0)\right).
\end{equation}
Therefore, we see that our Lax operator and $R$-matrix do not have the same properties as described in \cite{Gombor2021, Gombor2022WrappingCorrections}. As also noted in Section~\ref{sec:LaxforBQ3derivation}, this is due to the fact that our Lax operators are not regular, i.e. do not reduce to the permutation operator at $u=0$, which is something that was assumed to be true in \cite{Gombor2021, Gombor2022WrappingCorrections}. The corresponding twisted $R$-matrices are regular and reduce the permutation operator for $u,v = 0$. Note that one could now in principle also use the twisted $R$-matrix as a Lax operator, i.e. define $\L^{\lambda}_{(\underline{a}\vec b)(\underline{1},2,...,k)}(u,v) \coloneq \R^{\gamma_{\mathfrak{m}}}_{(\underline{a} \vec{b})(\underline{1},2,..,k)}(u,v)$.\footnote{It is, however, not guaranteed that the corresponding integrable model continues to be integrable up to second order in $\lambda$. In fact, since the auxiliary space of the $R$-matrix is required to increase when going to higher-order corrections, such that also the value of $k$ increases, it is expected that the integrable model defined through this $R$-matrix as a Lax operator is only \textit{quasi-integrable}.} Since the factorisation property \eqref{eq:nonfactorisationRmatrix} does not hold, this then gives, in general, rise to inequivalent charges with respect to the ones obtained from $\L^{\gamma_\mathfrak{m}}_{(\underline{a}\vec b)1}(u)$. Also, one might wonder whether a Lax operator could alternatively be defined through an $\ell$-functional $\A(\R)\to \C$, i.e. set $\mathbf{r}^{\gamma_\mathfrak{m}}(\T_{\underline a}(u)\otimes \T_{\vec b}(0), \T_{\underline 1}(v)\otimes 1)$ to be the Lax operator. However, it can be easily verified that the corresponding Drinfeld associators do not vanish, such that the RLL-relations are not satisfied, and therefore no integrable model is obtained in this way.

\subsubsection{Twist for the boost deformation}\label{sec:twistboostdeformation}
Next, we will discuss the twist corresponding to the long-range deformation that is generated by the boost operator. A lot of its discussion will be similar to Section~\ref{sec:twistlocaldeformation}, so we will be more brief. For $k\geq 3$, we consider a twisting function
\begin{equation}
    \gamma_k(\lambda) \coloneq \epsilon\otimes \epsilon + \lambda \gamma^{(1)}_k + \O(\lambda^2)\in \mathrm{Hom}\big((\A(\R)\bowtie_{\mathbf{r}} \A_0)\otimes (\A(\R)\bowtie_{\mathbf{r}} \A_0), \C \big)[\![\lambda]\!],
\end{equation}
for which first-order term is given by
\begin{equation}
    \label{eq:BQktwistelement}
    \begin{split}
        &\gamma_k^{(1)}\left(\T_{\underline {\vec a}}\otimes \T_{1}\cdots \T_{N}, \T_{\underline{\vec b}}\otimes \T_{N+1}\cdots \T_{M}\right) \coloneq \sum_{n=1}^{N}\R^{-1}_{(n+1,...,N)\underline{\vec b}}{}^r\Q^{(k)}_{n,\underline{\vec b},(n+1,...,M)}\R_{(n+1,...,N)\underline{\vec b}},
    \end{split}
\end{equation}
for $0< N< M$, and where we used the notation $\T_{\underline i} \coloneq \T_{\underline i}(u_i)$ and $\T_{i}\coloneq \T_{i}(0)$, and ${}^r\Q^{(k)}$ is the reduced algebraic charge density as given in Definition~\ref{def:reducedalgebraicchargedensities}. Similar to the local operator deformation, this twist element satisfies the relations given in \eqref{eq:consistencygammamR} and \eqref{eq:consistencygammamP}, such that $\gamma^{(1)}_k$ is invariant under the RTT relations \eqref{eq:RTTrelations} and therefore well-defined. Note here again that one could only have such a well-defined twist of this form precisely because the second term in the double-crossed product algebra corresponds to $\A_0$, i.e. its elements have a trivial spectral parameter, such that the braiding operator is just given by the permutation operator. Also, remark here that $\gamma^{(1)}_k(a\otimes 1, c\otimes 1) = 0$ and $\gamma^{(1)}_k(1\otimes b, 1\otimes d) = 0$, such that $\A(\R)$ and $\A_0$ as subalgebras $\A(\R), \A_0 \subseteq \A(\R)\bowtie_{\mathbf{r}} \A_0$ remain undeformed.\\

The twisted algebra $(\A(\R)\bowtie_{\mathbf{r}}\A_0)(\gamma_k)$ is, again, in general non-associative at first order in $\lambda$, where we consider the expansion for the Drinfeld associator $\varphi_k = \varphi_k^{(0)} + \lambda\varphi_k^{(1)} + \O(\lambda^2)$ with $\varphi^{(0)}_k(a,b,c) = \epsilon(a)\epsilon(b)\epsilon(c)$. For example, it can be directly calculated that the non-zero terms in the Drinfeld associator for the elements $1\otimes \T_1$, $\T_{\underline{\vec a}}\otimes 1$ and $1\otimes \T_{2}\cdots \T_{k-1}$ are given by
\begin{equation}
    \label{eq:nontrivialDrinfeldAssociatorBoost}
    \begin{split}
        \varphi^{(1)}_k(1\otimes \T_1, \T_{\underline{\vec a}}\otimes 1, 1\otimes \T_{2}\cdots \T_{k-1}) = \gamma^{(1)}_k(1\otimes \T_1,\T_{\underline{\vec a}}\otimes \T_2\cdots \T_{k-1}) - \gamma^{(1)}_k(1\otimes \T_1, \T_{\underline{\vec a}}\otimes 1),
    \end{split}
\end{equation}
However, $(\A(\R)\bowtie_{\mathbf{r}}\A_0)(\gamma_k)$ still has a large associative subalgebra. Namely, as shown in Appendix~\ref{sec:proofsforboostdeformation}, we have the following result:
\begin{proposition}\label{prop:vanishingDrinfeldassociatorBQk}
    Let $a,b,c\in \A_0 \oplus \left(\A^{(\geq 1)}(\R)\bowtie_{\mathbf{r}}\A_0^{(\geq k-2)}\right)$, then $\varphi_k^{(1)}(a, b, c) = 0$.
\end{proposition}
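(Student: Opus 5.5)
At first order in $\lambda$ the Drinfeld associator \eqref{eq:definitionDrinfeldAssociator} linearises to a coboundary expression. Writing $\gamma_k = \epsilon\otimes\epsilon + \lambda\gamma^{(1)}_k$ and expanding, I will use
\begin{equation*}
    \varphi^{(1)}_k(a,b,c) = \gamma^{(1)}_k(b,c)\epsilon(a) + \sum\gamma^{(1)}_k(a,b_{(1)}c_{(1)})\,\epsilon(b_{(2)})\epsilon(c_{(2)})\cdots - \gamma^{(1)}_k(ab,c) - \gamma^{(1)}_k(a,b)\epsilon(c),
\end{equation*}
that is, $\varphi^{(1)}_k = (\partial\gamma^{(1)}_k)$, the Hochschild coboundary of $\gamma^{(1)}_k$ with respect to the (undeformed) double-crossed product multiplication. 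Note that the counits drop out because $\epsilon$ is multiplicative. Hence it suffices to prove the cocycle identity
\begin{equation*}
    \gamma^{(1)}_k(b,c) + \gamma^{(1)}_k(a,bc) = \gamma^{(1)}_k(ab,c) + \gamma^{(1)}_k(a,b)
\end{equation*}
for $a,b,c$ in the given subalgebra, with matrix-valued identities read off on generating elements. By linearity and the invariance relations \eqref{eq:consistencygammamR}, \eqref{eq:consistencygammamP}, it suffices to take $a,b,c$ each either in $\A_0$, i.e. of the form $1\otimes\T_{1}\cdots\T_{N}$, or of the form $\T_{\underline{\vec a}}\otimes\T_1\cdots\T_N$ with $N\geq k-2$ and $\vec a$ non-empty.

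Next I would compute the products using \eqref{eq:specificbicrossedproduct}. A product $(\T_{\underline{\vec a}}\otimes \T_{\vec p})(\T_{\underline{\vec b}}\otimes \T_{\vec q})$ is, up to conjugation by $\R_{\vec p\underline{\vec b}}$, the element $\T_{\underline{\vec a}}\T_{\underline{\vec b}}\otimes \T_{\vec p}\T_{\vec q}$. I then insert these into the explicit formula \eqref{eq:BQktwistelement}. The sum over $n$ in $\gamma^{(1)}_k(ab,c)$ splits into terms with $n$ inside the $\A_0$-part of $a$ and terms with $n$ inside that of $b$. The latter should reproduce $\gamma^{(1)}_k(b,c)$ after the conjugations by $\R_{\vec p\underline{\vec b}}$ are absorbed, using the Yang--Baxter equation and the factorisation \eqref{eq:Rmatrixfactorisation}. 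Similarly, $\gamma^{(1)}_k(a,bc)$ versus $\gamma^{(1)}_k(a,b)$ differ only in the second-leg content appended after the $\A_0$-string of $b$. The first-leg terms of $ab$ versus those of $a$ alone differ in the extra spectral-parameter insertion $\underline{\vec b}$ before the string. The mixed $\A_0$ cases, where $a$, $b$ or $c\in\A_0$, reduce to special instances, since $\gamma^{(1)}_k$ vanishes when the second argument has trivial $\A(\R)$-part and the first argument has $N=0$.

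The crux, and the main obstacle, is showing that these discrepancies cancel. Here the reduced algebraic charge densities are essential. By \eqref{eq:reducedAlgebraicchargedensityproperty}, ${}^r\Q^{(k)}_{n,\underline{\vec b},(n+1,\dots,M,\underline{\vec c},\dots)}$ depends only on the first $k-2$ trivial legs after $\underline{\vec b}$. The grading hypothesis $N\geq k-2$ on $b$ guarantees that $b$'s $\A_0$-string supplies at least $k-2$ trivial legs directly after $\underline{\vec b}$. Consequently, appending $c$ (with its $\underline{\vec c}$ and conjugations) does not change the ${}^r\Q^{(k)}$ terms, so $\gamma^{(1)}_k(a,bc) = \gamma^{(1)}_k(a,b)$ up to an $\R$-conjugation, which I then match against the corresponding piece of $\gamma^{(1)}_k(ab,c)$. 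Likewise, Lemma~\ref{lemma:simplifycharges}, with $\underline{\vec a}$ prefixed, handles the insertion on the first leg. I expect the careful bookkeeping of the $\R$-conjugations, especially for the terms with $n$ near $N$ where fewer than $k-2$ legs of $a$'s own string follow, to be the delicate step. I would first verify the case $k=3$ explicitly and then argue the general $k$ case by the same telescoping. Finally, \eqref{eq:nontrivialDrinfeldAssociatorBoost} confirms that the argument genuinely fails when $b$ has fewer than $k-2$ trivial legs, which serves as a consistency check on the sharpness of the subalgebra.
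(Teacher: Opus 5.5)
Your reduction is the right one and coincides with the paper's. At first order $\varphi^{(1)}_k$ is the coboundary $\gamma^{(1)}_k(b,c)\epsilon(a)+\gamma^{(1)}_k(a,bc)-\gamma^{(1)}_k(ab,c)-\gamma^{(1)}_k(a,b)\epsilon(c)$, and it suffices to check that this vanishes on normal-ordered generators. You are also right that the terms of $\gamma^{(1)}_k(ab,c)$ with $n$ in $b$'s string reproduce $\gamma^{(1)}_k(b,c)$, and that the grading hypothesis on $b$ enters through finite length.

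The gap is at the crux. Take $a=\T_{\underline{\vec a}}\otimes\T_1\cdots\T_N$, $b=\T_{\underline{\vec b}}\otimes\T_{N+1}\cdots\T_M$ and $c=\T_{\underline{\vec c}}\otimes\T_{M+1}\cdots\T_K$. By \eqref{eq:specificbicrossedproduct}, normal-ordering $bc$ moves $\underline{\vec c}$ in front of $b$'s trivial string. So $\gamma^{(1)}_k(a,bc)$ is built from ${}^r\Q^{(k)}_{n,\underline{\vec b\vec c},(n+1,\dots,K)}$. The new insertion sits in the middle slot, not behind $k-2$ trivial legs in the third slot, so \eqref{eq:reducedAlgebraicchargedensityproperty} does not remove it. Your claim that $\gamma^{(1)}_k(a,bc)=\gamma^{(1)}_k(a,b)$ up to an $\R$-conjugation is therefore false.

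What you are missing is the splitting of Lemma~\ref{lem:simplifyreducedchargedensities}. It follows from moving $\underline{\vec c}$ to the end of the string with the RTT relation and then applying Lemma~\ref{lemma:simplifycharges}:
\begin{equation*}
    {}^r\Q^{(k)}_{n,\underline{\vec b\vec c},(n+1,\dots,K)} = \R^{-1}_{\underline{\vec c}(n+1,\dots,K)}\,{}^r\Q^{(k)}_{n,\underline{\vec b},(n+1,\dots,K)}\,\R_{\underline{\vec c}(n+1,\dots,K)} + \R^{-1}_{n\underline{\vec b}}\,{}^r\Q^{(k)}_{n,\underline{\vec c},(n+1,\dots,K)}\,\R_{n\underline{\vec b}}.
\end{equation*}
Only after this split does finite length, via $M-N\geq k-2$, turn the first term into exactly $\gamma^{(1)}_k(a,b)$; the conjugations cancel by factorisation and unitarity. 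The second term leaves a genuinely new contribution:
\begin{equation*}
    \gamma^{(1)}_k(a,bc)-\gamma^{(1)}_k(a,b)=\sum_{n=1}^{N}\R^{-1}_{(n,\dots,N)\underline{\vec b}}\R^{-1}_{(n+1,\dots,M)\underline{\vec c}}\,{}^r\Q^{(k)}_{n,\underline{\vec c},(n+1,\dots,K)}\,\R_{(n+1,\dots,M)\underline{\vec c}}\R_{(n,\dots,N)\underline{\vec b}}.
\end{equation*}
This is precisely the part of $\gamma^{(1)}_k(ab,c)$ with $n$ in $a$'s string.

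In your bookkeeping those terms of $\gamma^{(1)}_k(ab,c)$ have nothing to cancel against. They do not come from inserting $\underline{\vec b}$ into a term of $\gamma^{(1)}_k(a,\cdot)$: the first argument's $\A(\R)$-part never enters \eqref{eq:BQktwistelement}, except through the normal-ordering conjugation by $\R_{(1,\dots,N)\underline{\vec b}}$. Lemma~\ref{lemma:simplifycharges} with a prefix does not remove them either. The correct cancellation is $\gamma^{(1)}_k(a,bc)-\gamma^{(1)}_k(a,b)=\gamma^{(1)}_k(ab,c)-\gamma^{(1)}_k(b,c)$, with both sides nonzero in general. This is exactly what the second product rule of Lemma~\ref{lemma:twistkproperties} encodes in the paper.
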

This result can then be used to construct a Lax operator and an $R$-matrix that satisfy the RLL-relations and the Yang-Baxter equation. Firstly, let $\vec b = (b_1,...,b_{k-2})$ and define $\T_{\vec b}(\vec 0) = \T_{b_1}(0)\cdots \T_{b_{k-2}}(0)$, then it can be calculated that
\begin{equation}\label{eq:evaluationtwistboostoperator}
    \gamma^{(1)}_k\left(\T_{\underline{\vec a}}\otimes \T_{\vec b}, 1\otimes \T_{1}\right) = 0 \text{ and }\gamma^{(1)}_k\left(1\otimes \T_{1}, \T_{\underline{\vec{a}}}\otimes \T_{\vec b}\right) = {}^r\Q^{(k)}_{1,\underline{\vec a}, \vec b}.
\end{equation}
In light of \eqref{eq:undeformedincreasedauxiliaryLaxoperator}, the deformed Lax operator on the increased auxiliary space is defined to be equal to
\begin{equation}
    \label{eq:boostLaxoperatorfromdeformation}
    \begin{split}
         \L^{\gamma_k}_{(\underline{a},b_1,...,b_{k-2})1}(u) &\coloneq \mathbf{r}^{\gamma_k}(\T_{\underline{a}}(u)\otimes \T_{b_1}(0)\cdots \T_{b_{k-2}}(0), 1\otimes \T_1(0)).
    \end{split}
\end{equation}
We can now use \eqref{eq:evaluationtwistboostoperator} and the definition of $\mathbf{r}^{\gamma}$ as given in \eqref{eq:deformedrform} to obtain
\begin{equation}
    \begin{split}
        \L_{(\underline{a}\vec b)1}^{\gamma_k} =  \left(1+\lambda\cdot {}^r\Q^{(k)}_{1,\underline{a}, \vec b}\right)\L_{(\underline{a}\vec b)1} + \O(\lambda^2),
    \end{split}
\end{equation}
Note here that this Lax operator is exactly equal to the Lax operator obtained from the boost operator long-range deformation as given in \eqref{eq:boostoperatorLaxoperator}! Moreover, we could have also written this Lax operator in terms of the Drinfeld associator like in \eqref{eq:localLaxoperatorwithDrinfeld}, where it could then be seen that the long-range term in the Lax operator is exactly given by the non-vanishing Drinfeld associator as calculated in \eqref{eq:nontrivialDrinfeldAssociatorBoost}, which again shows that the long-range nature of this deformation is exactly due to the breaking of the associativity of the double-crossed product bialgebra $(\A(\R)\bowtie_\mathbf{r}\A_0)(\gamma_k)$.\\

Also here it can now be noted that the Lax operator corresponds to a representation of the matrix corepresentation $\T_{\underline{a}}(u)\otimes \T_{\vec b}(0)$ of $\left(\A^{(\geq 1)}(\R)\bowtie_{\mathbf{r}}\A_0^{(\geq k-2)}\right)(\gamma_k)$. Since it follows from Proposition~\ref{prop:vanishingDrinfeldassociatorBQk} that this subalgebra is associative up to first order in $\lambda$, there is also an $R$-matrix such that the RLL relation is satisfied. In particular, let also $\vec d = (d_1,...,d_{k-2})$, then the corresponding $R$-matrix is given by
\begin{equation}\label{eq:twistedRmatrixboost}
    \begin{split}
        \R^{\gamma_k}_{(\underline{a}\vec b)(\underline{c}\vec d)}(u,v)&\coloneq \mathbf{r}^{\gamma_k}\left(\T_{\underline{a}}(u)\otimes \T_{\vec b}(\vec 0), \T_{\underline{c}}(v)\otimes \T_{\vec d}(\vec 0)\right)
    \end{split}
\end{equation}
It now follows from Proposition~\ref{prop:vanishingDrinfeldassociatorBQk} that the RLL-relation and the Yang-Baxter equation, similar to \eqref{eq:localoperatorRLLrelation} and \eqref{eq:localoperatorYBE}, are satisfied for the Lax operator $\L^{\gamma_k}$ \eqref{eq:boostLaxoperatorfromdeformation} and the above $R$-matrix $\R^{\gamma_k}$. Moreover, a similar factorisation property for the Lax operator as given in \eqref{eq:factorisationlocaloperatorLax} holds for $\L^{\gamma_k}$, such that this Lax operator also defines a proper representation of $\T_{\underline{a}}\otimes \T_{\vec b}$ in the twisted algebra $(\A(\R)\bowtie_\mathbf{r}\A_0)(\gamma_k)$. In particular, the monodromy matrix for the boost deformation is just given by a product of Lax operators. Similar to \eqref{eq:nonfactorisationRmatrix}, the $R$-matrix does \textit{not} factorise into a product of Lax operators.

\subsubsection{Twist for the bilocal deformation}\label{sec:twistbilocaldeformation}
Lastly, we will discuss the twisting element for the long-range deformation that is generated by the bilocal operators. This discussion will again be very similar to the ones in Sections~\ref{sec:twistlocaldeformation} and \ref{sec:twistboostdeformation}, such that we will be more brief. For $2\leq k < \ell$, we consider a twisting function
\begin{equation}
    \gamma_{k|\ell}(\lambda) \coloneq \epsilon\otimes \epsilon + \lambda \gamma^{(1)}_{k|\ell} + \O(\lambda^2)\in \mathrm{Hom}\big((\A(\R)\bowtie_{\mathbf{r}} \A_0)\otimes (\A(\R)\bowtie_{\mathbf{r}} \A_0), \C \big)[\![\lambda]\!],
\end{equation}
for which first-order term is given by
\begin{equation}\label{eq:twistbilocal}
    \begin{split}
        &\gamma^{(1)}_{k | \ell}(\T_{\underline{\vec a}}\otimes \T_1\cdots \T_N, (1\otimes \T_{N+1}\cdots \T_M)(\T_{\underline{\vec b}}\otimes 1))\\
        &\coloneq \R^{-1}_{\underline{\vec a}(1,...,N+\ell-2)}{}^r\tilde{\Q}^{(k)}_{(1,...,N+\ell-2),\underline{\vec a},N+\ell-1}\R_{\underline{\vec a}(1,...,N+\ell-2)}\\
        &\qquad \qquad \qquad \qquad \qquad  \times\R_{\underline{\vec b}(N+3-\ell,...,M)}{}^r\Q^{(\ell)}_{N+2-\ell, \underline{\vec b},(N+3-\ell,...M)}\R^{-1}_{\underline{\vec b}(N+3-\ell,...,M)}\\
        &+\sum_{n=N+2,}^{N+\ell-1}\sum_{m=N+2-\ell}^{n-\ell}\R^{-1}_{\underline{\vec a}(1,...,M)}[\tilde{\mathfrak{q}}^{(k)}_{n-k+1,...,n}, \R_{\underline{\vec a}(1,...,M)}][\mathfrak{q}^{(\ell)}_{m,...,m+\ell-1}, \R_{\underline{\vec b}(1,...,M)}]\R^{-1}_{\underline{\vec b}(1,...,M)}\\
        &+\sum_{n=1}^{N}\R^{-1}_{\underline{\vec a}(1,...,M)}\tilde{\mathfrak{q}}^{(k)}_{n+\ell-k,...,n+\ell-1}\R_{\underline{\vec a}(1,....M)}\R_{\underline{\vec b}(n+1,...,M)}{}^r\Q^{(\ell)}_{n,\underline{\vec b},(n+1,...,M)}
        \R_{\underline{\vec b}(n+1,...,M)}^{-1}\\
        &+\sum_{\mathclap{n=N+1}}^{M}\ \R^{-1}_{\underline{\vec a}(1,...,n-1)}{}^r\tilde{\Q}^{(k)}_{(1,...,n-1),\underline{\vec a}, n}\R_{\underline{\vec a}(1,...,n-1)}\R_{\underline{\vec b}(1,...,M)}\mathfrak{q}^{(\ell)}_{n-\ell+1,...,n}\R^{-1}_{\underline{\vec b}(1,...,M)}.
    \end{split}
\end{equation}
for $0< N< M$, where we used the notation $\T_{\underline i} \coloneq \T_{\underline i}(u_i)$ and $\T_{i}\coloneq \T_{i}(0)$, $\tilde{\mathfrak{q}}^{(k)}$ and $\mathfrak{q}^{(\ell)}$ are the charge densities as defined in \eqref{eq:chargedensitiesfromalgebraic}, and ${}^r\tilde{\Q}^{(k)}$ and ${}^r\Q^{(\ell)}$ are the reduced algebraic charge density as given in Definition~\ref{def:reducedalgebraicchargedensities}. Note here the ordering in the second entry of the twist element, which we defined in terms of \begin{equation}
    (1\otimes \T_{N+1}\cdots \T_M)(\T_{\underline{\vec b}}\otimes 1) = \R^{-1}_{(N+1,...,M)\underline{\vec b}}\left(\T_{\underline{\vec b}}\otimes \T_{N+1}\cdots \T_M\right)\R_{(N+1,...,M)\underline{\vec b}},
\end{equation}
in order to shorten the notation for the (rather long) twisting element. Also, similar to the twist for the local operator, a term $\tilde{\mathfrak{q}}^{(k)}_{n-1+1,...,n}$ or $\mathfrak{q}^{(\ell)}_{m,...,m+\ell-1}$ might ``overflow'' the elements $\T_{1},..., \T_{M}$, in which case we set the overflowing term to 0. For example, we set $\mathfrak{q}^{(\ell)}_{m,...,m+\ell-1} = 0$ if $m+\ell-1 > M$ and $\tilde{\mathfrak{q}}^{(k)}_{n-k+1,...,n} = 0$ if $n-k+1 < 1$. Also, remark here that, under the substitution $\tilde{\mathfrak{q}}^{(k)} \to 1$ and ${}^r\tilde{\Q}^{(k)} \to 0$, this twisting element reduces to the twist $\gamma_\ell^{(1)}$ for the boost operator deformation as given in \eqref{eq:BQktwistelement}. Therefore, one might view the boost deformation as a special case of the bilocal deformation.\\

Similar to the local operator and boost operator deformations, this twist element satisfies the relations given \eqref{eq:consistencygammamR} and \eqref{eq:consistencygammamP}, such that the twist is invariant under the RTT-relations \eqref{eq:RTTrelations} and therefore well-defined. Moreover, the only reason one could define such a twist is that the second term in the double-crossed product algebra corresponds to $\A_0$, and therefore has trivial spectral parameters. Also, remark here that $\gamma^{(1)}_{k|\ell}(a\otimes 1, c\otimes 1) =0$ and $\gamma^{(1)}_{k|\ell}(1\otimes b, 1\otimes d) =0$, such that the subalgebra $\A(\R), \A_0\subseteq \A(\R)\bowtie_{\mathbf{r}}\A_0$ remain undeformed. We should note here that the representation of the twist element as given in \eqref{eq:twistbilocal} is not unique. For example, one can rewrite terms using the Sutherland equations as given in Proposition~\ref{prop:higherorderSutherlandEquations}.\\

Also here, the twisted algebra $(\A(\R)\bowtie_{\mathbf{r}}\A_0)(\gamma_{k|\ell})$ is in general not associative at first order in $\lambda$, where we similarly consider the expansion for the Drinfeld associator $\varphi_{k|\ell} = \varphi_{k|\ell}^{(0)} + \lambda\varphi_{k|\ell}^{(1)} + \O(\lambda^2)$ with $\varphi^{(0)}_{k|\ell}(a,b,c) = \epsilon(a)\epsilon(b)\epsilon(c)$. For example, it can be directly calculated that the Drinfeld associator for the elements $1\otimes \T_1$, $\T_{\underline{\vec a}}\otimes 1$ and $1\otimes \T_{2}\cdots \T_{k-1}$ is non-vanishing. However, $(\A(\R)\bowtie_{\mathbf{r}}\A_0)(\gamma_{k|\ell})$ still has a large associative subalgebra. Namely, as shown in Appendix~\ref{sec:proofsforbilocal}, we have the following result:
\begin{proposition}\label{prop:vanishingdrinfeldassociatorbilocal}
    Let $a,b,c\in \A_0 \oplus \left(\A^{(\geq 1)}(\R)\bowtie_{\mathbf{r}}\A_0^{(\geq \ell-1)}\right)$, then $\varphi_{k|\ell}^{(1)}(a, b, c) = 0$.
\end{proposition}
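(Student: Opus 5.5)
At first order in $\lambda$ the Drinfeld associator \eqref{eq:definitionDrinfeldAssociator} linearises. Using $\gamma=\epsilon\otimes\epsilon+\lambda\gamma^{(1)}_{k|\ell}$ and counitality, we get
\begin{equation*}
    \varphi^{(1)}_{k|\ell}(a,b,c) = \gamma^{(1)}_{k|\ell}(b,c) + \gamma^{(1)}_{k|\ell}(a,bc) - \gamma^{(1)}_{k|\ell}(ab,c) - \gamma^{(1)}_{k|\ell}(a,b),
\end{equation*}
which is the Hochschild-type coboundary of $\gamma^{(1)}_{k|\ell}$. Products are taken in the undeformed double-crossed product. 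The task is therefore to show that $\gamma^{(1)}_{k|\ell}$ is a 2-cocycle when restricted to $\A_0\oplus(\A^{(\geq 1)}(\R)\bowtie_{\mathbf{r}}\A_0^{(\geq \ell-1)})$. I would follow the same scheme as for Propositions~\ref{prop:vanishingDrinfeldassociatorlocaldef} and \ref{prop:vanishingDrinfeldassociatorBQk}, now applied to the longer twist \eqref{eq:twistbilocal}.

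\textbf{Reduction to generators.} By multilinearity and \eqref{eq:consistencygammamR}--\eqref{eq:consistencygammamP}, it suffices to evaluate on matrix corepresentation products. If an argument is $1$, the coboundary vanishes trivially. The elements of $\A_0$ then have the form $1\otimes\T_1\cdots\T_N$, and those of the other summand have the form $\T_{\underline{\vec a}}\otimes\T_1\cdots\T_N$ with $\vec a$ nonempty and $N\geq\ell-1$. I would split into cases according to which of $a,b,c$ lie in $\A_0$. Whenever two consecutive arguments both lie in $\A_0$, the twist vanishes on them because $\gamma^{(1)}(1\otimes b,1\otimes d)=0$, so those cases reduce to shorter identities.

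\textbf{Computing products.} In each case I compute $ab$ and $bc$ using \eqref{eq:specificbicrossedproduct}. Commuting a $\T_{\underline{\vec b}}$ past a block of trivial $\T_i$'s produces conjugation by $\R_{(\dots)\underline{\vec b}}$. The ordered form $(1\otimes\T_{N+1}\cdots\T_M)(\T_{\underline{\vec b}}\otimes 1)$ in the second slot of \eqref{eq:twistbilocal} was chosen precisely so that these conjugations combine cleanly. I then insert \eqref{eq:twistbilocal} term by term. The four sums, namely the edge term, the commutator double sum, and the two single sums over $n\leq N$ and $n>N$, are indexed by positions along the concatenated chain $1,\dots,M$. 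The claim is that the coboundary telescopes: terms in $\gamma(a,bc)$ and $\gamma(ab,c)$ reproduce $\gamma(a,b)$ and $\gamma(b,c)$ plus boundary contributions located near the junction between blocks.

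\textbf{Handling the junctions.} Boundary contributions are absorbed in two ways. First, Lemma~\ref{lemma:simplifycharges}, in the form \eqref{eq:reducedAlgebraicchargedensityproperty}, states that ${}^r\Q^{(\ell)}_{n,\underline{\vec b},(\dots)}$ and ${}^r\tilde\Q^{(k)}_{(\dots),\underline{\vec a},n}$ only see a window of length $\ell-1$ (respectively $k-1\leq\ell-1$). Because every non-$\A_0$ element carries at least $\ell-1$ trivial sites, extra $\underline{\vec a}$ or $\underline{\vec b}$ insertions beyond that window drop out, and the dependence on neighbouring blocks disappears. This is exactly where the threshold $\ell-1$ enters, replacing $k-1$ and $k-2$ from the earlier propositions. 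Second, any remaining cross terms pairing $\tilde{\mathfrak{q}}^{(k)}$ with ${}^r\Q^{(\ell)}$ across a junction should be rewritten with the generalised Sutherland equations, Proposition~\ref{prop:higherorderSutherlandEquations}. This converts $[\mathfrak{q},\L]$-type commutators into reduced-density insertions, which is how the commutator double sum in \eqref{eq:twistbilocal} arises in the first place.

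\textbf{Main obstacle.} I expect the hardest step to be the case in which all three arguments lie outside $\A_0$ and the window of a $\tilde{\mathfrak{q}}^{(k)}$ and a $\mathfrak{q}^{(\ell)}$ straddles two junctions. There the double sum over $(n,m)$ must cancel exactly against the edge terms. I would first check that computation for $|\vec a|=|\vec b|=1$ with minimal $N=\ell-1$, tracking the index ranges $N+2\leq n\leq N+\ell-1$ and $m\leq n-\ell$ carefully. As a sanity check, the substitution $\tilde{\mathfrak{q}}\to1$ and ${}^r\tilde\Q\to0$ must recover the proof of Proposition~\ref{prop:vanishingDrinfeldassociatorBQk}.
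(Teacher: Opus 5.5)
\textbf{Verdict: genuine gap.} Your set-up matches the paper: the first-order associator as the coboundary of $\gamma^{(1)}_{k|\ell}$, the reduction to matrix generators, and the toolkit of Lemma~\ref{lemma:simplifycharges} plus Proposition~\ref{prop:higherorderSutherlandEquations}. But the proposal stops exactly where the proof begins. The cancellation is stated as ``the claim'' and deferred to a test case, and the mechanism you sketch would not close it.

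\textbf{Why the window property is not enough.} For this twist the window property does not remove the dependence on neighbouring blocks. In the two single sums of \eqref{eq:twistbilocal}, $\tilde{\mathfrak{q}}^{(k)}$ and $\mathfrak{q}^{(\ell)}$ are conjugated by the full $\R_{\underline{\vec a}(1,\ldots,M)}$ and $\R_{\underline{\vec b}(1,\ldots,M)}$. Split $\gamma^{(1)}_{k|\ell}$ as in \eqref{eq:subtwistelementsbilocal}. The parts ${}^2\gamma^{(1)}_{k|\ell}$ ($n\le N$) and ${}^3\gamma^{(1)}_{k|\ell}$ ($n>N$) obey boost-type factorisation rules, with one exception each. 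For instance, ${}^2\gamma^{(1)}_{k|\ell}(a\otimes b\cdot b',c\otimes d)$ contains $\sum\mathbf{r}^{-1}(a_{(1)},b_{(1)})\,{}^2\gamma^{(1)}_{k|\ell}(a_{(2)}\otimes b',c\otimes d)\,\mathbf{r}(a_{(3)},b_{(2)})$, where the boost twist had $\epsilon(b)\gamma^{(1)}_k(a\otimes b',c\otimes d)$. This is Lemma~\ref{lem:usefulbilocaltwistidentities}, which also needs the additivity Lemma~\ref{lem:simplifyreducedchargedensities}. Consequently ${}^2\varphi^{(1)}_{k|\ell}$ and ${}^3\varphi^{(1)}_{k|\ell}$ are individually non-zero. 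They reduce to the left-hand side of Lemma~\ref{lem:usefulidentitybilocal2}. These residues are sums over the \emph{entire} middle block, not contributions localised at a junction.

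\textbf{The missing idea.} These residues are cancelled by the coboundary of the remaining part ${}^1\gamma^{(1)}_{k|\ell}$, i.e.\ ${}^2\varphi^{(1)}_{k|\ell}+{}^3\varphi^{(1)}_{k|\ell}=-{}^1\varphi^{(1)}_{k|\ell}$. This is not a cancellation between the double sum and the edge term, as you suggest. Together they form ${}^1\gamma^{(1)}_{k|\ell}$, and it is their joint coboundary that absorbs the single-sum residues. The paper proves this as follows.
\begin{itemize}
\item The Sutherland equations rewrite $\R^{-1}_{\underline{\vec a}(\ldots)}[\tilde{\mathfrak{q}}^{(k)},\R_{\underline{\vec a}(\ldots)}]$ and $[\mathfrak{q}^{(\ell)},\R_{\underline{\vec c}(\ldots)}]\R^{-1}_{\underline{\vec c}(\ldots)}$ as differences of ${}^r\tilde{\Q}^{(k)}$ and ${}^r\Q^{(\ell)}$ insertions.
\item After the shifts $n\to n+\ell-2$ and $n\to n+\ell-1$, the bulk of the ${}^2$-residue telescopes against the ${}^3$-residue. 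Two boundary blocks of $\ell-1$ terms each remain.
\item Repeated Sutherland equations turn these blocks into exactly the corresponding differences of ${}^1\gamma^{(1)}_{k|\ell}$.
\end{itemize}

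This computation accounts for the index ranges $N+2\le n\le N+\ell-1$, $m\le n-\ell$ in \eqref{eq:twistbilocal}. It also fixes the threshold $\ell-1$. That threshold is set by how far $\tilde{\mathfrak{q}}^{(k)}_{n+\ell-k,\ldots,n+\ell-1}$, $\mathfrak{q}^{(\ell)}_{n-\ell+1,\ldots,n}$ and the edge term reach across a junction. The window of ${}^r\Q^{(\ell)}$ by itself would only force $\ell-2$, as in Proposition~\ref{prop:vanishingDrinfeldassociatorBQk}. Without this decomposition and the cross-cancellation identity, your term-by-term case analysis has no stated reason to close.
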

\noindent Under the substitution $\tilde{\mathfrak{q}}^{(k)} \to 1$ and ${}^r\tilde{\Q}^{(k)} \to 0$, one can then also view Proposition~\ref{prop:vanishingDrinfeldassociatorBQk} for the boost operator deformation as a special case of the above Proposition.\\

This result can then again be used to construct a Lax operator and an $R$-matrix that satisfy the RLL-relations and the Yang-Baxter equation. Firstly, let $\vec b = (b_1,...,b_{\ell-1})$ and define $\T_{\vec b}(\vec 0) = \T_{b_1}(0)\cdots \T_{b_{\ell-1}}(0)$, then it can be calculated that
\begin{equation}\label{eq:evaluationtwistbilocaloperator}
    \begin{split}
        \gamma^{(1)}_{k|\ell}\left(\T_{\underline{a}}\otimes \T_{\vec b}, 1\otimes \T_{1}\right) &= \R^{-1}_{\underline{a}\vec b}\ {}^r\tilde{\Q}^{(k)}_{\vec b, \underline{a}, 1} \R_{\underline{a}\vec b}\mathfrak{q}^{(\ell)}_{b_1,...,b_{\ell-1},1},\\
        \gamma^{(1)}_{k|\ell}\left(1\otimes \T_{1}, \T_{\underline{a}}\otimes \T_{\vec b}\right) &= \R^{-1}_{\underline{a}\vec b}\tilde{\mathfrak{q}}^{(k)}_{b_{\ell-k},...,b_{\ell-1}}\R_{\underline{a}\vec b}\ {}^r\Q^{(\ell)}_{1,\underline{a}, \vec b}.
    \end{split}
\end{equation}
In light of \eqref{eq:undeformedincreasedauxiliaryLaxoperator}, the deformed Lax operator on the increased auxiliary space is then defined to be given by
\begin{equation}
    \label{eq:bilocalLaxoperatorfromdeformation}
    \begin{split}
         \L^{\gamma_{k|\ell}}_{(\underline{a},b_1,...,b_{\ell-2})1}(u) &\coloneq \mathbf{r}^{\gamma_k}(\T_{\underline{a}}(u)\otimes \T_{b_1}(0)\cdots \T_{b_{\ell-1}}(0), 1\otimes \T_1(0)).
    \end{split}
\end{equation}
We can now use \eqref{eq:evaluationtwistbilocaloperator} and the definition of $\mathbf{r}^{\gamma}$ as given in \eqref{eq:deformedrform} to obtain
\begin{equation}
    \begin{split}
         \L_{(\underline{a}\vec b)1}^{\gamma_{k|\ell}} &= \L_{(\underline{a}\vec b)1} +  \O(\lambda^2)\\
        &+\lambda\left(\R^{-1}_{\underline{a}\vec b}\tilde{\mathfrak{q}}^{(k)}_{b_{\ell-k},...,b_{\ell-1}}\R_{\underline{a}\vec b}{}^r\Q^{(\ell)}_{1,\underline{a}, \vec b}\L_{(\underline{a}\vec b)1} - \L_{(\underline{a}\vec b)1}\R^{-1}_{\underline{a}\vec b}{}^r\tilde{\Q}^{(k)}_{\vec b, \underline{a}, 1}\R_{\underline{a}\vec b}\mathfrak{q}^{(\ell)}_{b_1,...,b_{\ell-1},1}\right),
    \end{split}
\end{equation}
Note here that this Lax operator is exactly equal to the Lax operator obtained from the bilocal operator long-range deformation as given in \eqref{eq:bilocaloperatorLaxoperator}!  Moreover, we could have also written this Lax operator in terms of the Drinfeld associator like in \eqref{eq:localLaxoperatorwithDrinfeld}, where it could then be seen that the long-range term in the Lax operator is exactly given by the non-vanishing Drinfeld associator.\\

Similar to before, the Lax operator corresponds to a representation of the matrix corepresentation $\T_{\underline{a}}(u)\otimes \T_{\vec b}(0)$ of $\left(\A^{(\geq 1)}(\R)\bowtie_{\mathbf{r}}\A_0^{(\geq \ell-2)}\right)(\gamma_{k|\ell})$. Again, it follows from Proposition~\ref{prop:vanishingdrinfeldassociatorbilocal} that this subalgebra is associative up to first order in $\lambda$, such that there is an $R$-matrix such that the RLL-relation is satisfied. In particular, let also $\vec d = (d_1,...,d_{\ell-1})$, then the corresponding $R$-matrix is given by
\begin{equation}\label{eq:twistedRmatrixbilocal}
    \begin{split}
        \R^{\gamma_{k|\ell}}_{(\underline{a}\vec b)(\underline{c}\vec d)}(u,v)&\coloneq \mathbf{r}^{\gamma_{k|\ell}}\left(\T_{\underline{a}}(u)\otimes \T_{\vec b}(\vec 0), \T_{\underline{c}}(v)\otimes \T_{\vec d}(\vec 0)\right)
    \end{split}
\end{equation}
It now follows from Proposition~\ref{prop:vanishingdrinfeldassociatorbilocal} that the RLL-relation and the Yang-Baxter equation, similarly to \eqref{eq:localoperatorRLLrelation} and \eqref{eq:localoperatorYBE}, are satisfied for the Lax operator $\L^{\gamma_{k|\ell}}$ \eqref{eq:bilocalLaxoperatorfromdeformation} and the above $R$-matrix $\R^{\gamma_{k|\ell}}$. Moreover, a similar factorisation property for the Lax operator as given in \eqref{eq:factorisationlocaloperatorLax} holds for $\L^{\gamma_{k|\ell}}$, such that this Lax operator also defines a proper representation of $\T_{\underline{a}}\otimes \T_{\vec b}$ in the twisted algebra. In particular, the monodromy matrix for the bilocal deformation is just given by a product of Lax operators. Similar to \eqref{eq:nonfactorisationRmatrix}, the $R$-matrix does \textit{not} factorise into a product of Lax operators.\\

As remarked in Sections~\ref{sec:classificationLongrangeDeformations} and \ref{sec:LaxforBQ3derivation}, if the spin chain has an additional Lie algebra symmetry, then the bilocal operators involving a Lie algebra generator also provide long-range deformations of the model \cite{BFLL2013IntegrableDeformations}. This should therefore also correspond to a twist of the corresponding quantum algebra. In order to write down such a twist, one should first derive the Sutherland-type equations for the Lie algebra generators, i.e. bring the commutator $[\Delta(J)_{12}, \L_{\underline{\vec a}(12)}]$ into the form of a Sutherland equation like in Proposition~\ref{prop:higherorderSutherlandEquations}. Then, the twist for the bilocal operator $[\mathbb{Q}_k | J]$ is simply the same as \eqref{eq:twistbilocal}, but where the charges $\mathfrak{q}^{(\ell)}$ are replaced by the Lie algebra generator $J$, and the reduced algebraic charge densities ${}^r\Q^{(\ell)}$ are replaced by the terms in the Sutherland equation for $J$. One can then follow the proof for Proposition~\ref{prop:vanishingdrinfeldassociatorbilocal} to find that there is also a subalgebra of the deformed quantum algebra for which the corresponding Drinfeld associator vanishes. Since such a Lie algebra symmetry, and therefore the corresponding long-range deformations, are model-dependent, we will not discuss them in more detail.

\subsection{Some examples}\label{sec:someexamples}
Up until now, we have described the long-range deformation of the corresponding quantum groups in full generality for any Yang-Baxter integrable spin chain. Let us now give some explicit examples of the Lax operators and $R$-matrices for the $\B[\mathbb{Q}_3]$ and $[\mathbb{Q}_2|\mathbb{Q}_3]$ long-range deformations of the XXX spin chain. In particular, in this section, we will solely focus on the FRT-bialgebra associated to the $R$-matrix for the Yangian $\mathcal{Y}(\mathfrak{gl}_N)$, which is given by \cite{MolevYangiansClassicalLie}
\begin{equation}\label{eq:RmatrixYangian}
    \R_{12}(u,v) = \frac{u-v-\P_{12}}{u-v-1} \in \mathrm{End}(\C^N\otimes \C^N)[\![u,v]\!],
\end{equation}
where $\P_{12} \colon e_i\otimes e_j \mapsto e_j\otimes e_i$ is the permutation operator. Since this $R$-matrix is given in terms of this permutation operator, it will follow that all the algebraic charge densities, and therefore the long-range deformations of the Lax operators and $R$-matrices, are also given in terms of the permutation operator. In particular, using the fact that $\R_{12}^{-1}(u,v) = \R_{21}(v,u)$, it can be readily calculated that the second algebraic charge density acting on two sites is given by
\begin{equation}
    \Q^{(2)}_{12}(u,v) = \R^{-1}_{12}(u,v)\frac{d}{du}\R_{12}(u,v) = \frac{\P_{12}-1}{(u-v)^2 - 1}.
\end{equation}
Note here that $\Q^{(2)}_{12}(0,0) = 1 - \P_{12}$, which is indeed the Hamiltonian density for the XXX spin chain. It now also readily follows that the third algebraic charge density acting on two sites is given by
\begin{equation}
    \Q^{(3)}_{12}(u,v) = \frac{d}{du}\Q^{(2)}_{12}(u,v) = -\frac{2(u-v)(\P_{12}-1)}{\left((u-v)^2-1\right)^2},
\end{equation}
where it can be noted that $\Q^{(3)}_{12}(0,0) = 0$, which is a direct consequence of the $R$-matrix being of difference form. Any higher algebraic charge density acting on two sites can be calculated by simply taking additional derivatives, i.e. $\Q^{(k)}_{12}(u,v) = \frac{d^{k-2}}{du^{k-2}}\Q^{(2)}_{12}(u,v)$.\\

Next, let us compute the higher coproducts of the algebraic charge densities. Recall here that the coproducts for the $R$-matrix on three sites are given by
\begin{equation}
    \R_{1(23)}(u;v,w) = \R_{13}(u,w)\R_{12}(u,v)\quad\text{and}\quad \R_{(12)3}(u,v;w) = \R_{13}(u,w)\R_{23}(v,w).
\end{equation}
Using \eqref{eq:coproductCharge2}, we now find that the coproduct for the second algebraic charge density acting on three sites is given by
\begin{equation}
    \begin{split}
        \Q^{(2)}_{1(23)}(u;v,w) &= \R^{-1}_{1(23)}(u;v,w)\frac{d}{du}\R_{1(23)}(u;v,w)\\
        &= \R^{-1}_{12}(u,v)\Q^{(2)}_{13}(u,w)\R_{12}(u,v) + \Q^{(2)}_{12}(u,v)\\
        &= \frac{\P_{12}-1}{(u-v)^2 - 1} + \frac{\P_{13}-1}{(u-w)^2 -1} + \frac{(u-v)[\P_{12}, \P_{13}] + \P_{13} - \P_{23}}{((u-v)^2 -1)((u-w)^2 - 1)}.
    \end{split}
\end{equation}
Here, it can be noted that $\Q^{(2)}_{1(23)}(0;0,0) = 1-\P_{12} + 1 - \P_{23}$, which is the sum over the two Hamiltonian densities of the XXX spin chain. The coproduct for the third algebraic charge density acting on three sites can now be directly calculated to be given by
\begin{equation}
    \begin{split}
        \Q^{(3)}_{1(23)}(u;v,w) &= \frac{d}{du}\Q^{(2)}_{1(23)}(u;v,w)\\
        &= -\frac{2(u-v)(\P_{12}-1)}{((u-v)^2-1)^2} - \frac{2(u-w)(\P_{13}-1)}{((u-w)^2-1)^2} + \frac{[\P_{12}, \P_{13}]}{((u-v)^2-1)((u-w)^2-1)}\\
        &\quad - 2\left(\frac{u-v}{(u-v)^2-1} + \frac{u-w}{(u-w)^2-1}\right)\frac{(u-v)[\P_{12},\P_{13}] + \P_{13} - \P_{23}}{((u-v)^2-1)((u-w)^2-1)}.
    \end{split}
\end{equation}
Remark that $\Q^{(3)}_{1(23)}(0;0,0) - \Q^{(3)}_{23}(0,0) = [\P_{12}, \P_{13}]$, which is indeed equal to the third charge of the XXX spin chain. Any higher algebraic charge densities acting on three sites can now be calculated by taking higher derivatives, i.e. $\Q^{(k)}_{1(23)}(u;v,w) = \frac{d^{k-2}}{du^{k-2}}\Q^{(2)}_{1(23)}(u;v,w)$. Similarly, higher coproducts of the algebraic charge densities can be calculated by taking derivatives of higher coproducts of the $R$-matrix. Also, all of these calculations can be easily performed using \texttt{Mathematica} for any arbitrary $R$-matrix.\\

We can now compute the Lax operator and the $R$-matrix for the $\B[\mathbb{Q}_3]$ generated long-range deformation of the XXX spin chain up to first order in the deformation parameter. First, let us consider the reduced algebraic charge density with a non-trivial spectral parameter in the second coordinate, which is given by
\begin{equation}\label{eq:reducedalgebraicchargedensityYangian}
    \begin{split}
        {}^r\Q^{(3)}_{1,\underline{2},3}(0,u,0) &= \Q^{(3)}_{1(\underline{2}3)}(0;u,0) - \R^{-1}_{1\underline{2}}(0,u)\Q^{(3)}_{13}(0,0)\R_{1\underline{2}}(0,u)\\
        &= \frac{2u}{(-1+u^2)^2}\left(\P_{12} +\P_{23} - \P_{13} - 1\right) + \frac{1+u^2}{(-1+u^2)^2}[\P_{12}, \P_{13}],
    \end{split}
\end{equation}
where we used the notation that an underline signifies a non-trivial insertion of a spectral parameter. Note here that $1,\P_{12}, \P_{23},\P_{13}$ and $[\P_{12}, \P_{13}]$ form a basis for the $\mathfrak{sl}_N$ invariant matrices in $\mathrm{End}(\C^N\otimes \C^N\otimes \C^N)$,\footnote{Recall that the Schur-Weyl dual of $\mathrm{SL}_N$ acting on $(\C^{N})^{\otimes k}$ is the symmetric group $S_k$, such that indeed all the $\mathfrak{sl}_N$-invariant matrices are given in terms of permutation operators.} such that the above form is the unique expression in terms of these basis elements. From \eqref{eq:boostLaxoperatorfromdeformation}, it then follows that the Lax operator for the $\B[\mathbb{Q}_3]$ deformation is given by
\begin{equation}\label{eq:BQ3LaxExample}
    \L^\lambda_{(\underline{1}2)3}(u) \coloneq \left(1 + \lambda\cdot {}^r\Q^{(3)}_{3,\underline{1},2}(0,u,0)\right)\R_{\underline{1}3}(u,0)\P_{23} + \O(\lambda^2),
\end{equation}
which is just a product of the matrices that we gave in \eqref{eq:RmatrixYangian} and \eqref{eq:reducedalgebraicchargedensityYangian}.\footnote{We remark that this Lax operator differs from the one found in e.g. \cite{deLeeuwRetore2023}. One has, in principle, some freedom in defining the Lax operator, as one can, for example, perform a similarity transformation on the auxiliary space and still obtain the same charges. Moreover, our Lax operator is not regular, as it also describes the deformation of the first charge $\mathbb{Q}_1$, which was not included in \cite{deLeeuwRetore2023}. However, one could construct a regular Lax operator by a reparametrisation $u\to u-\frac{\lambda}{u}$, by imposing a non-trivial spectral parameter $\frac{\lambda}{u}$ in the second coordinate of the auxiliary space, and by then removing the singularity in $u$ through a similarity transformation on the auxiliary space. That is, it can be verified that the Lax operator
\begin{equation}
   \L^{\text{reg}}_{(\underline{1}2)3}(u) \coloneq \L_{(\underline{1}2)3}^{\lambda}\left(u - \frac{\lambda}{u}\right) + \frac{\lambda}{u}\frac{d}{dv}\bigg|_{v=0}\R_{(\underline{1}2)3}(u,v;0) - \frac{\lambda}{u}[\Q^{(2)}_{\underline{1}2}(u,0), \R_{(\underline{1}2)3}(u,0;0)] + \O(\lambda^2)
\end{equation}
is regular. Moreover, it can be derived that this Lax operator provides the proper range-three Hamiltonian correction (i.e. without the $\mathbb{Q}_4$ term) and also has an associated $R$-matrix such that the RLL relations are satisfied. We remark that this regularised Lax operator is less natural to consider. Moreover, it is unknown whether such a regularisation can be done for arbitrary boost and bilocal deformations.} Similarly, we can use \eqref{eq:twistedRmatrixboost} to obtain the expression for the $R$-matrix. In particular, note here that the double coproduct for the undeformed $R$-matrix is given by
\begin{equation}
    \R_{(\underline{1}2)(\underline{3}4)}(u,v) = \R_{\underline{1}4}(u,0)\R_{\underline{13}}(u,v)\P_{24}\R_{2\underline{3}}(0,v).
\end{equation}
From \eqref{eq:twistedRmatrixboost}, it then follows that the deformed $R$-matrix for the $\B[\mathbb{Q}_3]$ deformation is given by
\begin{equation}
    \begin{split}
        \R^\lambda_{(\underline{1}2)(\underline{3}4)}(u,v) &\coloneq \R_{(\underline{1}2)(\underline{3}4)}(u,v)  + \O(\lambda^2)\\
        &\quad + \lambda\left({}^r\Q^{(3)}_{4,\underline{1},2}(0,u,0) \R_{(\underline{1}2)(\underline{3}4)}(u,v) - \R_{(\underline{1}2)(\underline{3}4)}(u,v){}^r\Q^{(3)}_{2,\underline{3},4}(0,v,0)\right),
    \end{split}
\end{equation}
where we can recognise the usual formula for a twisted $R$-matrix. It now follows from Proposition~\ref{prop:vanishingDrinfeldassociatorBQk} that the RLL-relation and the Yang-Baxter equation for this Lax operator and $R$-matrix are satisfied. In particular, we have
\begin{equation}\label{eq:RLLBq3deformationYangian}
    \R^\lambda_{(\underline{1}2)(\underline{3}4)}(u,v)\L^\lambda_{(\underline{1}2)5}(u)\L^\lambda_{(\underline{3}4)5}(v) = \L^\lambda_{(\underline{3}4)5}(v)\L^\lambda_{(\underline{1}2)5}(u)\R^\lambda_{(\underline{1}2)(\underline{3}4)}(u,v) + \O(\lambda^2),
\end{equation}
which can also be explicitly checked in, for example, \texttt{Mathematica}. This shows that we get an integrable model up to first order in $\lambda$. That is, if one considers the monodromy matrix $\T^\lambda_{(\underline{a}b)}(u) \coloneq \L^{\lambda}_{(\underline{a}b)L}(u) \cdots \L^{\lambda}_{(\underline{a}b)1}(u)$
and the corresponding transfer matrix $t^\lambda(u) \coloneq \mathrm{tr}_{\underline{a}b}\T^\lambda_{\underline{a}b}(u)$, then it follows from \eqref{eq:RLLBq3deformationYangian} that $[t^\lambda(u), t^\lambda(v)] = \O(\lambda^2)$, such that the transfer matrix generates a tower of commuting charges up to first order in the deformation parameter.  It can be directly calculated (using, for example, \texttt{Mathematica}) that the second charge for the $\B[\mathbb{Q}_3]$ deformation is given by
\begin{equation}\label{eq:examplecharge2}
    \begin{split}
        \mathbb{Q}_2(\lambda) = t^{\lambda}(0)^{-1}\frac{d}{du}t^\lambda(u)\Big|_{u=0} &= \sum_{n=1}^{L}(1-\P_{n,n+1}) + \lambda\Big[(1-\P_{n,n+2}) - 4(1-\P_{n,n+1})\Big] \\
        &\quad + \frac{\lambda }{2}\mathbb{Q}_4(0) + \O(\lambda^2).
    \end{split}
\end{equation}
On the right, one can recognise the two-loop correction to the dilatation operator in the $\mathfrak{su}(2)$ sector of planar $\mathcal{N}=4$ SYM \cite{Beisert2003Thedilatation, Beisert2004NovelLongRange}.\footnote{Remark that the explicit form in terms of permutation operators is different from \cite{Beisert2003Thedilatation, Beisert2004NovelLongRange}, as the $\mathfrak{su}(2)$ invariant expression in terms of products of permutation operators is not unique. However, one can explicitly calculate that our correction to the Hamiltonian is indeed equal to the two-loop correction to the dilatation operator as given in  \cite{Beisert2003Thedilatation, Beisert2004NovelLongRange}.} The additional $\mathbb{Q}_4$ charge can be simply removed by a change of basis of the charges, or via the reparametrisation $u\to u - \frac{\lambda}{u}$ of the spectral parameter.\footnote{Remark that this reparametrisation introduces a pole at $u=0$. In particular, the charges cannot be defined by evaluation at $u=0$ of the derivatives in \eqref{eq:definitioncharges}. Rather, the charges should be (equivalently) defined via contour integrals around $u=0$.} We have therefore explicitly constructed the Lax operator and the $R$-matrix for  the $\mathfrak{su}(2)$-sector of planar $\mathcal{N}=4$ up to two-loop. Remark here that the Lax operator and $R$-matrix are not equal to each other. Moreover, although the Lax operator is non-regular, the $R$-matrix is regular, i.e. reduces to the permutation operator at $u,v = 0$.\\

Next, let us consider the long-range deformation corresponding to the bilocal $[\mathbb{Q}_2 | \mathbb{Q}_3]$ operator. We will denote the charge densities by $\tilde{\mathfrak{q}}^{(2)}_{12} = 1-\P_{12}$ and $\mathfrak{q}^{(3)}_{123} = [\P_{12}, \P_{13}]$. Moreover, as remarked in Section~\ref{sec:algebraicchargedensities}, since the $R$-matrix is of difference form, we have $\tilde \Q^{(2)}_{12}(u,v) = \Q^{(2)}_{12}(u,v)$ and ${}^r\tilde{Q}^{(2)}_{1,\underline{2},3}(0,u,0) = \tilde{\Q}^{(2)}_{\underline{2}3}(u,0)$ by \eqref{eq:reducedAlgebraicchargedensityproperty}. Recall that the undeformed Lax operator on a tripled auxiliary space is given by $\L_{(\underline{1}23)4}(u) \coloneq \R_{\underline{1}4}(u,0)\P_{24}\P_{34}$. From \eqref{eq:bilocalLaxoperatorfromdeformation}, and also using the property for the reduced algebraic charge density as given in \eqref{eq:reducedAlgebraicchargedensityproperty}, we find that the Lax operator for the $[\mathbb{Q}_2|\mathbb{Q}_3]$ long-range deformation is given by
\begin{equation}
    \begin{split}
        \L^{\lambda}_{(\underline{1}23)4}(u) \coloneq \L_{(\underline{1}23)4}(u) +&\lambda\Big(\R^{-1}_{\underline{1}(23)}(u;0,0)\tilde{\mathfrak{q}}^{(2)}_{23}\R_{\underline{1}(23)}(u;0,0){}^r\Q^{(3)}_{4,\underline{1},2}(0,u,0) \L_{(\underline{1}23)4}(u)\\
        & - \L_{(\underline{1}23)4}(u) \R^{-1}_{\underline{1}(23)}(u;0,0)\tilde{\Q}^{(2)}_{\underline{1}4}(u,0)\R_{\underline{1}(23)}(u;0,0)\mathfrak{q}^{(3)}_{234}\Big) + \O(\lambda^2).
    \end{split}
\end{equation}
Similarly, we can use \eqref{eq:twistedRmatrixbilocal} to get the $R$-matrix for the bilocal deformation. Recall here that the undeformed $R$-matrix on a tripled auxiliary space is given by
\begin{equation}
    \R_{(\underline{1}23)(\underline{4}56)}(u,v) = \R_{\underline{1}6}(u,0)\R_{\underline{1}5}(u,0)\R_{\underline{14}}(u,v)\P_{26}\P_{25}\R_{2\underline{4}}(0,v)\P_{36}\P_{35}\R_{3\underline{4}}(0,v).
\end{equation}
Before writing down the deformation of this $R$-matrix, let us first write down the explicit twist element for the deformation. From \eqref{eq:twistbilocal}, it follows that the twist for the $[\mathbb{Q}_2|\mathbb{Q}_3]$ bilocal deformation is given by
\begin{equation}
    \begin{split}
        \F^{(1)}_{(\underline{1}23)(\underline{4}56)}(u,v) &\coloneq \R^{-1}_{\underline{4}(56)}\R^{-1}_{\underline{1}(235)}\tilde{\Q}^{(2)}_{\underline{1}6}\R_{\underline{1}(235)}\R_{\underline{4}(356)}{}^r\Q^{(3)}_{2,\underline{4},3}\R^{-1}_{\underline{4}3}\\
        &+\R^{-1}_{\underline{4}(56)}\R^{-1}_{\underline{1}(2356)}[\tilde{\mathfrak{q}}^{(2)}_{56}, \R_{\underline{1}(2356)}][\mathfrak{q}^{(3)}_{235},\R_{\underline{4}(2356)}]\R^{-1}_{\underline{4}(23)}\\
        &+ \R^{-1}_{\underline{4}(56)}\R^{-1}_{\underline{1}(2356)}\tilde{\mathfrak{q}}^{(2)}_{35}\R_{\underline{1}(2356)}\R_{\underline{4}(356)}{}^r\Q^{(3)}_{2,\underline{4}, 3}\R^{-1}_{\underline{4}3}\\
        &+  \R^{-1}_{\underline{4}(56)}\R^{-1}_{\underline{1}(2356)}\tilde{\mathfrak{q}}^{(2)}_{56}\R_{\underline{1}(2356)}\R_{\underline{4}(56)}{}^r\Q^{(3)}_{3,\underline{4}, 5}\\
        &+ \R^{-1}_{\underline{4}(56)}\R^{-1}_{\underline{1}(23)}\tilde{\Q}^{(2)}_{\underline{1}5}\R_{\underline{1}(23)}\R_{\underline{4}(2356)}\mathfrak{q}^{(3)}_{235}\R^{-1}_{\underline{4}(23)}\\
        &+\R^{-1}_{\underline{4}(56)}\R^{-1}_{\underline{1}(235)}\tilde{\Q}^{(2)}_{\underline{1}6}\R_{\underline{1}(235)}\R_{\underline{4}(2356)}\mathfrak{q}^{(3)}_{356}\R^{-1}_{\underline{4}(23)},
    \end{split}
\end{equation}
where we left out the explicit dependence on the spectral parameters on the right side. However, one should note that the coordinate $\underline{1}$ corresponds to an insertion of the spectral parameter $u$, while $\underline{4}$ corresponds to the spectral parameter $v$. In particular, we used the notation $\R_{\underline{1}(i_1,...,i_n)} \coloneq \R_{\underline{1},i_n}(u,0)\cdots \R_{\underline{1},i_1}(u,0)$, $\R_{\underline{4}(i_1,...,i_n)} \coloneq \R_{\underline{4},i_n}(v,0)\cdots \R_{\underline{4},i_1}(v,0)$ and $\tilde\Q^{(2)}_{\underline{1}, i} \coloneq \Q^{(2)}_{\underline{1}, i}(u,0)$ and ${}^r\Q^{(3)}_{i,\underline{4}, j} \coloneq {}^r\Q^{(3)}_{i,\underline{4}, j}(0,v,0)$. From \eqref{eq:twistedRmatrixbilocal}, it then follows that the $R$-matrix for the $[\mathbb{Q}_2|\mathbb{Q}_3]$ deformation is given by
\begin{equation}
    \begin{split}
        \R^\lambda_{(\underline{1}23)(\underline{4}56)}(u,v) &\coloneq \R_{(\underline{1}23)(\underline{4}56)}(u,v)  + \O(\lambda^2)\\
        &\quad + \lambda\left(\F^{(1)}_{(\underline{4}56)(\underline{1}23)}(v,u) \R_{(\underline{1}23)(\underline{4}56)}(u,v) - \R_{(\underline{1}23)(\underline{4}56)}(u,v) \F^{(1)}_{(\underline{1}23)(\underline{4}56)}(u,v)\right).
    \end{split}
\end{equation}
Similar to before, it follows from Proposition~\ref{prop:vanishingdrinfeldassociatorbilocal} that the RLL-relation and the Yang-Baxter equation for the above Lax operator and $R$-matrix for the $[\mathbb{Q}_2|\mathbb{Q}_3]$ deformation are satisfied, which can also be explicitly verified using, for example, \texttt{Mathematica}. In particular, this implies that the transfer matrix $t^\lambda (u) \coloneq \mathrm{tr}_{\underline{a}b_1b_2}\big(\L^\lambda_{(\underline{a}b_1b_2)L}\cdots \L^\lambda_{(\underline{a}b_1b_2)1}\big)$ generates a tower of commuting charges up to first order in $\lambda$, since $[t^\lambda(u), t^\lambda(v)] = \O(\lambda^2)$. For example, it can be calculated in \texttt{Mathematica} that the correction to the second charge for the bilocal $[\mathbb{Q}_2|\mathbb{Q}_3]$ long-range deformation of the $\mathrm{SU}(2)$ XXX Heisenberg spin chain is given by
\begin{equation}
    \begin{split}
        \mathbb{Q}_2(\lambda) = \sum_{n=1}^{L}&(1-\P_{n,n+1}) + \lambda\Big[4(\P_{n,n+1}-1) + (\P_{n,n+3}-1) - (\P_{n,n+2}-1)\\
        &+ (\P_{n,n+1,n+3} - \P_{n,n+3,n+1})-2(\P_{n,n+1}\P_{n+2,n+3}-1)  \\
        & + (\P_{n,n+2,n+3,n+1} - \P_{n,n+2,n+1,n+3})+ (\P_{n,n+1,n+3,n+2} - \P_{n,n+1,n+2,n+3})\Big]\\
        &+\O(\lambda^2),
    \end{split}
\end{equation}
where we used the notation $\P_{n_1,n_2,...,n_k} \coloneq \P_{n_1,n_k}\P_{n_1,n_{k-1}}\cdots \P_{n_1,n_2}$, which simply corresponds to the shift operator. It can be seen that the second charge gets a range-four correction at first order in $\lambda$ under the $[\mathbb{Q}_2|\mathbb{Q}_3]$ deformation. This term is also expected to give a contribution to the dilatation operator in the four-loop correction of the $\mathfrak{su}(2)$-sector of planar $\mathcal{N}=4$ SYM \cite{Beisert2003Thedilatation, Beisert2004NovelLongRange}. Of course, one can use the results of our paper to also compute the Lax operators, $R$-matrices and charges for general boost and bilocal deformations and for general FRT-bialgebras up to first order in $\lambda$.

\subsection{Long-range deformed Yangian}\label{sec:longrangedeformedYangian}
Up until now, we have described the long-range deformations in terms of the FRT-bialgebras. As described in Section~\ref{sec:dualdescription}, these FRT-bialgebras are dual to the quantised universal enveloping algebra constructions. In particular, we can now also describe the long-range deformations of these quantised universal enveloping algebras. For the purpose of this section, we will focus on the Yangian algebras, as it is the easiest example to consider. We will mostly follow the same steps as in Section~\ref{sec:deformationFRTalgebra}, but now applied to the bialgebra dual. We remark that the following discussion is expected to generalise naturally to more general quantum affine algebras.\\

In order to describe the long-range deformations of the Yangian $\mathcal{Y}_\hbar(\mathfrak{g})$ (see Section~\ref{sec:dualdescription}), we first need to consider a similar ``doubling'' of the quantum group, as was needed for the FRT-bialgebra. Let us first discuss the dual to the algebra $\A_0$. Let $(\rho_0,V)$ be a fundamental representation of $\mathcal{Y}_\hbar(\mathfrak{g})$, such that $\rho_0(x) = \tilde{x} \in \mathrm{End}(V)$ and $\rho_0(\J(x)) =0$ for $x\in \mathfrak{g}$, where $\tilde x$ denotes the fundamental representation of $x$. Next, we consider the linear map $\Phi_0 \colon \mathcal{Y}_{\hbar}(\mathfrak{g}) \to \A_0^*$ given by $\Phi_0(x)(a) \coloneq \langle a,x\rangle$ for $x\in \Y_\hbar(\mathfrak{g})$ and $a\in\A_0$, and where $\langle\cdot, \cdot\rangle$ is the bilinear map \eqref{eq:dualpairingYangian} for the representation $\rho_0$ and restricted to $\A_0 \subseteq \A(\R_{\rho_0})$. Remark that, by the properties of the dual pairing \eqref{eq:dualpairingprops}, $\Phi_0$ is automatically a bialgebra homomorphism.\footnote{One should remember here that $\A_0^*$ is not actually a bialgebra, since the image of the ``coproduct'' of $\A^*_0$ does not necessarily lie in $\A_0^*\otimes \A_0^*$ (see footnote~\ref{footnote:coproductdualalgebra}). However, since $\mathcal{Y}_\hbar(\mathfrak{g})$ is a bialgebra, this also ensures that $\mathrm{Im}(\Phi_0)\subseteq \A_0^*$ is a bialgebra. Therefore, $\Phi_0$, when restricting the codomain to its image, is a bialgebra homomorphism.} We then define the bialgebra\footnote{\label{footnote:antipodeforYinfty}If $S(\ker(\Phi_0)) \subseteq \ker(\Phi_0)$, then $\Y_{\infty} \cong \Y_{\hbar}(\mathfrak{g}) / \ker(\Phi_0)$ is a Hopf algebra, with the antipode for $\mathcal{Y}_\infty$ induced from $\Y_{\hbar}(\mathfrak{g})$. Although this is believed to be true, we do not provide a proof for this.}
\begin{equation}\label{eq:Yinftydef}
    \Y_{\infty} \coloneq \mathrm{Im}(\Phi_0)\subseteq \A_0^*,
\end{equation}
which one can think of as being the algebra of Yangian operators acting on the ``evaluation zero spin chains''. Here, we should explain our notation for $\Y_\infty$, which refers to $\hbar = \infty$. Recall that the algebra $\A_0$ corresponds to all matrix elements with spectral parameter $u = 0$. For the Yangian algebra, the spectral parameters have ``units of $\hbar$''. For example, one can reintroduce $\hbar$ in the $R$-matrix \eqref{eq:RmatrixYangian} for the Yangian by replacing $u\to u/\hbar$. In this sense, the $u\to 0$ limit corresponds to the $\hbar\to \infty$ limit. Similarly, since $\rho_0(\J(x)) = 0$, it can be noted from \eqref{eq:coproductYangian} that $(\rho_0\otimes \rho_0)(\Delta(\J(x))) = \frac{\hbar}{2}(\rho_0\otimes \rho_0)([x\otimes 1, t])$, i.e. only the term proportional to $\hbar$ survives for the coproduct. This coproduct structure can be emulated in \eqref{eq:coproductYangian} by ``taking $\hbar$ to be large''. In practice, this $\hbar \to \infty$ interpretation corresponds to the fact that neither $\A_0$ nor $\mathcal{Y}_\infty$ has an appropriate classical limit. Indeed, $\A_0$ is the free algebra generated by $t_{ij}(0)$, with an $R$-matrix given by the permutation operator $\P$. There is no limit in which $\P$ can be approximated around the identity, in which the elements $t_{ij}(0)$ become commutative. In that sense, one can also view $\Y_{\infty}$ as the ``pure quantum'' limit of the Yangian algebra, which cannot be approximated around a classical universal enveloping algebra. We emphasise here that $\Y_{\infty}$ is defined through \eqref{eq:Yinftydef}, and is not obtained by literally taking the $\hbar\to \infty$ limit of the Yangian algebra $\Y_{\hbar}(\mathfrak{g})$.\\

We can now consider the doubling of the Yangian algebra with this $\Y_\infty$, which is the dual description to Definition~\ref{def:doublecrossedproduct}. In particular, we would like to construct a double-crossed \textit{co}product algebra between $\Y_{\hbar}(\mathfrak{g})$ and $\mathcal{Y}_\infty$, which is obtained by twisting the coproduct structure of the tensor product bialgebra  $\Y_{\hbar}(\mathfrak{g})\otimes \Y_\infty$ by the universal $R$-matrix. Here, however, there are some difficulties, as the universal $R$-matrix for the Yangian is only defined in a topological completion. Therefore, the coproduct for the double-crossed coproduct algebra can also only be defined in such a completion. For the Yangian $\Y_{\hbar}(\mathfrak{g})$, we define the \textit{$u$-deformed coproduct}
\begin{equation}\label{eq:udeformedcoproductYangian}
    \Delta_{u,v}(x) \coloneq (\B_u\otimes \B_v)(\Delta(x)) \in \mathcal{Y}_\hbar(\mathfrak{g})(\!(u)\!)\otimes \Y_{\hbar}(\mathfrak{g})(\!(v)\!)
\end{equation}
for $x\in \Y_{\hbar}(\mathfrak{g})$, which satisfies $\R(u-v)\Delta_{u,v}(x) = \Delta_{v,u}^{\mathrm{cop}}(x)\R(u-v)$, and reduces to the ordinary coproduct in the limit $u,v\to 0$. See e.g. \cite{Hernandez2003QuantumAffinizations, Hernandez2007DrinfeldCoproduct} for more details on such a $u$-deformed coproduct structure. Next, we will introduce a similar notation as before for the FRT-bialgebras, to distinguish between elements in $\Y_{\hbar}(\mathfrak{g})$ and $\Y_\infty$. In particular, we define
\begin{equation}
    \begin{split}
        \R_{\underline{12}}(u-v) &\coloneq  \R(u-v)\in \Y_{\hbar}(\mathfrak{g})(\!(u)\!)\otimes \Y_{\hbar}(\mathfrak{g})(\!(v)\!),\\
        \R_{\underline{1}2}(u) &\coloneq (\id\otimes \Phi_0)(\R(u)) \in \Y_\hbar(\mathfrak{g})(\!(u)\!)\otimes \Y_\infty,\\
        \R_{1\underline{2}}(-v) &\coloneq (\Phi_0\otimes \id)(\R(-v)) \in \Y_\infty \otimes \Y_{\hbar}(\mathfrak{g})(\!(v)\!),\\
        \R_{12}(0) &\coloneq (\Phi_0\otimes \Phi_0)(\R(0)) \in \Y_\infty\otimes \Y_\infty,
    \end{split}
\end{equation}
where it should be remarked that, since $\Y_\infty$ corresponds to evaluation zero representations, the element $(\Phi_0\otimes \Phi_0)(\R(0))$ is well-defined in $\Y_\infty\otimes \Y_\infty$, and is in fact also the $R$-matrix for $\Y_\infty$. An underline in the coordinate signifies an element in $\Y_{\hbar}(\mathfrak{g})$, while no underline corresponds to an element in $\Y_\infty$.\\

We can now define the double-crossed coproduct bialgebra for the topological completion:
\begin{definition}
    The \textit{$u$-deformed double-crossed coproduct bialgebra} $\Y_\hbar(\mathfrak{g})\hat{\bowtie}_{\R} \Y_\infty$ is the algebra $\Y_\hbar(\mathfrak{g})\otimes \Y_\infty$, with counit induced from the tensor product bialgebra, and with a $u$-deformed coproduct given by
    \begin{equation}
        \Delta_{u,v}(x\otimes y) \coloneq \sum \R_{2\underline{3}}^{-1}(-v)\left[\B_u(x_{(1)})\otimes y_{(1)}\otimes\B_v(x_{(2)})\otimes y_{(2)}\right]\R_{2\underline{3}}(-v),
    \end{equation}
    for $x\in \Y_{\hbar}(\mathfrak{g})$ and $y\in \Y_{\infty}$.
\end{definition}
\noindent In this definition, one can recognise the dual construction of \eqref{eq:bicrossedproduct}. That is, there is a dual pairing
\begin{equation}
    \langle \cdot, \cdot\rangle \colon (\A(\R_\rho)\bowtie_{\mathbf{r}} \A_0)\times \left(\Y_\hbar(\mathfrak{g})\hat{\bowtie}_{\R} \Y_\infty\right) \to \C,
\end{equation}
given by $\langle a\otimes b, x\otimes y\rangle = \langle a,x\rangle y(b)$ for $a\otimes b\in \A(\R_\rho)\bowtie_{\mathbf{r}} \A_0$ and $x\otimes y \in \Y_\hbar(\mathfrak{g})\hat{\bowtie}_{\R} \Y_\infty$, and $\langle a, x\rangle$ is the dual pairing \eqref{eq:dualpairingYangian} between $\A(\R_\rho)$ and $\Y_{\hbar}(\mathfrak{g})$. For the FRT-bialgebra, we saw in \eqref{eq:bicrossedproduct} that the double-crossed product construction changes the product structure. Correspondingly, for the Yangian, it's the coproduct structure that gets altered. We emphasise, however, that $\Y_\hbar(\mathfrak{g})\hat{\bowtie}_{\R} \Y_\infty$ is not a bialgebra in the strict sense, as the coproduct is only well-defined in the appropriate topological completion.\footnote{It might be true that the $u,v\to 0$ limit of the coproduct $\Delta_{u,v}$ for the double-crossed coproduct bialgebra is actually well-defined, just as is the case for the coproduct \eqref{eq:udeformedcoproductYangian} of the (single) Yangian $\Y_\hbar(\mathfrak{g})$, such that $\Y_\hbar(\mathfrak{g})\hat{\bowtie}_{\R} \Y_\infty$ becomes a proper bialgebra. However, it is currently unknown whether this is indeed the case.} The double-crossed coproduct algebra $\Y_\hbar(\mathfrak{g})\hat{\bowtie}_{\R} \Y_\infty$ is pseudo-quasitriangular, with the universal $R$-matrix given by
\begin{equation}
    \R_{(\underline{1}2)(\underline{3}4)}(u,v) \coloneq \R^{-1}_{4\underline{1}}(-u)\R_{\underline{13}}(u-v)\R_{24}(0)\R_{2\underline{3}}(-v),
\end{equation}
such that $\R_{(\underline{1}2)(\underline{3}4)}(u,v)\Delta_{u,v}(x\otimes y) = \Delta_{v,u}^{\text{cop}}(x\otimes y)\R_{(\underline{1}2)(\underline{3}4)}(u,v)$ for $x\otimes y \in \Y_\hbar(\mathfrak{g})\hat{\bowtie}_{\R} \Y_\infty$. Remark that this universal $R$-matrix is dual to the $r$-form as given in \eqref{eq:rformdoubled}.\\

Next, we can consider twists of this double-crossed coproduct bialgebra. Here, we should now twist the $u$-deformed coproduct. In particular, we consider an invertible element\footnote{Remark that this twisting element is fully written in terms of algebraic charge densities, which correspond to derivations of the universal $R$-matrix and are well-defined algebra elements of the Yangian as discussed in Section~\ref{sec:algebraicchargedensities}.}
\begin{equation}
    \F_{(\underline{1}2)(\underline{3}4)}(u,v) \in \big[(\Y_\hbar(\mathfrak{g})\hat{\bowtie}_{\R} \Y_\infty)(\!(u)\!)\otimes  (\Y_\hbar(\mathfrak{g})\hat{\bowtie}_{\R} \Y_\infty)(\!(v)\!)\big][\![\lambda]\!],
\end{equation}
which is the twisting element dual to the twisting function for the boost or bilocal deformation as given in \eqref{eq:BQktwistelement} and \eqref{eq:twistbilocal}, such that 
\begin{equation}
    \lim_{u,v\to 0}\langle (a\otimes b)\otimes (c\otimes d),  \F_{(\underline{1}2)(\underline{3}4)}(u,v)\rangle = \gamma(a\otimes b, c\otimes d),
\end{equation}
for $a\otimes b, c\otimes d \in \A(\R_\rho)\bowtie_{\mathbf{r}} \A_0$, where it should be noted that the $u,v\to 0$ limit is well-defined under the dual pairing, as also the $R$-matrix in a given representation is well-defined at zero spectral parameter. We then have the following definition:
\begin{definition}
    The \textit{cotwisted $u$-deformed quasi-bialgebra} $(\Y_\hbar(\mathfrak{g})\hat{\bowtie}_{\R} \Y_\infty)(\F)$ is the algebra \linebreak $\Y_\hbar(\mathfrak{g})\hat{\bowtie}_{\R} \Y_\infty$, with a counit induced from $\Y_\hbar(\mathfrak{g})\hat{\bowtie}_{\R} \Y_\infty$, and with a cotwisted $u$-deformed coproduct given by
    \begin{equation}
        \Delta^\F_{u,v}(x\otimes y)\coloneq \F_{(\underline{1}2)(\underline{3}4)}(u,v)\Delta_{u,v}(x\otimes y) \F^{-1}_{(\underline{1}2)(\underline{3}4)}(u,v),
    \end{equation}
    for $x\otimes y \in \Y_\hbar(\mathfrak{g})\hat{\bowtie}_{\R} \Y_\infty$.
\end{definition}
\noindent In general, this cotwisted bialgebra is a quasi-bialgebra, i.e. the coproduct is non-coassociative (in the $u$-deformed sense).\footnote{Also here, it is unknown whether the cotwisted coproduct $\Delta^\F_{u,v}$ has a well-defined $u,v\to 0$ limit, which would make this $u$-deformed quasi-bialgebra into a proper quasi-bialgebra.} However, Propositions~\ref{prop:vanishingDrinfeldassociatorBQk} and \ref{prop:vanishingdrinfeldassociatorbilocal} imply that, after some number of successive applications of the coproduct, the corresponding Drinfeld coassociator becomes trivial up to first order in $\lambda$, such that the cotwisted algebra still has a (perturbatively) coassociative substructure. Moreover, the cotwisted algebra is pseudo-quasitriangular, with the cotwisted $R$-matrix given by
\begin{equation}
    \R^\F_{(\underline{1}2)(\underline{3}4)}(u,v) \coloneq \F_{(\underline{3}4)(\underline{1}2)}(v,u)\R_{(\underline{1}2)(\underline{3}4)}(u,v)\F^{-1}_{(\underline{1}2)(\underline{3}4)}(u,v),
\end{equation}
which is the dual of the twisted $r$-form as given in \eqref{eq:deformedrform}.\\

It is important to note that, for the Yangian (or rather its double-crossed coproduct bialgebra $\Y_\hbar(\mathfrak{g})\hat{\bowtie}_{\R} \Y_\infty$), it is the \textit{co}algebra structure that is deformed upon twisting. In particular, the algebra relations for the Yangian remain untouched.\footnote{This also implies that one cannot simply obtain the long-range deformed Yangian from a \textit{RTT-realisation} with the deformed $R$-matrix. Namely, the RTT relations define the algebra structure on the Yangian, which remains undeformed under the cotwisting. In this sense, the $R$-matrix that defines these algebra relations should also remain undeformed. The RTT relation involving the deformed $R$-matrix, therefore, only applies to the long-range deformed dual Yangian, i.e. the FRT-bialgebra. Of course, the $\ell$-functional $\bm{\ell}(a) \coloneq \mathbf{r}^\gamma(a,\cdot)$ does still define a bialgebra homomorphism between the (co-opposite) deformed FRT-bialgebra and the deformed Yangian algebra, such that in this way one may still obtain Yangian elements from the RTT realisation. } This is also expected for the long-range integrable models, which were shown to still have an underlying Yangian symmetry on the doubly-infinite spin chains \cite{Beisert2008YangianSymmetry, Bargheer2008BoostingNearestNeighbour, Bargheer2009}. The twist of the coproduct structure merely has the consequence that the representation of the Yangian on the spin chain of length $L$ gets deformed. Remark here that, for specific representations, the $u,v\to 0$ limits of the $u$-deformed coproducts will become well-defined, as the $R$-matrix in this limit has a well-defined representation, such that, in these representations, the coproduct can just be considered as an ordinary coproduct for a bialgebra.\\

Remember that the $\ell$-functionals as described in Section~\ref{sec:coquasitriangularityandLfunctionals} define representations for the FRT-bialgebra via the representations of the Yangian, and that these then define the Lax operator. For the long-range deformation, we have a similar story: the long-range deformed Lax operator corresponds to the map
\begin{equation}
    \bm{\ell}^\gamma : (\A(\R)\bowtie_{\mathbf{r}}\A_0)^{\text{cop}}(\gamma) \to \hat{\Y}_{\infty},\quad \bm{\ell}^\gamma(a\otimes b)(t_{ij}(0)) \coloneq \mathbf{r}^\gamma(a\otimes b, 1\otimes t_{ij}(0)),
\end{equation}
where $\hat \Y_\infty$ is the appropriate completion of $\Y_\infty \subseteq \A_0^*$ such that $\bm{\ell}^{\gamma}(a\otimes b)\in \hat{\Y}_\infty$ for \linebreak $a\otimes b\in (\A(\R)\bowtie_{\mathbf{r}}\A_0)^{\text{cop}}(\gamma)$. In particular, the representation for the FRT-bialgebra corresponding to the Lax operator is determined by the representation of $\Y_\infty$, which corresponds to the algebra of Yangian operators acting on the zero evaluation spin chains. In this way, we see that the long-range deformation of the spin chain is explicitly obtained through the deformation of the representation of the underlying Yangian algebra due to the twisting of the coproduct.

\section{Discussion and outlook}
In this paper, we have explicitly constructed the FRT-bialgebras relevant for the long-range deformations of homogeneous Yang-Baxter integrable spin chains up to first order in the deformation parameter $\lambda$. In Section~\ref{sec:algebraicchargedensities}, we introduced new algebra elements, the algebraic charge densities, whose algebraic properties allowed us to give a conjecture for the explicit expression of the charge densities, and to derive the generalised Sutherland equations in Section~\ref{sec:higherorderSutherlandequation}. These Sutherland equations were then used in Section~\ref{sec:LaxforBQ3derivation} to derive the Lax operators for the full family of long-range deformations up to first order in the deformation parameter. From the form of these Lax operators, it was observed that the appropriate FRT-bialgebra corresponding to the long-range deformation is given by the double-crossed product bialgebra $\A(\R)\bowtie_{\mathbf{r}}\A_0$ in order to describe the increased auxiliary space that is needed to describe the wrapping interactions. Moreover, we then gave the explicit twisting elements in Section~\ref{sec:deformationFRTalgebra} corresponding to the local, boost and bilocal long-range deformations. In particular, it was observed here that the twisted FRT-bialgebras were in general non-associative, and that the corresponding non-trivial Drinfeld associator encodes the long-range interaction terms of the Lax operator. The main results of this paper were then given in Propositions~\ref{prop:vanishingDrinfeldassociatorlocaldef}, \ref{prop:vanishingDrinfeldassociatorBQk} and \ref{prop:vanishingdrinfeldassociatorbilocal}, in which it is shown that the Drinfeld associator vanishes for a large subalgebra of the deformed FRT-bialgebra. This implies that the algebra still has a large associative substructure for which the Yang-Baxter equation is satisfied. In particular, this gave rise to explicit expressions for the Lax operators and $R$-matrices for the long-range deformations, which manifestly satisfy the RLL-relation and the Yang-Baxter equation. In Section~\ref{sec:someexamples}, we gave explicit examples of these Lax operators and $R$-matrices for the $\B[\mathbb{Q}_3]$ and $[\mathbb{Q}_2|\mathbb{Q}_3]$ long-range deformations of the XXX Heisenberg spin chain. Lastly, in Section~\ref{sec:longrangedeformedYangian}, we also derived the long-range deformation for the Yangian algebra. Here, we saw that the coproduct for the deformed algebra can only be defined in a topological completion. However, this topologically completed coproduct still possesses a pseudo-triangular universal $R$-matrix that braids the coproduct, and that corresponds to the $R$-matrix for the long-range deformed models under a given representation.\\

\noindent There are several interesting future research directions one can take from here:
\begin{itemize}[leftmargin = *]
    \item \textbf{Higher-order corrections.} We have now only described the quantum group deformations up to first order in the deformation parameter $\lambda$. The next obvious step would be to describe the deformed quantum groups up to all orders in $\lambda$, which are still expected to correspond to twists of the double-crossed product algebra $\A(\R)\bowtie_{\mathbf{r}}\A_0$. Moreover, it is then expected that the long-range nature of the perturbative integrable spin chain is encoded in the perturbatively associative structure of this deformed algebra. In particular, for every $n\geq 1$, it is expected that there is a subalgebra of $(\A(\R)\bowtie_{\mathbf{r}}\A_0)(\gamma)$ such that $\varphi^{(m)}$ vanishes for all $m\leq n$ in that subalgebra. In this way, the algebra would give rise to perturbative Lax operators and $R$-matrices which satisfy the RLL-relations and the Yang-Baxter equations up to order $\lambda^n$. We remark here that, although the Lax operators and $R$-matrices could then only be defined perturbatively, the corresponding quantum group could possibly be defined non-perturbatively, thus giving an exact quantum group structure. The family of long-range deformations defined by the local, boost and bilocal operators would then provide a moduli space of such long-range deformed quantum groups.

    \item \textbf{Classification.} We have now derived the twisted FRT-bialgebras by starting from the deformation equation described in \cite{Bargheer2008BoostingNearestNeighbour, Bargheer2009}. A natural question would be whether this family of long-range deformations does indeed give the full classification of long-range deformations, and therefore also the full set of ``long-range twists'' that one could perform on the FRT-bialgebra. It is currently unknown whether this family describes all the long-range deformations or whether more twists exist that give rise to a perturbatively associative substructure of the quantum algebra.

    \item \textbf{Algebraic Bethe Ansatz.} One of the motivations for understanding the quantum group structure of the long-range deformations is that it is expected to provide insights into the long-range Algebraic Bethe Ansatz. Indeed, our construction now allows us to, at the very least, identify the correct $N$-magnon creation operators. In particular, it can be remarked that, for a given twist $\gamma$, we can define a Lax operator with non-trivial multiple spectral parameters in the auxiliary space. In particular, for $\vec{a} = (a_1,..., a_N)$, we can define
    \begin{equation}\label{eq:laxmultiplespecralparams}
        \L^{\gamma}_{(\underline{\vec a}\vec b)1}(u_{\vec a}) \coloneq \L^{\gamma}_{(\underline{a}_1,...,\underline{a}_N,\vec b)1}(u_1,...,u_N) = \mathbf{r}^{\gamma}(\T_{\underline{a}_1}(u_1)\cdots \T_{\underline{a}_N}(u_N)\otimes \T_{b_1}\cdots \T_{b_{k-2}}, 1\otimes \T_1).
    \end{equation}
    In particular, when for example restricting to $6$-vertex models, such a Lax operator provides a representation for elements of the form $B(u_1)\cdots B(u_N)\otimes t_{i_1i_1}(0)\cdots t_{i_m,i_m}(0)$ on the spin chain, where $B(u) \coloneq t_{12}(u)\in \A(\R)[\![u]\!]$. Note here that these elements are manifestly symmetric in the rapidities $u_i$ and, when acting on the vacuum state $|\Omega\rangle \coloneq |{\downarrow}\rangle \otimes \cdots \otimes |{\downarrow}\rangle$, give states with $N$ flipped up-spins. Therefore, linear combinations of such operators provide a good candidate for the $N$-magnon creation operators that must appear in the long-range Algebraic Bethe Ansatz. In fact, in Appendix~\ref{sec:Thetamorphismrelation} in \eqref{eq:NmagnonstateTheta}, we indeed find that creation operators of this form correspond to the $N$-magnon states for the $B[\mathbb{Q}_3]$-deformed XXX spin chain. Namely, in Appendix~\ref{sec:Thetamorphismrelation} we are able to map these $N$-magnon states to the $N$-magnon states that were found in \cite{gromov2014theta} using the $\Theta$-morphism. Of course, we have not given a full proof that these are the $N$-magnon states, nor have we provided a proper Algebraic Bethe Ansatz that gives the eigenenergies and the Bethe equations for the long-range models. However, one should note that it follows from Proposition~\ref{prop:vanishingDrinfeldassociatorlocaldef}, \ref{prop:vanishingDrinfeldassociatorBQk} or \ref{prop:vanishingdrinfeldassociatorbilocal} that, for $\vec c = (c_1,...,c_N)$, there is an $R$-matrix
    \begin{equation}
        \R^{\gamma}_{(\underline{a}\vec b)(\underline{\vec c}\vec d)}(v, u_{\vec c}) \coloneq \mathbf{r}^{\gamma}(\T_{\underline a}\otimes \T_{\vec b}, \T_{\underline{\vec c}}\otimes \T_{\vec d}) =\mathbf{r}^{\gamma}(\T_{\underline a}(v)\otimes \T_{\vec b}, \T_{\underline{c}_1}(u_1)\cdots \T_{\underline{c}_N}(u_N)\otimes \T_{\vec d})
    \end{equation}
    that gives the braiding between the ordinary Lax operator $\L(v)$ and the Lax operator with multiple spectral parameters $\L(u_1,...,u_N)$ as given in \eqref{eq:laxmultiplespecralparams}. It is then expected that this braiding relation allows one to ``commute'' the creation operators of the form $B(u_1)\cdots B(u_N)\otimes t_{i_1i_1}(0)\cdots t_{i_m,i_m}(0)$ through the transfer matrix constituents to obtain a ``wanted'' and an ``unwanted'' term, thus providing a proper Algebraic Bethe Ansatz.

    \item \textbf{Reflection equation.} We have now provided a consistent method of getting Lax operators and $R$-matrices that correspond to the long-range deformation. However, this only gives deformations on the closed and (possibly quasi-) periodic spin chain. In order to describe open spin chains, one should additionally find solutions to the reflection equation for the twisted $R$-matrix \cite{Cherednik1984FactorizingParticles, Sklyanin1988BoundaryConditions, Kulish1992ReflectionEquation}. Remark that if $K_{\underline{a}}(u)$ is a solution to the reflection equation $\R_{\underline{ab}}(u,v)K_{\underline{a}}(u)\R_{\underline{ba}}(v,-u)K_{\underline{b}}(v) = K_{\underline{b}}(v)\R_{\underline{ab}}(u,-v)K_{\underline{a}}(u)\R_{\underline{ba}}(-v,-u)$, then $K_{\underline{a}}(u)$ is also a solution to the reflection equation for the (undeformed) $R$-matrix $\R_{(\underline{a}\vec b)(\underline{c}\vec d)}(u,v)$ on the increased auxiliary space. Importantly, after twisting the $R$-matrix, one finds that $K^\F(u)$ is a solution to the twisted reflection equation if (not necessarily only if) it satisfies
    \begin{equation}\label{eq:twistedreflectioncondition}
        \F_{(\underline{a}\vec{b})(\underline{c}\vec{d})}(u,v)K_{\underline{a}}(u) = K^\F_{\underline{a}\vec b}(u)\F_{(\underline{a}\vec b)(\underline{c}\vec d)}(-u,v). 
    \end{equation}
    For general twists, it is unknown whether there is such a matrix $K^\F(u)$ for any given $K(u)$. However, remark that the twists for the local and boost operators as given in \eqref{eq:twistlocaldeformation} and \eqref{eq:BQktwistelement}, respectively, act trivially on the coordinate $a$. In particular, this implies that $K_{\underline{a}\vec b}^\F(u) = K_{\underline{a}}(u)$ gives a solution to \eqref{eq:twistedreflectioncondition}, such that $K(u)$ is also a solution to the twisted reflection equation for the local and boost deformation. One might use this solution to the reflection equation to study the long-range deformations on an open spin chain for the local and boost operators \cite{Beisert2009OpenLongrange, Loebbert2012Recursionopen, JiangLoebbertZhong2022Irreleventdefs}. For the deformation corresponding to the bilocal operators, however, it is currently unknown what the solution to the reflection equation would be.

    \item \textbf{Proving the conjecture.} As a side result, we found a conjecture for the explicit expression of the charge densities in terms of the algebraic charge densities. A possible future research direction would therefore be to prove Conjecture~\ref{conj:chargedensities}. Here, it is expected that the description for the higher-order corrections to the deformation of the quantum group requires additional identities for the algebraic charge densities. These additional identities might already be enough to prove the statement. 

    \item \textbf{AdS/CFT.} The second charge in the $\B[\mathbb{Q}_3]$ long-range deformation of the XXX Heisenberg spin chain appears as the two-loop dilatation operator for the $\mathfrak{su}(2)$ sector of planar $\mathcal{N}=4$ SYM. We therefore also provide a quantum group theoretical description for this subsector up to two-loop. By the AdS/CFT correspondence, one might also expect this same quantum group structure to appear in the (classical) string theory. However, it is currently unknown how to make the connection to the corresponding (super) Yangian algebra on the string theory side \cite{Beisert:2006fmy, BeisertdeLeeuwRTTreal2014}. Moreover, it is known that, in the comparison between the gauge and string theories, one should be careful with the order of limits \cite{Klebanov2002NewEffects, Serban2004Planar}. In particular, the thermodynamic limit and the expansion in the coupling constant do not commute. Interestingly, a similar phenomenon is observed on the quantum group level. In particular, as discussed in Section~\ref{sec:LaxforBQ3derivation}, the deformed monodromy matrix on the doubly-infinite spin chain is given by $\L^{\lambda}_{\underline{a},(...)} = \L_{\underline{a},(...)} + \lambda [\mathcal{X}(0), \L_{\underline{a},(...)}] + \O(\lambda^2)$. Remark that, since the first-order term is given by a commutator, this monodromy matrix satisfies the RLL relation with the original undeformed $R$-matrix $\R(u,v)$. Alternatively, long-range deformations on finite spin chains are described by the deformed Lax operators $\L^{\lambda}_{(\underline{a}\vec b)n}$ for which there is a deformed $R$-matrix $\R^{\lambda}_{(\underline{a}\vec b)(\underline{c}\vec d)}$. We therefore immediately see two different descriptions for the long-range deformation on the doubly-infinite spin chain, depending on the order of the thermodynamic limit and the deformation. Namely, by first doing the thermodynamic limit and then the deformation, one obtains a monodromy matrix of which the corresponding $R$-matrix is given by the undeformed $\R(u,v)$. If one first performs the deformation and then takes the thermodynamic limit, one obtains a monodromy matrix of which the corresponding $R$-matrix is given by the deformed $\R^{\lambda}_{(\underline{a}\vec b)(\underline{c}\vec d)}$. It is expected that the second description is the correct one for the correspondence with string theory, as it is this description that also includes the wrapping corrections.
\end{itemize}

\section*{Acknowledgements}
We would like to thank N. Beisert, T. Gombor and A. Torrielli for comments on the draft of this paper. We would also like to thank M. Bartova, F. Loebbert and A.L. Retore for useful discussions and comments. MdL and KS are supported by ERC-2022-CoG-FAIM 101088193. MdL was also supported by  SFI and the Royal Society for funding under grants UF160578, RGF$\backslash $ R1$\backslash $181011, RGF$\backslash$8EA$\backslash $180167 and RF$\backslash $ ERE $\backslash $210373.

\begin{appendices}
    \section{Proofs}\label{sec:proofsfordeformation}
    In this appendix, we will provide all the proofs for the statements made in the main text in Sections~\ref{sec:algebraicchargedensities} and \ref{sec:deformationFRTalgebra}.

    \subsection{Proofs for the algebraic charge densities}\label{sec:proofsforalgebraicchargedensities}
    We will first consider the proofs of the various propositions and lemmas that are given in Section~\ref{sec:algebraicchargedensities}. Firstly, we consider the proof for Proposition~\ref{prop:coproductChargeDensities}:
    \begin{proof}[\textbf{Proof of Proposition~\ref{prop:coproductChargeDensities}}]
        First note that $\Q^{(2)}_{1(23)} = \R^{-1}_{12}\Q^{(2)}_{13}\R_{12} + \Q^{(2)}_{12}$ and $\tilde \Q^{(2)}_{(12)3} = \R^{-1}_{23}\tilde \Q^{(2)}_{13}\R_{23} + \tilde \Q^{(2)}_{23}$, such that the statement is true for $k=2$. The higher-order charges are then defined by $\Q^{(k)}_{12} = \D_1^{k-2}(\Q^{(2)}_{12})$ and $\tilde \Q^{(k)}_{12} = \D_2^{k-2}(\tilde \Q^{(2)}_{12})$ for $k\geq 3$. Therefore, we can calculate the coproducts of the higher-order charges by repeated application of the derivations on $\Q^{(2)}_{1(23)}$. Note here that
        \begin{equation*}
            \D_1\left(\R^{-1}_{12}\Q^{(k)}_{13}\R_{12}\right) = \R^{-1}_{12}\Q^{(k+1)}_{13}\R_{12} + [\R^{-1}_{12}\Q^{(k)}_{13}\R_{12}, \Q^{(2)}_{12}],
        \end{equation*}
        and similarly
        \begin{equation*}
            \D_3\left(\R^{-1}_{23}\tilde \Q^{(k)}_{13}\R_{23}\right) = \R^{-1}_{23}\tilde \Q^{(k+1)}_{13}\R_{23} + [\tilde \Q^{(2)}_{23}, \R^{-1}_{23}\tilde \Q^{(k)}_{13}\R_{23}].
        \end{equation*}
        By repeated application of the derivation, this then immediately shows that the coproducts $\Q^{(k)}_{1(23)}$ and $\tilde \Q^{(k)}_{(12)3}$ must be of the form
        \begin{equation*}
            \label{eq:intermediatecoproducuQnrule}
            \begin{split}
                \Q^{(k)}_{1(23)} &= \Q^{(k)}_{12} + \R^{-1}_{12}\Q^{(k)}_{13}\R_{12}\\
                &\quad + \sum_{n=2}^{k-1}\ \sum_{\ell_1+\cdots+\ell_n = k-1}c^{k-1}_{\ell_1,...,\ell_n}\left[\left[\cdots \left[\left[\R^{-1}_{12}\Q^{(\overline{\ell_n})}_{13}\R_{12}, \Q^{(\overline{\ell_1})}_{12}\right],\Q^{(\overline{\ell_2})}_{12}\right],\cdots \right], \Q^{(\overline{\ell_{n-1}})}_{12}\right],
            \end{split}
        \end{equation*}
        and
        \begin{equation*}
            \begin{split}
                \tilde \Q^{(k)}_{(12)3} &= \tilde \Q^{(k)}_{23} + \R^{-1}_{23}\tilde \Q^{(k)}_{13}\R_{23}\\
                &\quad + \sum_{n=2}^{k-1}\ \sum_{\ell_1+\cdots+\ell_n = k-1}\tilde c^{k-1}_{\ell_1,...,\ell_n}\left[\tilde \Q^{(\overline{\ell_{n-1}})}_{23}, \left[\tilde \Q^{(\overline{\ell_{n-2}})}_{23},\left[\cdots , \left[\tilde \Q_{23}^{(\overline{\ell_1})}, \R^{-1}_{23}\tilde \Q_{13}^{(\overline{\ell_n})}\R_{23}\right]\cdots\right]\right]\right],
            \end{split}
        \end{equation*}
        where $\overline{\ell_i} = \ell_i+1$ and $\ell_i \geq 1$ for some coefficients $c^k_{\ell_1,...,\ell_n} \in \Z_{\geq 0}$ and $\tilde c^k_{\ell_1,...,\ell_n} \in \Z_{\geq 0}$. These coefficients must now satisfy some recurrence relations. Firstly, note here that
        \begin{equation*}
            \begin{split}
                &\D_1\left(\left[\left[\cdots \left[\left[\R^{-1}_{12}\Q^{(\overline{\ell_n})}_{13}\R_{12}, \Q^{(\overline{\ell_1})}_{12}\right],\Q^{(\overline{\ell_2})}_{12}\right],\cdots \right], \Q^{(\overline{\ell_{n-1}})}_{12}\right]\right)\\
                &= \left[\left[\cdots \left[\left[\left[\R^{-1}_{12}\Q^{(\overline{\ell_n})}_{13}\R_{12}, \Q^{(\overline{1})}_{12}\right],\Q^{(\overline{\ell_1})}_{12}\right], \Q^{(\overline{\ell_2})}_{12}\right],\cdots \right], \Q^{(\overline{\ell_{n-1}})}_{12}\right]\\
                &\quad +\sum_{i=1}^{n}\left[\left[\cdots\left[\left[\cdots \left[\left[\R^{-1}_{12}\Q^{(\overline{\ell_n})}_{13}\R_{12}, \Q^{(\overline{\ell_1})}_{12}\right],\Q^{(\overline{\ell_2})}_{12}\right],\cdots \right],\Q^{(\overline{\ell_i}+1)}_{12}\right],\cdots\right], \Q^{(\overline{\ell_{n-1}})}_{12}\right].
            \end{split}
        \end{equation*}
        Using this relation, it can be seen that the coefficient $c^{k-1}_{\ell_1,..., \ell_n}$ can be obtained from summing over all coefficients $c^{k-2}_{\ell_1,..., \ell_i-1,...,\ell_n}$ for $1\leq i\leq n$ if $\ell_1 \neq 1$. If $\ell_1 = 1$, then there is an additional contribution to the coefficient $c^{k-1}_{1,\ell_2,\ell_3,...,\ell_n}$, which comes from the factor corresponding to the coefficient $c^{k-2}_{\ell_2,\ell_3,...,\ell_n}$ after taking the derivation in $\Q^{(k-1)}_{1(23)}$ or $\tilde \Q^{(k-1)}_{(12)3}$. This then shows that the coefficients must satisfy the recurrence relation 
        \begin{equation}
            \label{eq:coproductrecurrencerelation}
            c^k_{\ell_1,...,\ell_n} = \sum_{i=1}^{n}c^{k-1}_{\ell_1,...,\ell_i-1,...,\ell_n}.
        \end{equation}
        where $c^k_k = 1$, $c^k_{0,\ell_2,...,l_n} = c^{k}_{\ell_2,...,\ell_n}$ and $c^k_{\ell_1,...,\ell_n} = 0$ if $\ell_i = 0$ for some $2\leq i\leq n$. It can be easily seen that the same recurrence relation holds for the coefficients $\tilde c^{k-1}_{\ell_1,..., \ell_n}$, such that  $\tilde c^{k-1}_{\ell_1,..., \ell_n} = c^{k-1}_{\ell_1,..., \ell_n}$.\\

        We will now show that this recurrence relation is solved by
        \begin{equation}
            \label{eq:recurrencerelationsolution}
            c^k_{\ell_1,...,\ell_n} = \prod_{m=2}^n{\sum_{i=1}^{m}\ell_i  - 1\choose\sum_{i=1}^{m-1}\ell_i},
        \end{equation}
        for $n\geq 2$ and $\ell_i \geq 1$ for $2\leq i \leq n$. Remember here that $c^k_k = 1$. Firstly, since ${a\choose 0} = 1$ for any $a\in \N$, it can be directly noted that $c^{k}_{0,\ell_2...,\ell_n} = c^k_{\ell_2,...,\ell_n}$. Moreover, if there is some $2\leq m\leq n$ such that $\ell_m = 0$, then \eqref{eq:recurrencerelationsolution} contains a term
        \begin{equation}
            {\sum_{i=1}^{m}\ell_i  - 1\choose\sum_{i=1}^{m-1}\ell_i} = {\sum_{i=1}^{m-1}\ell_i  - 1\choose\sum_{i=1}^{m-1}\ell_i} = 0,
        \end{equation}
        since the top coefficient is now smaller than the bottom coefficient. Therefore, we also have $c^{k}_{\ell_1,..., \ell_n} = 0$ if $\ell_m = 0$ for some $2\leq m\leq n$. It can be noted that $c^k_{\ell_1,..., \ell_n} = {k-1\choose k-\ell_n}c^{k-\ell_n}_{\ell_1,..., \ell_{n-1}}$. Moreover, recall that the binomial coefficients satisfy Pascal's identity
        \begin{equation}
            \label{eq:binomialrecurrencerelation}
            {n\choose k} = {n-1 \choose k-1} + {n-1 \choose k}.
        \end{equation}
        we will now prove that \eqref{eq:recurrencerelationsolution} solves the recurrence relation \eqref{eq:coproductrecurrencerelation} using induction. Firstly, for $n=2$, we have $c^k_{\ell_1, \ell_2} = {k-1\choose \ell_1}$, with $\ell_1 + \ell_2 = k$, such that it follows immediately from \eqref{eq:binomialrecurrencerelation} that $c^{k}_{\ell_1, \ell_2} = c^{k-1}_{\ell_1-1, \ell_2} + c^{k-1}_{\ell_1, \ell_2-1}$. Next, let $n\geq 2$, and assume that the recurrence relation holds for all $m\leq n$. We then have
        \begin{equation*}
            \begin{split}
                c^k_{\ell_1,..., \ell_n} &= {k-1\choose k-\ell_n}c^{k-\ell_n}_{\ell_1,..., \ell_{n-1}} = \left[{(k-1)-1\choose (k-1)-\ell_n} + {(k-1)-1\choose (k-1)-(\ell_n-1)}\right]c^{k-\ell_n}_{\ell_1,..., \ell_{n-1}}\\
                &= \sum_{i=1}^{n-1}{(k-1)-1\choose (k-1)-\ell_n}c^{(k-1)-\ell_n}_{\ell_1,...,\ell_{i}-1, ...,\ell_{n-1}} + {(k-1)-1\choose (k-1)-(\ell_n-1)}c^{(k-1)-(\ell_n-1)}_{\ell_1,..., \ell_{n-1}}\\
                &= \sum_{i=1}^{n}c^{k-1}_{\ell_1,..., \ell_i-1,..., \ell_n}.
            \end{split}
        \end{equation*}
        Therefore, by induction, the coefficients $c^{k}_{\ell_1,..., \ell_n}$ as given in \eqref{eq:recurrencerelationsolution} indeed solve the recurrence relation \eqref{eq:coproductrecurrencerelation}, which then proves the proposition.
    \end{proof}

    Next, we give a proof of Lemma~\ref{lemma:simplifycharges}:
    
    \begin{proof}[\textbf{Proof of Lemma~\ref{lemma:simplifycharges}}]
        We will prove the Lemma using induction and only for $\Q^{(k)}$, as the case for $\tilde \Q^{(k)}$ is completely analogous. First, for $k = 3$, it follows from \eqref{eq:coproductsQ3} that
        \begin{equation*}
            \Q^{(3)}_{1(\underline{\vec a}2\underline{\vec b})} - \R^{-1}_{1\underline{\vec a}}\Q^{(3)}_{1(2\underline{\vec b})}\R_{1\underline{\vec a}} = [\R^{-1}_{1\underline{\vec a}}\Q^{(2)}_{1(2\underline{\vec b})}\R_{1\underline{\vec a}}, \Q^{(2)}_{1\underline{\vec a}}] + \Q^{(3)}_{1\underline{\vec a}}.
        \end{equation*}
        From \eqref{eq:coproductCharge2}, it follows that $\Q^{(2)}_{1(2\underline {\vec b})} = \Q^{(2)}_{12} + \R^{-1}_{12}\Q^{(2)}_{1\underline{\vec b}}\R_{12} = \Q^{(2)}_{12} + \Q^{(2)}_{2\underline {\vec b}}$, where we used that $\R_{12} = \P_{12}$ for evaluation-zero representations. Since $[\R^{-1}_{1\underline{\vec a}}\Q^{(2)}_{2\underline{\vec b}}\R_{1\underline{\vec a}}, \Q^{(2)}_{1\underline{\vec a}}] = 0$, it follows immediately that
        \begin{equation*}
            \Q^{(3)}_{1(\underline{\vec a}2\underline{\vec b})} - \R^{-1}_{1\underline{\vec a}}\Q^{(3)}_{1(2\underline{\vec b})}\R_{1\underline{\vec a}} = \Q^{(3)}_{1(\underline{\vec a}2)} - \R^{-1}_{1\underline{\vec a}}\Q^{(3)}_{12}\R_{1\underline{\vec a}}.
        \end{equation*}
        Now, let $k>3$ and assume that
        \begin{equation*}
            \Q^{(n)}_{1(\underline{\vec a},2,...,n-1,\underline{\vec b})} - \R^{-1}_{1\underline{\vec a}}\Q^{(n)}_{1(2,...,n-1, \underline{\vec b})}\R_{1\underline{\vec a}} = \Q^{(n)}_{1(\underline{\vec a},2,...,n-1)} - \R^{-1}_{1\underline{\vec a}}\Q^{(n)}_{1(2,...,n-1)}\R_{1\underline{\vec a}},
        \end{equation*}
        for all $3\leq n<k$. From Proposition~\ref{prop:coproductChargeDensities} it follows that
        \begin{equation*}
            \begin{split}
                &\Q^{(k)}_{1(\underline{\vec a},2,...,k-1,\underline{\vec b})} - \R^{-1}_{1\underline{\vec a}}\Q^{(k)}_{1(2,...,k-1, \underline{\vec b})}\R_{1\underline{\vec a}} = \\
                &\Q^{(k)}_{1\underline{\vec a}}+ \sum_{m=2}^{k-1}\ \sum_{\ell_1+\cdots+\ell_m = k-1}c^{k-1}_{\ell_1,...,\ell_m}\left[\left[\cdots \left[\R^{-1}_{1\underline{\vec a}}\Q^{(\overline{\ell_m})}_{1(2,...,k-1,\underline{\vec b})}\R_{1\underline{\vec a}}, \Q^{(\overline{\ell_1})}_{1\underline{\vec a}}\right],\cdots \right], \Q^{(\overline{\ell_{m-1}})}_{1\underline{\vec a}}\right].
            \end{split}
        \end{equation*}
        Note here that $\overline{\ell_m} < k$ in the sum on the right side. Therefore, by the induction hypothesis, we have
        \begin{equation*}
            \begin{split}
                \Q^{(\overline{\ell_m})}_{1(2,...,k-1,\underline{\vec b})} &= \Q^{(\overline{\ell_m})}_{1(2,...,k-1)} + \R^{-1}_{12}\left(\Q^{(\overline{\ell_m})}_{1(3,...,k-1,\underline{\vec b})} - \Q^{(\overline{\ell_m})}_{1(3,...,k-1)}\right)\R_{12}\\
                &= \Q^{(\overline{\ell_m})}_{1(2,...,k-1)} + \left(\Q^{(\overline{\ell_m})}_{2(3,...,k-1,\underline{\vec b})} - \Q^{(\overline{\ell_m})}_{2(3,...,k-1)}\right),
            \end{split}
        \end{equation*}
        where was again used that $\R_{12} = \P_{12}$. Note here that the last term in the above equation commutes with $\R_{1\underline{\vec a}}$ and with $\Q^{(\overline{\ell_2})}_{1\underline{\vec a}}$. In particular, this gives
        \begin{equation*}
            \left[\R^{-1}_{1\underline{\vec a}}\Q^{(\overline{\ell_m})}_{1(2,...,k-1,\underline{\vec b})}\R_{1\underline{\vec a}}, \Q^{(\overline{\ell_1})}_{1\underline{\vec a}}\right] = \left[\R^{-1}_{1\underline{\vec a}}\Q^{(\overline{\ell_m})}_{1(2,...,k-1)}\R_{1\underline{\vec a}}, \Q^{(\overline{\ell_1})}_{1\underline{\vec a}}\right].
        \end{equation*}
        It now immediately follows that also
        \begin{equation*}
            \Q^{(k)}_{1(\underline{\vec a},2,...,k-1,\underline{\vec b})} - \R^{-1}_{1\underline{\vec a}}\Q^{(k)}_{1(2,...,k-1, \underline{\vec b})}\R_{1\underline{\vec a}} = \Q^{(k)}_{1(\underline{\vec a},2,...,k-1)} - \R^{-1}_{1\underline{\vec a}}\Q^{(k)}_{1(2,...,k-1)}\R_{1\underline{\vec a}}.
        \end{equation*}
        The result for $\Q^{(k)}$ of the Lemma now follows from induction. The result for $\tilde \Q^{(k)}$ is completely analogous and follows mutatis mutandis.
    \end{proof}

    Next, we will consider an additional lemma for the reduced algebraic charge densities that will be used in the proofs for the long-range deformation of the FRT-bialgebra in Section~\ref{sec:deformationFRTalgebra}.
    
    \begin{lemma}\label{lem:simplifyreducedchargedensities}
        For $k\geq 2$ and $n\geq k-1$, we have
        \begin{equation*}
            \begin{split}
                {}^r\Q^{(k)}_{1,\underline{\vec a \vec b},(2,...,n)} &= \R^{-1}_{\underline{\vec b}(2,...,n)}{}^r\Q^{(k)}_{1,\underline{\vec a},(2,...,n)}\R_{\underline{\vec b}(2,...,n)} + \R^{-1}_{1\underline{\vec a}}{}^r\Q^{(k)}_{1,\underline{\vec b},(2,...,n)}\R_{1\underline{\vec a}},\\
                {}^r\tilde \Q^{(k)}_{(1,...,n-1),\underline{\vec a\vec b}, n} &= \R^{-1}_{(1,...,n-1)\underline{\vec a}}{}^r\tilde \Q^{(k)}_{(1,...,n-2),\underline{\vec b}, n}\R_{(1,...,n-1)\underline{\vec a}} + \R^{-1}_{\underline{\vec b}n}{}^r\tilde \Q^{(k)}_{(1,...,n-1),\underline{\vec a}, n}\R_{\underline{\vec b} n}.
            \end{split}
        \end{equation*}
        \begin{proof}
            Let $k\geq 2$ and $n\geq k-1$. It then follows from the definition of the reduced algebraic charge density ${}^r\Q^{(k)}$ as given in Definition~\ref{def:reducedalgebraicchargedensities} and Lemma~\ref{lemma:simplifycharges} that
            \begin{equation*}
                \begin{split}
                    {}^r\Q^{(k)}_{1,\underline{\vec a \vec b},(2,...,n)} &= \Q^{(k)}_{1(\underline{\vec a \vec b},2,...,n)} - \R^{-1}_{1(\underline{\vec a\vec b})}\Q^{(k)}_{1(2,...,n)}\R_{1(\underline{\vec a\vec b})}\\
                    &= \R^{-1}_{\underline{\vec b}(2,...,n)}\Q^{(k)}_{1(\underline{\vec a},2,...,n,\underline{\vec b})}\R_{\underline{\vec b}(2,...,n)} - \R^{-1}_{1(\underline{\vec a\vec b})}\Q^{(k)}_{1(2,...,n)}\R_{1(\underline{\vec a\vec b})}\\
                    &= R^{-1}_{\underline{\vec b}(2,...,n)}\left(\Q^{(k)}_{1(\underline{\vec a},2,...,n)} + \R^{-1}_{1\underline{\vec a}}\Q^{(k)}_{1(2,...,n,\underline{\vec b})}\R_{1\underline{\vec a}} - \R^{-1}_{1\underline{\vec a}}\Q^{(k)}_{1(2,....,n)}\R_{1\underline{\vec a}}\right)\R_{\underline{\vec b}(2,...,n)}\\
                    &\quad - \R^{-1}_{1(\underline{\vec a\vec b})}\Q^{(k)}_{1(2,...,n)}\R_{1(\underline{\vec a\vec b})}\\
                    &= \R^{-1}_{\underline{\vec b}(2,...,n)}\left(\Q^{(k)}_{1(\underline{\vec a},2,...,n)} - \R^{-1}_{1\underline{\vec a}}\Q^{(k)}_{1(2,....,n)}\R_{1\underline{\vec a}}\right)\R_{\underline{\vec b}(2,...,n)}\\
                    &\quad + \R^{-1}_{1\underline{\vec a}}\left(\Q^{(k)}_{1(\underline{\vec b},2,...,n)} - \R^{-1}_{1\underline{\vec b}}\Q^{(k)}_{1(2,...,n)}\R_{1\underline{\vec b}}\right)\R_{1\underline{\vec a}}\\
                    &= \R^{-1}_{\underline{\vec b}(2,...,n)}{}^r\Q^{(k)}_{1,\underline{\vec a},(2,...,n)}\R_{\underline{\vec b}(2,...,n)} + \R^{-1}_{1\underline{\vec a}}{}^r\Q^{(k)}_{1,\underline{\vec b},(2,...,n)}\R_{1\underline{\vec a}}.
                \end{split}
            \end{equation*}
            The derivation for the reduced algebraic charge density ${}^r\tilde{\Q}^{(k)}$ is completely analogous, thus proving the statement.
        \end{proof}
    \end{lemma}

    Lastly, we will also explicitly write down the proof for Corollary~\ref{cor:commutingcharges}:

    \begin{proof}[\textbf{Proof for Corollary~\ref{cor:commutingcharges}}]
        The proof follows immediately from Proposition~\ref{prop:higherorderSutherlandEquations}. In particular, one has
        \begin{equation*}
            \begin{split}
                \left[\mathbb{Q}^{(k)}, \left(\overleftarrow{\prod}_{m\in \Z} \L_{am}(u)\right)\right] &= \mathrm{tr}_a\left(\sum_{n \in \Z}[\mathfrak{q}^{(k)}_{n,...,n+k-1}, \overleftarrow{\prod}_{m\in \Z} \L_{am}(u)]\right)\\
                &= \sum_{n\in \Z}\mathrm{tr}_a\bigg(\L_{a(n+1,...)}{}^r\Q^{(k)}_{(n,...,n+k-3),a,n+k-2}\L_{a(...,n)}\\
                &\qquad \qquad - \L_{a(n+2,...)}{}^r\Q^{(k)}_{(n,...,n+k-2),a,n+k-1}\L_{a(...,n+1)}\bigg)= 0,
            \end{split}
        \end{equation*}
        where the sum vanishes due to it being a telescoping sum over $\Z$.
    \end{proof}

    \subsection{Proofs for the local deformation}\label{sec:proofsforlocaldeformation}
    This section will be devoted to proving Proposition~\ref{prop:vanishingDrinfeldassociatorlocaldef}. Before we can prove this proposition, we will first consider the following intermediate result:
    \begin{lemma}\label{lem:localdeftwistproperties}
        Let $k\geq 2$ and $\mathfrak{m} \in \mathrm{End}(V^{\otimes k})$, then the twist element $\gamma^{(1)}_\mathfrak{m}$ as defined in \eqref{eq:twistlocaldeformation} satisfies
        \begin{equation*}
            \begin{split}
                \gamma^{(1)}_\mathfrak{m}(a\otimes b\cdot b', c\otimes d) &=  \epsilon(b)\gamma^{(1)}_\mathfrak{m}(a\otimes b', c\otimes d)\\
                &\quad + \sum \mathbf{r}^{-1}(b'_{(1)}, c_{(1)})\gamma^{(1)}_\mathfrak{m}(a\otimes b, c_{(2)}\otimes b'_{(2)}\cdot d)\mathbf{r}(b'_{(3)}, c_{(3)}),\\
                \gamma^{(1)}_\mathfrak{m}(a\otimes b, c\cdot c'\otimes d) &= \sum \mathbf{r}^{-1}(b_{(1)}, c_{(1)})\gamma^{(1)}_\mathfrak{m}(a\cdot c_{(2)} \otimes b_{(2)}, c'\otimes d)\mathbf{r}(b_{(3)}, c_{(3)})\\
                &\quad + \sum \mathbf{r}^{-1}(c'_{(1)}, d_{(1)})\gamma^{(1)}_\mathfrak{m}(a\otimes b, c\otimes d_{(2)}) \mathbf{r}(c'_{(2)}, d_{(3)}),\\
                \gamma^{(1)}_\mathfrak{m}(a\otimes b, c\otimes d\cdot d') &= \gamma^{(1)}_{\mathfrak{m}}(a\otimes b, c\otimes d)\epsilon(d')
            \end{split}
        \end{equation*}
        for all $a,c,c'\in \A(\R)$ and $b,b,d,d' \in \A_0$ such that $c\otimes d \in \A_0 \oplus \left(\A^{(\geq 1)}(\R)\bowtie_{\mathbf{r}}\A_0^{(\geq k-1)}\right)$.
        \begin{proof}
            For the first equality, let $0< N < M < K$, then from the definition of $\gamma^{(1)}_{\mathfrak{m}}$ as given in \eqref{eq:twistlocaldeformation}, it follows that
            \begin{equation}
                \begin{split}
                    &\gamma^{(1)}_{\mathfrak{m}}(\T_{\underline{\vec a}}\otimes \T_1\cdots \T_M, \T_{\underline{\vec b}}\otimes \T_{M+1}\cdots \T_K)\\
                    &= \sum_{n=N+1}^{M}\R^{-1}_{\underline{\vec b}(M+1,...,K)}\mathfrak{m}_{n,...,n+k-1}\R_{\underline{\vec b}(M+1,...,K)} - \R^{-1}_{(1,...,M)\underline{\vec b}}\mathfrak{m}_{n,...,n+k-1}\R_{(1,...,M)\underline{\vec b}}\\
                    &\quad +\sum_{n=1}^{N}\R^{-1}_{\underline{\vec b}(M+1,...,K)}\mathfrak{m}_{n,...,n+k-1}\R_{\underline{\vec b}(M+1,...,K)} - \R^{-1}_{(1,...,M)\underline{\vec b}}\mathfrak{m}_{n,...,n+k-1}\R_{(1,...,M)\underline{\vec b}}\\
                    &= \gamma^{(1)}_{\mathfrak{m}}(\T_{\underline{\vec a}}\otimes \T_{N+1}\cdots\T_M, \T_{\underline{\vec b}}\otimes \T_{M+1}\cdots \T_K)\\
                    &\quad +\R^{-1}_{(N+1,...,M)\underline{\vec b}}\gamma^{(1)}_{\mathfrak{m}}(\T_{\underline{\vec a}}\otimes \T_{1}\cdots \T_N, \T_{\underline{\vec b}}\otimes \T_{N+1}\cdots \T_K)\R_{(N+1,...,M)\underline{\vec b}},
                \end{split}
            \end{equation}
            where we used that $[\mathfrak{m}_{n,...,n+k-1}, \R_{(1,...,N)\underline{\vec b}}] = 0$ for $n \geq N+1$ in the second equality. This then proves the first equality.\\

            For the second equality, let $N > 0$ and $M > N + k-2$, then
            \begin{equation}
                \begin{split}
                    &\gamma^{(1)}_{\mathfrak{m}}(\T_{\underline{\vec a}}\otimes \T_1\cdots \T_N, \T_{\underline{\vec b}}\T_{\underline{\vec c}}\otimes \T_{N+1}\cdots \T_M)\\
                    &= \sum_{n=1}^{N}\R^{-1}_{(\underline{\vec b}\underline{\vec c})(N+1,...,M)}\mathfrak{m}_{n,...,n+k-1}\R_{(\underline{\vec b}\underline{\vec c})(N+1,...,M)} - \R^{-1}_{(1,...,N)(\underline{\vec b}\underline{\vec c})}\mathfrak{m}_{n,...,n+k-1}\R_{(1,...,N)(\underline{\vec b}\underline{\vec c})}\\
                    &= \R^{-1}_{(1,...,N)\underline{\vec b}}\bigg[\sum_{n=1}^{N}\R^{-1}_{\underline{\vec c}(N+1,...,M)}\mathfrak{m}_{n,...,n+k-1}\R_{\underline{\vec c}(N+1,...,M)}\\
                &\qquad\qquad\qquad\qquad- \R^{-1}_{(1,...,N)\underline{\vec c}}\mathfrak{m}_{n,...,n+k-1}\R_{(1,...,N)\underline{\vec c}}\bigg]\R_{(1,...,N)\underline{\vec b}}\\
                &+\R^{-1}_{\underline{\vec c}(N+1,...,M)}\bigg[\sum_{n=1}^{N}\R^{-1}_{\underline{\vec b}(N+1,...,M)}\mathfrak{m}_{n,...,n+k-1}\R_{\underline{\vec b}(N+1,...,M)}\\
                &\qquad\qquad\qquad\qquad- \R^{-1}_{(1,...,N)\underline{\vec b}}\mathfrak{m}_{n,...,n+k-1}\R_{(1,...,N)\underline{\vec b}}\bigg]\R_{\underline{\vec c}(N+1,...,M)}\\
                &= \R^{-1}_{(1,...,N)\underline{\vec b}}\gamma^{(1)}_{\mathfrak{m}}(\T_{\underline{\vec a}}\T_{\underline{\vec b}}\otimes \T_{1}\cdots \T_N, \T_{\underline{\vec c}}\otimes \T_{N+1}\cdots \T_M)\R_{(1,...,N)\underline{\vec b}}\\
                &\quad+ \R^{-1}_{\underline{\vec c}(N+1,...,M)}\gamma^{(1)}_{\mathfrak{m}}(\T_{\underline{\vec a}}\otimes \T_{1}\cdots \T_N, \T_{\underline{\vec b}}\otimes \T_{N+1}\cdots \T_M)\R_{\underline{\vec c}(N+1,...,M)},
                \end{split}
            \end{equation}
            where in the second equality, it can be seen that the first term in the sum cancels the fourth term in the sum. Note here that the equality would still hold for $\T_{\underline{\vec b}}\to 1$ for any $M > N$, thus proving the second equality.\\

            For the last equality, let $N>0$, $M > N+k-2$ and $ K > M$, then we have
            \begin{equation}
                \begin{split}
                    &\gamma^{(1)}_{\mathfrak{m}}(\T_{\underline{\vec a}}\otimes \T_1\cdots \T_N, \T_{\underline{\vec b}}\otimes \T_{N+1}\cdots \T_K)\\
                    &= \sum_{n=1}^{N}\R^{-1}_{\underline{\vec b}(N+1,...,K)}\mathfrak{m}_{n,...,n+k-1}\R_{\underline{\vec b}(N+1,...,K)} - \R^{-1}_{(1,...,N)\underline{\vec b}}\mathfrak{m}_{n,...,n+k-1}\R_{(1,...,N)\underline{\vec b}}\\
                    &= \sum_{n=1}^{N}\R^{-1}_{\underline{\vec b}(N+1,...,M)}\mathfrak{m}_{n,...,n+k-1}\R_{\underline{\vec b}(N+1,...,M)} - \R^{-1}_{(1,...,N)\underline{\vec b}}\mathfrak{m}_{n,...,n+k-1}\R_{(1,...,N)\underline{\vec b}}\\
                    &=\gamma^{(1)}_{\mathfrak{m}}(\T_{\underline{\vec a}}\otimes \T_1\cdots \T_N, \T_{\underline{\vec b}}\otimes \T_{N+1}\cdots \T_M)\epsilon(\T_{M+1}\cdots \T_K),
                \end{split}
            \end{equation}
            where in the second equality we used that $n+k-1 \leq N+k-1 < M+1$, such that $[\mathfrak{m}_{n,...,n+k-1}, \R_{\underline{\vec b}(M+1,...,K)}] = 0$. Also, remark that the statement is trivially true under the substitution $\T_{\underline{\vec b}}\to 1$ and for any $0 < N < M <K$, thus proving the last equality and therefore the statement.
        \end{proof}
    \end{lemma}

    We can now prove Proposition~\ref{prop:vanishingDrinfeldassociatorlocaldef}.

    \begin{proof}[\textbf{Proof of Proposition~\ref{prop:vanishingDrinfeldassociatorlocaldef}}]
        Let $a\otimes b,c\otimes d,e\otimes f\in \A_0 \oplus \left(\A^{(\geq 1)}(\R)\bowtie_{\mathbf{r}}\A_0^{(\geq k-1)}\right)$, then
        \begin{equation}
            \begin{split}
                \varphi^{(1)}_\mathfrak{m}(a\otimes b, c\otimes d, e\otimes f) &= \gamma^{(1)}_\mathfrak{m}(c\otimes d, e\otimes f)\epsilon(a)\epsilon(b) + \gamma^{(1)}_\mathfrak{m}(a\otimes b, (c\otimes d)(e\otimes f))\\
                & \quad - \gamma^{(1)}_\mathfrak{m}((a\otimes b)(c\otimes d), e\otimes f) - \gamma^{(1)}_\mathfrak{m}(a\otimes b, c\otimes d)\epsilon(e)\epsilon(f).
            \end{split}
        \end{equation}
        From Lemma~\ref{lem:localdeftwistproperties}, it now follows that
        \begin{equation}
            \begin{split}
                &\gamma^{(1)}_\mathfrak{m}(a\otimes b, (c\otimes d)(e\otimes f)) = \sum \mathbf{r}^{-1}(d_{(1)}, e_{(1)})\gamma^{(1)}_\mathfrak{m}(a\otimes b, c\cdot e_{(2)}\otimes d_{(2)}\cdot f)\mathbf{r}(d_{(3)}, e_{(3)})\\
                &= \sum \mathbf{r}^{-1}(d_{(1)}, e_{(1)})\mathbf{r}^{-1}(b_{(1)}, c_{(1)})\gamma^{(1)}_\mathfrak{m}(a\cdot c_{(2)}\otimes b_{(2)}, e_{(2)}\otimes d_{(2)}f)\mathbf{r}(b_{(3)}, c_{(3)})\mathbf{r}(d_{(3)}, e_{(3)})\\
                &\quad +\sum\mathbf{r}^{-1}(e_{(1)}, f_{(1)})\gamma^{(1)}_\mathfrak{m}(a\otimes b, c\otimes d\cdot f_{(2)})\mathbf{r}(e_{(2)},f_{(3)})\\
                &= \sum \mathbf{r}^{-1}(d_{(1)}, e_{(1)})\mathbf{r}^{-1}(b_{(1)}, c_{(1)})\gamma^{(1)}_\mathfrak{m}(a\cdot c_{(2)}\otimes b_{(2)}, e_{(2)}\otimes d_{(2)}f)\mathbf{r}(b_{(3)}, c_{(3)})\mathbf{r}(d_{(3)}, e_{(3)})\\
                &\quad+\gamma^{(1)}_\mathfrak{m}(a\otimes b, c\otimes d)\epsilon(e)\epsilon(f),
            \end{split}
        \end{equation}
        where in the second equality we used that $\sum \mathbf{r}(e_{(1)}, d_{(1)})\mathbf{r}(d_{(2)}, e_{(2)}) = \epsilon(e)\epsilon(d)$ since $\mathbf{r}$ is cotriangular. Similarly, we have
        \begin{equation}
            \begin{split}
                &\gamma^{(1)}_\mathfrak{m}((a\otimes b)(c\otimes d), e\otimes f) = \sum \mathbf{r}^{-1}(b_{(1)}, c_{(1)})\gamma_\mathfrak{m}^{(1)}(a\cdot c_{(2)}\otimes b_{(2)}\cdot d, e\otimes f)\mathbf{r}(b_{(3)}, c_{(3)})\\
                &= \sum \mathbf{r}^{-1}(b_{(1)}, c_{(1)})\mathbf{r}^{-1}(d_{(1)},e_{(1)})\gamma_\mathfrak{m}^{(1)}(a\cdot c_{(2)}\otimes b_{(2)}, e_{(2)}\otimes d_{(2)}\cdot f)\mathbf{r}(d_{(3)}, e_{(3)})\mathbf{r}(b_{(3)}, c_{(3)})\\
                &\quad +\gamma_\mathfrak{m}^{(1)}(a\cdot c \otimes d, e\otimes f)\epsilon(b).
            \end{split}
        \end{equation}
        By combining all the terms, and using the fact that $\gamma^{(1)}_\mathfrak{m}(a\cdot c\otimes d, e\otimes f) = \gamma^{(1)}_\mathfrak{m}(c\otimes d, e\otimes f)\epsilon(a)$, we find that indeed the Drinfeld associator vanishes, i.e. $\varphi^{(1)}_\mathfrak{m}(a\otimes b, c\otimes d, e\otimes f) = 0$, thus proving the proposition.
    \end{proof}

    \subsection{Proofs for the boost deformation}\label{sec:proofsforboostdeformation}
    This section will be devoted to proving Proposition~\ref{prop:vanishingDrinfeldassociatorBQk}. In order to prove this proposition, we will use the following intermediate result: 

    \begin{lemma}\label{lemma:twistkproperties}
        Let $k\geq 3$, then the twist element $\gamma_k^{(1)}$ as defined in \eqref{eq:BQktwistelement} satisfies
        \begin{equation*}
            \begin{split}
                \gamma^{(1)}_k(a\otimes b\cdot b', c\otimes d) &=  \epsilon(b)\gamma^{(1)}_k(a\otimes b', c\otimes d)\\
                &\quad + \sum \mathbf{r}^{-1}(b'_{(1)}, c_{(1)})\gamma^{(1)}_k(a\otimes b, c_{(2)}\otimes b'_{(2)}\cdot d)\mathbf{r}(b'_{(3)}, c_{(3)}),\\
                \gamma^{(1)}_k(a\otimes b, c\cdot c'\otimes d) &= \sum \mathbf{r}^{-1}(b_{(1)}, c_{(1)})\gamma^{(1)}_k(a\cdot c_{(2)} \otimes b_{(2)}, c'\otimes d)\mathbf{r}(b_{(3)}, c_{(3)})\\
                &\quad + \sum \mathbf{r}^{-1}(c'_{(1)}, d_{(1)})\gamma^{(1)}_k(a\otimes b, c\otimes d_{(2)}) \mathbf{r}(c'_{(2)}, d_{(3)}),\\
                \gamma^{(1)}_k(a\otimes b, c\otimes d\cdot d') &= \gamma^{(1)}_k(a\otimes b, c\otimes d)\epsilon(d')
            \end{split}
        \end{equation*}
        for all $a,c,c'\in \A(\R)$ and $b,b,d,d' \in \A_0$ such that $c\otimes d \in \A_0 \oplus \left(\A^{(\geq 1)}(\R)\bowtie_{\mathbf{r}}\A_0^{(\geq k-2)}\right)$.
        \begin{proof}
            Let $0<N<M<K$. From the definition of the twist $\gamma_k^{(1)}$ as given in \eqref{eq:BQktwistelement}, it then follows that
            \begin{equation*}
                \begin{split}
                    &\gamma^{(1)}_k(\T_{\underline{\vec a}}\otimes \T_1\cdots \T_M, \T_{\underline{\vec b}}\otimes \T_{M+1}\cdots \T_K)\\
                    &=\sum_{n=N+1}^{M}\R^{-1}_{(n+1,...,M)\underline{\vec b}}{}^r\Q^{(k)}_{n,\underline{\vec b}, (n+1,...,K)}\R_{(n+1,...,M)\underline{\vec b}}\\
                    &\quad + \sum_{n=1}^{N}\R^{-1}_{(n+1,...,M)\underline{\vec b}}{}^r\Q^{(k)}_{n,\underline{\vec b}, (n+1,...,K)}\R_{(n+1,...,M)\underline{\vec b}}\\
                    & = \gamma_k^{(1)}(\T_{\underline{\vec a}}\otimes \T_{N+1}\cdots \T_M, \T_{\underline{\vec b}}\otimes \T_{M+1}\cdots \T_K) \\
                    &\quad + \R^{-1}_{(N+1,...,M)\underline{\vec b}}\sum_{n=1}^{N}\R^{-1}_{(n+1,...,N)\underline{\vec b}}{}^r \Q^{(k)}_{n,\underline{\vec b},(n+1,...,K)}\R_{(n+1,...,N)\underline{\vec b}}\R_{(N+1,...,M)\underline{\vec b}}\\
                    &= \gamma_k^{(1)}(\T_{\underline{\vec a}}\otimes \T_{N+1}\cdots \T_M, \T_{\underline{\vec b}}\otimes \T_{M+1}\cdots \T_K) \\
                    &\quad + \R^{-1}_{(N+1,...,M)\underline{\vec b}}\gamma^{(1)}_k(\T_{\underline{\vec a}}\otimes \T_{1}\cdots \T_N, \T_{\underline{\vec b}}\otimes \T_{N+1}\cdots \T_K)\R_{(N+1,...,M)\underline{\vec b}},
                \end{split}
            \end{equation*}
            which proves the first equality.\\
            
            For the second equality, let $N>0$ and $M> N+k-3$. Using Lemma~\ref{lem:simplifyreducedchargedensities}, we get
            \begin{equation}\label{eq:lemmaproofgammastep1}
                \begin{split}
                    &\gamma^{(1)}_k(\T_{\underline{\vec a}}\otimes \T_1\cdots \T_N, \T_{\underline{\vec b}}\T_{\underline{\vec c}}\otimes \T_{N+1}\cdots \T_M)\\
                    &= \sum_{n=1}^{N}\R^{-1}_{(n+1,...,N)(\underline{\vec b}\underline{\vec c})}{}^r\Q^{(k)}_{n, (\underline{\vec b\vec c}), (n+1,...,M)}\R_{(n+1,...,N)(\underline{\vec b \vec c})}\\
                    &= \R^{-1}_{(N+1,...,M)\underline{\vec c}}\sum_{n=1}^{N}\R^{-1}_{(n+1,...,N)\underline{\vec b}}{}^r\Q^{(k)}_{n, \underline{\vec b}, (n+1,...,M)}\R_{(n+1,...,N)\underline{\vec b}}\R_{(N+1,...,M)\underline{\vec c}}\\
                    &\quad + \R^{-1}_{(1,...,N)\underline{\vec b}}\sum_{n=1}^{N}\R^{-1}_{(n+1,...,N)\underline{\vec c}}{}^r\Q^{(k)}_{n, \underline{\vec c}, (n+1,...,M)}\R_{(n+1,...,N)\underline{\vec c}}\R_{(1,...,N)\underline{\vec b}}\\
                    &= \R^{-1}_{(N+1,...,M)\underline{\vec c}}\gamma^{(1)}_k(\T_{\underline{\vec a}}\otimes\T_{1}\cdots \T_N, \T_{\underline{\vec b}}\otimes \T_{N+1}\cdots \T_M)\R_{(N+1,...,M)\underline{\vec c}}\\
                    & \quad + \R^{-1}_{(1,...,N)\underline{\vec b}}\gamma^{(1)}_k(\T_{\underline{\vec a}}\T_{\underline{\vec b}}\otimes\T_{1}\cdots \T_N, \T_{\underline{\vec c}}\otimes \T_{N+1}\cdots \T_M)\R_{(1,...,N)\underline{\vec b}}.
                \end{split}
            \end{equation}
            It can be readily verified that the equality still holds under the substitution $\T_{\underline{\vec b}}\to 1$ and for all $0 < N < M$, thus proving the second equality.\\

            For the last equality, let $N  > 0$, $M > N+k-3$ and $K > M$. Using Lemma~\ref{lemma:simplifycharges}, it immediately follows that
            \begin{equation*}
                \begin{split}
                    &\gamma^{(1)}_k(\T_{\underline{\vec a}}\otimes \T_{1}\cdots \T_N, \T_{\underline{\vec b}}\otimes \T_{1}\cdots \T_K)\\
                    &= \sum_{n=1}^{N}\R^{-1}_{(n+1,...,N)\underline{\vec b}}{}^r\Q^{(k)}_{n, \underline{\vec b}, (n+1,...,K)}\R_{(n+1,...,N)\underline{\vec b}}\\
                    &=\sum_{n=1}^{N}\R^{-1}_{(n+1,...,N)\underline{\vec b}}{}^r\Q^{(k)}_{n, \underline{\vec b}, (n+1,...,M)}\R_{(n+1,...,N)\underline{\vec b}}\\
                    &= \gamma^{(1)}_k(\T_{\underline{\vec a}}\otimes \T_{1}\cdots \T_N, \T_{\underline{\vec b}}\otimes \T_{1}\cdots \T_M)\epsilon(\T_{M+1}\cdots \T_K).
                \end{split}
            \end{equation*}
            Note that this equality also holds trivially true in the case $\T_{\underline{\vec b}}\to 1$ with arbitrary $M > N$ and $K > M$. Thus, we prove the last equality and therefore the lemma.
        \end{proof}
    \end{lemma}

    We can now prove Proposition~\ref{prop:vanishingDrinfeldassociatorBQk}:

    \begin{proof}[\textbf{Proof of Proposition~\ref{prop:vanishingDrinfeldassociatorBQk}}]
        Let $a\otimes b,c\otimes d,e\otimes f\in \A_0 \oplus \left(\A^{(\geq 1)}(\R)\bowtie_{\mathbf{r}}\A_0^{(\geq k-2)}\right)$, then
        \begin{equation}
            \begin{split}
                \varphi^{(1)}_k(a\otimes b, c\otimes d, e\otimes f) &= \gamma^{(1)}_k(c\otimes d, e\otimes f)\epsilon(a)\epsilon(b) + \gamma^{(1)}_k(a\otimes b, (c\otimes d)(e\otimes f))\\
                & \quad - \gamma^{(1)}_k((a\otimes b)(c\otimes d), e\otimes f) - \gamma^{(1)}_k(a\otimes b, c\otimes d)\epsilon(e)\epsilon(f).
            \end{split}
        \end{equation}
        From Lemma~\ref{lemma:twistkproperties}, it now follows that
        \begin{equation}
            \begin{split}
                &\gamma^{(1)}_k(a\otimes b, (c\otimes d)(e\otimes f)) = \sum \mathbf{r}^{-1}(d_{(1)}, e_{(1)})\gamma^{(1)}_k(a\otimes b, c\cdot e_{(2)}\otimes d_{(2)}\cdot f)\mathbf{r}(d_{(3)}, e_{(3)})\\
                &= \sum \mathbf{r}^{-1}(d_{(1)}, e_{(1)})\mathbf{r}^{-1}(b_{(1)}, c_{(1)})\gamma^{(1)}_k(a\cdot c_{(2)}\otimes b_{(2)}, e_{(2)}\otimes d_{(2)}\cdot f)\mathbf{r}(b_{(3)}, c_{(3)})\mathbf{r}(d_{(3)}, e_{(3)})\\
                &\quad +\sum\mathbf{r}^{-1}(e_{(1)}, f_{(1)})\gamma^{(1)}_k(a\otimes b, c\otimes d\cdot f_{(2)})\mathbf{r}(e_{(2)},f_{(3)})\\
                &= \sum \mathbf{r}^{-1}(d_{(1)}, e_{(1)})\mathbf{r}^{-1}(b_{(1)}, c_{(1)})\gamma^{(1)}_k(a\cdot c_{(2)}\otimes b_{(2)}, e_{(2)}\otimes d_{(2)}\cdot f)\mathbf{r}(b_{(3)}, c_{(3)})\mathbf{r}(d_{(3)}, e_{(3)})\\
                &\quad+\gamma^{(1)}_k(a\otimes b, c\otimes d)\epsilon(e)\epsilon(f),
            \end{split}
        \end{equation}
        where in the second equality we used that $\sum \mathbf{r}(e_{(1)}, d_{(1)})\mathbf{r}(d_{(2)}, e_{(2)}) = \epsilon(e)\epsilon(d)$ since $\mathbf{r}$ is cotriangular. Similarly, we have
        \begin{equation}
            \begin{split}
                &\gamma^{(1)}_k((a\otimes b)(c\otimes d), e\otimes f) = \sum \mathbf{r}^{-1}(b_{(1)}, c_{(1)})\gamma_k^{(1)}(a\cdot c_{(2)}\otimes b_{(2)}\cdot d, e\otimes f)\mathbf{r}(b_{(3)}, c_{(3)})\\
                &= \sum \mathbf{r}^{-1}(b_{(1)}, c_{(1)})\mathbf{r}^{-1}(d_{(1)},e_{(1)})\gamma_k^{(1)}(a\cdot c_{(2)}\otimes b_{(2)}, e_{(2)}\otimes d_{(2)}\cdot f)\mathbf{r}(d_{(3)}, e_{(3)})\mathbf{r}(b_{(3)}, c_{(3)})\\
                &\quad +\gamma_k^{(1)}(a\cdot c \otimes d, e\otimes f)\epsilon(b).
            \end{split}
        \end{equation}
        By combining all the terms, and using the fact that $\gamma^{(1)}_k(a\cdot c\otimes d, e\otimes f) = \gamma^{(1)}_k(c\otimes d, e\otimes f)\epsilon(a)$, we find that indeed the Drinfeld associator vanishes, i.e. $\varphi^{(1)}_k(a\otimes b, c\otimes d, e\otimes f) = 0$, thus proving the proposition.
    \end{proof}

    \subsection{Proofs for the bilocal deformation}\label{sec:proofsforbilocal}
    This section will be devoted to proving Proposition~\ref{prop:vanishingdrinfeldassociatorbilocal}. In order to break up the proof in more digestible parts, we will write the twist $\gamma^{(1)}_{k|\ell}$ as given in \eqref{eq:twistbilocal} into the three parts
    \begin{equation}\label{eq:subtwistelementsbilocal}
        \begin{split}
            &{}^1\gamma^{(1)}_{k | \ell}(\T_{\underline{\vec a}}\otimes \T_1\cdots \T_N, (1\otimes \T_{N+1}\cdots \T_M)(\T_{\underline{\vec b}}\otimes 1))\\
            &\coloneq \R^{-1}_{\underline{\vec a}(1,...,N+\ell-2)}{}^r\tilde{\Q}^{(k)}_{(1,...,N+\ell-2),\underline{\vec a},N+\ell-1}\R_{\underline{\vec a}(1,...,N+\ell-2)}\\
            &\qquad \qquad \qquad \qquad \qquad  \times\R_{\underline{\vec b}(N+3-\ell,...,M)}{}^r\Q^{(\ell)}_{N+2-\ell, \underline{\vec b},(N+3-\ell,...M)}\R^{-1}_{\underline{\vec b}(N+3-\ell,...,M)}\\
            &+ \sum_{n=N+2,}^{N+\ell-1}\sum_{m=N+2-\ell}^{n-\ell}\R^{-1}_{\underline{\vec a}(1,...,M)}[\tilde{\mathfrak{q}}^{(k)}_{n-k+1,...,n}, \R_{\underline{\vec a}(1,...,M)}][\mathfrak{q}^{(\ell)}_{m,...,m+\ell-1}, \R_{\underline{\vec b}(1,...,M)}]\R^{-1}_{\underline{\vec b}(1,...,M)},\\
            &{}^2\gamma^{(1)}_{k | \ell}(\T_{\underline{\vec a}}\otimes \T_1\cdots \T_N, (1\otimes \T_{N+1}\cdots \T_M)(\T_{\underline{\vec b}}\otimes 1))\\
            &\coloneq \sum_{n=1}^{N}\R^{-1}_{\underline{\vec a}(1,...,M)}\tilde{\mathfrak{q}}^{(k)}_{n+\ell-k,...,n+\ell-1}\R_{\underline{\vec a}(1,....M)}\R_{\underline{\vec b}(n+1,...,M)}{}^r\Q^{(\ell)}_{n,\underline{\vec b},(n+1,...,M)}\R_{\underline{\vec b}(n+1,...,M)}^{-1},\\
            &{}^3\gamma^{(1)}_{k | \ell}(\T_{\underline{\vec a}}\otimes \T_1\cdots \T_N, (1\otimes \T_{N+1}\cdots \T_M)(\T_{\underline{\vec b}}\otimes 1))\\
            &\coloneq \sum_{\mathclap{n=N+1}}^{M}\ \R^{-1}_{\underline{\vec a}(1,...,n-1)}{}^r\tilde{\Q}^{(k)}_{(1,...,n-1),\underline{\vec a}, n}\R_{\underline{\vec a}(1,...,n-1)}\R_{\underline{\vec b}(1,...,M)}\mathfrak{q}^{(\ell)}_{n-\ell+1,...,n}\R^{-1}_{\underline{\vec b}(1,...,M)},
        \end{split}
    \end{equation}
    such that $\gamma^{(1)}_{k | \ell} = {}^1\gamma^{(1)}_{k | \ell} + {}^2\gamma^{(1)}_{k | \ell} + {}^3\gamma^{(1)}_{k | \ell}$. We now have the following intermediate result:
    \begin{lemma}\label{lem:usefulbilocaltwistidentities}
        For $2\leq k < \ell$, the twist elements ${}^{2,3}\gamma^{(1)}_{k|\ell}$ as defined in \eqref{eq:subtwistelementsbilocal} satisfy the following relations:
        \begin{itemize}[leftmargin=*]
            \item For $a,c,c'\in \A(\R)$, $b,b',d,d' \in \A_0$, such that $c\otimes d \in \A_0 \oplus \left(\A^{(\geq 1)}(\R)\bowtie_{\mathbf{r}}\A_0^{(\geq \ell-1)}\right)$:
            \begin{equation*}
                \begin{split}
                    {}^2\gamma^{(1)}_{k|\ell}(a\otimes b, c\cdot c' \otimes d) &= \sum \mathbf{r}^{-1}(c'_{(1)}, d_{(1)})\ {}^2\gamma^{(1)}_{k|\ell}(a\otimes b, c\otimes d_{(2)})\mathbf{r}(c'_{(2)}, d_{(3)})\\
                    &\quad + \sum \mathbf{r}^{-1}(b_{(1)}, c_{(1)})\ {}^2\gamma^{(1)}_{k|\ell}(a\cdot c_{(2)}\otimes b_{(2)}, c'\otimes d)\mathbf{r}(b_{(3)}, c_{(3)}),\\
                    {}^{2}\gamma^{(1)}_{k|\ell}(a\otimes b, c\otimes d\cdot d') &= {}^2\gamma^{(1)}_{k|\ell}(a\otimes b, c\otimes d)\epsilon(d'),\\
                    {}^2\gamma^{(1)}_{k|\ell}(a\otimes b\cdot b', c\otimes d) &= \sum \mathbf{r}^{-1}(b'_{(1)}, c_{(1)})\ {}^2\gamma^{(1)}_{k|\ell}(a\otimes b, c_{(2)}\otimes b'_{(2)}\cdot d)\mathbf{r}(b'_{(3)}, c_{(3)})\\
                    &\quad +\sum \mathbf{r}^{-1}(a_{(1)}, b_{(1)})\ {}^2\gamma^{(1)}_{k|\ell}(a_{(2)}\otimes b', c\otimes d)\mathbf{r}(a_{(3)}, b_{(2)}).
                \end{split}
            \end{equation*}
            \item For $a,a', c\in \A(\R)$, $b,b', d,d' \in \A_0$ such that $a\otimes b\in \A_0 \oplus \left(\A^{(\geq 1)}(\R)\bowtie_{\mathbf{r}}\A_0^{(\geq \ell-1)}\right)$:
            \begin{equation*}
                \begin{split}
                    {}^3\gamma^{(1)}_{k|\ell}(a\cdot a'\otimes b, c\otimes d) &= \epsilon(a)\ {}^3\gamma^{(1)}_{k|\ell}(a'\otimes b, c\otimes d)\\
                    &\quad + \sum \mathbf{r}^{-1}(a'_{(1)}, b_{(1)})\ {}^3\gamma^{(1)}_{k|\ell}(a\otimes b_{(2)}, a'_{(2)}\cdot c\otimes d)\mathbf{r}(a'_{(3)}, b_{(3)}),\\
                    {}^3\gamma^{(1)}_{k|\ell}(a\otimes b\cdot b', c\otimes d) &= \sum \mathbf{r}^{-1}(a_{(1)}, b_{(1)})\ {}^{3}\gamma^{(1)}_{k|\ell}(a_{(2)}\otimes  b', c\otimes d)\mathbf{r}(a_{(3)}, b_{(2)}),\\
                    {}^3\gamma^{(1)}_{k|\ell}(a\otimes b, c\otimes d\cdot d') &= {}^3\gamma^{(1)}_{k|\ell}(a\otimes b, c\otimes d)\epsilon(d')\\
                    &\quad+\sum \mathbf{r}^{-1}(c_{(1)}, d_{(1)})\ {}^3\gamma^{(1)}_{k|\ell}(a\otimes b\cdot d_{(2)}, c_{(2)}\otimes d')\mathbf{r}(c_{(3)}, d_{(3)}).
                \end{split}
            \end{equation*}
        \end{itemize}

        \begin{proof}
            Let us first consider the equalities for ${}^2\gamma^{(1)}_{k|\ell}$. Let $N>0$ and $M > N+\ell-2$, then
            \begin{equation*}
                \begin{split}
                    &{}^2\gamma^{(1)}_{k|\ell}(\T_{\underline{\vec a}}\otimes\T_1\cdots \T_N, (1\otimes \T_{N+1}\cdots\T_M)(\T_{\underline{\vec b}}\T_{\underline{\vec c}}\otimes 1))\\
                    &=\sum_{n=1}^{N}\R^{-1}_{\underline{\vec a}(1,...,M)}\tilde{\mathfrak{q}}^{(k)}_{n+\ell-k,...,n+\ell-1}\R_{\underline{\vec a}(1,....M)}\R_{(\underline{\vec b \vec c)}(n+1,...,M)}{}^r\Q^{(\ell)}_{n,\underline{\vec b\vec c},(n+1,...,M)}\R_{(\underline{\vec b \vec c})(n+1,...,M)}^{-1}\\
                    &= \sum_{n=1}^{N}\R^{-1}_{\underline{\vec a}(1,...,M)}\tilde{\mathfrak{q}}^{(k)}_{n+\ell-k,...,n+\ell-1}\R_{\underline{\vec a}(1,....M)}\R_{\underline{\vec b}(n+1,...,M)}{}^r\Q^{(\ell)}_{n,\underline{\vec b},(n+1,...,M)}\R_{\underline{\vec b}(n+1,...,M)}^{-1}\\
                    &+\sum_{n=1}^{N}\R^{-1}_{\underline{\vec a}(1,...,M)}\tilde{\mathfrak{q}}^{(k)}_{n+\ell-k,...,n+\ell-1}\R_{\underline{\vec a}(1,....M)}\R_{(\underline{\vec b \vec c})(n+1,...,M)}\\
                    &\qquad\qquad\qquad \qquad \qquad\qquad \qquad \qquad \times\R_{n\underline{\vec b}}^{-1}{}^r\Q^{(\ell)}_{n,\underline{\vec c},(n+1,...,M)}\R_{n\underline{\vec b}}\R_{(\underline{\vec b \vec c})(n+1,...,M)}^{-1},
                \end{split}
            \end{equation*}
             where we used Lemma~\ref{lem:simplifyreducedchargedensities} in the second equality. Using the fact that \begin{equation*}
                [\R_{\underline{\vec b}(1,...,n-1)}, \R_{\underline{\vec c}(n+1,....,M)}{}^r\Q^{(\ell)}_{n,\underline{\vec c},(n+1,...,M)}\R^{-1}_{\underline{\vec c}(n+1,...,M)}] = 0,
            \end{equation*}
            for all $n \leq N$, this then gives
            \begin{equation*}
                \begin{split}
                    &{}^2\gamma^{(1)}_{k|\ell}(\T_{\underline{\vec a}}\otimes\T_1\cdots \T_N, (1\otimes \T_{N+1}\cdots\T_M)(\T_{\underline{\vec b}}\T_{\underline{\vec c}}\otimes 1))\\
                    &= {}^2\gamma^{(1)}_{k|\ell}(\T_{\underline{\vec a}}\otimes\T_1\cdots \T_N, (1\otimes \T_{N+1}\cdots\T_M)(\T_{\underline{\vec b}}\otimes 1))\\
                    &\quad +\R^{-1}_{(1,...,M)\underline{\vec b}}\sum_{n=1}^{N}\R^{-1}_{(\underline{\vec a \vec b})(1,...,M)}\tilde{\mathfrak{q}}^{(k)}_{n+\ell-k,...,n+\ell-1}\R_{(\underline{\vec a\vec b})(1,...,M)}\\
                   & \qquad\qquad \qquad\qquad \times \R_{\underline{\vec c}(n+1,....,M)}{}^r\Q^{(\ell)}_{n,\underline{\vec c},(n+1,...,M)}\R^{-1}_{\underline{\vec c}(n+1,...,M)}\R^{-1}_{\underline{\vec b}(1,...,M)}\\
                   &= {}^2\gamma^{(1)}_{k|\ell}(\T_{\underline{\vec a}}\otimes\T_1\cdots \T_N, (1\otimes \T_{N+1}\cdots\T_M)(\T_{\underline{\vec b}}\otimes 1)) \\
                   &\quad + \R^{-1}_{(1,...,M)\underline{\vec b}}{}^2\gamma^{(1)}_{k|\ell}(\T_{\underline{\vec a}}\T_{\underline{\vec b}}\otimes\T_1\cdots \T_N, (1\otimes \T_{N+1}\cdots\T_M)(\T_{\underline{\vec c}}\otimes 1))\R_{(1,...,M)\underline{\vec b}},
                \end{split}
            \end{equation*}
            Note that the equality will still hold if we replace $\T_{\underline{\vec a}}\to 1$. Moreover, since the twist element satisfies ${}^2\gamma^{(1)}_{k|\ell}(a\otimes b, 1\otimes d) = 0$, the result is also true under the substitution $\T_{\underline{\vec b}}\to 1$ and/or $\T_{\underline{\vec c}}\to 1$ for any $M>N$, thus proving the first equality.\\

            For the second equality, let $N>0$, $M > N+\ell-2$ and $K > M$, then
            \begin{equation}\label{eq:bilocalproofstep1}
                \begin{split}
                    &{}^2\gamma^{(1)}_{k|\ell}(\T_{\underline{\vec a}}\otimes \T_1\cdots \T_N, (1\otimes \T_{N+1}\cdots \T_K)(\T_{\underline{\vec b}}\otimes 1))\\
                    &= \sum_{n=1}^{N}\R^{-1}_{\underline{\vec a}(1,...,K)}\tilde{\mathfrak{q}}^{(k)}_{n+\ell-k,...,n+\ell-1}\R_{\underline{\vec a}(1,....K)}\R_{\underline{\vec b}(n+1,...,K)}{}^r\Q^{(\ell)}_{n,\underline{\vec b},(n+1,...,K)} \R_{\underline{\vec b}(n+1,...,K)}^{-1}\\
                    &= \sum_{n=1}^{N}\R^{-1}_{\underline{\vec a}(1,...,M)}\tilde{\mathfrak{q}}^{(k)}_{n+\ell-k,...,n+\ell-1}\R_{\underline{\vec a}(1,....M)}\R_{\underline{\vec b}(M+1,...,K)}\\
                    &\qquad \qquad \qquad \qquad \times \R_{\underline{\vec b}(n+1,...,M)}{}^r\Q^{(\ell)}_{n,\underline{\vec b},(n+1,...,M)} \R_{\underline{\vec b}(n+1,...,M)}^{-1}\R^{-1}_{\underline{\vec b}(M+1,...,K)}\\
                    &= \R^{-1}_{(M+1,...,K)\underline{\vec b}}{}^2\gamma^{(1)}_{k|\ell}(\T_{\underline{\vec a}}\otimes \T_1\cdots \T_N, (1\otimes \T_{N+1}\cdots \T_M)(\T_{\underline{\vec b}}\otimes 1))\R_{(M+1,...,K)\underline{\vec b}},
                \end{split}
            \end{equation}
            where in the second equality we used the fact that $n+\ell-1 < N+\ell-1 < M+1$ such that
            \begin{equation}
                \R^{-1}_{\underline{\vec a}(1,...,K)}\tilde{\mathfrak{q}}^{(k)}_{n+\ell-k,...,n+\ell-1}\R_{\underline{\vec a}(1,....K)} = \R^{-1}_{\underline{\vec a}(1,...,M)}\tilde{\mathfrak{q}}^{(k)}_{n+\ell-k,...,n+\ell-1}\R_{\underline{\vec a}(1,....M)},
            \end{equation}
            and moreover, we could pull the term $\R_{\underline{\vec b}(M+1,...,K)}$ to the left from the second to the third equality in \eqref{eq:bilocalproofstep1}. Note that under the substitution $\T_{\underline{\vec b}}\to 1$, the twist element $^{3}\gamma^{(1)}_{k|\ell}$ in \eqref{eq:bilocalproofstep1} vanishes, such that the result in that case holds for any $K>M>N>0$, thus proving the second equality for $^{3}\gamma^{(1)}_{k|\ell}$.\\

            Lastly, let $0<N<M$ and $K > M+\ell-2$, then
            \begin{equation}
                \begin{split}
                    &{}^2\gamma^{(1)}_{k|\ell}(\T_{\underline{\vec a}}\otimes \T_1\cdots \T_M, (1\otimes \T_{M+1}\cdots \T_K)(\T_{\underline{\vec b}}\otimes 1))\\
                    &= \sum_{n=1}^{N}\R^{-1}_{\underline{\vec a}(1,...,K)}\tilde{\mathfrak{q}}^{(k)}_{n+\ell-k,...,n+\ell-1}\R_{\underline{\vec a}(1,....K)}\R_{\underline{\vec b}(n+1,...,M)}{}^r\Q^{(\ell)}_{n,\underline{\vec b},(n+1,...,K)}\R_{\underline{\vec b}(n+1,...,K)}^{-1}\\
                    &\quad+\sum_{n=N+1}^{M}\R^{-1}_{\underline{\vec a}(1,...,K)}\tilde{\mathfrak{q}}^{(k)}_{n+\ell-k,...,n+\ell-1}\R_{\underline{\vec a}(1,....K)}\R_{\underline{\vec b}(n+1,...,M)}{}^r\Q^{(\ell)}_{n,\underline{\vec b},(n+1,...,K)}\R_{\underline{\vec b}(n+1,...,K)}^{-1}\\
                    &= {}^2\gamma^{(1)}_{k|\ell}(\T_{\underline{\vec a}}\otimes \T_{1}\cdots \T_N, (1\otimes \T_{M+1}\cdots \T_K)(\T_{\underline{\vec b}}\otimes 1))\\
                    &\quad + \R^{-1}_{\underline{\vec a}(1,...,N)}{}^2\gamma^{(1)}_{k|\ell}(\T_{\underline{\vec a}}\otimes \T_{N+1}\cdots \T_M, (1\otimes \T_{M+1}\cdots \T_K)(\T_{\underline{\vec b}}\otimes 1))\R_{\underline{\vec a}(1,...,N)}.
                \end{split}
            \end{equation}
            Again, note here that ${}^2\gamma^{(1)}_{k|\ell}$ vanishes under substitution $\T_{\underline{\vec b}}\to 1$, such that the above equality in that case also holds for any $K>M$. This then proves the last equality for ${}^2\gamma^{(1)}_{k|\ell}$.\\

            All the equalities for ${}^3\gamma^{(1)}_{k|\ell}$ will follow in a completely analogous way to the equalities ${}^2\gamma^{(1)}_{k|\ell}$, which then proves the statement.
        \end{proof}
    \end{lemma}

    Secondly, we consider the following intermediate result:

    \begin{lemma}\label{lem:usefulidentitybilocal2}
        For $2\leq k < \ell$ and $a\otimes b, c\otimes d, e\otimes f \in \A_0 \oplus \left(\A^{(\geq 1)}(\R)\bowtie_{\mathbf{r}}\A_0^{(\geq \ell-1)}\right)$, we have
        \begin{equation*}
            \begin{split}
                &\epsilon(a\cdot b)\ {}^2\gamma^{(1)}_{k|\ell}(c\otimes d, e\otimes f) - \sum \mathbf{r}^{-1}(a_{(1)}, b_{(1)})\ {}^2\gamma^{(1)}_{k|\ell}(a_{(2)}\cdot c\otimes d, e\otimes f)\mathbf{r}(a_{(3)}, b_{(2)})\\
                &+ \sum \mathbf{r}^{-1}(d_{(1)}, e_{(1)})\ {}^3\gamma^{(1)}_{k|\ell}(a\otimes b, c\cdot e_{(2)}\otimes d_{(2)})\epsilon(f)\mathbf{r}(d_{(3)}, e_{(3)}) - {}^3\gamma^{(1)}_{k|\ell}(a\otimes b, c\otimes d)\epsilon(e\cdot f)\\
                &= \sum \mathbf{r}^{-1}(b_{(1)}, c_{(1)})\ {}^1 \gamma^{(1)}_{k|\ell}(a\cdot c_{(2)}\otimes b_{(2)} d, e\otimes f)\mathbf{r}(b_{(3)}, c_{(3)})-\epsilon(a\cdot b)\ {}^1\gamma^{(1)}_{k|\ell}(c\otimes d, e\otimes f)\\
                &+ \sum {}^1 \gamma^{(1)}_{k|\ell}(a\otimes b, c\otimes d)\epsilon(e\cdot f) -  \mathbf{r}^{-1}(d_{(1)}, e_{(1)})\ {}^1\gamma^{(1)}_{k|\ell}(a\otimes b, c e_{(2)}\otimes d_{(2)})\epsilon(f)\mathbf{r}(d_{(3)}, e_{(3)}).
            \end{split}
        \end{equation*}
        \begin{proof}
            Let $N > \ell-2$, $M> N+ \ell-2$ and $K > M +\ell-2$, then  
            \begin{equation*}
                \begin{split}
                    &{}^2\gamma^{(1)}_{k|\ell}(\T_{\underline{\vec a}}\T_{\underline{\vec b}}\otimes \T_{N+1}\cdots \T_{M}, (1\otimes \T_{M+1}\cdots \T_K)(1\otimes \T_{\underline{\vec c}}))\\
                    &= \sum_{\mathclap{n=N+1}}^{M}\R^{-1}_{(\underline{\vec a\vec b})(N+1,...,K)}\tilde{\mathfrak{q}}^{(k)}_{n+\ell-k,...,n+\ell-1}\R_{(\underline{\vec a \vec b})(N+1,....K)}\R_{\underline{\vec c}(n+1,...,K)}{}^r\Q^{(\ell)}_{n,\underline{\vec c},(n+1,...,K)}\R_{\underline{\vec c}(n+1,...,K)}^{-1}\\
                    &= \sum_{\mathclap{n=N+1}}^{M}\R^{-1}_{\underline{\vec b}(N+1,...,K)}\tilde{\mathfrak{q}}^{(k)}_{n+\ell-k,...,n+\ell-1}\R_{\underline{\vec b}(N+1,....K)}\R_{\underline{\vec c}(n+1,...,K)}{}^r\Q^{(\ell)}_{n,\underline{\vec c},(n+1,...,K)}\R_{\underline{\vec c}(n+1,...,K)}^{-1}\\
                    &\quad +\sum_{\mathclap{n=N+1}}^{M}\R^{-1}_{\underline{\vec b}(N+1,...,K)}\R^{-1}_{\underline{\vec a}(N+1,...,K)}[\tilde{\mathfrak{q}}^{(k)}_{n+\ell-k,...,n+\ell-1}, \R_{\underline{\vec a}(N+1,...,K)}]\R_{\underline{\vec b}(N+1,....K)}\\
                    &\qquad \qquad \qquad \qquad \qquad \qquad \qquad \times \R_{\underline{\vec c}(n+1,...,K)}{}^r\Q^{(\ell)}_{n,\underline{\vec c},(n+1,...,K)}\R_{\underline{\vec c}(n+1,...,K)}^{-1}.
                \end{split}
            \end{equation*}
            From the Sutherland equations as given in Proposition~\ref{prop:higherorderSutherlandEquations}, it follows that
            \begin{equation*}
                \begin{split}
                    \R^{-1}_{\underline{\vec a}(1,...,K)}[\tilde{\mathfrak{q}}^{(k)}_{n+\ell-k,...,n+\ell-1}, \R_{\underline{\vec a}(1,...,K)}] &= \R^{-1}_{\underline{\vec a}(1,...,n+\ell-3)}{}^r\tilde{\Q}^{(k)}_{(1,...,n+\ell-3),\underline{\vec a}, n+\ell-2}\R_{\underline{\vec a}(1,...,n+\ell-3)}\\
                    &- \R^{-1}_{\underline{\vec a}(1,...,n+\ell-2)}{}^r\tilde{\Q}^{(k)}_{(1,...,n+\ell-2),\underline{\vec a}, n+\ell-1}\R_{\underline{\vec a}(1,...,n+\ell-2)}.
                \end{split}
            \end{equation*}
            Combining these results, one finds that
            \begin{equation}\label{eq:proofbilocalstep2}
                \begin{split}
                    &{}^2\gamma^{(1)}_{k|\ell}(\T_{\underline{\vec b}}\otimes \T_{N+1}\cdots \T_M, \T_{\underline{\vec c}}\otimes \T_{M+1}\cdots \T_K)\\
                    &\quad -\R^{-1}_{\underline{\vec a}(1,...,N)}{}^2\gamma^{(1)}_{k|\ell}(\T_{\underline{\vec a}}\T_{\underline{\vec b}}\otimes \T_{N+1}\cdots \T_M, \T_{\underline{\vec c}}\otimes \T_{M+1}\cdots \T_K)\R_{\underline{\vec a}(1,...,N)}\\
                    &= \sum_{\mathclap{n=N+1}}^{M}\R^{-1}_{\underline{\vec c}(M+1,...,K)}\R^{-1}_{\underline{\vec b}(N+1,...,K)}\R^{-1}_{\underline{\vec a}(1,...,n+\ell-2)}{}^r\tilde{\Q}^{(k)}_{(1,...,n+\ell-2),\underline{\vec a}, n+\ell-1}\R_{\underline{\vec a}(1,...,n+\ell-2)}\\
                    &\qquad \qquad \qquad \qquad \qquad \qquad \qquad \times \R_{\underline{\vec b}(N+1,....K)}\R_{\underline{\vec c}(n+1,...,K)}{}^r\Q^{(\ell)}_{n,\underline{\vec c},(n+1,...,K)}\R_{\underline{\vec c}(n+1,...,M)}^{-1}\\
                    & - \sum_{\mathclap{n=N+1}}^{M}\R^{-1}_{\underline{\vec c}(M+1,...,K)}\R^{-1}_{\underline{\vec b}(N+1,...,K)}\R^{-1}_{\underline{\vec a}(1,...,n+\ell-3)}{}^r\tilde{\Q}^{(k)}_{(1,...,n+\ell-3),\underline{\vec a}, n+\ell-2}\R_{\underline{\vec a}(1,...,n+\ell-3)}\\
                    &\qquad \qquad \qquad \qquad \qquad \qquad \qquad \times \R_{\underline{\vec b}(N+1,....K)}\R_{\underline{\vec c}(n+1,...,K)}{}^r\Q^{(\ell)}_{n,\underline{\vec c},(n+1,...,K)}\R_{\underline{\vec c}(n+1,...,M)}^{-1}.
                \end{split}
            \end{equation}
            In an analogous way, it can be found that
            \begin{equation}\label{eq:proofbilocalstep3}
                \begin{split}
                    &\R^{-1}_{(N+1,...,M)\underline{\vec c}}{}^3\gamma^{(1)}_{k|\ell}(\T_{\underline{\vec a}}\otimes \T_{1}\cdots \T_N, \T_{\underline{\vec b}}\T_{\underline{\vec c}}\otimes \T_{N+1,...,M})\R_{(N+1,...,M)\underline{\vec c}}\\
                    &\quad - {}^3\gamma^{(1)}_{k|\ell}(\T_{\underline{\vec a}}\otimes \T_{1}\cdots \T_N, \T_{\underline{\vec b}}\otimes \T_{N+1}\cdots \T_{M})\\
                    &= \sum_{n=N+1}^{M}\R^{-1}_{\underline{\vec b}(N+1,...,M)}\R^{-1}_{\underline{\vec a}(1,...,n-1)}{}^r\tilde{\Q}^{(k)}_{(1,...,n-1), \underline{\vec a}, n}\R_{\underline{\vec a}(1,...,n-1)}\R_{\underline{\vec b}(1,...,M)}\\
                    &\qquad \qquad\qquad \qquad  \times \R_{\underline{\vec c}(n-\ell+3,...,M)}{}^r\Q^{(\ell)}_{n-\ell+2,\underline{\vec c}, (n-\ell+3,..., M)}\R^{-1}_{\underline{\vec c}(n-\ell+3,...,M)}\R^{-1}_{\underline{\vec b}(1,...,N)}\\
                    &-\sum_{n=N+1}^{M}\R^{-1}_{\underline{\vec b}(N+1,...,M)}\R^{-1}_{\underline{\vec a}(1,...,n-1)}{}^r\tilde{\Q}^{(k)}_{(1,...,n-1), \underline{\vec a}, n}\R_{\underline{\vec a}(1,...,n-1)}\R_{\underline{\vec b}(1,...,M)}\\
                    &\qquad \qquad \qquad \qquad \times \R_{\underline{\vec c}(n-\ell+2,...,M)}{}^r\Q^{(\ell)}_{n-\ell+1,\underline{\vec c}, (n-\ell+2,..., M)}\R^{-1}_{\underline{\vec c}(n-\ell+2,...,M)}\R^{-1}_{\underline{\vec b}(1,...,N)}.
                \end{split}
            \end{equation}
            Importantly, if we would replace $n\to n+\ell-2$ and $n\to n+\ell-1$ in the first and second sum of \eqref{eq:proofbilocalstep3}, respectively, then these terms cancel most of the terms in the sum in \eqref{eq:proofbilocalstep2} up to some boundary terms if we sum up \eqref{eq:proofbilocalstep2} and \eqref{eq:proofbilocalstep3}. In particular, we find that
            \begin{equation*}
                \begin{split}
                    &{}^2\gamma^{(1)}_{k|\ell}(\T_{\underline{\vec b}}\otimes \T_{N+1}\cdots \T_M, \T_{\underline{\vec c}}\otimes \T_{M+1}\cdots \T_K)\\
                    &\quad -\R^{-1}_{\underline{\vec a}(1,...,N)}{}^2\gamma^{(1)}_{k|\ell}(\T_{\underline{\vec a}}\T_{\underline{\vec b}}\otimes \T_{N+1}\cdots \T_M, \T_{\underline{\vec c}}\otimes \T_{M+1}\cdots \T_K)\R_{\underline{\vec a}(1,...,N)}\\
                    &+\R^{-1}_{(N+1,...,M)\underline{\vec c}}{}^3\gamma^{(1)}_{k|\ell}(\T_{\underline{\vec a}}\otimes \T_{1}\cdots \T_N, \T_{\underline{\vec b}}\T_{\underline{\vec c}}\otimes \T_{N+1,...,M})\R_{(N+1,...,M)\underline{\vec c}}\\
                    &\quad - {}^3\gamma^{(1)}_{k|\ell}(\T_{\underline{\vec a}}\otimes \T_{1}\cdots \T_N, \T_{\underline{\vec b}}\otimes \T_{N+1}\cdots \T_{M})\\
                    &= A+B,
                \end{split}
            \end{equation*}
            where
            \begin{equation*}
                \begin{split}
                    A &= \sum_{\mathclap{n=M+2-\ell}}^{M}\R^{-1}_{\underline{\vec c}(M+1,...,K)}\R^{-1}_{\underline{\vec b}(N+1,...,K)}\R^{-1}_{\underline{\vec a}(1,...,n+\ell-2)}{}^r\tilde{\Q}^{(k)}_{(1,...,n+\ell-2),\underline{\vec a}, n+\ell-1}\R_{\underline{\vec a}(1,...,n+\ell-2)}\\
                    &\qquad \qquad \qquad \qquad \qquad \qquad \times \R_{\underline{\vec b}(N+1,....K)}\R_{\underline{\vec c}(n+1,...,K)}{}^r\Q^{(\ell)}_{n,\underline{\vec c},(n+1,...,K)}\R_{\underline{\vec c}(n+1,...,M)}^{-1}\\
                    & - \sum_{\mathclap{n=M+3-\ell}}^{M}\R^{-1}_{\underline{\vec c}(M+1,...,K)}\R^{-1}_{\underline{\vec b}(N+1,...,K)}\R^{-1}_{\underline{\vec a}(1,...,n+\ell-3)}{}^r\tilde{\Q}^{(k)}_{(1,...,n+\ell-3),\underline{\vec a}, n+\ell-2}\R_{\underline{\vec a}(1,...,n+\ell-3)}\\
                    &\qquad \qquad \qquad \qquad \qquad \qquad \times \R_{\underline{\vec b}(N+1,....K)}\R_{\underline{\vec c}(n+1,...,K)}{}^r\Q^{(\ell)}_{n,\underline{\vec c},(n+1,...,K)}\R_{\underline{\vec c}(n+1,...,M)}^{-1}
                \end{split}
            \end{equation*}
            and
            \begin{equation*}
                \begin{split}
                    B &= \sum_{n=N+1}^{N+\ell-2}\R^{-1}_{\underline{\vec b}(N+1,...,M)}\R^{-1}_{\underline{\vec a}(1,...,n-1)}{}^r\tilde{\Q}^{(k)}_{(1,...,n-1), \underline{\vec a}, n}\R_{\underline{\vec a}(1,...,n-1)}\R_{\underline{\vec b}(1,...,M)}\\
                    &\qquad \qquad\qquad \qquad  \times \R_{\underline{\vec c}(n-\ell+3,...,M)}{}^r\Q^{(\ell)}_{n-\ell+2,\underline{\vec c}, (n-\ell+3,..., M)}\R^{-1}_{\underline{\vec c}(n-\ell+3,...,M)}\R^{-1}_{\underline{\vec b}(1,...,N)}\\
                    &-\sum_{n=N+1}^{N+\ell-1}\R^{-1}_{\underline{\vec b}(N+1,...,M)}\R^{-1}_{\underline{\vec a}(1,...,n-1)}{}^r\tilde{\Q}^{(k)}_{(1,...,n-1), \underline{\vec a}, n}\R_{\underline{\vec a}(1,...,n-1)}\R_{\underline{\vec b}(1,...,M)}\\
                    &\qquad \qquad \qquad \qquad \times \R_{\underline{\vec c}(n-\ell+2,...,M)}{}^r\Q^{(\ell)}_{n-\ell+1,\underline{\vec c}, (n-\ell+2,..., M)}\R^{-1}_{\underline{\vec c}(n-\ell+2,...,M)}\R^{-1}_{\underline{\vec b}(1,...,N)}.
                \end{split}
            \end{equation*}
            Note here that $A$ and $B$ now contain sums over only $\ell-1$ terms. Let us evaluate $A$. Firstly, it can be seen that the two sums in the domain $n=M+3-\ell$ to $M$ can be combined, in which one can recognise the Sutherland equation as given in Proposition~\ref{prop:higherorderSutherlandEquations}. In particular, this gives
            \begin{equation*}
                \begin{split}
                    A &= \R^{-1}_{\underline{\vec c}(M+1,...,K)}\R^{-1}_{\underline{\vec b}(N+1,...,K)}\R^{-1}_{\underline{\vec a}(1,...,M)}{}^r\tilde{\Q}^{(k)}_{(1,...,M),\underline{\vec a}, M+1}\R_{\underline{\vec a}(1,...,M)}\\
                    &\qquad \qquad \qquad \times \R_{\underline{\vec b}(N+1,...,K)}\R_{\underline{\vec c}(M+3-\ell,...,K)}{}^r\Q^{(\ell)}_{M+2-\ell,\underline{\vec c}, (M+3-\ell,...,K)}\R^{-1}_{\underline{\vec c}(M+3-\ell,...,M)}\\
                    &+\sum_{n=M+3-\ell}^{M}\R^{-1}_{\underline{\vec c}(M+1,...,K)}\R^{-1}_{\underline{\vec b}(N+1,...,K)}\R^{-1}_{\underline{\vec a}(1,...,K)}[\R_{\underline{\vec a}(1,...K)}, \tilde{\mathfrak{q}}^{(k)}_{n+\ell-k,...,n+\ell-1}]\\
                    &\qquad \qquad \qquad \times  \R_{\underline{\vec b}(N+1,....K)}\R_{\underline{\vec c}(n+1,...,K)}{}^r\Q^{(\ell)}_{n,\underline{\vec c},(n+1,...,K)}\R_{\underline{\vec c}(n+1,...,M)}^{-1}
                \end{split}
            \end{equation*}
            Next, by repeated application of the Sutherland equations as given in Proposition~\ref{prop:higherorderSutherlandEquations}, it follows that
            \begin{equation*}
                \begin{split}
                    \R^{-1}_{\underline{\vec a}(1,...,M)}{}^r\tilde{\Q}^{(k)}_{(1,...,M),\underline{\vec a}, M+1}\R_{\underline{\vec a}(1,...,M)} &= \R^{-1}_{\underline{\vec a}(1,...,M+\ell-2)}{}^r\tilde{\Q}^{(k)}_{(1,...,M+\ell-2),\underline{\vec a}, M+\ell-1}\R_{\underline{\vec a}(1,...,M+\ell-2)}\\
                    &+\sum_{n=M+3-\ell}^{M}\R^{-1}_{\underline{\vec a}(1,...,K)}[\tilde{\mathfrak{q}}^{(k)}_{n+\ell-k,...,n+\ell-1}, \R_{\underline{\vec a}(1,...,K)}],
                \end{split}
            \end{equation*}
            and similarly, for $n\geq M+3-\ell$,
            \begin{equation*}
                \begin{split}
                    &\R_{\underline{\vec c}(M+3-\ell,...,K)}{}^r\Q^{(\ell)}_{M+2-\ell,\underline{\vec c},(M+3-\ell,...,K)}\R^{-1}_{\underline{\vec c}(M+3-\ell,...,K)}\\
                    &= \R^{-1}_{\underline{\vec c}(n+1,...,K)}{}^r\Q^{(\ell)}_{n,\underline{\vec c},(n+1,...,K)}\R^{-1}_{\underline{\vec c}(n+1,...,K)} +\sum_{m=M+2-\ell}^{n-1}[\mathfrak{q}^{(\ell)}_{m,...,m+\ell-1}, \R_{\underline{\vec c}(1,...,K)}]\R^{-1}_{\underline{\vec c}(1,...,K)}.
                \end{split}
            \end{equation*}
            Combining these facts then gives
            \begin{equation*}
                \begin{split}
                    A &= \R^{-1}_{\underline{\vec c}(M+1,...,K)}\R^{-1}_{\underline{\vec b}(N+1,...,K)}\R^{-1}_{\underline{\vec a}(1,...,M+\ell-2)}{}^r\tilde{\Q}^{(k)}_{(1,...,M+\ell-2),\underline{\vec a}, M+\ell-1}\R_{\underline{\vec a}(1,...,M+\ell-2)}\\
                    &\qquad \qquad \qquad \times \R_{\underline{\vec b}(N+1,...,K)}\R_{\underline{\vec c}(M+3-\ell,...,K)}{}^r\Q^{(\ell)}_{M+2-\ell,\underline{\vec c}, (M+3-\ell,...,K)}\R^{-1}_{\underline{\vec c}(M+3-\ell,...,M)}\\
                    &+\sum_{n=M+3-\ell}^{M}\sum_{\ m = M+2-\ell}^{n-1}\R^{-1}_{\underline{\vec c}(M+1,...,K)}\R^{-1}_{\underline{\vec b}(N+1,...,K)}\R^{-1}_{\underline{\vec a}(1,...,K)}[\tilde{\mathfrak{q}}^{(k)}_{n+\ell-k,...,n+\ell-1},\R_{\underline{\vec a}(1,...K)}]\\
                    &\qquad \qquad \qquad \times  \R_{\underline{\vec b}(N+1,....K)}[\mathfrak{q}^{(\ell)}_{m,...,m+\ell-1}, \R_{\underline{\vec c}(1,...,K)}]\R^{-1}_{\underline{\vec c}(1,...,M)}.
                \end{split}
            \end{equation*}
            Using Lemma~\ref{lem:simplifyreducedchargedensities}, it then immediately follows from the definitions of the twist element ${}^1\gamma^{(1)}_{k|\ell}$ as given in \eqref{eq:subtwistelementsbilocal} that
            \begin{equation*}
                \begin{split}
                    A &= \R^{-1}_{(1,...,N)\underline{\vec b}}\ {}^1\gamma^{(1)}_{k|\ell}(\T_{\underline{\vec a}}\T_{\underline{\vec b}}\otimes \T_{1}\cdots \T_{M}, \T_{\underline{\vec c}}\otimes \T_{M+1}\cdots \T_K)\R_{(1,...,N)\underline{\vec b}} \\
                    &\qquad \qquad  - {}^1\gamma^{(1)}_{k|\ell}(\T_{\underline{\vec b}}\otimes \T_{N+1}\cdots \T_M, \T_{\underline{\vec c}}\otimes \T_{M+1}\cdots \T_K)\big.
                \end{split}
            \end{equation*}
            In a completely analogous way, it can be shown that
            \begin{equation*}
                \begin{split}
                    B &= {}^1\gamma^{(1)}_{k|\ell}(\T_{\underline{\vec a}}\otimes \T_{1}\cdots \T_N, \T_{\underline{\vec b}}\otimes \T_{N+1}\cdots \T_M)\\
                &\qquad \qquad  - \R^{-1}_{(N+1,...,M)\underline{\vec c}}{}^1\gamma^{(1)}_{k|\ell}(\T_{\underline{\vec a}}\otimes \T_{1}\cdots \T_N, \T_{\underline{\vec b}}\T_{\underline{\vec c}}\otimes \T_{N+1}\cdots \T_K)\R_{(N+1,...,M)\underline{\vec c}},
                \end{split}
            \end{equation*}
            thus proving the statement for $a\otimes b, c\otimes d,e\otimes f \in \left(\A^{(\geq 1)}(\R)\bowtie_{\mathbf{r}}\A_0^{(\geq \ell-1)}\right)$. Note here that the results still hold if we substitute $\T_{\underline{\vec a}}\to 1$ with $N>0$, and/or $\T_{\underline{\vec b}}\to 1$ with $M>N$, and/or $\T_{\underline{\vec c}}\to 1$ with $K>M$, which then also proves the case where $a\otimes b$, $c\otimes d$ and/or $e\otimes f$ are elements of $\A_0$, thus also proving the lemma.
        \end{proof}
    \end{lemma}

    We now have all the ingredients to prove Proposition~\ref{prop:vanishingdrinfeldassociatorbilocal}:

    \begin{proof}[\textbf{Proof for Proposition~\ref{prop:vanishingdrinfeldassociatorbilocal}}]
        Let us first write $\varphi^{(1)}_{k|\ell} = {}^1\varphi^{(1)}_{k|\ell} + {}^2\varphi^{(1)}_{k|\ell} + {}^3\varphi^{(1)}_{k|\ell}$, where ${}^i\varphi^{(1)}_{k|\ell}$ is the Drinfeld associator corresponding to the twist ${}^i\gamma^{(1)}_{k|\ell}$ as defined in \eqref{eq:subtwistelementsbilocal}. Next, let $a\otimes b, c\otimes d, e\otimes f \in \A_0 \oplus \left(\A^{(\geq 1)}(\R)\bowtie_{\mathbf{r}}\A_0^{(\geq \ell-1)}\right)$, then
        \begin{equation}
            \begin{split}
                {}^i\varphi^{(1)}_{k|\ell}(a\otimes b, c\otimes d, e\otimes f) &= \epsilon(a\cdot b)\ {}^i\gamma^{(1)}_{k|\ell}(c\otimes d, e\otimes f) + {}^i\gamma^{(1)}_{k|\ell}(a\otimes b, (c\otimes d)(e\otimes f))\\
                & \quad - {}^i\gamma^{(1)}_{k|\ell}((a\otimes b)(c\otimes d), e\otimes f) - {}^i\gamma^{(1)}_{k|\ell}(a\otimes b, c\otimes d)\epsilon(e\cdot f)
            \end{split}
        \end{equation}
        for $i = 1,2,3$. Let us first consider ${}^2\varphi^{(1)}_{k|\ell}$. From Lemma~\ref{lem:usefulbilocaltwistidentities}, it follows that
        \begin{equation*}
            \begin{split}
                &{}^2\gamma^{(1)}_{k|\ell}(a\otimes b, (c\otimes d)(e\otimes f)) = \sum \mathbf{r}^{-1}(d_{(1)}, e_{(1)})\ {}^2\gamma^{(1)}_{k|\ell}(a\otimes b, c\cdot e_{(2)}\otimes d_{(2)}\cdot f)\mathbf{r}(d_{(3)}, e_{(3)})\\ 
                &=\sum\mathbf{r}^{-1}(e_{(1)}, f_{(1)})\gamma^{(1)}_k(a\otimes b, c\otimes d\cdot f_{(2)})\mathbf{r}(e_{(2)},f_{(3)})\\
                &\quad + \sum \mathbf{r}^{-1}(d_{(1)}, e_{(1)})\mathbf{r}^{-1}(b_{(1)}, c_{(1)})\gamma^{(1)}_k(a\cdot c_{(2)}\otimes b_{(2)}, e_{(2)}\otimes d_{(2)}\cdot f)\mathbf{r}(b_{(3)}, c_{(3)})\mathbf{r}(d_{(3)}, e_{(3)})\\
                &= {}^2\gamma^{(1)}_{k|\ell}(a\otimes b, c\otimes d)\epsilon(e\cdot f)\\
                &+ \sum \mathbf{r}^{-1}(d_{(1)}, e_{(1)})\mathbf{r}^{-1}(b_{(1)}, c_{(1)})\ {}^2\gamma^{(1)}_{k|\ell}(a\cdot c_{(2)}\otimes b_{(2)}, e_{(2)}\otimes d_{(2)}\cdot f)\mathbf{r}(b_{(3)}, c_{(3)})\mathbf{r}(d_{(3)}, e_{(3)}).
            \end{split}
        \end{equation*}
        where in the second equality we used that $\sum \mathbf{r}(e_{(1)}, d_{(1)})\mathbf{r}(d_{(2)}, e_{(2)}) = \epsilon(e)\epsilon(d)$ since $\mathbf{r}$ is cotriangular. Similarly, we have
        \begin{equation*}
            \begin{split}
                &{}^2 \gamma^{(1)}_{k|\ell}((a\otimes b)(c\otimes d), e\otimes f) = \sum \mathbf{r}^{-1}(b_{(1)}, c_{(1)})\ {}^2\gamma_{k|\ell}^{(1)}(a\cdot c_{(2)}\otimes b_{(2)}\cdot d, e\otimes f)\mathbf{r}(b_{(3)}, c_{(3)})\\
                &= \sum \mathbf{r}^{-1}(b_{(1)}, c_{(1)})\mathbf{r}^{-1}(d_{(1)},e_{(1)})\ {}^2\gamma_{k|\ell}^{(1)}(a\cdot c_{(2)}\otimes b_{(2)}, e_{(2)}\otimes d_{(2)}\cdot f)\mathbf{r}(d_{(3)}, e_{(3)})\mathbf{r}(b_{(3)}, c_{(3)})\\
                &\quad +\sum \mathbf{r}^{-1}(a_{(1)}, b_{(1)})\ {}^2\gamma_{k|\ell}^{(1)}(a_{(2)}\cdot c \otimes d, e\otimes f)\mathbf{r}(a_{(3)}, b_{(3)}).
            \end{split}
        \end{equation*}
        Combining the terms now gives
        \begin{equation*}
            \begin{split}
                {}^2\varphi^{(1)}_{k|\ell}(a\otimes b, c\otimes d, e\otimes f) &= \epsilon(a\cdot b)\ {}^2\gamma^{(1)}_{k|\ell}(c\otimes d, e\otimes f)\\
                &-\sum \mathbf{r}^{-1}(a_{(1)}, b_{(1)})\ {}^2\gamma_{k|\ell}^{(1)}(a_{(2)}\cdot c \otimes d, e\otimes f)\mathbf{r}(a_{(3)}, b_{(3)}).
            \end{split}
        \end{equation*}
        Similarly, it can be derived using Lemma~\ref{lem:usefulbilocaltwistidentities} that
        \begin{equation*}
            \begin{split}
                {}^3\varphi^{(1)}_{k|\ell}(a\otimes b, c\otimes d, e\otimes f) &= \sum \mathbf{r}^{-1}(d_{(1)}, e_{(1)})\ {}^3\gamma^{(1)}_{k|\ell}(a\otimes b, c\cdot e_{(2)}\otimes d_{(2)})\epsilon(f)\mathbf{r}(d_{(3)}, e_{(3)})\\
                &- {}^3\gamma^{(1)}_{k|\ell}(a\otimes b, c\otimes d)\epsilon(e\cdot f).
            \end{split}
        \end{equation*}
        It now follows immediately from Lemma~\ref{lem:usefulidentitybilocal2} that
        \begin{equation*}
            \begin{split}
                \varphi^{(1)}_{k|\ell}(a\otimes b, c\otimes d, e\otimes f) = \left({}^1\varphi^{(1)}_{k|\ell}+{}^2\varphi^{(1)}_{k|\ell}+{}^3\varphi^{(1)}_{k|\ell}\right)(a\otimes b, c\otimes d, e\otimes f) = 0,
            \end{split}
        \end{equation*}
        thus proving the statement.
    \end{proof}

    \section{Relation to the $\Theta$-morphism}\label{sec:Thetamorphismrelation}
    In this section, we will discuss the relation between the $\B[\mathbb{Q}_3]$ long-range deformation and the \textit{$\Theta$-morphism} that was described in \cite{gromov2014theta}. In that paper, the $\Theta$-morphism, which is a second-order derivative on the inhomogeneities of the spin chain, was used to derive the transfer matrix and the corresponding $N$-magnon states for the $B[\mathbb{Q}_3]$ long-range deformation. Here, we will show that our transfer matrix derived from the deformed Lax operator is equal to the one found in that paper. Moreover, we then show that our deformed FRT-bialgebra contains operators that correspond to the $N$-magnon creation operators, by mapping these operators to the $N$-magnon states that were found in \cite{gromov2014theta}. We emphasise here that we do not have full proofs for some of the claims we make. However, all claims have been verified explicitly in \texttt{Mathematica}. This section is therefore mainly for the illustrative purpose of identifying the correct $N$-magnon creation operators that would be needed in a proper long-range Algebraic Bethe Ansatz.\\

    We consider the $R$-matrix corresponding to the Yangian $\mathcal{Y}(\mathfrak{gl}_2)$, which is given by
    \begin{equation}\label{eq:RmatrixYangiantheta}
        \R_{12}(u,v) = \frac{u-v-\P_{12}}{u-v-1} \in \mathrm{End}(\C^2\otimes \C^2)[\![u,v]\!],
    \end{equation}
    which can be related to the more usual Lax operator for the XXX spin chain (see e.g. \cite{Faddeev1995AlgebraicAspects}) under the reparametrisation $\hat u = \frac{i}{2} - iu$ and $\hat v = \frac{i}{2}-iv$. To make the relation between the $\B[\mathbb{Q}_3]$-deformed Lax operator as given in \eqref{eq:BQ3LaxExample} and the $\Theta$-morphism, we will first rewrite the Lax operator in terms of differential operators. We will consider a non-trivial spectral parameter in coordinate $\underline{1}$, and trivial spectral parameters in the coordinates $2$, $3$. We then get
    \begin{equation}
        \begin{split}
            \Q^{(3)}_{3(\underline{1}2)}\R_{(\underline{1}2)3} &= \R^{-1}_{\underline{1}2}\Q^{(3)}_{3(2\underline{1})}\R_{(2\underline{1})3}\R_{\underline{1}2} = \R^{-1}_{\underline{1}2}\left(\Q^{(3)}_{2\underline{1}} + [\Q^{(2)}_{2\underline{1}}, \Q^{(2)}_{32}]\right)\R_{\underline{1}(32)}\R_{23}\\
            &= \R^{-1}_{\underline{1}2}\left(-\R_{\underline{1}2}\Q^{(3)}_{\underline{1}2}\R_{\underline{1}3} + \Q^{(2)}_{2\underline{1}}\Q^{(2)}_{32}\R_{\underline{1}(32)} - \Q^{(2)}_{32}\Q^{(2)}_{2\underline{1}}\R_{\underline{1}(32)}\right)\R_{23},
        \end{split}
    \end{equation}
    where we used that $\R_{23} = \P_{23}$, $\Q^{(3)}_{23} = 0$ and $\R_{\underline{1}2}\Q^{(3)}_{\underline{1}2} = -\Q^{(3)}_{2\underline{1}}\R_{\underline{1}2}$, since $\R$ is of difference form.\footnote{Indeed, from braiding unitarity $\R_{21}\R_{12} = 1$ and difference form $(\D_1+\D_2)(\R_{12}) = 0$, it follows that $\R_{12}\Q^{(2)}_{12} = \Q^{(2)}_{21}\R_{12}$. By taking the derivative of this equation, the identity for $\Q^{(3)}_{12}$ follows.} Next, from the definition of the third algebraic charge density, we have 
    \begin{equation}
        \Q^{(3)}_{\underline{1}2} = -\left(\Q^{(2)}_{\underline{1}2}\right)^2 + \R^{-1}_{\underline{1}2}\D_{\underline{1}}^2(\R_{\underline{1}2}),
    \end{equation} 
    and by the Sutherland equation, we have 
    \begin{equation}
        \Q^{(2)}_{32}\R_{\underline{1}(32)} = \R_{\underline{1}(32)}\Q^{(2)}_{32} + \R_{\underline{1}2}\Q^{(2)}_{3\underline{1}}\R_{13} - \Q^{(2)}_{2\underline{1}}\R_{\underline{1}(32)}.
    \end{equation}
    Using the fact that $\R$ is of difference form such that $\R_{\underline{1}2}\Q^{(2)}_{\underline{1}2} = \Q^{(2)}_{2\underline{1}}\R_{\underline{1}2}$, we get 
    \begin{equation}
        \begin{split}
            &\Q^{(3)}_{3(\underline{1}2)}\R_{(\underline{1}2)3}\\
            &=\R^{-1}_{\underline{1}2}\left(-\D_1^2(\R_{\underline{1}2})\R_{\underline{1}3} + \Q^{(2)}_{2\underline{1}}\R_{\underline{1}2}\Q^{(3)}_{3\underline{1}}\R_{\underline{1}3} + \Q^{(2)}_{2\underline{1}}\R_{\underline{1}2}\R_{\underline{1}3}\Q^{(2)}_{32} - \Q^{(2)}_{32}\Q^{(2)}_{2\underline{1}}\R_{\underline{1}(32)}\right)\R_{23}
        \end{split}
    \end{equation}
    Next, using $\Q^{(2)}_{2\underline{1}}\R_{\underline{1}2}\R_{\underline{1}3} = -[\Q^{(2)}_{32}, \R_{\underline{1}2}\R_{\underline{1}3}] + \R_{\underline{1}2}\Q^{(2)}_{3\underline{1}}\R_{\underline{1}3}$, we get
    \begin{equation}
        \begin{split}
            \Q^{(3)}_{3(\underline{1}2)}\R_{(\underline{1}2)3} &=\R^{-1}_{\underline{1}2}\left(-\D_1^2(\R_{\underline{1}2})\R_{\underline{1}3} + \Q^{(2)}_{2\underline{1}}\R_{\underline{1}2}\Q^{(3)}_{3\underline{1}}\R_{\underline{1}3}-[\Q^{(2)}_{32}, \R_{\underline{1}(32)}]\Q^{(2)}_{32}\right)\R_{23}\\
            & + \Q^{(2)}_{3\underline{1}}\R_{\underline{1}3}\Q^{(2)}_{32}\R_{23} - \R^{-1}_{\underline{1}2}\Q^{(2)}_{32}\Q^{(2)}_{2\underline{1}}\R_{\underline{1}2}\R_{\underline{1}3}\R_{23}.
        \end{split}
    \end{equation}
    Note now that $\Q^{(2)}_{\underline{2}1}\R_{\underline{1}2} = \D_1(\R_{\underline{1}2}) = - \D_2(\R_{\underline{1}2})$ since $\R$ is of difference form, and where one should interpret the derivative in the second coordinate as $\D_2(\R_{\underline{1}2}) = \frac{d}{dv}\R_{\underline{1}2}(u,v)\big|_{v=0}$, i.e. the derivative evaluated at a trivial spectral parameter. The above formula can now be completely written in terms of these kind of derivations. Namely, for some (matrix-valued) functions $g(v_1,...,v_n) \in \text{End}(V^{\otimes L})$, we define the operator
    \begin{equation}\label{eq:xioperator}
        \xi_{n,m}(g) \coloneq \left(-\D_m^2 g + \D_n\D_m g - [\mathfrak{q}^{(2)}_{nm}, g]\mathfrak{q}^{(2)}_{nm}\right)\bigg|_{v_i = 0},
    \end{equation}
    where $\D_i$ denotes the derivative in the $i$-th coordinate. Moreover, it can be calculated using \texttt{Mathematica} that there is a matrix $\mathfrak{m}$ such that
    \begin{equation}
        \Q^{(2)}_{3\underline{1}}\R_{\underline{1}3}\Q^{(2)}_{32}\R_{23} - \R^{-1}_{\underline{1}2}\Q^{(2)}_{32}\Q^{(2)}_{2\underline{1}}\R_{\underline{1}2}\R_{\underline{1}3}\R_{23} = [\mathfrak{m}_{\underline{1}2}, \R_{(\underline{1}2)3}].
    \end{equation}
    Combining all the terms now gives that the Lax operator corresponding to the $\B[\mathbb{Q}_3]$ deformation \eqref{eq:BQ3LaxExample} is given by
    \begin{equation}
        \L^{\lambda}_{(\underline{\vec a}b)n} = \R_{(\underline{\vec a}b)n} + \lambda\left(\R^{-1}_{\underline{\vec a}b}\xi_{nb}(\R_{\underline{\vec a}(nb)})\R_{bn} + [\mathfrak{m}_{\underline{a}b}, \R_{(\underline{a}b)n}]\right) + \O(\lambda^2).
    \end{equation}
    If we now calculate the monodromy matrix $\T^\lambda_{\underline{\vec a}b} \coloneq  \L^{\lambda}_{(\underline{\vec a}b)L}\cdots\L^{\lambda}_{(\underline{\vec a}b)1}$ for $L \geq 2$, it follows that the first order correction term is given by
    \begin{equation}\label{eq:monodromyMatrixtheta}
        \begin{split}
            \T^{(1)}_{\underline{\vec a}b} &= \sum_{n=1}^{L}\xi_{n,n+1}\left(\T^{(0)}_{\underline{\vec a}}\right)\cdot\T^{(0)}_b + \R^{-1}_{\underline{\vec a}b}\left[\T^{(0)}_b\cdot\xi_{L-1,L}\left(\T^{(0)}_{\underline{\vec a}}\right)\right]\R_{\underline{\vec a}b} - \xi_{L,1}\left(\T^{(0)}_{\underline{\vec a}}\right)\cdot \T^{(0)}_b\\
            &+ [\mathfrak{m}_{\underline{\vec a}b}, \T^{(0)}_{\underline{\vec a} b}],
        \end{split}
    \end{equation}
    where $\T^{(0)}_{\underline{\vec a}} \coloneq \R_{\underline{\vec a}L}\cdots \R_{\underline{\vec a}1}$ and $\T^{(0)}_{b} \coloneq \P_{bL}\cdots \P_{b1}$ are the zero-th order terms of the monodromy matrix. This is a particularly nice closed-form formula completely in terms of the operator $\xi$ as given in \eqref{eq:xioperator}. Remember here that the above form only holds for $L\geq 2$, i.e. does not involve any wrapping effects. If we take the trace over the auxiliary space, it follows from the fact that $\mathrm{tr}_b(\T^{(0)}_{b}) = \U$, with $\U$ the shift operator, that only the first term in \eqref{eq:monodromyMatrixtheta} survives. In particular, the transfer matrix is given by
    \begin{equation}
        \begin{split}
            t^{(1)}(u)\cdot \U^{-1} &\coloneq  \mathrm{tr}_{\underline{a}b}\T^{(1)}_{\underline{a}b}(u)\cdot \U^{-1} = \sum_{n=1}^{L}\xi_{n,n+1}(t^{(0)}(u)) \\
            &= \sum_{n=1}^{L}\left(-\frac{1}{2}(\D_n - \D_{n+1})^2(t^{(0)}(u)) - [\mathfrak{q}^{(2)}_{n,n+1}, t^{(0)}(u)]\mathfrak{q}^{(2)}_{n,n+1}\right),
        \end{split}
    \end{equation}
    where $t^{(0)}(u)$ is the undeformed transfer matrix. One can now recognise the same form for the first order correction to the transfer matrix as was found in \cite[(6.25)]{gromov2014theta} using the $\Theta$-morphism!\\

    Next, we discuss the $N$-magnon creation operators for the long-range deformation. Here, we assume the reader is already familiar with the Algebraic Bethe Ansatz for the undeformed spin chain, see e.g. \cite{faddeev1996algebraicbetheansatzwork, Retore2021} for introductions. For the undeformed spin chain, the $N$-magnon creation operators are given by $B(u_1)B(u_2)\cdots B(u_N)$, where $B(u_i) \coloneq t_{12}(u_i)$, which act the vacuum state $|\Omega\rangle \coloneq |{\downarrow}\rangle\otimes \cdots \otimes |{\downarrow}\rangle$. For the deformed spin chain, however, we have to take into account the doubling of the auxiliary space of the FRT-bialgebra. A natural generalisation, therefore, is that the $N$-magnon creation operators on the $\B[\mathbb{Q}_3]$-deformed spin chain are of the form $B(u_1)\cdots B(u_N)\otimes^\lambda\left[t_{11}(0) + f(u_1,...,u_N)t_{22}(0)\right]$ for some symmetric function $f(u_1,...,u_N)$. Here, the notation $\otimes ^\lambda$ means that we view this tensor product as elements in the deformed double-crossed product algebra $(\A(\R)\bowtie_{\mathbf{r}}\A_0)(\gamma_{3})$. Indeed, since the $B(u_i)$'s mutually commute, it is ensured that these operators are symmetric in the rapidities $u_i$. Moreover, an $N$-tuple of products of $B$'s also ensure that these create states with $N$ spin-up excitations. By making the correspondence to the $\Theta$-morphism, we will show that the creation operators of this form are indeed the correct $N$-magnon creation operators.\\

    We will denote the matrix elements of the monodromy matrix $\T_{\underline{a}b}^\lambda(u_1,...,u_N)$ as given in \eqref{eq:monodromyMatrixtheta} by $t_{i_1,j_1}(u_1)\cdots t_{i_N,j_N}(u_N)\otimes t_{k,\ell}(0)$, where we will also use the notation $A(u)\coloneq t_{11}(u)$, $B(u)\coloneq t_{12}(u)$, $C(u) \coloneq t_{21}(u)$ and $D(u) \coloneq t_{22}(u)$. Recall that the coproduct of the $R$-matrix is given by
    \begin{equation}
        \R_{(\underline{a}_1,...,\underline{a}_N)b} = \R_{\underline{a}_1,b}\cdots \R_{\underline{a}_N,b}.
    \end{equation}
    Applying this to the braiding $\R^{-1}_{\underline{\vec a}b}\left[\T^{(0)}_b\cdot\xi_{L-1,L}\left(\T^{(0)}_{\underline{\vec a}}\right)\right]\R_{\underline{\vec a}b}$, it can be calculated from \eqref{eq:monodromyMatrixtheta} that the first order corrections to the operators acting on the spin chain are given by
    \begin{equation}\label{eq:BBAtheta}
        \begin{split}
            B(u_1)\cdots B(u_N)\otimes^{(1)} t_{11}(0) &= \sum_{n=1}^{L-1}\xi_{n,n+1}(B(u_1)\cdots B(u_N))A(0) \\
            &+ f^{-}_0(\vec u)A(0)\xi_{L-1,L}(B(u_1)\cdots B(u_N))\\
            &+ \sum_{i=1}^{N}f^{-}_i(\vec u)B(0)\xi_{L-1,L}(B(u_1)\cdots A(u_i)\cdots B(u_N)) +...,
        \end{split}
    \end{equation}
    and
    \begin{equation}\label{eq:BBDtheta}
        \begin{split}
            B(u_1)\cdots B(u_N)\otimes^{(1)}t_{22}(0) &= \sum_{n=1}^{L-1}\xi_{n,n+1}(B(u_1)\cdots B(u_N))D(0)\\
            &+ f_0^{+}(\vec u)D(0)\xi_{L-1,L}(B(u_1)\cdots B(u_N))\\
            &+ \sum_{i=1}^Nf_i^{+}(\vec u)B(0)\xi_{L-1,L}(B(u_1)\cdots D(u_i)\cdots B(u_N)) +...,
        \end{split}
    \end{equation}
    where $A(u), D(u), B(u)$ are the undeformed matrix elements $t_{11}(u), t_{22}(u)$ and $t_{12}(u)$ of the undeformed monodromy matrix $\T_{\underline{a}}(u) = \R_{\underline{a}L}(u)\cdots \R_{\underline{a}1}$, respectively, the functions $f_i^{\pm}$ are given by
    \begin{equation}
        f_i^{\pm}(\vec u) \coloneq f_i^{\pm}(u_1,...,u_N) = \begin{cases}
            \prod_{n=1}^N \frac{u_n}{u_n\pm 1} & \text{for }i = 0,\\
            \pm\frac{1}{u_{i}}\prod_{n=i}^{N}\frac{u_n}{u_n\pm 1} & \text{for }1\leq i \leq N,
        \end{cases}
    \end{equation}
    and lastly the ``$...$'' in \eqref{eq:BBAtheta} and \eqref{eq:BBDtheta} include the terms corresponding to the similarity transformation by $\mathfrak{m}$ of the monodromy matrix \eqref{eq:monodromyMatrixtheta}, which just corresponds to a change of basis of the algebra generators. We now want to combine \eqref{eq:BBAtheta} and \eqref{eq:BBDtheta} such that the $A(0)$ and $D(0)$ in the second terms add to the shift operator, and the $N$-magnon state will contain a term $\U \cdot \xi_{L-1,L}\sim \xi_{L,1}\cdot \U$. Therefore, we define the $N$-magnon state to be given by
    \begin{equation}\label{eq:NmagnonstateTheta}
        |u_1,...,u_N\rangle^\lambda \coloneq \left(B(u_1)\cdots B(u_N)\otimes^\lambda\left[t_{11}(0) + \frac{f^-_0(u_1,...,u_N)}{f^+_0(u_1,...,u_N)} t_{22}(0)\right]\right)|\Omega\rangle.
    \end{equation}
    Note that this state is per construction symmetric in the rapidities $u_1,..., u_N$ and that it flips $N$ down-spins into up-spins. Since $A(0)|\Omega\rangle = |\Omega\rangle$ and  $D(0)|\Omega\rangle = 0$, it follows that the zero-th order term of this state is simply given by $|u_1,...,u_N\rangle^{(0)} = B(u_1)\cdots B(u_N)|\Omega\rangle$, which is indeed the zero-th order $N$-magnon state when the rapidities $(u_1,...,u_N)$ are on-shell, i.e. satisfy the Bethe equations
    \begin{equation}
        \left[\frac{\hat u_n + \frac{i}{2}}{\hat u_n - \frac{i}{2}}\right]^L = \prod_{m\neq n}\frac{\hat u_n - \hat u_m + i}{\hat u_n - \hat u_m - i},
    \end{equation}
    for $\hat u_n \coloneq \frac{i}{2} - iu_n$. The first-order correction to this state is given by
    \begin{equation}\label{eq:bulktermNmagnontheta}
        \begin{split}
            |u_1,...,u_N\rangle^{(1)} &= \sum_{n=1}^{L}\xi_{n,n+1}\left(B(u_1)\cdots B(u_N)\right)|\Omega\rangle + \text{extra terms},
        \end{split}
    \end{equation}
    where the extra terms are given by
    \begin{equation}
        \begin{split}
            &\text{extra terms}\\
            &=\bigg[f_0^{-}(u_1,...,u_N)\U\cdot  \xi_{L-1,L}\left(B(u_1)\cdots B(u_N)\right)-\xi_{L,1}(B(u_1)\cdots B(u_N))\bigg]|\Omega\rangle\\
            &+\sum_{i=1}^{N}B(0)\cdot \xi_{L-1,L}\left(B(u_1)\cdots \left[f^{-}_i(\vec u)A(u_i) + \frac{f_0^{-}(\vec u)}{f_0^{+}(\vec u)} f_i^{+}(\vec u)D(u_i)\right]\cdots B(u_N)\right)|\Omega\rangle + ....
        \end{split}
    \end{equation}
    In \eqref{eq:bulktermNmagnontheta}, one can now see that the bulk term of the first-order correction to the $N$-magnon state is given by second-order derivatives in the inhomogeneities of the spin chain, similar to the $N$-magnon states that were derived in \cite{gromov2014theta}. The (highly non-trivial) claim is now that, for $(u_1,...,u_N)$ on-shell, there is a function $g(u_1,...,u_N)\in \C$ such that the extra terms are given by
    \begin{equation}
        \text{extra terms} = \left[-\mathfrak{q}^{(2)}_{L,1}\cdot \mathbb{Q}_2 + g(u_1,...,u_N)\right]|u_1,...,u_N\rangle^{(0)}.
    \end{equation}
    We remark that we do not have an explicit proof for this statement. However, it can be verified to be true using \texttt{Mathematica} (for small values of $L$ and $N$). In particular, using the fact that $\mathfrak{q}^{(2)}_{n,n+1}|\Omega\rangle = 0$, it then follows that the state \eqref{eq:NmagnonstateTheta}, for $(u_1,...,u_N)$ on-shell, is given by
    \begin{equation}
        |u_1,...,u_N\rangle^{\lambda} = \left(1 - \frac{\lambda}{2}\sum_{n=1}^{L}(\D_n - \D_{n+1})^2 -\lambda  \mathfrak{q}^{(2)}_{L,1}\cdot \mathbb{Q}_2 + \lambda g(u_1,...,u_N)\right)\left(|u_1,...,u_N\rangle^{(0)}\right),
    \end{equation}
    where one can now recognise the exact same $N$-magnon state as was found in \cite[(6.23)]{gromov2014theta} using the $\Theta$-morphism.\footnote{Note here that the extra factor of $ g(u_1,...,u_N)$ just corresponds to the rescaling of the state given by $|u_1,...,u_N\rangle^\lambda \to (1+\lambda g(u_1,...,u_N))|u_1,...,u_N\rangle^\lambda  + \O(\lambda^2)$.} In that paper, it was shown that these states are indeed eigenvectors for the long-range transfer matrix for $L\geq 2$, i.e. without wrapping effects. In particular, this shows that the state \eqref{eq:NmagnonstateTheta} is indeed the correct $N$-magnon creation operator, which is expected to also be the creation operator when wrapping effects are included.\footnote{For the $\B[\mathbb{Q}_3]$ deformation, this is a rather trivial statement, as wrapping only occurs on the spin chain of length $L=1$, for which the state \eqref{eq:NmagnonstateTheta} (almost) trivially is an eigenvector of the transfer matrix. For more general deformations, the creation operators are expected to be of similar forms, also on spin chains that include wrapping effects.} We emphasise again that we have not given any formal proof in our current discussion, as we bypassed some calculations using \texttt{Mathematica}. However, the important takeaway is that we are now able to identify the correct $N$-magnon creation operators in our algebra, which therefore brings us a small step closer to understanding the long-range Algebraic Bethe Ansatz. It is expected that the above discussion generalises to all the long-range deformations, and also to higher-order corrections.
\end{appendices}

\bibliographystyle{JHEP}
\bibliography{bibliography}

\end{document}